\def\showauthornotes{1}
\def\showkeys{0}
\def\showdraftbox{0}
\def\showcolorlinks{1}
\def\usemicrotype{1}
\def\showfixme{0}
\providecommand{\RR}{\mathbb{R}}
\newtheorem{theorem}{Theorem}[section]
\newtheorem*{theorem*}{Theorem}
\newtheorem{proposition}[theorem]{Proposition}
\newtheorem*{proposition*}{Proposition}
\newtheorem{lemma}[theorem]{Lemma}
\newtheorem*{lemma*}{Lemma}
\newtheorem{corollary}[theorem]{Corollary}
\newtheorem*{corollary*}{Corollary}
\newtheorem*{conjecture*}{Conjecture}
\newtheorem*{fact*}{Fact}
\newtheorem*{hypothesis*}{Hypothesis}
\theoremstyle{definition}
\newtheorem{definition}[theorem]{Definition}
\newtheorem*{definition*}{Definition}
\newtheorem{example}[theorem]{Example}
\theoremstyle{remark}
\newtheorem{claim}[theorem]{Claim}
\newtheorem*{claim*}{Claim}
\newtheorem{remark}[theorem]{Remark}
\newtheorem*{remark*}{Remark}
\newtheorem*{observation*}{Observation}
\newcommand{\savehyperref}[2]{\texorpdfstring{\hyperref[#1]{#2}}{#2}}
\newcommand{\Sref}[1]{\hyperref[#1]{\S\ref*{#1}}}
\newcommand{\Authornote}[2]{{\sffamily\small\color{red}{[#1: #2]}}}
\newcommand{\Authornotecolored}[3]{{\sffamily\small\color{#1}{[#2: #3]}}}
\newcommand{\Authorcomment}[2]{{\sffamily\small\color{gray}{[#1: #2]}}}
\newcommand{\Authorstartcomment}[1]{\sffamily\small\color{gray}[#1: }
\newcommand{\Authorfnote}[2]{\footnote{\color{red}{#1: #2}}}
\newcommand{\Authorfixme}[1]{\Authornote{#1}{\textbf{??}}}
\newcommand{\Authormarginmark}[1]{\marginpar{\textcolor{red}{\fbox{\Large #1:!}}}}
\newcommand{\Authornote}[2]{}
\newcommand{\Authornotecolored}[3]{}
\newcommand{\Authorcomment}[2]{}
\newcommand{\Authorstartcomment}[1]{}
\newcommand{\Authorfnote}[2]{}
\newcommand{\Authorfixme}[1]{}
\newcommand{\Authormarginmark}[1]{}
\newcommand{\Ynote}{\Authornotecolored{Purple}{Y}}
\newcommand{\Inote}{\Authornotecolored{ForestGreen}{I}}
\newcommand{\inote}{\Inote}
\newcommand{\brac}[1]{[#1]}
\newcommand{\Brac}[1]{\left[#1\right]}
\newcommand{\abs}[1]{\lvert#1\rvert}
\newcommand{\Abs}[1]{\left\lvert#1\right\rvert}
\newcommand{\card}[1]{\lvert#1\rvert}
\newcommand\sett[2]{\left\{ #1 \left| \; \vphantom{#1 #2} \right. #2  \right\}}
\newcommand{\set}[1]{\{#1\}}
\newcommand{\norm}[1]{\lVert#1\rVert}
\newcommand{\iprod}[1]{\langle#1\rangle}
\newcommand{\Esymb}{\mathbb{E}}
\newcommand{\Psymb}{\mathbb{P}}
\DeclareMathOperator*{\E}{\Esymb}
\DeclareMathOperator*{\ProbOp}{\Psymb}
\renewcommand{\Pr}{\ProbOp}
\def\one{{\mathbf{1}}}
\newcommand{\prob}[1]{\Pr \left[ {#1} \right] }
\newcommand{\Prob}[2][]{\Pr_{{#1}}\left[#2\right]} 
\newcommand{\cProb}[3]{\Pr_{{#1}}\left[ #2 \left| \; \vphantom{#2 #3} \right. #3  \right]} 
\newcommand{\ex}[1]{\E\brac{#1}}
\newcommand{\Ex}[2][]{\E_{{#1}}\Brac{#2}}
\newcommand{\plu}[2]{\PluOp_{{#1}}{[{#2}}]}
\newcommand{\ve}{\;\hbox{and}\;}
\newcommand{\textparen}[1]{\text{(#1)}}
\newcommand{\because}[1]{\textparen{because #1}}
\renewcommand{\because}[1]{\textparen{because #1}}
\newcommand{\defeq}{\stackrel{\mathrm{def}}=}
\newcommand\bdot\bullet
\DeclareMathOperator{\dist}{dist}
\renewcommand{\leq}{\leqslant}
\renewcommand{\le}{\leqslant}
\renewcommand{\geq}{\geqslant}
\renewcommand{\ge}{\geqslant}
\let\epsilon=\varepsilon
\numberwithin{equation}{section}
\newcommand{\MYstore}[2]{%
  \global\expandafter \def \csname MYMEMORY #1 \endcsname{#2}%
}
\newcommand{\MYload}[1]{%
  \csname MYMEMORY #1 \endcsname%
}
\newcommand{\MYnewlabel}[1]{%
  \newcommand\MYcurrentlabel{#1}%
  \MYoldlabel{#1}%
}
\newcommand{\MYdummylabel}[1]{}
\newcommand{\torestate}[1]{%
  \let\MYoldlabel\label%
  \let\label\MYnewlabel%
  #1%
  \MYstore{\MYcurrentlabel}{#1}%
  \let\label\MYoldlabel%
}
\newcommand{\restatetheorem}[1]{%
  \let\MYoldlabel\label
  \let\label\MYdummylabel
  \begin{theorem*}[Restatement of \prettyref{#1}]
    \MYload{#1}
  \end{theorem*}
  \let\label\MYoldlabel
}
\newcommand{\restatelemma}[1]{%
  \let\MYoldlabel\label
  \let\label\MYdummylabel
  \begin{lemma*}[Restatement of \prettyref{#1}]
    \MYload{#1}
  \end{lemma*}
  \let\label\MYoldlabel
}
\newcommand{\restateprop}[1]{%
  \let\MYoldlabel\label
  \let\label\MYdummylabel
  \begin{proposition*}[Restatement of \prettyref{#1}]
    \MYload{#1}
  \end{proposition*}
  \let\label\MYoldlabel
}
\newcommand{\restateclaim}[1]{%
  \let\MYoldlabel\label
  \let\label\MYdummylabel
  \begin{claim*}[Restatement of \prettyref{#1}]
    \MYload{#1}
  \end{claim*}
  \let\label\MYoldlabel
}
\newcommand{\restatecorollary}[1]{%
  \let\MYoldlabel\label
  \let\label\MYdummylabel
  \begin{corollary*}[Restatement of \prettyref{#1}]
    \MYload{#1}
  \end{corollary*}
  \let\label\MYoldlabel
}
\newcommand{\restatefact}[1]{%
  \let\MYoldlabel\label
  \let\label\MYdummylabel
  \begin{fact*}[Restatement of \prettyref{#1}]
    \MYload{#1}
  \end{fact*}
  \let\label\MYoldlabel
}
\newcommand{\restatedefinition}[1]{
\let\MYoldlabel\label 
\let\label\MYdummylabel 
\begin{definition*}[Restatement of \prettyref{#1}] 
    \MYload{#1} 
\end{definition*} 
\let\label\MYoldlabel 
} 
\newcommand{\restate}[1]{%
  \let\MYoldlabel\label
  \let\label\MYdummylabel
  \MYload{#1}
  \let\label\MYoldlabel
}
\newcommand{\eps}{\epsilon}
\let\origparagraph\paragraph
\renewcommand{\paragraph}[1]{\origparagraph{#1.}}
\newcommand{\dunion}{\mathbin{\mathaccent\cdot\cup}}
\let\pref=\prettyref
\renewcommand{\restriction}{\mathord{\upharpoonright}}
\newcommand{\rest}[2]{{#1}\restriction_{{#2}}}
\newcommand{\sqbrack}[1]{\lbrack {#1} \rbrack}
\newcommand{\bigO}[1]{O \left( #1 \right)}
\def\withComments{1}
\def\withDarkBackground{0}
\newcommand{\Up}{U}
\newcommand{\Down}{D}
\newcommand{\cx}[2]{ \ell_2 ( {#1} ({#2})) }
\newcommand{\CX}[1]{\cx{X}{{#1}} }
\newcommand{\todo}[1]{ {\marginpar{\color{red} \par TODO: {#1}}} }
\newcommand{\todo}{}
\newcommand{\scloc}[2]{ {#1}^{#2} }
\newcommand{\scres}[2]{ {#1}_{#2} }
\newcommand{\col}{col}
\renewcommand{\epsilon}{\varepsilon}
\renewcommand{\phi}{\varphi}
\providecommand{\eps}{\epsilon}
\providecommand{\RR}{\mathbb{R}}
\newcommand{\F}{\mathbb{F}}
\newif\ifWithFKN
\newcommand{\FF}{f}
\newcommand{\dl}[2]{#1}
\newcommand{\ul}[2]{\set{{#2} \supset {#1}}}
\newcommand{\adj}[1]{{\textrm{ reach}}_{{#1}}} 
\newcommand{\ane}[1]{\overset{{#1}}{\ne}}
\newcommand{\Grassmann}[3]{Gr_{lin}({#2},{#3})}
\newcommand{\Graff}[3]{Gr_{aff}({#2},{#3})}
\renewcommand{\plu}{ {\textrm {pop}} }
\newcommand{\VASA}{{ VASA}}
\newcommand{\vASA}[1]{_{#1}ASA}
\newcommand{\VAS}[1]{VAS_{#1}}
\newcommand{\AV}[1]{AV_{#1}}
\newcommand\rem[1]{{}}
\newcommand{\myconst}{\frac{1}{40}}
\newcommand{\twomyconst}{\frac{1}{20}}
\newcommand{\fourmyconst}{\frac{1}{20}}
\newcommand{\comp}[1]{M^{#1}}
\newcommand{\labeleditem}[1]{%
\item[#1]\protected@edef\@currentlabel{#1}%
}
\crefname{lemma}{lemma}{lemmata}
\crefname{claim}{claim}{claims}
\crefname{proposition}{proposition}{propositions}
\crefname{section}{section}{sections}
\crefname{subsection}{subsection}{subsections}
\crefname{definition}{definition}{definitions}
\title{Agreement testing theorems on layered set systems}
\author{Yotam Dikstein\thanks{Weizmann Institute of Science, ISRAEL. email: {\texttt yotam.dikstein@weizmann.ac.il}.}
\and Irit Dinur\thanks{Weizmann Institute of Science, ISRAEL. email: {\texttt irit.dinur@weizmann.ac.il}.}}
\date{\today}
\begin{document}
\maketitle
\begin{abstract}
We introduce a framework of layered subsets, and give a sufficient condition for when a set system supports an agreement test. Agreement testing is a certain type of property testing that generalizes PCP tests such as the plane vs. plane test. 
Previous work has shown that high dimensional expansion is useful for agreement tests. We extend these results to more general families of subsets, beyond simplicial complexes.
These include
\begin{itemize}
  \item Agreement tests for set systems whose sets are faces of high dimensional expanders. Our new tests apply to all dimensions of complexes both in case of two-sided expansion and in the case of one-sided partite expansion.  This improves and extends an earlier work of Dinur and Kaufman (FOCS 2017) and applies to matroids, and potentially many additional complexes.
  \item Agreement tests for set systems whose sets are neighborhoods of vertices in a high dimensional expander. This family resembles the expander neighborhood family used in the gap-amplification proof of the PCP theorem. This set system is quite natural yet does not sit in a simplicial complex, and demonstrates some versatility in our proof technique. 
  \item Agreement tests on families of subspaces (also known as the Grassmann poset). This extends the classical low degree agreement tests beyond the setting of low degree polynomials.
\end{itemize}
Our analysis relies on a new random walk on simplicial complexes which we call the ``complement random walk'' and which may be of independent interest. This random walk generalizes the non-lazy random walk on a graph to higher dimensions, and has significantly better expansion than previously-studied random walks on simplicial complexes.

\end{abstract}

\newpage
\tableofcontents{}
\newpage
\def\glob{\mathcal{G}}
\section{Introduction}
%

Agreement testing is a certain type of property testing. The first agreement testing theorems are the line versus line or plane versus plane low degree agreement tests \cite{RuSu96,ArSu,RazS1997} that play an important part in various PCP constructions. We discuss the history and evolution of these tests further below.

Abstractly, an agreement test is the following. Let $V$ be a ground set and let $S$ be a family of subsets of $V$. The object being tested is an ensemble of local functions $\sett{f_s\in \Sigma^s}{s\in S}$ with one function per set $s\in S$. The domain of $f_s$ is $s$ itself. A {\em perfect ensemble} is an ensemble that comes from a global function $g:V\to\Sigma$ whose domain is the entire vertex set. In a perfect ensemble the local function at $s$ is the restriction of $g$ to the set $s$, that is, $f_s= \rest  g s$ for all $s\in S$.

We let $\glob$ be the set of all perfect ensembles. An agreement test is a property tester for $\glob$. It is specified by a distribution over pairs\footnote{In some cases the test can query more than two subsets, as in the so-called Z-test of \cite{ImpagliazzoKW2012}, but in this paper we restrict attention only to two query tests.} of intersecting subsets, $s_1,s_2\in S$, and the test accepts if the respective local functions agree on the intersection: $\rest {f_{s_1}}t = \rest{f_{s_2}}t$ where  $t=s_1\cap s_2$.
A perfect ensemble is clearly accepted with probability $1$. The test is {\em $c$-sound} if
\begin{equation}\label{eq:c-soundness}
  \dist(f,\glob) \le c\cdot \Pr_{s_1,s_2} [ \rest {f_{s_1}}t = \rest{f_{s_2}}t ]\,.
\end{equation}
Here the distance $\dist(f,\glob)$ is the minimal fraction of sets $s\in S$ that we need to change in $f$ in order to get a function in $\glob$.

It is well known (see \pref{ex:noisyensemble}) that in some cases exact soundness is impossible and we must allow a slightly weaker notion, called $\gamma$-approximate soundness. The $\gamma$-approximate distance between two ensembles $f$ and $g$, denoted $\dist_\gamma(f,g)$, is the fraction of sets $s$ in which $\dist(f_s,g_s)>\gamma$. An agreement test is {\em $\gamma$-approximately $c$-sound} if
\begin{equation}\label{eq:approx-soundness}
  \dist_\gamma(f,\glob) \le c\cdot \Pr_{s_1,s_2} [ \rest {f_{s_1}}t = \rest{f_{s_2}}t ]\,.
\end{equation}
This means that if the test succeeds with probability $1-\eps$ there must be a global function $g:V\to\Sigma$ such that for all but $c\cdot\eps$ of the sets $s$, $\dist(f_s,\rest g s)\le \gamma$.%
\paragraph{Why study agreement tests}
The original motivation for agreement tests comes from PCP proof composition: a key step in this construction is to combine many small proofs into one global proof, but without knowing whether the small proofs are consistent with each other. The agreement test ensures that they can be combined together coherently. Indeed, agreement tests are the basis of the ``inner verifier'' constructed in recent works on $2:2$ games \cite{KhotMS2017,DinurKKMS2018-2to1,BarakKS19,KhotMS18}.

Recent work \cite{DinurFH19} used agreement tests in a different context, for proving structure theorems for Boolean functions. The idea is to prove structure for small restrictions of the function, often an easier task, and then apply an agreement testing theorem to combine these structures together.

Agreement tests are a natural family of tests that seems interesting in its own right. This work makes a step towards developing a theory that explains which set systems have agreement tests.

\subsection*{The STAV layered set system}
We describe a three layered set system which we call a STAV. 

Looking closely at agreement tests, we can always model them with three layers: the vertices ($V$), the sets ($S$) and the possible intersections between sets ($T$). The STAV has an additional so-called ``Amplification'' layer ($A$) that captures an amplification  property that occurs in many interesting settings: given that we know that two local functions agree on part of the intersection, the probability that they will agree on the whole intersection rises significantly.

We give an informal description of STAV, for the detailed formal definition please see \pref{sec:stav}. A STAV is a tuple $(S,T,A,V)$ together with the following three distributions
\begin{itemize}
  \item The STAV distribution - a distribution over $(s,t,a,v)$, $s \supset t \supset a\dunion v$.
  \item The STS distribution - a distribution over $s_1,t,s_2$ that gives the agreement testing distribution and in addition a subset $t\subseteq s_1\cap s_2$.
  \item The VASA distribution - a distribution over $v,a,s,a'$ whose role will be made clear in the analysis.
\end{itemize}
A STAV is called $\gamma$-good if these distributions (and some local views of them) satisfy certain spectral conditions.

\paragraph{The {\em surprise} parameter} Based on the STAV structure, it is natural to define a parameter which we call the {surprise}. This parameter depends both on the ensemble $f=\set{f_s}$ and on the STAV, and in some cases, it can be bounded independently of $f$  (this is the case for simplicial complexes). The surprise parameter is a measure of how much amplification the $A$ layer gives us. It is the probability that two intersecting sets agree on $a$ given that they disagree on $t$ (See \pref{def:surprise}). This parameter gives a unified way to address different agreement scenarios.
\subsection*{Main Results}
Our main technical theorem (\pref{thm:main-STAV-agreement-theorem}) says that every set system that supports a $\gamma$-good STAV must support a sound agreement test. This reduces the task of proving an agreement test to the much simpler task of uncovering a STAV underneath the set system.\\

We list here a few applications of this theorem, starting with agreement tests for high dimensional expanders.
Introducing high dimensional expanders is beyond the current scope and we refer the reader to \pref{sec:HDX} for more introductory definitions.%
\begin{theorem}[Agreement for two-sided HDX - short version of \pref{thm:main-agreement-theorem-two-sided}]
There exists a constant $c>0$ such that for every $d$-dimensional simplicial complex $X$ the following holds. If $X$ is a $\frac{1}{d^3}$-two-sided $d$-dimensional HDX, then $X(d)$ supports a $c$-sound agreement test.
\end{theorem}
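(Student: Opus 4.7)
The plan is to reduce the statement to the main STAV framework of \pref{thm:main-STAV-agreement-theorem} by exhibiting a $\gamma$-good STAV on top of $X$, and then to verify its spectral conditions using two-sided high dimensional expansion together with the complement random walk developed earlier in the paper.

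Concretely, I would define the STAV by taking $S = X(d)$ (the top faces) and $V = X(0)$ (the vertices), and letting $T$ and $A$ be intermediate face layers $X(k)$ and $X(k')$ for some $0 < k' < k < d$ chosen so that all the gaps $d-k$, $k-k'$, and $k'$ are $\Omega(d)$ (for instance $k = \lfloor 2d/3 \rfloor$ and $k' = \lfloor d/3 \rfloor$, with the exact choices dictated by the spectral bounds below). The STAV distribution $(s,t,a,v)$ is sampled by picking a top face $s\in X(d)$, then a uniform $t\subset s$ of size $k+1$, then a subset $a\subset t$ of size $k'+1$ together with a single vertex $v \in t\setminus a$. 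The STS and VASA distributions are the induced marginals of this joint construction.

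Next I would verify that this STAV is $\gamma$-good for an absolute $\gamma>0$. The distributions underlying the STAV and the STS/VASA marginals are bipartite swap-style walks between face layers of $X$. The hypothesis that $X$ is a $1/d^3$-two-sided HDX, combined with Oppenheim-style trickling down and the new complement random walk analysis, implies that swap walks between face layers separated by $\Omega(d)$ have second eigenvalue bounded below $1$ by a universal constant. Moreover, the same bounds pass to every local view appearing in the spectral conditions, because links of faces in a two-sided HDX inherit two-sided expansion with a comparable parameter (after losing at most a factor polynomial in $d$, which is absorbed by choosing $\lambda \le 1/d^3$). This should discharge all the conditions of \pref{thm:main-STAV-agreement-theorem} apart from the surprise bound.

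The step I expect to be most delicate is bounding the \emph{surprise} parameter, which quantifies the amplification provided by the $A$ layer. For simplicial complexes this parameter should admit an $f$-independent bound: conditioned on two top faces $s_1, s_2$ sharing an intersection $t$ of size $k+1$ on which their local functions disagree, the probability that they nevertheless coincide on the random subset $a\subset t$ of size $k'+1$ decays with the gap $k-k'=\Omega(d)$, by a counting argument inside the link of $a$ that exploits how concentrated disagreements are under the complement walk. Once the spectral conditions and the surprise bound are both in hand, invoking \pref{thm:main-STAV-agreement-theorem} immediately yields a $c$-sound agreement test for $X(d)$ with $c$ an absolute constant inherited from the STAV theorem.
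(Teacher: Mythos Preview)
Your overall plan is the right one, but there is a real gap in your choice of the $A$ layer that breaks \pref{ass:v-a-graph} of \pref{def:good-stav-structure}.

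To obtain \emph{exact} $c$-soundness from \pref{thm:main-STAV-agreement-theorem} you need the STAV to be $\gamma$-good with $\gamma=O(1/d)$, so that $\dist_\gamma$ can be upgraded to $\dist$; your phrase ``$\gamma$-good for an absolute $\gamma>0$'' would only yield constant-approximate soundness. Your parameters fail precisely at \pref{ass:v-a-graph}. With $T=X(k)$, $A=X(k')$ and $k-k'=\Omega(d)$, the pair $(a,v)$ does \emph{not} pin down $t$: there are many $t\in X(k)$ containing $a\cup\{v\}$. The $\sts_{a,v}$-graph is then the two-step containment walk through a random such $t$ in the link of $a\cup\{v\}$, and by \pref{thm:containment-graph-is-a-good-expander} its second eigenvalue is $\approx (k-k')/(d-k')=\Theta(1)$, not $O(1/d)$.

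The paper's construction places $A$ exactly one dimension below $T$: take $T=X(\ell)$, $A=X(\ell-1)$ with $\ell=\lfloor d/3\rfloor$, and split $t=a\dunion\{v\}$. Now $(a,v)$ determines $t$ uniquely, the $\sts_{a,v}$-graph degenerates to a clique with self-loops (a $0$-expander), and \pref{ass:v-a-graph} holds for every $\gamma$. The surprise is $O(1/\ell)$ by the elementary observation that if $\rest{f_{s_1}}t\ne\rest{f_{s_2}}t$ then the chance that the single removed vertex $v$ is a disagreement point is $O(1/|t|)$ (see \pref{claim:agreement-on-last-vertex}); your claim that the surprise ``decays with the gap $k-k'$'' is not the right mechanism, though the numerics happen to come out similarly. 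The remaining assumptions (A1), (A2a), (A3), (A4) are then verified with $\gamma=O(1/\ell)=O(1/d)$ using the complement walk bounds of \pref{thm:complement-walk-is-a-good-expander}, and exact soundness follows.
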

In \pref{sec:agreement-on-hdx} we describe some corollaries of this theorem for matroids. 

The only known constructions of {\em sparse} two-sided HDXs are by truncating one-sided HDXs, see the Ramanujan complexes of \cite{LSV} as well as the construction of HDXs due to \cite{KaufmanO18}. It is natural to study agreement tests for the (non-truncated) one-sided HDX itself. The following theorem gives such a result in the special case that the complex is also $d+1$-partite. Many Ramanujan complexes are naturally $d+1$-partite, as are the complexes constructed in \cite{KaufmanO18}.
\begin{theorem}[Agreement for partite one-sided HDX - short version of \pref{thm:main-agreement-theorem-one-sided}]
There exists a constant $c > 0$ such that the following holds. Suppose $X$ is a $(d+1)$-Partite complex that is a $\frac{1}{d^3}$-one sided HDX. Then $X(d)$ supports a $c$-sound agreement test.
\end{theorem}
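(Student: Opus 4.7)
The plan is to deduce this theorem by constructing an appropriate STAV on $X$ and invoking the main technical theorem \pref{thm:main-STAV-agreement-theorem}, which reduces agreement soundness to a purely spectral $\gamma$-goodness condition.

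I would instantiate the STAV as follows. Take $V = X(0)$ and $S = X(d)$, and exploit the $(d{+}1)$-partite structure: fix a partition of the color set $[d{+}1]$ into two blocks $C_T$ and $[d{+}1] \setminus C_T$ of roughly equal size, say $|C_T| = \lceil d/2 \rceil$. The STAV distribution samples $s \in X(d)$ uniformly, lets $t \subset s$ be the sub-face colored by $C_T$, picks a uniformly random color $c \in C_T$, sets $v$ to be the color-$c$ vertex of $t$, and $a = t \setminus \{v\}$. The STS distribution samples $t$ of color-type $C_T$ and independently completes it with colors from $[d{+}1] \setminus C_T$ into two top faces $s_1, s_2$; this is precisely a partite swap walk between faces of color-type $C_T$ and top faces. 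The VASA distribution is defined analogously by freezing $a$ and resampling $v, s, a'$ along the complementary color classes.

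The heart of the proof is verifying $\gamma$-goodness for an absolute constant $\gamma$. The partite structure is crucial: unlike generic down-up walks on simplicial complexes, swap walks between two disjoint color blocks of linear size admit constant spectral gap even when $X$ is only one-sided $\frac{1}{d^3}$-HDX. This is where the complement random walk machinery developed earlier in the paper plays its central role, providing explicit spectral bounds for the walks $s_1 \leftrightarrow t \leftrightarrow s_2$ and $v \leftrightarrow a \leftrightarrow s \leftrightarrow a'$ from one-sided expansion alone, provided both color blocks have linear size. The local views demanded by $\gamma$-goodness live inside links of $a$ and $t$, which inherit one-sided local spectral expansion by standard link arguments, and so the same analysis applies recursively.

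The main obstacle, and the place requiring the most delicate argument, is controlling the surprise parameter in the one-sided regime, since one-sided expansion gives no a priori bound on the negative part of the spectrum. Because the color of $v$ is chosen uniformly from $|C_T| = \Theta(d)$ options and the remaining vertices of $t$ sit inside a highly expanding link, I would argue that conditioning on disagreement at $v$ does not appreciably shift the marginal over agreement-on-$a$, yielding a surprise bounded by an absolute constant less than $1$; alternatively, one can absorb this contribution into the VASA spectral bound via the complement walk. Once $\gamma$-goodness is in hand, the desired $c$-sound agreement test on $X(d)$ follows immediately from \pref{thm:main-STAV-agreement-theorem}.
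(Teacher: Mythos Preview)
Your proposal has a genuine gap at the coverage step. In your STAV the vertex $v$ is always the color-$c$ element of $t$, so $v$ is colored inside $C_T$; consequently the marginal on $V$ is supported only on $C_T$-colored vertices, and the global function $G$ produced by \pref{thm:main-STAV-agreement-theorem} is determined (and tested) only on that half of $X(0)$. The conclusion $f_s \overset{\gamma}{\approx} G|_s$ then says nothing about the vertices of $s$ colored outside $C_T$, so you cannot deduce $c$-soundness for the full ensemble. The paper confronts exactly this obstacle: it does \emph{not} use a single STAV but rather, for each disjoint pair of small color sets $I,J$ (with $|I|=|J|=\ell$ and $4\ell+4\le d$), builds an $(I,J)$-STAV in which $a$ is colored $I$ or $J$ and $v$ is colored \emph{outside} $I\cup J$. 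It then finds, via an averaging argument (\pref{lem:four-good-colors}), two disjoint pairs $(I_1,J_1),(I_2,J_2)$ for which both the rejection probability and the surprise are controlled, applies the main theorem twice, and glues the two partial global functions to cover all colors.

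Two further points. First, your VASA sketch does not produce $a,a'$ of disjoint color type: with $a$ colored $C_T\setminus\{c\}$, any $a'\in A$ of the same form shares $|C_T|-2$ colors with $a$, so the colored swap-walk analysis (which requires disjoint color blocks) does not apply. In the paper's construction, $a$ has color $I$ and $a'$ has color $J$, which are disjoint by design, and this is precisely what feeds into the complement/colored-walk spectral bound (\pref{claim:colored-walk-is-a-good-expander}). Second, your concern about bounding the surprise via one-sided spectrum is misplaced: the bound $\surp(X,f)=O(1/\ell)$ in \pref{claim:agreement-on-last-vertex} is purely combinatorial (a single disagreement vertex is hit with probability $1/(\ell+1)$ when $v$ is chosen uniformly from $t$) and requires no spectral input whatsoever.
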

Our next agreement theorem is for a family of subsets that is derived from a high dimensional expander, although itself it does not sit inside a simplicial complex. The subsets in this family are balls, or neighborhoods, of a vertex or a higher dimensional face in a simplicial complex that is a HDX.
This construction resembles the set system underlying the gap-amplification based proof of the PCP theorem \cite{Din07}, in which an agreement theorem underlies the argument somewhat implicitly.
\begin{theorem}[Agreement on neighborhoods - short version of \pref{thm:agreement-on-links}]
There exists a constant $c > 0$ such that the following holds. Let $X$ be a $\frac{1}{d^3}$-two-sided high dimensional expander. For each vertex $z\in X(0)$ let $B_z$ be the set of neighbors of $z$, and let $S = \sett{B_z}{z\in X(0)}$. Then $S$ supports a $\frac{1}{d}$-approximately $c$-sound agreement test.
\end{theorem}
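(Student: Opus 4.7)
The plan is to apply the main STAV agreement theorem (\pref{thm:main-STAV-agreement-theorem}) after building an appropriate $\gamma$-good STAV on top of the neighborhood family, with $\gamma=1/d$. Although $S=\{B_z : z\in X(0)\}$ is not itself a simplicial complex, its indexing set (the vertices of $X$) and the intersection pattern $B_z \cap B_{z'} = B_{\{z,z'\}}$ for edges $\{z,z'\}\in X(1)$ are fully encoded by the faces of $X$. My strategy is therefore to use higher-dimensional faces of $X$ to index the auxiliary layers $T,A$ of the STAV and to draw all spectral bounds from the HDX expansion of $X$ itself.

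Concretely, I would take $V = X(0)$, $S=\{B_z : z\in X(0)\}$, $T=\{B_e : e\in X(1)\}$ with $B_e$ the set of common neighbors of the two endpoints of $e$, and $A=\{B_\tau : \tau \in X(k)\}$ for an intermediate dimension $k$ to be chosen (of order $\sqrt{d}$ or so) to balance spectral gap against surprise. The STAV distribution is sampled by drawing a top face $\sigma\in X(d)$ from the HDX measure, then choosing nested sub-faces $\{z\}\subset e \subset \tau$ inside $\sigma$ together with a vertex $v\in B_e\setminus B_\tau$; this defines $(s,t,a,v)=(B_z,B_e,B_\tau,v)$ with the required containments. The STS distribution is the natural agreement distribution on $T$: pick $e\in X(1)$ with endpoints $z_1,z_2$, and set $s_i=B_{z_i}$ and $t=B_e$. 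The VASA distribution is inherited from the same face sampling.

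Verifying $\gamma$-goodness is the main technical content. The spectral requirements on these distributions reduce to spectral bounds on walks between layers of nested faces of $X$, which in turn follow from $X$ being a $\frac{1}{d^3}$-two-sided HDX, together with the random-walk machinery on simplicial complexes --- most notably the complement random walk developed earlier in the paper, whose expansion is sharper than the classical up/down walks and which I expect to be essential since $A$ sits several layers above $V$. The second ingredient is the surprise parameter: unlike for a simplicial complex it is not automatically bounded, because two ensembles may disagree on all of $t=B_e$ yet still agree on a small $a=B_\tau\subset t$. A simple pigeonhole argument shows that if a pair disagrees on more than a $1/d$-fraction of $t$, then a random $a$ of large enough size will also witness disagreement with high probability, and this is precisely the source of the $\gamma=1/d$ approximate slack.

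The main obstacle is that $S,T,A$ themselves do not form a simplicial complex, so standard HDX tools cannot be applied to them as a black box. The workaround I envisage is to lift each relevant random walk on $S,T,A$ to a walk on the corresponding faces of $X$, invoke the HDX spectral bounds and the complement walk there, and then transport the resulting bounds back to the neighborhood family --- showing along the way that the map $\tau\mapsto B_\tau$ is sufficiently measure-preserving for this transport to be lossless. Once $\gamma$-goodness is established with $\gamma=1/d$, \pref{thm:main-STAV-agreement-theorem} immediately gives the desired $1/d$-approximate $c$-sound agreement test on $S$.
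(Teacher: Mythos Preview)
Your proposal diverges from the paper's approach and, as written, has a real gap.

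The paper does \emph{not} take $T$ and $A$ to be neighborhood sets. Instead it sets $T=X(\ell)$ and $A=X(\ell-1)$ for some $\ell\approx d/2$: the elements of $T$ and $A$ are small faces of $X$, not balls $B_e$ or $B_\tau$. The agreement test is a ``weak'' test: sample $t\in X(\ell)$, then sample $z_1,z_2$ independently in the link of $t$, and check agreement of $f_{z_1},f_{z_2}$ only on the $\ell+1$ vertices of $t$ (not on all of $B_{z_1}\cap B_{z_2}$). Since any face in $X_z$ is automatically a subset of $B_z$, this is well-defined; and since $t=a\dunion\{v\}$ with $|t|=\ell+1$, the surprise is $O(1/\ell)$ by exactly the same one-line argument as for simplicial complexes (\pref{claim:agreement-on-last-vertex}). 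With this choice every graph appearing in the STAV assumptions is a complement walk or a containment walk inside $X$ or inside some link $X_z$, $X_a$, $X_{a\cup\{v\}}$, so the spectral bounds follow directly from the HDX machinery with no ``transport'' step. The one subtlety is that (A4) fails (since $|a|=\ell$ is not a constant fraction of $|B_z|$), so the paper verifies (A4($r$)) instead, via the sampler property of the $0,\ell{-}1$ complement walk in $X_z$.

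Your route has two concrete problems. First, the identity $B_z\cap B_{z'}=B_{\{z,z'\}}$ is false in a general simplicial complex: $\{v,z\},\{v,z'\}\in X(1)$ does not force $\{v,z,z'\}\in X(2)$ unless $X$ is a flag complex, so your $T$ layer is not the intersection layer you claim. Second, and more seriously, your surprise argument does not match the definition. The surprise conditions on $\rest{f_{s_1}}{t}\ne\rest{f_{s_2}}{t}$, which can hold when the two functions differ at a single vertex $w\in t=B_e$. The surprise is then essentially the probability that a random $a=B_\tau\subset B_e$ misses $w$ entirely, and there is no reason this is $O(1/d)$; your pigeonhole sketch only addresses the case of $\Omega(1/d)$-fraction disagreement, which is not what is being conditioned on. The paper sidesteps this completely by making $t$ a face of size $\ell+1$, so that missing the single bad vertex has probability exactly $1-1/(\ell+1)$ for $a$, and hitting it has probability $1/(\ell+1)$ for $v$.
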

Finally, our last agreement theorem is for a family of subspaces of a vector space, also called the Grassmann. Such families were studied in PCP constructions for special ensembles whose local functions belong to some code. Such ensembles are guaranteed to have the following property. For all $s_1,t,s_2$, if $ \rest {f_{s_1}}t \neq \rest{f_{s_2}}t$ then $\dist(\rest {f_{s_1}}t , \rest{f_{s_2}}t)\ge \delta$. We call such ensembles $\delta$-ensembles and prove,

\begin{theorem}[Agreement on subspaces - informal, see \pref{thm:agreement-on-Grassmann-affine}]
There exists a constant $c > 0$ such that the following holds. Let $\F^n$ be a vector space and let $S$ have a set for every affine subspace of dimension $d$. Then $S$ supports a $1/q^{\Omega(d)}$-approximately $c$-sound agreement test for $\delta$-ensembles.
\end{theorem}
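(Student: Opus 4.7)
The plan is to obtain the theorem as a direct application of the main STAV agreement theorem (\pref{thm:main-STAV-agreement-theorem}). This reduces the problem to two tasks: constructing a $\gamma$-good STAV supported on affine subspaces, and bounding the surprise parameter for $\delta$-ensembles. I would set $V = \F^n$, let $S$ consist of all affine $d$-subspaces, let $T$ consist of affine $(d-k)$-subspaces for a small absolute constant $k$, and let $A$ consist of affine $(d-k-\ell)$-subspaces for an appropriately chosen $\ell = \Omega(d)$. The STAV distribution samples a uniform affine flag $s \supset t \supset a$ together with a uniform point $v \in t \setminus a$; the STS distribution samples $t$ and then two independent $d$-dimensional extensions $s_1,s_2 \supset t$; the VASA distribution is induced analogously by affine-flag sampling.

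To verify $\gamma$-goodness, I would analyze the spectra of the transition operators between consecutive layers of the flag. Because the affine group acts transitively on each layer, these operators lie in the commutant of this action and their spectra are given by the standard eigenvalues of the affine $q$-Grassmann/Johnson schemes. Every non-trivial eigenvalue is bounded by $q^{-\Omega(k)}$ or $q^{-\Omega(\ell)}$, so picking $k$ a large enough constant and $\ell$ a moderate function of $d$ gives $\gamma$-goodness for any prescribed $\gamma$. The only care is to verify the local-view spectral conditions (the conditional walks in links), but by affine symmetry each link is itself an affine Grassmann of slightly lower rank, so the same eigenvalue bounds apply uniformly.

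The main technical step, and the one I expect to be the bottleneck, is bounding the surprise for $\delta$-ensembles. Fix $s_1,s_2$ intersecting in $t$ and assume $\rest{f_{s_1}}{t} \ne \rest{f_{s_2}}{t}$; by the $\delta$-ensemble hypothesis the disagreement set $D \subseteq t$ satisfies $|D|/|t| \ge \delta$. The surprise is $\Pr[D \cap a = \emptyset]$ over a uniform affine $(d-k-\ell)$-subspace $a \subseteq t$. Each point of $t$ lies in $a$ with marginal probability $q^{-\ell}$, and a short covariance computation (using that two distinct points lie in a random affine subspace with probability $\le q^{-\ell}\cdot q^{-\ell}(1+O(q^{-\Omega(\ell)}))$) shows that membership in $a$ is essentially pairwise-independent on the scale of a random subspace. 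A second-moment argument then gives $\Pr[D \cap a = \emptyset] \le q^{-\Omega(\ell)}/\delta$ whenever $|D|$ is not too tiny, which by setting $\ell = \Theta(d)$ yields a surprise bounded by a small constant except on a $q^{-\Omega(d)}$-fraction of flags. This measure-zero exceptional set is absorbed into the $\gamma = 1/q^{\Omega(d)}$ approximation.

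Plugging the STAV and the surprise bound into \pref{thm:main-STAV-agreement-theorem} gives the $c$-sound, $1/q^{\Omega(d)}$-approximate agreement test. The main obstacle is the calibration of the dimensions $k$ and $\ell$: we need $\ell$ large enough to drive the surprise down to $1/q^{\Omega(d)}$, but small enough that the intersection profile of the STS distribution still meaningfully tests agreement of the $d$-subspace functions and that the VASA walk remains a strong expander. Balancing these two constraints against the eigenvalue estimates for affine Grassmann walks is the delicate part; once it is done, the rest of the argument is a direct invocation of the main STAV theorem.
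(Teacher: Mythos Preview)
Your STAV is essentially inverted from the paper's, and the inversion breaks assumption (A2b). The paper puts $T$ at dimension $\ell$ with $3\ell+2<d$ (so $\ell$ is roughly $d/3$) and takes $A$ exactly one step below, at dimension $\ell-1$. The point of the codimension-one choice is that for any $(a,v)$ with $v\notin a$ there is a \emph{unique} $t\in T$ containing both, namely $t=\mathrm{span}(a,v)$; the $STS_{a,v}$-graph is then a clique with self-loops and has second eigenvalue $0$. In your setup $\dim A=\dim T-\ell$ with $\ell=\Omega(d)$, so $\mathrm{span}(a,v)$ has dimension only $\dim T-\ell+1$ and there are many $t$'s above it. Passing to the quotient by $\mathrm{span}(a,v)$, the $STS_{a,v}$-graph becomes the down--up walk between $(k+\ell-1)$-flats and $(\ell-1)$-flats in a Grassmann; indicators $\mathbf{1}_{s\supset u}$ for a fixed line $u$ above $\mathrm{span}(a,v)$ are near-eigenvectors with eigenvalue $\Theta(q^{-k})$. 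With $k$ a fixed constant this is not $q^{-\Omega(d)}$, so the STAV is only $q^{-O(1)}$-good, and following the $\gamma\cdot\Pr[C]$ term through the almost-cut bound in \pref{lem:g-lemma} the final approximation parameter can be no better than $q^{-O(1)}$.

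Your second-moment surprise calculation also comes out as $q^{-\dim a}/\delta=q^{-(d-k-\ell)}/\delta$ rather than $q^{-\Omega(\ell)}/\delta$, though both are $q^{-\Omega(d)}$ in the intended regime. The paper sidesteps this computation via \pref{lem:delta-surprise}: with $A$ at codimension one in $T$, the $T$-lower graph is the point-versus-hyperplane incidence inside each $t$, a $q^{-(\ell-1)/2}$-bipartite expander, so the sampler lemma gives surprise $O(q^{-(\ell-1)}/\delta)$ directly. Taking $A$ one level below $T$ thus buys both the clique property for (A2b) and a clean surprise bound, with the single parameter $\ell=\Theta(d)$ carrying all the quantitative weight.
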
%
For the benefit of the reader we added in \pref{sec:list-of-results} a list of theorems proven in this work.

\subsection*{Overview of the proof of our main theorem (\pref{thm:main-STAV-agreement-theorem})}
Our main agreement theorem on STAV structures has two parts, as in many previous works.
The first part of the proof uses the amplification given by the surprise parameter to construct a family of functions for each $a \in A$, that is $\mathbf{g} = \sett{g_a: \adj{a} \to \Sigma}{a \in A}$. The reach of $a$ is the set of all vertices $v$, so that $\set{v} \dunion a \subset s$ for some $s \in S$. The value $g_a(v)$ is defined by popularity of $f_s(v)$ for all $s\supset a$. This part is standard and occurs in many agreement test analyses.

The second part of the proof is our main new technical contribution.
In this step one constructs a global $G:V\to\Sigma$ from the pieces $g_a$. This is done by showing sufficient agreement between the different $g_a$'s. We consider a graph connecting a pair $a,a'$ when they sit together inside some $s$. In earlier works this graph is dense and has very low diameter ($2$ typically). This can only happen when the functions $g_a$ are defined on a pretty large part of the vertex set (as in \cite{DinurSteurer2014, BDL17, DinurFH19, RazS1997}) unlike our context where each $\adj a$ is quite tiny (its size can be a constant, far smaller than $\card V$). When the diameter is small and $\adj a$ is huge it is easy to stitch the different $g_a$'s together, even when the agreement between the $g_a$'s is rather crude, by taking a very short random walk from $a$ to $a'$ to $a''$.

In contrast, in our case the diameter is logarithmic and we cannot afford a random walk because the error would build up badly.
Instead, we construct the global function $G:V \to \Sigma$ by
\[ G(v) = \plu \sett{g_a(v)}{a \in \adj{v}}, \]
i.e. the most popular opinion of the $g_a$'s on $v$. We show that it has the desired properties. This argument relies on the fact that the VASA random walk (in particular, moving from $a$ to $s$ to $a'$) is a very strong expander. That such VASA distributions are available is proven through a new type of random walk which we call the complement random walk, and is discussed separately below.

The only previous work that analyzed an agreement test on a sparse set system (where this ``large diameter'' problem appears) was in \cite{DinurK2017}. Their solution circumvented this problem by reducing to the dense case in a certain way. That reduction is ad-hoc and required an additional external layer of sets above \(S\), which limited the generality of the theorem. Whereas the current proof is more direct and works without this technical caveat.
\subsubsection*{The complement random walk in high dimensional expanders}
Several previous works \cite{KaufmanM17,DinurK2017,KaufmanO2017} analyzed random walks on high dimensional expanders\footnote{In this section we assume familiarity with high dimensional link expansion, see \pref{sec:HDX}  for formal definitions.}. In this work we study a new type of random walk which we call the complement random walk. 

Interestingly, independent recent work of Alev, Jeronimo, and Tulsiani  \cite{AlevJT2019}, studies the same walk, where it is called ``swap walk''. The authors use this walk for analyzing an algorithm that solves constraint satisfaction problems (CSPs) on high dimensional expanders.

The complement walk goes from $i$-face to $i$-face via a shared $j$-face, just like the upper and lower random walks previously studied. However it has significantly better expansion, and is hence much more useful for us. We construct with it $\gamma$-good STAVs in many of our applications. The problem with many of the previously studied random walks is that they have an inherent ``laziness'' built in: starting from an $i$ face and walking down to a $j$ face, and then back up to another $i$ face, the $j+1$ common vertices are limiting the expansion of this walk (the family of all sets containing a fixed vertex will have not-so-good expansion). In contrast, the complement walk starts with an $i$-face $a$ moves up to a $j$-face $b\supset a$ and then moves down to another $i$-face $a'\subset b$ conditioned on $a,a'$ being \emph{disjoint} (of course we need $j\geq 2i+1$, note that any choice of such $j$ would give the exact same random walk). It turns out (see \pref{thm:complement-walk-is-a-good-expander}) that this walk has great expansion. This can be seen by examining for example the case of $i=0$ \rem{and $j=1$} and noting that this is just the non-lazy random walk on a graph.

We prove the properties of this (and other) walks in \pref{sec:complement-walk}. The proof goes through \emph{Garland's method}. This method, proves global properties of the simplicial complexes by properties on the links. This method, originally developed by Garland in \cite{Garland1973}, is used in many works such as ~\cite{EvraK2016,DinurK2017,Oppenheim2018}.

We believe these random walks are interesting on their own account. These walks generalize the non-lazy adjacency operator in a graph, and the bipartite adjacency operator in a bipartite graph to high dimensions. As a bonus we show an immediate application for these walks: a new high dimensional expander mixing lemma for sets in all dimensions (see \pref{lem:two-sided-HDEML} and \pref{lem:one-sided-HDEML}), extending the work of \cite{coloring,oppenheim3}.
\subsection*{More background and context}
As mentioned earlier the first agreement testing theorems are the line versus line or plane versus plane low degree agreement tests \cite{RuSu96,ArSu,RazS1997} that play an important part in various PCP constructions. Combinatorial analogs of these theorems were subsequently dubbed ``direct product tests'' and studied in a sequence of works \cite{GolSaf97,DR06,DG08,ImpagliazzoKW2012,DinurSteurer2014,DL17}.  For a long while there were only two prototypical set systems for which agreement tests were known:
\begin{itemize}
  \item All $k$-dimensional subspaces of some vector space
  \item All $k$-element subsets of an underlying ground set
\end{itemize}
Each of these has several variants (varying the field size and ambient dimension, deciding whether the sets are ordered or not, etc.). 

The study of agreement tests initially came as a part of a PCP construction, as in the case of the low degree agreement tests and later in works leading towards combinatorial proofs for the PCP theorem, as started in \cite{GolSaf97} and continued in \cite{DR06,Din07}.

Further works relied on agreement tests for hardness amplification: \cite{ImpagliazzoKW2012} showed hardness for label cover (called a two-query PCP) based on their direct product agreement test. A recent line of work \cite{KhotMS2017,DinurKKMS2018-2to1, BarakKS19,KhotMS18} concerning unique and $2:2$ games used agreement tests on the Grassmann as an inner verifier (see in particular \cite{DinurKKMS2018-2to1}).

In hope of getting more efficient PCPs and LTCs it seemed that understanding the power of agreement tests in a more general setting would give us a better handle on domains in which locally testable codes and PCP constructions can reside. However, despite some attempts, no derandomization techniques managed to find further (and hopefully sparser) constructions.

A couple of years ago \cite{DinurK2017} discovered a new and very sparse set system that supports an agreement test. This new system is based on group theoretic (and number theoretic) constructions of so-called high dimensional expanders. The number of sets in this set system is linear in the size of the ground set, a feature that seems key towards new and more efficient locally testable codes and PCPs.

This suggested that there is possibly a much richer collection of set systems that support agreement tests, and brought to the fore once more the question of understanding which set systems support agreement tests.

\section{Agreement Tests for STAV Structures}
\def\sts{\textsc{sts}}
\def\stsa{{G_{STSa}}}
\def\agr{{\textrm agree}}
\newcommand\disagr[1]{\textrm {rej}_{{#1}}}
\def\glob{{\mathcal G}}
\def\testd{\mathcal{D}}
\def\stavd{D_\textsc{stav}}
\def\surp{\xi}
\def\vasad{D_\textsc{vasa}}

\subsection{Agreement tests and agreement expansion}
We begin with the definition of an agreement expander, similar to that of \cite{DinurK2017}.
Let $S$ be a family of subsets of a ground set $V$. An ensemble of local functions is a collection $\sett{f_s:s\to\Sigma}{s\in S}$ consisting, for each subset $s\in S$, of a function whose domain is $s$. A perfect ensemble is one that comes from a global function $g:V\to\Sigma$, namely $f_s= \rest g s$ for all $s\in S$. We denote the set of all perfect ensembles by \[ \glob(V;\Sigma) = \sett{ \set{\rest g s}_{s\in S}}{g:V\to\Sigma}.\]

An agreement test is given by a distribution $\testd$ over pairs of intersecting subsets,
\begin{itemize}
\item Input: An ensemble of local functions $\sett{f_s:s\to\Sigma}{s\in S}$
\item Test: Choose a random edge $\set{s_1,s_2}$ according to the distribution $\testd$, let $t=s_1\cap s_2$ and accept iff $\rest{f_{s_1}}t = \rest{f_{s_2}}t$.
\end{itemize}
We denote by $\disagr{\testd}(f)$ the probability that the agreement test rejects a given ensemble $f=\set{f_s}$. A perfect ensemble is clearly accepted with probability $1$. We say that the test is sound if it is a sound test for the property $\glob(V;\Sigma)$ in the standard property testing sense, namely,
\begin{definition}[Sound agreement test]
  An agreement test is $c$-sound if every ensemble $f=\set{f_s}$ satisfies
  \[ \dist(f,\glob) \le c \cdot \disagr \testd (f) \,.\]
\end{definition}
Finally we can define an agreement expander,
\begin{definition}[$c$-agreement expander]\label{def:aexpander}
  An agreement expander is a family $S$ of subsets of a ground set $V$ that supports a $c$-sound agreement test.
\end{definition}
The reason for the term ``agreement expander'' is the similarity to a Rayleigh quotient given by
\[ \frac 1 c  = \inf_{f\not\in\glob}\frac{\disagr \testd (f)}{\dist(f,\glob)}\,,
\]
where the numerator counts the number of rejecting edges and the denominator measures the distance from the property. See \cite{KaufmanL14} for a more detailed analogy between expansion and property testing.

\subsection*{Approximate versus exact agreement}
For some agreement tests one cannot expect a conclusion as strong as in \pref{def:aexpander}. For example, suppose that the testing distribution $\testd$ selects pairs $s_1,s_2$ that typically intersect on an $\eta\ll 1$ fraction of $s_1$ (and of $s_2$). In such a case consider the following ensemble,
\begin{example}\label{ex:noisyensemble} Construct an ensemble $f=\set{f_s}$ at random as follows. For all $s$ set $f_s = \rest 0 s$ and then for each $s$ with probability $\alpha$ do: change one bit of $f_s$ at random.
\end{example}
This ensemble passes the test with probability at least $1-2\alpha\eta$ while being roughly $\alpha$-far from $\glob$. Setting $\alpha=1$ rules out any kind of conclusion as in  \pref{def:aexpander}.
However, not all is lost, and a meaningful theorem can still be proven if we move to a softer notion of {\em approximate} agreement. Let us denote by $\dist_\gamma(f,f')$ the fraction of sets $s$ on which $f_s,f'_s$ differ on more than $\gamma$ fraction of $s$. Namely,
\[ \dist_\gamma(f,f') = \Pr_s [ \dist(f_s,f'_s)>\gamma ].
\]
\begin{definition}[$\gamma$-approximate soundness]
  An agreement test is $\gamma$-approximately $c$-sound if every ensemble $f=\set{f_s}$ satisfies
  \[ \dist_\gamma(f,\glob) \le c \cdot \disagr \testd (f) \,.\]
\end{definition}
When $\gamma < 1/\card s$ we recover the previous notion of soundness which we now call {\em exact} soundness. So a test is $c$-sound or exactly $c$-sound if it is $\gamma$-approximately $c$-sound for some $\gamma<1/\card s$.
\subsection{STAV structures}\label{sec:stav}
A STAV structure introduces two additional layers of subsets of $V$: layer $T$ and layer $A$. These come in addition to the top layer $S$ that we already have in the definition of an agreement expander. The layer $T$ represents the intersections of pairs of subsets $s_1,s_2\in S$, and is implicit in the definition of the agreement test distribution. The layer $A$ is new and sits below $T$. It provides a certain amplification needed for the analysis.
\rem{\todo{\begin{enumerate}
    \item move STAV- structure to poset notation.
    \item draw v-a-s etc distributions.
    \item Besides the formal definition I need to discuss this and give examples: Grassmann, HDX, etc.
\end{enumerate}
}}

\begin{figure}[h!]
    \centering
    \includegraphics[scale=0.5]{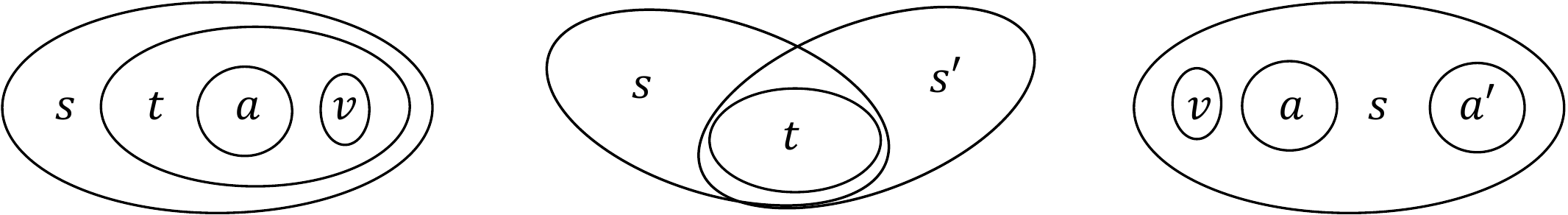}
    \caption{The STAV, STS, and VASA distributions}
    \label{fig:all3}
\end{figure}
\begin{definition}[STAV structure]\torestate{\label{def:STAV}
A STAV structure is a tuple $X=(S,T,A,V;\,\stavd)$ consisting of a ground set $V$ and three layers of subsets \(A,T,S \subset \mathbb{P}(V)\), together with a stochastic process $\stavd$ that samples $(s,t,a,v)$ as follows. 
\begin{itemize}
    \item Choose $s$
    \item Choose $t$ conditioned on $s$
    \item Choose $a,v$ conditioned on $t$ (but not dependent on $s$)
 \end{itemize}
The distributions in which the above are chosen are not restricted except for assuming that the marginal of this process is uniform over $v$ and that the probability to choose a vertex or a set is never zero.
The STAV comes with two distributions,\begin{itemize}
  \item {\em STS distribution:} A distribution over triples $(s_1,t,s_2)$ that is symmetric with respect to $s_1,s_2$ and satisfies that the marginal of $(s_1,t)$ (and therefore $(s_2,t)$) is identical to the marginal of $\stavd$.
  \item {\em VASA distribution:} A distribution $\vasad$ over tuples $(v,a_1,s,a_2)$ that is symmetric with respect to $a_1,a_2$ and satisfies that the marginal of $(v,a_1,s)$ (and therefore $(v,a_2,s)$) is identical to the marginal of $\stavd$.
  \end{itemize}}\end{definition}

Notation: Throughout this paper we use the letters $s,t,a,v$ to denote elements in $S,T,A$ and $V$ respectively without specifically mentioning this. So for example fixing $a_0$, $\ul{a_0}{s}$ stands for all elements of $S$ that contain $a_0\in A$. Unless specified otherwise, all random choices are with respect to the distributions $\stavd$ or the STS or VASA distributions.\\

Before we continue to define what a ``good'' STAV is, let us mention a couple of examples that might be useful to keep in mind.
\begin{example}[The direct product test STAV]\label{ex:DP}
Fix $k$ and let $\ell = k/3$. We construct the following family of STAVs for all $n \gg k$, $n\to\infty$. Let $V=[n]$, let $S =\binom{[n]}{k}, T=\binom{[n]}{\ell}$ and $A=\binom{[n]}{\ell-1}$. The STAV distribution is choosing a $k$-element set uniformly, then an $\ell$-element subset of it, and then splitting $t$ randomly into $a$ and $v$. A possible STS distribution is to choose a random $t$ and then two independent $s_1,s_2\supset t$. Another possibility is to choose $s_1,s_2\supset t$ so that their intersection is exactly \(t\). The VASA distribution is to choose $s$ uniformly and in it $a,a',v$ uniformly so that they are all disjoint.

An agreement test for this example appears in \cite{DinurSteurer2014} under the name direct product test.
\end{example}

\begin{example}[HDX simplicial complexes, generalizing \pref{ex:DP}]
\label{ex:hdx-sc}
Fix $k$ and let $\ell = k/3$. We construct the following family of STAVs for infinitely many $n \gg k$. Suppose $X$ is a high dimensional expander on $n$ vertices. Let $V=X(0)$, let $S =X(k), T=X(\ell)$ and $A=X(\ell-1)$. The STAV distribution is choosing a random $s$ from the distribution of $X$, then a uniform $t\subset s$, and then splitting $t$ randomly into $a$ and $v$. A possible STS distribution is to choose a random $t$ and then two independent $s_1,s_2\supset t$. Another possibility is to choose $s_1,s_2\supset t$ so that they must be disjoint. The VASA distribution is to choose $s$ according to the $X$ distribution and in it $a,a',v$ uniformly so that they are all disjoint.
\end{example}
Agreement tests for this example were analyzed in \cite{DinurK2017} for certain complexes $X$ and certain bounds on the dimension $k$.

\begin{example}[Subspaces STAV]
Fix $m>d>\ell$. We construct the following family of STAVs for all finite fields $\F=\F_q$, $q\to\infty$. Let $V=\F^m$, let $S$ be all $d$-dimensional spaces of $V$, let $T$ be all $\ell$-dimensional spaces of $V$ and let $A$ be all $(\ell-1)$-dimensional spaces of $V$. The STAV distribution is choosing $s$ uniformly, $t\subset s$ uniformly, then $a\subset t$ uniformly, then $v$ uniformly from $t\setminus a$. A possible STS distribution is to choose a random $t$ and then two uniform  $s_1,s_2\supset t$. The VASA distribution is to choose $s$ uniformly and in it $a,a',v$ uniformly so that they are all disjoint.

This example generalizes the plane vs. plane low degree agreement test. An agreement test for it is proved in \cite{RazS1997} for ensembles whose local functions are low degree functions, and in \cite{ImpagliazzoKW2012} for general ensembles (in both cases the focus was on a different parameter regime).
\end{example}
We now define several graphs that arise as local views of the STS and VASA distributions. The first of these is the bipartite graph obtained by the marginal of $\stavd$ on $A$ and $V$,
\begin{definition}[The \(\AV{}\)-Graph (reach graph)]\torestate{\label{def:reach-graph}
   The \(\AV{}\)-graph, or reach graph, is a bipartite graph $(V,A,E)$ where the probability of choosing an edge $(v,a)$ is given by the marginal of $\stavd$ on $V\times A$, namely,
   $Pr[(v,a)] = \sum_{s,t}\Prob[\stavd]{(s,t,a,v)}$.}
\end{definition}
We denote $\adj a\subset V$ the set of neighbors of $a$ in this graph, and by $\adj v\subset A$ the set of neighbors of $v$ in this graph.

    \begin{definition}[The local reach graphs]\torestate{\label{def:local-reach-graph}
    Let $X$ be a STAV-structue, and fix $s \in S$. The \emph{$s$-local} reach graph, or $\AV{s}$-graph, is a bipartite graph where:
    \[ L = \sett a{a \subset s}.\]
    \[ R = \sett v{v \in s}.\]
    \[ E = \sett{(a,v)}{v \in \adj{a}}. \]
    The probability of choosing an edge $(a,v)$ is the probability of choosing $(a,v)$ in the STAV-distribution given that we chose $s$.}
    \end{definition}

%
%

\subsubsection*{The STS graph and its local views}
The STS distribution is conveniently viewed as a graph whose vertex set is $S$ and whose edges are labeled by elements of $T$, with the weight of the edge from $s_1$ to $s_2$ labeled by $t$ given by the probability of $(s_1,t,s_2)$. The graph is undirected since the STS distribution is symmetric wrt $s_1,s_2$.

We consider ``local views'' of the $\sts$ graph - obtained by inducing it on a smaller set of vertices.
\begin{definition}[$\sts_{a}$-Graph]\torestate{\label{def:sts-a-graph}
For a fixed $a$, an $\sts_{a}$-Graph is has vertex set $\sett s{s\supset a}$ and the probability of choosing an edge $\set{s_1,s_2}_t$ is given by $2\cProb{\sts}{(s_1,t,s_2)}{t\supset a}$.}
\end{definition}

\begin{definition}[$\sts_{a,v}$-Graph]\torestate{\label{def:sts-a-v-graph}
For a fixed $a,v$, an $\sts_{a,v}$-Graph is has vertex set $\sett s{s\supset a\cup\set v}$ and the probability of choosing an edge $\set{s_1,s_2}_t$ is given by $2\cProb{\sts}{(s_1,t,s_2)}{t\supset a\cup v}$.}
\end{definition}

\subsubsection*{Local views of the VASA distribution}
When fixing one of the four terms in $(v,a,s,a')$, we can define the following two graphs by the marginal:
\begin{definition}[$\vASA{v}$-Graph]\torestate{\label{def:v-asa-graph}
For a fixed $v$, an $\vASA{v}$-Graph is the graph whose vertex set is $\adj{v}$, and labeled edges are \[E = \sett{\set{a_1,a_2}_{s}}{a_1,a_2 \in A, s\in S, v,a_1,a_2 \subset s}.\]
The probability to choose an edge $\set{a_1,a_2}_s$ is given by \[\cProb{\vasad}{(v',a_1,s,a_2)}{v' = v}.\]}
\end{definition}

\begin{definition}[Bipartite $\VAS{a}$-Graph]\torestate{\label{def:bip-vas-a-graph}
For a fixed $a$, an $\VAS{a}$-Graph is the bipartite graph $(L,R,E)$ where
\begin{align*}
  L &= \adj{a}, \\
  R &= \sett{(a',s)}{\exists v \in L \; (v,a,s,a') \in Supp(\vasad) },\\
 E &= \sett{(v,(a',s)) }{ (v,a,s,a') \in Supp(\vasad) }.
\end{align*}
The probability of choosing an edge $(v,(a',s))$ is given by $\cProb{\vasad}{(v,a_0,s,a')}{a_0 = a}$.}
\end{definition}

%

\subsubsection*{Good STAV-Structures}
Having defined all the relevant graphs, we come to the requirements for a good STAV:
\inote{there exist vasa distr and sts distr such that}
\begin{definition}[A good STAV-Structure]
\label{def:good-stav-structure}
Let $X$ be STAV structure and $\gamma < 1$ be some constant. We say $X$ is a $\gamma$-good if assumptions (A1)-(A3) and one of (A4($r$)) or (A4) below hold for $X$:
\rem{\Ynote{I added the first assumption of what I wanted to require.
Our original proof asked $\adj{a} \cap s$ to be a constant size of $s$.

Another (I think weaker) requirement is the $\delta$-samplability - which essentially say that if $B \subset \set{v < s}$ is larger than $\delta$, then a constant fraction of the $\set{a < s}$ think $\cProb{}{v \in B}{v \sim a} \frac{1}{3}\delta$.

Thus if a STAV is $\gamma$-good, and the A vs V graph under $s$ is an $r \gamma$-sampler, then we can prove that
\[ \prob{f_s \ane{r\gamma} \rest{G}{s}} = \bigO{(1+1/r)\varepsilon}. \]
by saying that
This allows us to say that:
\[ \Prob[s]{f_s \ane{r\gamma} \rest{G}{s}} \geq 3\Prob[(s,a)]{\rest{f_s}{s\cap \adj{a}} \ane{\frac{1}{3}r\gamma} \rest{G}{s \cap \adj{a}}}.\]

I added below the assumption the definitions we need to understand it: $\delta$-samplability, and the local $A-V$ graphs (see below). I didn't put them in other places in this section because I'm not sure what you're editing and I don't want to make a mess.

Unfortunately, in order to talk about the parameter $r$ in the theorem, we need to add it here.

The ways I see to get $r\gamma$-samplability in the $A-V$ graphs is by using the sampler lemma on a $r\gamma$-bipartite expander.}}
\begin{enumerate}[start=1,label={\textbf{(A\arabic*)}}]
    \item \label{ass:global-A-V-graph} The reach graph is a $\sqrt{\gamma}$-bipartite expander.
    \item \begin{enumerate}

    \item \label{ass:edge-expander} For all $a \in A$, the $STS_a$-Graph is a $\frac{1}{3}$-edge expander.

    \item \label{ass:v-a-graph}
    For all $a \in A$ and $v \in \adj{a}$, the $STS_{(a,v)}$-graph is an $\gamma$-two-sided spectral expander. \rem{It is also a $\frac{1}{3}$-edge-expander.}
\end{enumerate}
    \item \label{ass:vasa-graphs}
    \begin{enumerate}
        \item \label{ass:vasa-graphs-v} For all $v \in V$, the $\vASA{v}$-graph is a either a $\gamma$-bipartite expander or a \(\gamma\)-two-sided spectral expander.
        \item \label{ass:vasa-graphs-a} For all $a \in A$, the $\VAS{a}$-graph is a $\sqrt{\gamma}$-bipartite expander.
    \end{enumerate}

    \labeleditem{\textbf{(A4($r$))}} \label{ass:AVS-graph-good-sampler} For all $s \in S$, the $\AV{s}$-graph is a $r\gamma$-sampler graph. Here $r>0$ is a parameter. A $r\gamma$-sampler graph is defined in \pref{def:sampling-graph}.

    \item \label{ass:a-not-large} For every pair $a,s$ so that $a \subset s$, the size of $\adj{a}$ inside $s$ is relatively large, that is
    \[ \cProb{v \sim D}{v \in \adj{a}}{v \in s} \geq \frac{1}{2}.\]

\end{enumerate}

\end{definition}

\begin{remark}
The constants  $\frac 1 2,\frac{1}{3}$ are arbitrary. In addition, in the proof of the main theorem, we will use the fact that the graphs in \pref{ass:vasa-graphs}, \pref{ass:v-a-graph} are $\frac{1}{3}$-edge expanders. By the famous Cheeger's inequality, for a small enough $\gamma$, if the graphs above are $\gamma$-spectral expanders, then they are also $\frac{1}{3}$-edge-expanders.
\end{remark}

\subsection{The surprise parameter}
Let $f=\set{f_s}_{s\in S}$ be an ensemble.
In this section we discuss an additional parameter of $f$ and the underlying STAV structure $X$ that influences the agreement theorem. This is the so-called {\em surprise} parameter. This parameter measures how surprised we are when $f_s$ and $f_{s'}$ agree on $a$ given that we already know that they disagree on $t$, where $t\supset a$. If this probability is small, we get strong amplification. This idea played an important role in several previous works and it seems useful to consider this parameter explicitly.
\begin{definition}[Surprise of an ensemble]\torestate{\label{def:surprise}
Let $X$ be a STAV structure. The surprise of a given ensemble $f=\set{f_s}$ with respect to $X$ is
\[ \surp(X,f) = \cProb{s_1,s_2,t,a,v}{ \rest{f_{s_1}}a = \rest{f_{s_2}}a\ve f_{s_1}(v)\neq f_{s_2}(v)}{\rest{f_{s_1}} t\neq \rest{ f_{s_2}}t}
\] where the probability is over choosing $s_1,t,s_2$ from the $\sts$ distribution and then choosing $(a,v)$ conditioned on $t$. Note that both $s_1,t,a,v$ and $s_2,t,a,v$ are distributed as in $\stavd$.}
\end{definition}
It is sometimes natural to restrict attention to a sub-family of ensembles which we call $\delta$-ensembles.
\begin{definition}[$\delta$ ensemble]\torestate{\label{def:delta-ensemble}
  An ensemble $f$ is a $\delta$-ensemble if for every labeled edge $(s_1,t,s_2)$ in the $\sts$ graph,
  \[\rest{f_{s_1}}t \neq \rest{ f_{s_2}}t\qquad \Longrightarrow\qquad\dist(\rest{f_{s_1}}t, \rest{f_{s_2}}t)> \delta \] (where $\dist(\cdot,\cdot)$ stands for relative hamming distance).}
\end{definition}
\begin{remark}
Note that every ensemble is a $\frac 1 {\card t}$ ensemble.
\end{remark}
\begin{remark}
Agreement theorems are often considered for special ensembles where each $f_s$ belongs to an error correcting code, such as the Reed-Muller code in the case of low degree tests. Furthermore, in the low degree test examples, for all $t\subset s$, $\rest{f_s} t$ itself belongs to an error correcting code with some distance $\delta$. Clearly, such ensembles are automatically $\delta$-ensembles.
\end{remark}
In some important cases the STAV structure itself implies a non-trivial surprise parameter for all possible ensembles. We are thus led to define the surprise of the STAV as the supremum over all possible ensembles,
\begin{definition}[Global surprise]\torestate{\label{def:global-surprise}
Let $X$ be a STAV structure. The surprise of $X$ is
\[ \surp(X) = \sup_{f}\surp(f)\,.
\]}
\end{definition}

While the agreement of $\FF$ is a property of the ensemble $\FF$, the surprise is influenced by the STAV-structure itself.
For this, the following graphs play a role:
\begin{definition}[T-Lower Graph]\torestate{\label{def:t-lower-graph}
Fix $t \in T$. The T-lower graph of $t$ is a bipartite graph where
\[L = \sett v{v \in t}, \; R = \sett a{a \subset t}, \; E = \sett{(a,v)}{v \in a}. \]
Notice that here, we require $v \in a$ and \emph{not} $v \in \adj{a}$ as we required in the STAV-structure.
The probability to choose an edge $(a,v) \in E$ is the probability of choosing $a$ given that $a \subset t$ and then choosing $v$ at random inside $a$.}
\end{definition}
A priori, the T-lower graphs need not be good expanders, as in the STAV-structures defined for \pref{thm:main-agreement-theorem-one-sided}. However, when they are, we can use their expansion properties to establish the ``surprise''.
We can give the following easy bound on the surprise parameter,
\begin{lemma}\label{lem:delta-surprise}
Let $X$ be a STAV-structure so that for every $t \in T$, the T-lower graph is a $\eta$-bipartite expander. For any $\delta$ ensemble $f$,  $ \surp(X,f) \le O(\frac{\eta^2}{\delta})$.
\end{lemma}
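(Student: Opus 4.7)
The plan is to reduce the surprise to a disagreement-set computation and then apply an expander-mixing-type bound inside the T-lower graph. Fix an STS triple $(s_1,t,s_2)$ with $\rest{f_{s_1}}t \neq \rest{f_{s_2}}t$ and let $B_t = \sett{u \in t}{f_{s_1}(u) \neq f_{s_2}(u)}$ be the disagreement set inside $t$. The $\delta$-ensemble hypothesis is exactly that $\mu(B_t) := |B_t|/|t| > \delta$. The key combinatorial observation is that the event measured by the surprise, namely ``$\rest{f_{s_1}}a = \rest{f_{s_2}}a$ and $f_{s_1}(v) \neq f_{s_2}(v)$,'' translates (for every fixed such triple) into the purely set-theoretic event ``$a \cap B_t = \emptyset$ and $v \in B_t$.'' Hence
\[ \surp(X,f) \;=\; \E_{(s_1,t,s_2)\mid \rest{f_{s_1}}t \neq \rest{f_{s_2}}t}\;\Pr_{a,v \mid t}\Brac{\,a \cap B_t = \emptyset \ve v \in B_t\,}. \]

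Fixing $t$, I would first upper bound the inner probability by dropping the $v$-constraint, $\Pr_{a,v\mid t}[a\cap B_t=\emptyset\ve v\in B_t]\le \Pr_{a\mid t}[a\cap B_t=\emptyset]$. By construction, the marginal of $a$ given $t$ under $\stavd$ coincides with the $A$-side vertex distribution of the T-lower graph $G_t$ (this is precisely how the edge weights of $G_t$ are defined in \pref{def:t-lower-graph}). Let $A' = \sett{a\subset t}{a\cap B_t=\emptyset}$. Since $G_t$ has edges $(a,u)$ \emph{exactly} when $u\in a$, there are no edges at all between $A'$ and $B_t$ in $G_t$.

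Applying the bipartite expander mixing lemma to the $\eta$-bipartite expander $G_t$ with $L'=A'$ and $R'=B_t$ then forces $\mu(A')\cdot\mu(B_t) \le \eta\sqrt{\mu(A')\mu(B_t)}$, i.e.\ $\mu(A')\mu(B_t)\le \eta^2$. Therefore $\Pr_{a\mid t}[a\in A'] \le \eta^2/\mu(B_t) \le \eta^2/\delta$ uniformly in $t$. Averaging over the conditional distribution of $(s_1,t,s_2)$ yields $\surp(X,f) \le \eta^2/\delta = O(\eta^2/\delta)$.

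There is no real technical obstacle here: the proof is essentially a single application of the bipartite expander mixing lemma, together with the combinatorial rewriting of ``agree on $a$, disagree at $v$'' as ``$a$ avoids $B_t$, $v\in B_t$.'' The only step worth double-checking is the marginal match between $\stavd\mid t$ on $a$ and the $A$-side of $G_t$, which is immediate from the definition of the T-lower graph.
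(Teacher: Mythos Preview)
Your proof is correct and follows essentially the same route as the paper. Both arguments fix $(s_1,t,s_2)$ with disagreement on $t$, define the disagreement set $B_t\subset t$ (which has measure $>\delta$ by the $\delta$-ensemble hypothesis), drop the ``$v\in B_t$'' part, and bound $\Pr_{a\mid t}[a\cap B_t=\emptyset]$ using the $\eta$-bipartite expansion of the $T$-lower graph. The only cosmetic difference is that the paper cites the sampler lemma (\pref{lem:sampler-lemma}) to conclude this probability is $O(\eta^2/\delta)$, whereas you invoke the bipartite expander mixing lemma directly on the pair $(A',B_t)$ with zero crossing edges; the sampler lemma is itself an immediate consequence of mixing, so the two arguments are the same in substance.
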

Before proving the lemma let us give a couple of examples demonstrating its usefulness.
\begin{example}[HDX simplicial complexes, continued]\label{example:hdx-sc2}
Consider the STAV from \pref{ex:hdx-sc}. For any $t \in X(\ell)$, the $T$-lower graph of $t$ is the graph where $R$ is the vertices of $t$, and $L$ are subsets of $t$ of size $|t| - 1$, where the edges denote containment.
The reader may calculate that this graph is a $\eta$-bipartite expander with $\eta=\frac{1}{\ell}$. Plugging in $\delta = 1/\ell$ we get $\surp(X)\le \eta^2/\delta = 1/\ell$.
\end{example}
\begin{example}[The Grassmann Poset]
Let $\F$ be a finite field, let $X$ is a STAV-structure where $V= \F^n$, $T$ is the set of $\ell$-dimensional linear subspaces of $\F^n$, $A$ is the set of $(\ell-1)$-dimensional spaces. For any $t \in X(\ell)$, the $T$-lower graph of $t$ is the graph where $R$ are the $1$-dimensional subspaces of $t$, and $L$ are the $\ell$-dimensional subspaces of $t$, where the edges denote containment.
The reader may calculate that this graph is an $\bigO{\sqrt{\frac{1}{q^{t-1}}}}$-bipartite expander. One is often interested in agreement theorems on the Grassmann poset where the local functions are promised to come from some error correcting code. In this case the ensemble $f$ will be a $\delta$-ensemble for constant $\delta$, and therefore we bound the surprise by $\surp(X)\le O(1/q^{t-1})$.
\end{example}
\begin{proof}[Proof of \pref{lem:delta-surprise}]
It suffices to show that
\[ \cProb{}{\rest{f_{s_1}}{\dl{a}{V}} = \rest{f_{s_2}}{\dl{a}{V}}}{\rest{f_{s_1}}{\dl{t}{V}} \ne \rest{f_{s_2}}{\dl{t}{V}}} = \bigO{\frac{\eta^2}{\delta}}.\]

Denote by $B = \sett{v \in \dl{t}{V}}{f_{s_1}(v) \ne f_{s_2}(v)}$. By our assumption on the distance, we are promised that $\prob{B} \geq \delta$.
And indeed, we can invoke the sampler lemma, \pref{lem:sampler-lemma}, and get that the probability of $a$ to see no vertices in $B$ is $O(\frac{\eta^2}{\delta})$.
\end{proof}
\subsection{Main theorem: agreement on STAV structures}
We are now ready to state our main technical theorem. Recall that for a given distribution $\testd$ over pairs $s_1,s_2$ we denoted by $\disagr{\testd}(f)$ the probability that $\rest{f_{s_1}}t \neq \rest{f_{s_2}}t$ when choosing $s_1,s_2 \sim \testd$ and setting $t=s_1\cap s_2$. For a given STAV $X$ we extend this notation to $\disagr{X}(f)$ understanding that the sets $s_1,t,s_2$ are now chosen via the STS distribution that comes with $X$.
\begin{theorem}[STAV Agreement Theorem] \torestate{ \label{thm:main-STAV-agreement-theorem}
Let $\Sigma$ be some finite alphabet (for example $\Sigma = \set{0,1}$).
Let \( X = (S,T,A,V)\) be a $\gamma$-good STAV-structure for some \(\gamma < \frac{1}{3}\). Let \( \FF = \sett{f_s :\dl{s}{V} \to \Sigma }{s \in S}\) be an ensemble such that
\begin{enumerate}
    \item Agreement:
    \begin{equation} \label{eq:agreement}
        \disagr{X}(f) \leq \varepsilon,
    \end{equation}
    \item Surprise:
    \begin{align} \label{eq:surprise}
        \surp(X,f)\le O(\gamma)
    \end{align}
\end{enumerate}
Then assuming either \pref{ass:AVS-graph-good-sampler} for $r=1$ or \pref{ass:a-not-large}, 
\[ \dist_\gamma(f,\glob)\le O(\varepsilon).
\]
More explicitly, there exists a global function $G: V \to \Sigma$ s.t.
\[ \Prob[s \in S]{f_s \ane{\gamma} \rest{G}{\dl{s}{V}}} \defeq \Prob[s \in S]{ \cProb{v \in V}{f_s(v) \ne \rest{G}{\dl{s}{V}}}{v \in s} \geq \gamma } = \bigO{\varepsilon}. \]
Moreover, for any $r > 0$, if either \pref{ass:AVS-graph-good-sampler} or \pref{ass:a-not-large} holds then
\begin{equation} \label{eq:thm-then}
    \Prob[s \in S]{f_s \ane{r \gamma} \rest{G}{\dl{s}{V}}} = \bigO{\left (1+\frac{1}{r} \right )\varepsilon}.
\end{equation}

The O notation \emph{does not} depend on any parameter including $\gamma,\varepsilon$, the size of the alphabet, the size of $|S|,|T|,|A|,|V|$ and, size of any $s \in S$.
}
\end{theorem}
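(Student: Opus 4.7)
The plan follows the two-stage strategy sketched in the introduction.

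\textbf{Stage 1: construct local ``candidate globals'' $g_a : \adj{a} \to \Sigma$.} For each $a \in A$ and $v \in \adj a$, I would define $g_a(v) = \plu_{s \supset a\cup\{v\}} f_s(v)$, and the first goal is to show that for most $a\in A$ and most $s\supset a$, $f_s$ agrees with $g_a$ on $\adj{a}\cap s$ up to a vanishing fraction of errors. The key driver is the surprise hypothesis (\ref{eq:surprise}): conditioned on $\rest{f_{s_1}}t\neq \rest{f_{s_2}}t$, the event $\rest{f_{s_1}}a=\rest{f_{s_2}}a$ and $f_{s_1}(v)=f_{s_2}(v)$ happens with probability only $O(\gamma)$. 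Combining this with edge expansion of $\sts_a$ (assumption \ref{ass:edge-expander}) and spectral expansion of $\sts_{a,v}$ (assumption \ref{ass:v-a-graph}), a Cheeger-type argument shows that within almost every ``star'' of $a$ there is a dominant opinion on each $v\in\adj a$, and $f_s$ coincides with that dominant opinion on $\adj a\cap s$ for all but an $O(\varepsilon)$ fraction of $s$. This step is essentially the standard plurality/amplification analysis from direct product testing, made local.

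\textbf{Stage 2: stitch the $g_a$'s into a global $G$.} Define $G(v)=\plu_{a\in\adj v} g_a(v)$. The plan is to prove two things: (i)~for most $v$, $G(v)=g_a(v)$ for almost all $a\in\adj v$, and (ii)~for most $s\in S$, $f_s|_s$ is $\gamma$-close to $G|_s$. Item (i) follows from applying the expansion of the $\vASA{v}$-graph (\ref{ass:vasa-graphs-v}): if a ``bad'' set of $a$'s with $g_a(v)\ne g_{a'}(v)$ for a typical neighbor $a'$ were large, it would imply many disagreeing pairs $(a,s,a')$, which via the VASA distribution translate back to disagreements of the original ensemble on pairs $(s_1,s_2)$ sharing some $t$, contradicting (\ref{eq:agreement}). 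For item (ii) I would combine (i) with the $\sqrt\gamma$-bipartite expansion of the $\VAS{a}$-graph (\ref{ass:vasa-graphs-a}) to transfer agreement from ``$f_s$ matches $g_a$ on $\adj a\cap s$'' to ``$f_s$ matches $G$ on $\adj a\cap s$'', for typical $(s,a)$. Finally, assumption \ref{ass:a-not-large} (or alternatively the sampler assumption \ref{ass:AVS-graph-good-sampler}) lifts agreement from $\adj a\cap s$ to all of $s$: since $\adj a\cap s$ covers at least half of $s$, an $\eta$-fraction disagreement on $s$ forces an $\Omega(\eta)$-fraction disagreement on $\adj a\cap s$ for most $a\subset s$. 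To get the quantitative $(1+1/r)\varepsilon$ version, I would rerun the last inequality using the $r\gamma$-sampler property of the $\AV s$-graph, which converts ``$r\gamma$ bad on $s$'' into ``$\Omega(r\gamma)$ bad on $\adj a\cap s$ for most $a$'' and loses only a factor $1+1/r$ via the standard sampler bound.

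\textbf{Main obstacle.} Stage~1 and the final sampling reduction are fairly standard; the novel difficulty is item~(i) of Stage~2, i.e.\ controlling disagreement between $g_a$ and $g_{a'}$ over $v$. Unlike in the dense/low-diameter regimes of \cite{DinurSteurer2014,DinurFH19,RazS1997}, the graph on $A$ connecting $a,a'$ via a common $s$ is sparse and has large diameter, so a two-step random walk argument is not available. The whole weight of the proof rests on showing that the $\vASA v$ and $\VAS a$ graphs are strong expanders and that the VASA distribution faithfully reflects the STS distribution; this is exactly what the ``complement walk'' of \Sref{sec:complement-walk} is designed to deliver, and it is where I expect the delicate part of the analysis to live. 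Once the disagreement on $V\times A$ is bounded by $O(\varepsilon)$, defining $G$ by plurality and invoking the sampling/covering assumption gives the stated conclusion \eqref{eq:thm-then}.
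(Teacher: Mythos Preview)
Your two-stage architecture and identification of the main obstacle match the paper, but Stage~1 as written has a quantitative gap that breaks the final Markov step. You set $g_a(v)=\plu_{s\supset a\cup\{v\}}f_s(v)$ directly. The paper instead first defines $h_a=\plu_{s\supset a}\rest{f_s}{a}$, uses edge-expansion of $\sts_a$ to prove $\Pr_{s\supset a}[\rest{f_s}{a}\ne h_a]=O(\varepsilon_a)$, and only then sets $g_a(v)=\plu\{f_s(v):s\supset a,\ \rest{f_s}{a}=h_a\}$. The conditioning on $h_a$ is what lets the surprise bite: the surprise bounds $\Pr[\rest{f_{s_1}}{a}=\rest{f_{s_2}}{a}\wedge f_{s_1}(v)\ne f_{s_2}(v)]$ by $O(\gamma\varepsilon)$ (you stated this event with the wrong sign on $v$, incidentally), and only edges inside the set $\{s:\rest{f_s}{a}=h_a\}$ are guaranteed to satisfy the first clause. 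With your unconditioned plurality, an $\sts_{a,v}$-edge crossing between different $f_s(v)$-values need not have $\rest{f_{s_1}}{a}=\rest{f_{s_2}}{a}$, so surprise gives nothing and you only get $\Pr[f_s(v)\ne g_a(v)]=O(\varepsilon)$. That missing $\gamma$ is fatal: the last step is Markov against threshold $r\gamma$, so $O(\varepsilon)$ yields $O(\varepsilon/(r\gamma))$, not $O((1+1/r)\varepsilon)$. The paper handles the residual $\{s:\rest{f_s}{a}\ne h_a\}$ as a third ``uncertain'' set $C$ via an almost-cut bound on the partition $M\cup N\cup C$ of $\sts_{a,v}$.

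There is a second, related omission in Stage~2. The paper does not take $G(v)$ as plurality over all $a\in\adj v$: it first declares a triple $(a,s,a')$ \emph{bad} when $\rest{f_s}{a}\ne h_a$ or $\rest{f_s}{a'}\ne h_{a'}$, defines $A^*=\{a:\text{many bad triples touch }a\}$ and $A^*_v=\{a:\text{many bad edges in }\vASA{v}\text{ touch }a\}$, and sets $G(v)=\plu\{g_a(v):a\notin A^*_v\}$. The $\VAS{a}$-expansion you assign to item~(ii) is in fact used one step earlier, as a sampler to prove ``$a\notin A^*\Rightarrow a\notin A^*_v$ for most $v$''. The actual stitching $g_a(v)\to G(v)$ is a three-set almost-cut in the $\vASA{v}$-graph on $M_{v}\cup N_{v}\cup A^*_v$; without excising $A^*_v$ first, bad edges (which are $O(\varepsilon)$, not $O(\gamma\varepsilon)$) pollute the cut $E(N_v,M_v)$ and the bound does not close. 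Your final reduction via \ref{ass:a-not-large}/\ref{ass:AVS-graph-good-sampler} is correct and matches the paper.
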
 

\section{Proof of Main Theorem}\label{sec:main-thoerems-proof}
In this section we prove our main theorem, \pref{thm:main-STAV-agreement-theorem}. 

We first give a direct proof for the case of two-sided high dimensional expanders, that follows the same line of general proof. Afterwards we prove the theorem in full generality.

\subsection{Proof for a Representative Case: Two-Sided High Dimensional Expanders}
\label{sec:direct-proof-hdx}
In this section we give a direct proof to a special case of our main theorem. We give a sound agreement test on set systems coming from a two-sided high dimensional expander.

We recall that a simplicial complex $X$ is a family of subsets that is downwards closed to containment, i.e. if $s \in X$ and $t \subset s$ the then $t \in X$. We denote by $X(\ell)$ all subsets (also called faces) of size $\ell+1$. We identify $X(0)$ with the set of vertices. A complex is $d$-dimensional if the largest faces have size $d+1$.
Our test is the following:
\begin{definition}[$d,\ell$-agreement distribution]\torestate{ \label{def:d-l-agreement-test}
Let $X$ be a $d$-dimensional simplicial complex and $\ell < d$ be a positive integer. We define the distribution $D_{d,\ell}$ by the following random process
\begin{enumerate}
    \item Sample $t  \in X(\ell)$.
    \item Sample $s_1,s_2 \in X(d)$ independently, given that $t \subset s_1,s_2$.
\end{enumerate}}
\end{definition}
The $d,\ell$-agreement test is the test associated with the $d,\ell$-agreement distribution on this family.

\begin{theorem}[Agreement for High Dimensional Expanders]
\torestate{\label{thm:main-agreement-theorem-two-sided-restricted-version}
There exists a constant $c>0$ such that for every $d>0$ such that the following holds. Suppose that $X$ is a $\frac{1}{d^3}$-two-sided $d$-dimensional HDX, and $\ell = \lfloor \frac{d}{3} \rfloor$.
Then the $d,\ell$-agreement test is exactly $c$-sound.
}
\end{theorem}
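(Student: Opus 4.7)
The plan is to instantiate the STAV machinery of \pref{thm:main-STAV-agreement-theorem} on the given complex and then convert the $\gamma$-approximate conclusion into exact soundness. I take $V = X(0)$, $S = X(d)$, $T = X(\ell)$, $A = X(\ell-1)$, with $\stavd$ sampling a top face $s \in X(d)$ according to the weight of $X$, then a uniform $(\ell+1)$-subset $t = a \cup \{v\}$ of $s$. The STS distribution is the $d,\ell$-agreement distribution of \pref{def:d-l-agreement-test}, and the VASA distribution samples $s$ followed by three pairwise disjoint pieces $a, a', \{v\}$ inside $s$; these fit because $\ell = \lfloor d/3 \rfloor$ guarantees $|a|+|a'|+1 \le d+1 = |s|$.

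The analysis then follows the two parts outlined in the introduction. First, for each $a \in A$ I define $g_a \colon \adj{a} \to \Sigma$ by the plurality $g_a(v) = \plu_{s \supset a \cup \{v\}} f_s(v)$, and prove via Markov-type arguments on the local $\sts_a$ and $\sts_{a,v}$ graphs (edge and spectral expanders by \pref{ass:edge-expander} and \pref{ass:v-a-graph}), combined with the surprise bound, that for all but $O(\varepsilon)$ of the $a$'s the restriction $\rest{f_s}{\adj{a}}$ coincides with $g_a$ for almost all $s \supset a$. Second, I stitch the $g_a$'s into a single $G(v) = \plu_{a \in \adj{v}} g_a(v)$ and use the strong expansion of the $\vASA{v}$ and $\VAS{a}$ graphs to conclude that $f_s = \rest{G}{s}$ on almost every vertex of almost every $s$.

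The main obstacle is verifying the $\gamma$-good axioms \pref{ass:global-A-V-graph}--\pref{ass:a-not-large} with $\gamma = \Theta(1/d)$, which is where the $\frac{1}{d^3}$ two-sided assumption is spent. The relevant graphs are essentially complement random walks on links of $X$: the $\vASA{v}$ graph is a complement walk on the link of $v$, and the $\VAS{a}$ graph is a complement walk on the link of $a$. Their spectral gaps inherit the two-sided expansion of $X$ via Garland's method, comfortably beating the $\Theta(1/d)$ threshold. The surprise parameter is handled by \pref{lem:delta-surprise} with $\eta = O(1/\ell)$ on the $T$-lower graphs (a direct calculation on the vertex-vs-$\ell$-subset containment graph of a single $t$) and the trivial bound $\delta \ge 1/(\ell+1)$ satisfied by any ensemble, yielding $\surp(X,f) = O(1/d) = O(\gamma)$.

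Finally, the $\gamma$-approximate conclusion is upgraded to exact soundness by invoking the moreover clause of \pref{thm:main-STAV-agreement-theorem} with a constant $r$ chosen so that $r\gamma < 1/|s|$; under this choice the event $f_s \ane{r\gamma} \rest{G}{s}$ coincides with $f_s \ne \rest{G}{s}$ as functions on $s$, and the bound $O((1+1/r)\varepsilon) = O(\varepsilon)$ yields $c$-soundness for an absolute constant $c$.
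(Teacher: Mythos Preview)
Your approach is correct and is essentially how the paper proves the \emph{full} version, \pref{thm:main-agreement-theorem-two-sided}, in \pref{sec:agr2sided}: set up the STAV on $X(d),X(\ell),X(\ell-1),X(0)$, verify axioms (A1)--(A4) with $\gamma=\Theta(1/\ell)$ using the complement-walk bounds, bound the surprise via \pref{lem:delta-surprise}, apply \pref{thm:main-STAV-agreement-theorem}, and then take $r$ a small constant so that $r\gamma<1/(d+1)$ to upgrade to exact soundness.

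The paper's proof of the \emph{restricted} \pref{thm:main-agreement-theorem-two-sided-restricted-version}, however, is deliberately direct and self-contained: it does not invoke \pref{thm:main-STAV-agreement-theorem} as a black box but rather unrolls the same two-part argument (define $h_a$, then $g_a$ conditioned on $\rest{f_s}{a}=h_a$, filter out the bad set $A^*$ and the locally bad sets $A^*_v$, define $G$ by plurality over $a\notin A^*_v$) with explicit lemmata tailored to the simplicial-complex setting and a binary alphabet. The conceptual skeleton is identical; the difference is packaging. One small correction to your sketch: the $\VAS{a}$ graph is not itself a complement walk on $X_a$---its right side consists of pairs $(a',s)$, and the paper handles it via an auxiliary $3$-partite complex and \pref{lem:structure-trickling-lemma} (see \pref{prop:two-sided-graph}). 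Your definition of $g_a$ also omits the conditioning on $\rest{f_s}{a}=h_a$, which the paper uses; this matters for the large-alphabet argument but is harmless in the binary case.
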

This theorem holds for a wider range of parameters. Also, in this section we will assume that the alphabet is binary, namely that the local functions are \(f_s: s \to \set{0,1}\). The full theorem, \pref{thm:main-agreement-theorem-two-sided}, is discussed and proven in \pref{sec:agreement-on-hdx}.

\subsubsection{Proof of \pref{thm:main-agreement-theorem-two-sided-restricted-version}}

The proof of the theorem goes through some auxiliary functions:
\begin{definition}[local popularity function]
\torestate{\label{def:h-sigmas-two-sided}
For every $a \in X(\ell-1)$ define $h_a: a \to \Sigma$ by popularity, i.e. $h_a = \plu_{s\supset a} \set{\rest{f_s}{a}}$. The notation $\plu$ refers to the value $\rest {f_s}{a}$ with highest probability over $s\supset a$, ties are broken arbitrarily. }
\end{definition}

\begin{definition}[the reach function]\torestate{
\label{def:g-sigmas-two-sided}
For every $a \in X(\ell-1)$ define $g_a: X_a(0) \to \Sigma$ by the popularity conditioned on $\rest{f_s}{a} = h_a$, i.e.
\[g_a(v) = \plu \set{f_s(v) : s \supset a, \rest{f_s}{a} = h_a}.\]
Ties are broken arbitrarily.}
\end{definition}

First, we will prove the following lemma on the local popularity functions:

\begin{lemma} \torestate{
\label{lem:h-lemma-two-sided}
    For any $a \in X(\ell-1)$, let $h_a$ be as in \pref{def:h-sigmas-two-sided}. Denote by $\varepsilon_a$ the disagreement probability given that the intersection $t \in X(\ell)$ contains \(a\). That is,
    \[\varepsilon_{a} = \cProb{}{\rest{f_{s_1}}{t} \ne \rest{f_{s_2}}{t}}{a \subset t}. \]

    Then for every $a \in X(\ell-1)$:
    \[ \cProb{s \in X(d)}{\rest{f_s}{a} \ne h_a }{s \supset a} = \bigO{\varepsilon_a}.\]
    }
\end{lemma}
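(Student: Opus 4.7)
The proof will proceed by partitioning $\ul{a}{s}$ according to the value of $\rest{f_s}{a}$ and applying edge expansion of the $\sts_a$-graph, which is a $\frac{1}{3}$-edge expander by \pref{ass:edge-expander}. The key observation is that any edge $\set{s_1,s_2}_t$ in the $\sts_a$-graph has $t \supset a$, so whenever $\rest{f_{s_1}}{a} \ne \rest{f_{s_2}}{a}$ we automatically get $\rest{f_{s_1}}{t} \ne \rest{f_{s_2}}{t}$, i.e., the edge rejects. Thus every edge between distinct parts of the partition contributes to $\varepsilon_a$.

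\textbf{Key steps.} First I would fix $a$ and, for each $\sigma \in \Sigma^a$, define
\[ U_\sigma = \sett{s \in X(d)}{s \supset a \ve \rest{f_s}{a} = \sigma}. \]
By definition $h_a = \plu_{s \supset a}\set{\rest{f_s}{a}}$, so setting $\sigma^* = h_a$, the set $U_{\sigma^*}$ is a largest part under the marginal distribution on $\ul{a}{s}$. Let $\bar U = \bigsqcup_{\sigma \ne \sigma^*} U_\sigma$; the goal is to show $\Pr[s \in \bar U] = O(\varepsilon_a)$. By the observation above,
\[ \varepsilon_a \ge \tfrac{1}{2}\sum_{\sigma} \Pr[\text{edge of }\sts_a\text{ leaves }U_\sigma] \ge \Pr[\text{edge of }\sts_a\text{ leaves }U_{\sigma^*}]. \]

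Then I would split into two cases depending on the size of $\bar U$. If $\Pr[\bar U] \le \tfrac12$, then $U_{\sigma^*}$ has mass $\ge \tfrac12$, so by the edge-expansion of the $\sts_a$-graph applied to $\bar U$ we get $\varepsilon_a \ge \tfrac{1}{3}\Pr[\bar U]$, hence $\Pr[\bar U] \le 3\varepsilon_a$. If $\Pr[\bar U] > \tfrac12$, then $\Pr[U_{\sigma^*}] \le \tfrac12$, and by maximality $\Pr[U_\sigma] \le \tfrac12$ for every $\sigma$. Applying edge-expansion to each $U_\sigma$ and summing (each crossing edge counted twice) yields $2\varepsilon_a \ge \tfrac{1}{3}\sum_\sigma \Pr[U_\sigma] = \tfrac13$, so $\varepsilon_a \ge \tfrac16$ and trivially $\Pr[\bar U] \le 1 \le 6\varepsilon_a$. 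In both cases $\Pr[\bar U] \le 6\varepsilon_a$, which is the desired bound.

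\textbf{Where the work lies.} There is no real obstacle here; this is a clean one-step edge-expansion argument. The only thing to verify carefully is that every edge crossing the partition $\set{U_\sigma}_\sigma$ is indeed a rejecting edge of the agreement test, which is immediate because $t \supset a$ on every such edge and the restrictions already disagree on the smaller set $a$. Everything else is bookkeeping of the two size-regimes for $\bar U$. This lemma will then feed into the next step of the main argument, where the functions $g_a$ of \pref{def:g-sigmas-two-sided} are shown to be consistent enough across different $a$'s to stitch into a global function $G$.
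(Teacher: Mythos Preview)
Your proposal is correct and follows essentially the same approach as the paper: partition $\{s\supset a\}$ by the value $\rest{f_s}{a}$, observe that every crossing edge is a rejecting edge of the test, and use $\tfrac13$-edge expansion to bound the complement of the plurality class. The paper organizes the two size-regimes slightly differently (it first disposes of $\varepsilon_a\ge\tfrac16$ trivially, then invokes \pref{claim:edge-expander-partition-property} to get $\Pr[S_1]\ge\tfrac12$ before applying edge expansion to $S_1^c$), but this is just a repackaging of your case split. One minor contextual point: since this lemma lives in the direct two-sided HDX section rather than the general STAV framework, the paper justifies edge expansion of the $\sts_a$-graph via \pref{thm:containment-graph-is-a-good-expander} on the link $X_a$ rather than by citing \pref{ass:edge-expander}; you should do the same.
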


Next, we move towards showing that when \(\rest{f_s}{a} = h_a\), then for a typical \(a\),  \(f_s(v) = g_a(v)\) occurs with probability \(1-\bigO{\frac{\varepsilon}{d}}\).

Consider the distribution \((a,s,a') \sim D_{comp}\), where we choose \(s \in X(d)\) and then two \(a,a' \subset s\) uniformly at random given that they are disjoint.

We say that a triple $(a,s,a')$ is \emph{bad} if $\rest{f_s}{a} \ne h_a$ or $\rest{f_s}{a'} \ne h_{a'}$. It is easy to see from \pref{lem:h-lemma} that there are $\bigO{\varepsilon}$ bad triples at most.

We use the bad triples to define the set of globally bad elements in $X(\ell-1)$. These are all $a \in X(\ell-1)$ with many bad triples touching them
\[ A^* = \sett{a \in X(\ell-1)}{\Prob[(s,a_2)]{(a,s,a_2) \text{ is a bad triple} } \geq \myconst}.\]
We shall use this set $A^*$ to filter and disregard certain $a \in X(\ell-1)$, that ruin the probability to agree with the $\set{g_a}_{a \in X(\ell-1)}$, and later on with the global function. The constant $\myconst$ is arbitrary, and once it is fixed, we can say that $\prob{A^*} = \bigO{\varepsilon}$ by Markov's inequality.

\begin{lemma}[agreement with link function] \torestate{
    \label{lem:g-lemma-two-sided}
    Let $(a,s,v) \sim D$ be the distribution where we choose \(s \in X(d)\) and from it \(a,v\) uniformly at random so that \(v \notin a\).
    Then
    \begin{equation} \label{eq:g-lemma-to-bound-two-sided}
    \Prob[(a,v,s) \sim D]{f_s(v) \ne g_a(v) \ve \rest{f_s}{a} = h_a \ve a \notin A^* } = \bigO{\frac{\epsilon}{d}}.
    \end{equation}
    }
\end{lemma}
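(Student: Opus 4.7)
My plan for proving \pref{lem:g-lemma-two-sided} is to follow the standard plurality-to-surprise template, which works cleanly in the binary-alphabet setting assumed here. There are three ingredients: a plurality reduction that reduces the single-sample ``bad'' event to a two-sample disagreement event, a matching of distributions so that this two-sample event is exactly the ``surprise'' event on the HDX STAV, and the $O(1/d)$ bound on that surprise that comes for free from the simplicial structure.

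\textbf{Step 1 (plurality reduction).} Fix $(a,v)$ and set $p_b = \cProb{s}{f_s(v)=b \ve \rest{f_s}{a}=h_a}{s\supset a\cup\set v}$ for $b\in\set{0,1}$. By definition $g_a(v)$ is the plurality, so $p_{g_a(v)}\ge p_{1-g_a(v)}$; for two independent samples $s_1,s_2\supset a\cup\set v$ we get $\Pr[f_{s_1}(v)\ne f_{s_2}(v),\ \rest{f_{s_i}}{a}=h_a \mid a,v] = 2p_0p_1$. Whenever $p_{g_a(v)}\ge \twomyconst$, this gives $p_{1-g_a(v)} \le O(1)\cdot p_0p_1$, i.e. the pointwise bad-event probability is $O(1)$ times the pointwise two-sample disagreement probability. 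The $a\notin A^*$ condition combined with \pref{lem:h-lemma-two-sided} yields $\cProb{s}{\rest{f_s}{a}=h_a}{s\supset a}\ge 1-\myconst$ on average, and a Markov-type argument over $v$ then shows that the ``exceptional'' $(a,v)$ on which $p_{g_a(v)}<\twomyconst$ contribute at most $O(\varepsilon)$ many mass in total — a contribution that will be absorbed by the $O(\varepsilon/d)$ bound coming from Step~2 after the appropriate choice of constants.

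\textbf{Step 2 (surprise + agreement).} In the HDX-STAV from \pref{ex:hdx-sc} taking the independent-$s_1,s_2$-given-$t$ version of STS, the joint distribution of $(a,v,s_1,s_2)$ used in Step~1 is exactly the STS distribution after splitting $t$ into $a$ and $\set v$. Since $f_{s_1}(v)\ne f_{s_2}(v)$ together with $\rest{f_{s_1}}{a}=\rest{f_{s_2}}{a}$ forces $\rest{f_{s_1}}{t}\ne \rest{f_{s_2}}{t}$, the two-sample event from Step~1 is precisely the numerator in \pref{def:surprise}, hence
\[
\Pr[\rest{f_{s_1}}{a}=\rest{f_{s_2}}{a},\ f_{s_1}(v)\ne f_{s_2}(v)] = \surp(X,f)\cdot \disagr{X}(f)\le \surp(X,f)\cdot \varepsilon.
\]
Applying \pref{lem:delta-surprise} via \pref{example:hdx-sc2} (the $T$-lower graph of the HDX is a $1/\ell$-bipartite expander, and any binary ensemble is trivially a $\frac{1}{\ell+1}$-ensemble) gives $\surp(X,f)=O(1/\ell)=O(1/d)$, so the main contribution is $O(\varepsilon/d)$.

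\textbf{Where the difficulty lies.} The delicate point is Step~1: the plurality inequality only gives a useful ratio when $p_{g_a(v)}$ is bounded below, and the role of the set $A^*$ (together with \pref{lem:h-lemma-two-sided}) is precisely to ensure this on all but a controllable exceptional set of $(a,v)$. The $1/d$ savings in the final bound is not combinatorial at all — it comes entirely from the HDX surprise being small, which in turn uses $|t\setminus a|=1$ and the bipartite expansion of the $T$-lower graph. The bookkeeping between $D$, $\stavd$, STS and VASA on HDX is routine because each coincides with the uniform-inside-$s$ distribution on the containment data, but one has to be careful to pick the independent-given-$t$ version of STS so that the plurality step's joint really matches the surprise-event distribution.
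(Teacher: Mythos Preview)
Your overall architecture is right: reduce the single-sample event to the two-sample ``surprise'' event via a plurality argument, then bound the surprise by $O(\varepsilon/d)$ using \pref{claim:agreement-on-last-vertex-two-sided} (equivalently \pref{lem:delta-surprise} as in \pref{example:hdx-sc2}). Step~2 is essentially correct and matches the paper.

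The gap is in Step~1, specifically your treatment of the ``exceptional'' $(a,v)$. Your plurality inequality $p_{1-g_a(v)} \le O(1)\cdot p_0p_1$ needs $p_{g_a(v)}$ bounded below by a constant, and you propose to handle the exceptional $(a,v)$ where this fails by a Markov argument over $v$. But Markov from $\Ex_{(a,v)}\bigl[\Pr_s[\rest{f_s}{a}\ne h_a \mid s\supset a\cup\set v]\bigr]=O(\varepsilon)$ only gives that the exceptional set has measure $O(\varepsilon)$, and on it $p_{1-g_a(v)}$ is bounded by a constant, not by anything small. So the exceptional contribution is $O(\varepsilon)$, not $O(\varepsilon/d)$ --- and an $O(\varepsilon)$ term cannot be ``absorbed'' into an $O(\varepsilon/d)$ bound. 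The lemma genuinely requires the $1/d$ savings, since it is later summed over $\Theta(d)$ vertices to recover the exact $O(\varepsilon)$ soundness.

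The paper closes this gap by first replacing $a\notin A^*$ with the \emph{local} condition $a\notin A^*_v$, at cost $O(\varepsilon/d)$ via \pref{claim:not-global-bad-implies-not-local-bad-two-sided}. That claim is where the missing $1/d$ comes from: it uses the bipartite expansion of the $\VAS{a}$-graph (\pref{prop:two-sided-graph}) together with the sampler lemma, not just Markov. Once $a\notin A^*_v$, one has $\Pr_s[\rest{f_s}{a}=h_a\mid s\supset a\cup\set v]\ge 19/20$ \emph{deterministically}, so there is no exceptional set at all and the plurality-to-surprise comparison $p_{a,v}\le 2q_{a,v}$ holds pointwise. Your proof becomes correct if you insert this reduction before Step~1.
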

Finally our goal is to stitch the \(g_a\)'s functions together to one global function.

\pref{lem:g-lemma} motivates us to define the global function as the popularity vote on $g_a(v)$ for all $a \in X_{v}(\ell-1)$ that see few bad triples when conditioned on \(v\). However, in order to properly define the global function, we need to define another process that takes into account the agreement of two functions $g_a, g_{a'}$. For this we need to look at each vertex \(v \in X(0)\) separately.

To do so, we define the following graph:
\begin{definition}[Local Complement Graph]
\label{def:local-complement-graph-two-sided}
Fix any \(v_0 \in X(0)\). The local complement graph \(H_{v_0}\) is the graph whose vertices are \(V = X_{v_0}(\ell-1)\). Our labeled edges are chosen as follows: Given that we are at element \(a\) we traverse to \(a'\) via edge \(s\), by choosing some \(s \supset a \dunion \set{v_0}\) and then choosing some \(a' \subset s\) given that \(a \cap a' = \emptyset\).
\end{definition}

For $v \in V$, we say $a \in X_v(\ell-1)$ is \emph{locally bad for $v$}, if
\[ \cProb{(a_1,s,a_2)\in E(H_{v})}{(a_1,s,a_2) \text{ is bad}}{a_1 = a}> \twomyconst. \]
The constant here is also arbitrary.

Finally, for every $v \in V$, we define $A^*_v$ to be the set of all $a \in X(\ell-1)$ that are either globally bad, or  locally bad for $v$.

We show using the sampler lemma, \pref{lem:sampler-lemma}, that if $a \in X(\ell-1)$ is not globally bad, then the probability over $v \in V$, that it will be locally bad for $v$ is small, i.e.
\begin{claim}[Not Globally Bad implies Not Locally Bad]\torestate{
\label{claim:not-global-bad-implies-not-local-bad-two-sided}
\[ \Prob[a \in X(\ell-1), v \in X_a(0)]{a \in A^*_v \ve a \notin A^*} = \bigO{\frac{\varepsilon}{d}}.\]
}
\end{claim}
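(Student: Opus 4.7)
I would prove \pref{claim:not-global-bad-implies-not-local-bad-two-sided} by a sampler/concentration argument on the $\VAS{a}$-graph, exploiting the fact that the total measure of bad triples is only $O(\varepsilon)$ and that globally bad $a$'s are already ruled out by the Markov-style definition of $A^*$.

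The first step is to set up the right quantities. For each $a \in X(\ell-1)$, let $B_a = \set{(s,a') : (a,s,a') \text{ is bad}}$, viewed as a subset of the right side of the $\VAS{a}$-graph, and write $\mu_a$ for the probability of $B_a$ under the marginal of the VASA distribution conditioned on $a$. Because the total measure of bad triples $(a,s,a')$ in the complement-walk distribution is $O(\varepsilon)$, we have $\E_a[\mu_a] = O(\varepsilon)$; and because $A^*$ consists of all $a$ with $\mu_a \geq \myconst$, Markov gives $\Pr_a[a \in A^*] = O(\varepsilon)$ (this is the bound already recorded before the claim).

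The second step is the local control. For $a \notin A^*$ we know $\mu_a < \myconst = 1/40$, and $a$ is locally bad for $v$ iff the conditional density of $B_a$ restricted to neighbors of $v$ in the $\VAS{a}$-graph exceeds $\twomyconst = 1/20$. By assumption \pref{ass:vasa-graphs-a}, this graph is a $\sqrt{\gamma}$-bipartite expander, so the sampler lemma (\pref{lem:sampler-lemma}) applied to $B_a$ yields
\[
\Pr_{v \in \adj{a}}\bigl[a \text{ is locally bad for } v\bigr]
\;\leq\; \Pr_{v \in \adj{a}}\bigl[|\beta_v(a) - \mu_a| > 1/40\bigr]
\;=\; O(\gamma \, \mu_a),
\]
where $\beta_v(a)$ denotes the fraction of bad $(s,a')$ with $v \in s$. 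The linear-in-$\mu_a$ form of the sampler bound is exactly what lets the remaining budget of badness transfer from $A^*$ to the local level.

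The third step is to average over $a$. Using the symmetry of the STAV distribution and the last display,
\[
\Pr_{a,\,v \in \adj{a}}\bigl[a \in A^*_v \wedge a \notin A^*\bigr]
\;\leq\; \E_{a}\bigl[\mathbb{1}[a \notin A^*] \cdot O(\gamma \mu_a)\bigr]
\;\leq\; O(\gamma)\cdot \E_a[\mu_a]
\;=\; O(\gamma \varepsilon).
\]
Plugging in $\gamma \leq 1/d^3$ from the two-sided HDX hypothesis gives the desired $O(\varepsilon/d)$ bound (in fact with a lot of slack). The main obstacle, and the only place the argument could break, is getting the sampler lemma to produce a bound that is \emph{linear} in $\mu_a$ rather than just $O(\gamma)$ uniformly: without this the final estimate would only be $O(\gamma)$, independent of $\varepsilon$, which is not enough to feed into the subsequent stitching step that produces the global function.
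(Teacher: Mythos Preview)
Your proposal is correct and follows essentially the same route as the paper: fix $a\notin A^*$, view the set $B_a$ of bad $(s,a')$ in the $\VAS{a}$-graph, use its bipartite expansion together with the sampler lemma to bound the measure of $v$ for which $a$ is locally bad by $O(\text{expansion}^2)\cdot\mu_a$, and then average over $a$ to turn $\E_a[\mu_a]=O(\varepsilon)$ into the claimed $O(\varepsilon/d)$. One small parameter mix-up: in this direct two-sided section the paper does not invoke the abstract STAV assumption \pref{ass:vasa-graphs-a} but rather \pref{prop:two-sided-graph}, which shows the $\VAS{a}$-graph is an $O(1/\sqrt{d})$-bipartite expander; and the $\gamma$ you plug in should be the STAV-goodness parameter $\gamma=O(1/\ell)=O(1/d)$, not the link-expansion $1/d^3$ --- either way the conclusion $O(\varepsilon/d)$ stands.
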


Now we can define our global function $G:V \to \Sigma$ as follows:
\[G(v) = \plu \sett{g_a(v) }{a \in X_v(\ell-1),\; a \notin A^*_v },\]
as usual, ties are broken arbitrarily. In words, we remove a small amount of bad $a \in X(\ell-1)$, where many functions $f_s$'s don't agree with the $g_a$'s, and take the popular vote of the remainder.

Using the local complement graph and \pref{claim:not-global-bad-implies-not-local-bad-two-sided}, we can now prove:
\begin{lemma}[agreement with global function]
\torestate{
\label{lem:global-agreement-links-two-sided}
\[\Prob[a \in X(\ell-1), v \in X_0(a)]{g_a(v) \ne G(v) \ve a \notin A^*_v } = \bigO{\frac{\varepsilon}{d}}.\]
}
\end{lemma}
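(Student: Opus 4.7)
The plan is to exploit the spectral expansion of the local complement graph $H_v$ to propagate the per-triple agreement between $f_s$ and $g_a$ (supplied by \pref{lem:g-lemma-two-sided}) into a statement of global agreement between the various $g_a$'s and the plurality $G$. The intuition is: if a non-negligible fraction of $a \in X_v(\ell-1) \setminus A^*_v$ had $g_a(v) \ne G(v)$, then by expansion of $H_v$ many edges $(a_1, s, a_2)$ of $H_v$ would span this gap; but each such edge must either be a bad triple or lie in the small failure event of \pref{lem:g-lemma-two-sided}, both of which are under control.

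Fix $v \in X(0)$. Since $X$ is a $\frac{1}{d^3}$-two-sided HDX, the link $X_v$ is itself a good high-dimensional expander, and $H_v$ is precisely the complement walk on $X_v$. By the complement-walk analysis of \pref{sec:complement-walk}, $H_v$ is a strong spectral expander, hence an $\Omega(1)$-edge expander by Cheeger. Partition $X_v(\ell-1)\setminus A^*_v$ by the value of $g_a(v)\in\{0,1\}$ and let $B_v := \{a \in X_v(\ell-1)\setminus A^*_v : g_a(v) \ne G(v)\}$ be the minority class; by construction $\mu(B_v) \le 1/2$. Edge expansion of $H_v$ gives
\[ \mu(B_v) \;\le\; \frac{1}{\alpha}\cdot \Pr_{(a_1,s,a_2)\in E(H_v)}\!\bigl[(a_1,s,a_2) \text{ crosses } (B_v, B_v^c)\bigr]. \]

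Averaging over $v$, the right-hand side is computed using the VASA-type sample $(v,a_1,s,a_2)\sim \vasad$. A crossing edge has, say, $a_1 \in B_v$ (so $a_1\notin A^*_v$) and $a_2\notin B_v$, which either means $a_2\in A^*_v$ or $a_2\notin A^*_v$ with $g_{a_2}(v)\ne g_{a_1}(v)$. For the former case, \pref{claim:not-global-bad-implies-not-local-bad-two-sided} together with $\Pr[A^*]=O(\varepsilon)$ control $\mathbb{E}_v[\mu(A^*_v)]$. For the latter case, I claim that if the triple $(a_1,s,a_2)$ is good (not bad) then \pref{lem:g-lemma-two-sided} applied once with pivot $a_1$ and once with pivot $a_2$ forces $g_{a_1}(v) = f_s(v) = g_{a_2}(v)$ up to an $O(\varepsilon/d)$ failure event. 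Bad triples among vertices outside $A^*_v$ are in turn limited by the $\twomyconst$-filtering built into the definition of $A^*_v$.

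Plugging these estimates into edge expansion yields $\mathbb{E}_v[\mu(B_v)] = O(\varepsilon/d)$, which is precisely the claim. The main obstacle is the bookkeeping in this averaging step: one has to ensure that the $1/d$ saving furnished by \pref{lem:g-lemma-two-sided} is not washed out by the coarser bound on $\mu(A^*_v)$, which requires exploiting the \emph{local} filtering through $A^*_v$ (not only the global $A^*$) so that the edges from $B_v$ meeting $A^*_v$ remain a lower-order contribution. Once this is set up carefully, the conclusion is just Markov and linearity of expectation.
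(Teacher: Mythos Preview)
Your overall strategy---fix $v$, use expansion of $H_v$, partition the complement of $A^*_v$ by the value of $g_a(v)$, and bound the crossing edges---is exactly the paper's. The gap is in how you handle edges from $B_v$ into $A^*_v$. You propose to control them via ``local filtering through $A^*_v$'', but that filtering only says that each $a \notin A^*_v$ sees at most a $\twomyconst$-fraction of \emph{bad} incident triples; it says nothing about edges landing in $A^*_v$ that happen to be good triples. Bounding those edges crudely by the total degree of $A^*_v$ gives only $\Ex[v]{\mu(A^*_v)} = O(\varepsilon)$ (since $A^* \subseteq A^*_v$ for every $v$ and $\prob{A^*} = O(\varepsilon)$), a factor $d$ too weak. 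So the self-referencing inequality you sketch does not close with edge expansion alone.

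The paper closes this gap by using the full \emph{spectral} bound $\lambda(H_v) = O(1/d)$ rather than the $\Omega(1)$ edge expansion you extract via Cheeger. It applies the ``almost-cut'' inequality \pref{claim:almost-cut} to the three-way partition $N_{v_0} \dunion M_{v_0} \dunion C_{v_0}$ with $C_{v_0} = A^*_{v_0}$, obtaining
\[
\prob{N_{v_0}} \;\le\; \frac{1}{(1-O(1/d))\,\prob{M_{v_0}}}\Bigl(\prob{E(N_{v_0},M_{v_0})} + O(\tfrac{1}{d})\,\prob{C_{v_0}}\Bigr).
\]
The $O(1/d)$ multiplying $\prob{C_{v_0}}$ is the spectral parameter of $H_{v_0}$, and it is precisely what converts $\Ex[v]{\prob{A^*_v}} = O(\varepsilon)$ into the required $O(\varepsilon/d)$. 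The remainder of your plan---bounding $E(N_{v_0},M_{v_0})$ by bad edges (at most $\twomyconst\prob{N_{v_0}}$, absorbed on the left) plus non-bad disagreement edges ($O(\varepsilon/d)$ via \pref{cor:key-cor-two-sided}), and checking $\prob{M_{v_0}} \ge \Omega(1)$ for all but $O(\varepsilon/d)$ of the $v$'s---matches the paper.
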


Given the lemmata above, we prove the theorem.
\begin{proof}[Proof of \pref{thm:main-agreement-theorem-two-sided-restricted-version}]
We note that it is enough to show 
\begin{equation} \label{eq:real-thm-then-two-sided}
    \Prob[s \in X(d), a \in X(\ell-1), a \subset s]{\rest{f_s}{s \setminus a} \ne \rest{G}{s \setminus a}} = \bigO{\varepsilon}.
\end{equation}
This is due to the fact that \(\abs{s \setminus a} \geq \frac{1}{2}\abs{s}\), thus if \(f_s \ne \rest{G}{s}\), then \(\rest{f_s}{s \setminus a} \ne \rest{G}{s \setminus a}\) for at least half of the possible \(a \subset s\).

Next, we prove \eqref{eq:real-thm-then-two-sided}.
We define the following events, when we choose \((a,s,v)\) in the simplicial complex:
\begin{enumerate}
    \item $E_1$ - the event that $\rest{f_s}{a} \ne h_a$.
    \item $E_2$ - the event that $a \in A^*$, i.e. the $a$ chosen has many bad edges.
\end{enumerate}
Define a random variable $Z$, that samples $s,a$ and outputs
\begin{equation}\label{eq:z-prob-two-sided}
Z(s,a) = \Prob[v \in s \setminus a]{f_s(v) \ne G(v)},
\end{equation}
i.e. the fraction of vertices in \(s \setminus a\) so that \(f_s(v) \ne G(v)\).

The probability for $E_1 \vee E_2$ is $\bigO{\varepsilon}$ by \pref{lem:h-lemma-two-sided} and Markov's inequality.

If $\neg (E_1 \vee E_2)$, yet a vertex $v$ contributes to the probability in \eqref{eq:z-prob-two-sided}, then one of the three must occur:
\begin{enumerate}
    \item $a \in A^*_v$.
    \item $f_s(v) \ne g_a(v)$ and $a \notin A^*_v$.
    \item $a \notin A^*_v$ but $f_s(v) = g_a(v) \ne G(v)$.
\end{enumerate}
The first event occurs with probability $\bigO{\frac{\varepsilon}{d}}$ by \pref{claim:not-global-bad-implies-not-local-bad-two-sided}. The second occurs with probability $\bigO{\frac{\varepsilon}{d}}$ by \pref{lem:g-lemma-two-sided}. The third occurs with probability $\bigO{\frac{\varepsilon}{d}}$ by \pref{lem:global-agreement-links-two-sided}.
Thus by the expectation of $Z$ given that $\neg (E_1 \vee E_2)$ is $\bigO{\frac{\varepsilon}{d}}$. By Markov's inequality
\[\cProb{s \in X(d), a \in X(\ell-1), a \subset s}{\rest{f_s}{s \setminus a} \ne \rest{G}{s \setminus a}}{\neg (E_1 \vee E_2)} = \cProb{}{Z \geq \frac{1}{d}}{\neg (E_1 \vee E_2)} \]
\[= \abs{s \setminus a} \bigO{\frac{\varepsilon}{d}} = \bigO{\varepsilon}.\]

In conclusion
\[\Prob[s \in X(d)]{f_s \ne \rest{G}{s}} \leq \prob{E_1 \vee E_2} + \cProb{s \in X(d), a \in X(\ell-1), a \subset s}{\rest{f_s}{s \setminus a} \ne \rest{G}{s \setminus a}}{\neg (E_1 \vee E_2)} = \bigO{\varepsilon}.\]
\end{proof}

\subsubsection{Proof of the Lemmata}

\restatelemma{lem:h-lemma-two-sided}
\begin{proof}[Proof of \pref{lem:h-lemma-two-sided}]
Fix $a \in X(\ell-1)$. If $\varepsilon_a \geq \frac{1}{6}$ we are trivially done, so assume otherwise. Consider the following graph:
\begin{enumerate}
    \item The elements in the graph are all \(s \supset a\).
    \item We connect two elements \(s_1,s_2\) whenever there exists some \(t \in X(\ell)\), \(t \supset a\) so that \(s_1 \cap s_2 \supset t\).
\end{enumerate}
The random walk in this graph, given \(s_1\) traverses to \(s_2\) by the \(d,\ell\)-agreement test's distribution, given that the intersection contains \(a\).

By \pref{thm:containment-graph-is-a-good-expander}, this graph is a very good spectral expander. In particular, it is a \(\frac{1}{3}\)-edge expander, when \(d\) is sufficiently large. 

We color the vertices of this graph according to their value at \(a\). Denote by \(S_1,S_2,...\) the colors, where \(S_1\) is the largest. Namely, \(S_1\) are all the \(s\) so that \(\rest{f_s}{a} = h_a\).

Denote by $S_i = \set{s :\rest{f_s}{a} = h^i_a}$. We need to show that the set of vertices $S_1 = \set{s : \rest{f_s}{a} = h_a}$ (the largest of all $S_i$) is $1 - \bigO{\varepsilon_a}$.

The quantity $\varepsilon_a$, i.e. the amount of edges between $S_i$'s, is by assumption less than $\frac{1}{6}$.


By \pref{claim:edge-expander-partition-property}, using the fact that the graph is a $\frac{1}{3}$-edge expander and the fact that the fraction of edges between the $S_i$'s is less that $\frac{1}{6}$. We get that $\prob{S_1} \geq \frac{1}{2}$.

Furthermore, by the edge-expander property $\prob{S_1^c} \leq 3E(S_1,S_1^c) \leq 3\varepsilon_a$.
\end{proof}

\begin{corollary}
\label{cor:a-star-is-small-two-sided}
$\prob{A^*} = \bigO{\varepsilon}$.
\end{corollary}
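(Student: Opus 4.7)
The plan is to bound the total probability that a random triple $(a,s,a_2)$ drawn from the complement walk distribution $D_{comp}$ is bad, and then read off the bound on $\prob{A^*}$ via Markov's inequality, exactly as hinted in the sentence preceding the corollary.

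First I would use the union bound together with the symmetry of $D_{comp}$ in the pair $(a,a_2)$ to write
\[ \Pr_{(a,s,a_2) \sim D_{comp}}\bigl[(a,s,a_2) \text{ is bad}\bigr] \le 2 \Pr_{(a,s)}[\rest{f_s}{a} \ne h_a], \]
where the right-hand marginal is induced by $D_{comp}$ on the first two coordinates. Projecting $D_{comp}$ onto $(a,s)$ yields the same distribution as first sampling $s \in X(d)$ and then a uniformly random $a \subset s$ of size $\ell$, so this marginal coincides with the natural pair distribution on containments $a \subset s$.

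Next I would decompose this probability by the law of total probability, conditioning on $a$ first,
\[ \Pr_{(a,s)}[\rest{f_s}{a} \ne h_a] = \E_{a \in X(\ell-1)} \cProb{s \in X(d)}{\rest{f_s}{a} \ne h_a}{s \supset a}. \]
Lemma \ref{lem:h-lemma-two-sided} bounds each inner conditional probability by $\bigO{\varepsilon_a}$, and by the definition of $\varepsilon_a$ the expectation $\E_a[\varepsilon_a]$ is exactly $\varepsilon$ (the overall disagreement probability of the test), since both quantities arise from the same $(s_1,t,s_2)$ distribution marginalized differently. Hence the total bad-triple probability is $\bigO{\varepsilon}$.

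Finally, $A^*$ is exactly the set of $a$ for which the conditional probability of sampling a bad triple given the first coordinate equals $a$ is at least the absolute constant $\myconst$, so Markov's inequality gives $\prob{A^*} \le \bigO{\varepsilon}/\myconst = \bigO{\varepsilon}$. The only small bookkeeping point is verifying that the marginal distribution on $a \in X(\ell-1)$ coming from $D_{comp}$ matches the one implicit in Lemma \ref{lem:h-lemma-two-sided}; this follows immediately since both are obtained from the pure-complex distribution on $X(d)$ by picking a top face and marginalizing over a uniform $(\ell-1)$-sub-face, so no real obstacle is expected.
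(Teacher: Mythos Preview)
Your proof is correct and follows essentially the same approach as the paper's: bound the total probability of bad triples via \pref{lem:h-lemma-two-sided} and then apply Markov's inequality. The paper's version is terser (it simply asserts that the total amount of bad triples is $\bigO{\varepsilon}$ without spelling out the union-bound/symmetry step or the $\E_a[\varepsilon_a]=\varepsilon$ calculation), but the content is the same.
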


\begin{proof}[Proof of \pref{cor:a-star-is-small-two-sided}]
Each $a \in X(\ell-1)$ contributes to $A^*$ if the amount of bad triples that $a$ participates in is $\geq \frac{1}{40}$. The total amount of bad triples is $\bigO{\varepsilon}$ by \pref{lem:h-lemma-two-sided}. Thus by Markov's inequality $\prob{A} = \bigO{\varepsilon}$.
\end{proof}

\bigskip

We move to \pref{claim:not-global-bad-implies-not-local-bad-two-sided}.
\restateclaim{claim:not-global-bad-implies-not-local-bad-two-sided}
\begin{proof}[Proof of \pref{claim:not-global-bad-implies-not-local-bad-two-sided}]

Fix some $a \notin A^*$. Consider the following bipartite graph:
    \begin{itemize}
        \item $L = \set{(a', s) : a' \dunion a \subset s}.$
        \item $R = X_a(0).$
        \item $E = \set{(v,(a',s)) : \set{v} \dunion a' \dunion a \subset s},$
    \end{itemize}
 The probability to choose each edge is given by the following distribution in the link \(X_a)\):
 \begin{enumerate}
     \item Sample \(v \in X_a(0)\).
     \item Sample \(s \setminus a \in X_a(d - \ell)\) so that \(v \in s\).
     \item Sample \(a' \in X_a(\ell-1)\) so that \(a' \subset s \setminus \set{v}\).
 \end{enumerate}

Note that the probability of \((a',s)\) in the left side, is precisely the probability to choose the triple \((a,s,a') \sim D_{comp}\), given that the first element is \(a\). 

Denote by $B \subset L$ the that consists of all $(s,a')$ s.t. $(a,s,a')$ is a bad triple. If $a \notin A^*$ then $\prob{B} < \myconst$.

By \pref{prop:two-sided-graph}, this graph is a \(\bigO{\frac{1}{\sqrt{d}}}\)-bipartite expander.

Define the set
\[V^* = \sett{v \in \adj{a}}{\cProb{(s,a')}{B}{v \sim (s,a')} \geq \twomyconst}, \]
the set of $v \in \adj{a}$ so that the probability for a bad edge is larger than $\twomyconst$, namely, that $a$ is locally bad for $v$. 

In the sampler lemma, \pref{lem:sampler-lemma}, we see that bipartite-expanders are good samplers. We use \pref{lem:sampler-lemma} to get that $\prob{V^*} = \bigO{\frac{1}{d}}\prob{B}$. Taking expectation on all $a \in A$ we get that
\[ \Prob[a \in X(\ell-1), v \in X_a(0)]{a \in A^*_v \ve a \notin A^*} = \prob{a \notin A^*} \cdot \Ex[a \notin A^*]{\prob{V^*} } =\]
\[ \prob{a \notin A^*} \cdot \bigO{\frac{1}{d}}\Ex[a \notin A^*]{\prob{B}} \leq \bigO{\frac{1}{d}} \Ex[a \in A]{\prob{B}}= \bigO{\frac{\varepsilon}{d}},\]
The last inequality is due to the fact that taking expectation on this set conditioned on $a \notin A^*$, is less than the expectation on all $A$ (by definition when $a \in A^*$, then $\prob{B} \geq \frac{1}{40}$, and when $a \notin A^*$, $\prob{B} < \frac{1}{40}$).
The last equality is since $\prob{B} = \bigO{\varepsilon}$ by \pref{cor:a-star-is-small-two-sided}.
\end{proof}

\bigskip

We move towards proving \pref{lem:g-lemma-two-sided}.
We shall use the following ``surprise'' claim.
\begin{claim}[Surprise]
\label{claim:agreement-on-last-vertex-two-sided}
Let \(\hat{D}\) denote the distribution where we sample:
\begin{enumerate}
    \item \(a \in X(\ell-1)\).
    \item \(v \in X_a(0)\).
    \item \(s_1,s_2 \in X(d)\) independently, given that they contain \(t = a \dunion \set{v}\).
\end{enumerate}
Then
\[ \Prob[\hat{D}]{f_{s_1}(v) \ne f_{s_2}(v) \ve \rest{f_{s_1}}{a} = \rest{f_{s_2}}{a}} = \bigO{\frac{\varepsilon}{d}}.\]
\end{claim}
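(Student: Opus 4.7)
The plan is to unpack what the event inside the probability actually says combinatorially, and then do a very short counting argument. Write $t = a \dunion \set{v}$. The conjunction $\rest{f_{s_1}}{a} = \rest{f_{s_2}}{a}$ together with $f_{s_1}(v) \ne f_{s_2}(v)$ is exactly the statement that the set of disagreement vertices in $t$, namely $D \defeq \sett{u \in t}{f_{s_1}(u) \ne f_{s_2}(u)}$, is the singleton $\set{v}$. So the probability in question counts triples $(s_1, t, s_2)$ whose restrictions disagree on exactly one vertex of $t$, further weighted by the chance of drawing that unique bad vertex.

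First I would reindex the distribution $\hat{D}$. Sampling $a \in X(\ell-1)$, then $v \in X_a(0)$, then $s_1, s_2 \supset a \dunion \set{v}$ independently, produces the same joint distribution on $(s_1, s_2, t, v)$ as sampling $(s_1, t, s_2)$ according to the $d,\ell$-agreement-test distribution and then drawing $v$ uniformly from the $\ell + 1$ vertices of $t$ (setting $a \defeq t \setminus \set{v}$). This is the standard consistency of the natural downward distributions on a simplicial complex, and I would verify it by comparing the weights of the tuples $(a, v, s_1, s_2)$ and $(t, v, s_1, s_2)$ directly.

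Next I would condition on a fixed triple $(s_1, t, s_2)$ and average over the uniformly chosen $v \in t$. The probability that $D = \set{v}$ equals $1/|t|$ if $|D| = 1$ and $0$ otherwise; in particular it is at most $1/(\ell+1)$ times the indicator of $\rest{f_{s_1}}{t} \ne \rest{f_{s_2}}{t}$. Taking expectation over $(s_1, t, s_2)$ yields
\[ \Prob[\hat{D}]{f_{s_1}(v) \ne f_{s_2}(v) \ve \rest{f_{s_1}}{a} = \rest{f_{s_2}}{a}} \le \frac{1}{\ell+1} \cdot \Pr\!\left[\rest{f_{s_1}}{t} \ne \rest{f_{s_2}}{t}\right] \le \frac{\varepsilon}{\ell+1} = \bigO{\frac{\varepsilon}{d}}, \]
using $\ell = \lfloor d/3 \rfloor$ and the hypothesis that the test rejects with probability at most $\varepsilon$.

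This is essentially a one-line argument once the combinatorial reformulation is in place, so the only step worth flagging is the distributional equivalence in the first paragraph; there is no real obstacle. An alternative route would be to invoke \pref{lem:delta-surprise} with the $T$-lower graph bound $\eta = 1/\ell$ from \pref{example:hdx-sc2} and $\delta = 1/|t|$ (every ensemble is trivially a $1/|t|$-ensemble), giving the same $O(\varepsilon/d)$ bound; the direct counting seems cleaner here because the ``unique disagreement vertex'' viewpoint makes the factor $1/(\ell+1)$ transparent and bypasses the sampler machinery entirely.
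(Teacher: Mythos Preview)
Your proof is correct and is essentially identical to the paper's own argument: both reindex $\hat{D}$ as first sampling $(s_1,t,s_2)$ from the $d,\ell$-agreement distribution and then a uniform $v\in t$, observe that the event forces $v$ to be the \emph{unique} disagreement vertex in $t$, and bound by $\varepsilon/(\ell+1)=O(\varepsilon/d)$. Your aside about the alternative route through \pref{lem:delta-surprise} is also accurate and matches how the paper handles the general case.
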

This claim is given in full generality in that is given in \pref{sec:HDX}. For this section to be self contained, we give it an elementary proof:
\begin{proof}[Proof of \pref{claim:agreement-on-last-vertex-two-sided}]
\(\hat{D}\) can be described as first choosing \(s_1,s_2,t\) and then partitioning \(t=a \dunion \set{v}\). So from the law of total probability we obtain:
\[\Prob[\hat{D}]{f_{s_1}(v) \ne f_{s_2}(v) \ve \rest{f_{s_1}}{a} = \rest{f_{s_2}}{a}} = \] 
\[ \Ex[(t,s_1,s_2) ]{\Prob[v \in t, a = t \setminus \set{v} ]{f_{s_1}(v) \ne f_{s_2}(v) \ve \rest{f_{s_1}}{a} = \rest{f_{s_2}}{a} }}.\]
Notice that for every $t \in X(\ell)$, the $\set{s_1,s_2}$ pairs that contribute to the probability above, are the ones that fail the test (but do so on exactly one vertex). By the agreement test, there are at most an \(\varepsilon\)-fraction of such pairs. Given that we choose such a pair, their contribution to the expectation is $\frac{1}{\ell} = \bigO{\frac{1}{d}}$ since that is the probability of choosing the $v \in t$ s.t. $f_{s_1}(v) \ne f_{s_2}(v)$.
\end{proof}

Now we are ready to prove \pref{lem:g-lemma-two-sided}.
\begin{lemma*}[Restatement of \pref{lem:g-lemma-two-sided}]
    Let $(a,s,v) \sim D$ be the distribution where we choose \(s \in X(d)\) and from it \(a,v\) uniformly at random so that \(v \notin a\).
    Then
    \begin{equation*} 
    \Prob[(a,v,s) \sim D]{f_s(v) \ne g_a(v) \ve \rest{f_s}{a} = h_a \ve a \notin A^* } = \bigO{\frac{\epsilon}{d}}.
    \end{equation*}
\end{lemma*}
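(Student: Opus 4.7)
Introduce the shorthand $\alpha(a,v) := \Pr_{s}[\rest{f_s}{a}=h_a\mid s\supset a\cup\set v]$ and $\beta(a,v) := \Pr_{s}[f_s(v)\ne g_a(v)\mid s\supset a\cup\set v,\ \rest{f_s}{a}=h_a]$, so that the left-hand side of \eqref{eq:g-lemma-to-bound-two-sided} equals $\E_{a,v}[\alpha(a,v)\beta(a,v)\cdot\one_{a\notin A^*}]$. Because the alphabet is binary and $g_a(v)$ is the majority of $f_s(v)$ over the surviving $s$'s, we always have $\beta\le \tfrac12$, and hence $2\beta(1-\beta)\ge \beta$. These two observations drive everything.

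The first step is a standard two-sample argument using the surprise claim. For each $(a,v)$, draw i.i.d.\ copies $s_1,s_2\supset a\cup\set v$. The event $\set{f_{s_1}(v)\ne f_{s_2}(v) \ve \rest{f_{s_1}}{a}=h_a=\rest{f_{s_2}}{a}}$ is contained in the surprise event $\set{f_{s_1}(v)\ne f_{s_2}(v) \ve \rest{f_{s_1}}{a}=\rest{f_{s_2}}{a}}$, and by conditional independence has probability exactly $\alpha(a,v)^2\cdot 2\beta(a,v)(1-\beta(a,v))$ given $(a,v)$. Applying \pref{claim:agreement-on-last-vertex-two-sided} therefore yields
\[ \E_{a,v}\Brac{\alpha(a,v)^2\cdot 2\beta(a,v)(1-\beta(a,v))\cdot \one_{a\notin A^*}} \le \bigO{\varepsilon/d}, \]
and $2\beta(1-\beta)\ge\beta$ upgrades this to $\E_{a,v}[\alpha^2\beta\cdot\one_{a\notin A^*}]=\bigO{\varepsilon/d}$.

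The second step converts this into the desired $\alpha\beta$-bound by splitting at $\alpha=\tfrac12$. Where $\alpha\ge\tfrac12$ we have $\alpha\beta\le 2\alpha^2\beta$, so that piece is already $\bigO{\varepsilon/d}$. Where $\alpha<\tfrac12$ we use $\alpha\beta\le \tfrac12$ trivially, reducing the task to showing
\[ \Pr_{(a,v)}\Brac{\alpha(a,v)<\tfrac12\ \ve\ a\notin A^*} = \bigO{\varepsilon/d}. \]
This last measure bound is the main obstacle: Cauchy-Schwarz on $\E[\alpha\beta]\le \sqrt{\E[\alpha^2\beta]\cdot\E[\beta]}$ would only give $\bigO{\sqrt{\varepsilon/d}}$, and bounding $\alpha\le 1$ sacrifices the entire $1/d$ factor, so one really has to use $a\notin A^*$ in concert with the HDX link expansion.

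To carry out the measure bound, observe that $a\notin A^*$ forces $p_a:=\Pr_{s\supset a}[\rest{f_s}{a}=h_a]\ge 39/40$, since the ``bad triple'' event defining $A^*$ already subsumes $\rest{f_s}{a}\ne h_a$. The map $v\mapsto \alpha(a,v)$ is precisely the $v$-side sampling estimate of the set $B_a=\set{s\supset a:\rest{f_s}{a}\ne h_a}$ through the bipartite containment graph between $X_a(0)$ and $X_a(d-\ell)$, and this bipartite graph is an $\bigO{1/\sqrt d}$-bipartite expander by trickling-down of link expansion from the $\tfrac{1}{d^3}$-two-sided HDX assumption. The sampler lemma (\pref{lem:sampler-lemma}) then gives $\Pr_v[\alpha(a,v)<\tfrac12]=\bigO{(1-p_a)/d}$ for each $a\notin A^*$, and averaging over $a$ together with $\E_a[1-p_a]=\bigO{\varepsilon}$ from \pref{lem:h-lemma-two-sided} produces the required $\bigO{\varepsilon/d}$, closing the proof.
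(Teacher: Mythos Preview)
Your proof is correct. The paper's proof takes a somewhat different route: it first replaces the condition $a\notin A^*$ by the stronger pointwise condition $a\notin A^*_v$ (at a cost of $O(\varepsilon/d)$, via \pref{claim:not-global-bad-implies-not-local-bad-two-sided}), and then uses that $a\notin A^*_v$ forces $\alpha(a,v)\ge 19/20$ directly. With $\alpha$ uniformly large, the two-sample argument becomes a clean pointwise inequality $p_{a,v}\le 2q_{a,v}$, and averaging over $(a,v)$ against \pref{claim:agreement-on-last-vertex-two-sided} finishes.

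Your approach reverses the order: you run the two-sample argument first, getting $\E_{a,v}[\alpha^2\beta\,\one_{a\notin A^*}]=O(\varepsilon/d)$, and only afterwards isolate and kill the bad set $\{\alpha<\tfrac12\}$ by a direct sampler argument on the $X_a(0)$-vs-$X_a(d-\ell)$ containment graph. This avoids invoking \pref{claim:not-global-bad-implies-not-local-bad-two-sided} and the $\VAS{a}$-graph machinery altogether, replacing them with the simpler containment-walk expansion inside a single link; in that sense your argument is more self-contained for this lemma. On the other hand, the paper's route amortizes the cost: \pref{claim:not-global-bad-implies-not-local-bad-two-sided} and the sets $A^*_v$ are needed anyway for the definition of $G$ and for \pref{lem:global-agreement-links-two-sided}, so once they are in hand the paper's version of the present lemma is essentially free.

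One minor remark: what you call ``trickling-down'' is not Oppenheim's trickling-down theorem but simply the fact that links of a $\lambda$-two-sided HDX are themselves $\lambda$-two-sided HDXs, combined with the containment-walk bound (\pref{thm:containment-graph-is-a-good-expander}). The conclusion you draw, that the $X_a(0)$-to-$X_a(d-\ell)$ bipartite graph is an $O(1/\sqrt d)$-expander, is correct.
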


The proof we give here relies on the fact that the alphabet is binary, or at least of size \(\bigO{1}\). It is possible to prove this for an alphabet of unbounded size, as we do in the main proof.

\begin{proof}[Proof of \pref{lem:g-lemma-two-sided}]
First, note that by \pref{claim:not-global-bad-implies-not-local-bad-two-sided}, \eqref{eq:g-lemma-to-bound-two-sided} is less or equal to
\[ \prob{a \notin A^* \ve a \in A^*_v} + \Prob[(a,v,s) \sim D]{f_s(v) \ne g_a(v) \ve \rest{f_s}{a} = h_a \ve a \notin A^*_v} = \]
\[ \bigO{\frac{\varepsilon}{d}} + \Prob[(a,v,s) \sim D]{f_s(v) \ne g_a(v) \ve \rest{f_s}{a} = h_a \ve a \notin A^*_v}.\]

Thus we focus on bounding
\begin{equation}
    \label{eq:g-lemma-to-bound-two-sided-local}
    \Prob[(a,v,s) \sim D]{f_s(v) \ne g_a(v) \ve \rest{f_s}{a} = h_a \ve a \notin A^*_v}.
\end{equation}

We write the expression we want to bound in \eqref{eq:g-lemma-to-bound-two-sided-local} as
\[\Ex[a,v]{\Prob[s]{f_s(v) \ne g_a(v) \ve \rest{f_s}{a} \ne h_a \ve a \notin A^*_v}}.\]
We denote the expression inside the expectation as \[p_{a,v} = \Prob[s]{f_s(v) \ne g_a(v) \ve \rest{f_s}{a} \ne h_a \ve a \notin A^*_v}.\]
Thus we want to show that
\[\Ex[a,v]{p_{a,v}} = \bigO{\frac{\varepsilon}{d}}.\]

By \pref{claim:agreement-on-last-vertex-two-sided}, we got that 
\[ \Prob[(a,v,s_1,s_2) \sim \hat{D}]{f_{s_1}(v) \ne f_{s_2}(v) \ve \rest{f_{s_1}}{a} = \rest{f_{s_2}}{a}} = \bigO{\frac{\varepsilon}{d}}.\]
We can write this also as an expectation over \(a,v\):
\[\Ex[a,v]{\Prob[(s_1,s_2)]{f_{s_1}(v) \ne f_{s_2}(v) \ve \rest{f_{s_1}}{a} = \rest{f_{s_2}}{a}} } = \bigO{\frac{\varepsilon}{d}}.\]
We denote the expression in the expectation by
\[q_{a,v} = \Prob[(s_1,s_2)]{f_{s_1}(v) \ne f_{s_2}(v) \ve \rest{f_{s_1}}{a} = \rest{f_{s_2}}{a}}.\]

Our goal is to relate the two quantities, namely, to show that \(p_{a,v} = \bigO{q_{a,v}}\). This will show that
\[\Ex[a,v]{p_{a,v}} = \bigO{\Ex[a,v]{q_{a,v}}} = \bigO{\frac{\varepsilon}{d}}.\]

\medskip

Fix some \(a \in X(\ell-1)\) and \(v \in X_a(0)\). If \(a \in A^*_v\) then \(p_{a,v} = 0\) and we are done. So assume \(a \notin A^*_v\).

Denote by \(H_0\) the set of all \(s \supset t=a \dunion \set{v}\). In the sampling process for \(p_{a,v}\) we choose some \(s \in H_0\), and in the sampling process for \(q_{a,v}\) we choose \(s_1,s_2 \in H_0\) independently. 

We can partition \(H_0\) to
\[H_0 = G \dunion B,\]
where \(G\) contains all \(s \in H_0\) so that \(\rest{f_s}{a} = h_a\). \(B\) is all \(s \in H_0\) so that \(\rest{f_s}{a} \ne h_a\).

\(a \notin A_v^*\), thus 
\[\Prob[s \in H_0]{B} < \twomyconst,\]
or 
\[\Prob[s \in H_0]{G} > \frac{19}{20}.\]
Thus, conditioning on \(G\) doesn't change the probability of \(q_{a,v}\) significantly, namely
\[\cProb{s_1,s_2}{\rest{f_{s_1}}{a} = \rest{f_{s_2}}{a} \ve f_{s_1}(v) \ne f_{s_2}(v)}{s_1,s_2 \in G} \leq 2q_{a,v}.\]
The first equality in the probability, \(\rest{f_{s_1}}{a} = \rest{f_{s_2}}{a}\), is immediately satisfied in this set, since if \(s_1,s_2 \in G\) then \(\rest{f_{s_1}}{a} = h_a = \rest{f_{s_2}}{a}\). So we get
\[ \cProb{s_1,s_2}{f_{s_1}(v) \ne f_{s_2}(v)}{s_1,s_2 \in G} \leq 2q_{a,v}.\]
Because \(s_1,s_2\) are chosen independently, we can say that
\[\cProb{s_1,s_2}{f_{s_1}(v) \ne f_{s_2}(v)}{s_1,s_2 \in G} = \]
\[2\cProb{s_1}{f_{s_1}(v) = g_a(v)}{s_1 \in G} \cdot \cProb{s_2}{f_{s_2}(v) \ne g_a(v)}{s_2 \in G}.\footnote{We are using the fact that the \(f_s\)'s are binary.}\]
The definition of \(g_a(v)\) is taking the majority of \(f_{s}(v)\) for all \(s \in G\). Thus \(\cProb{s_1}{f_{s_1}(v) = g_a(v)}{s_1 \in G} \geq \frac{1}{2}\).
\[ \cProb{s_1,s_2}{f_{s_1}(v) \ne f_{s_2}(v)}{s_1,s_2 \in G} \geq  \cProb{s_2}{f_{s_2}(v) \ne g_a(v)}{s_2 \in G} \geq p_{a,v}.\]
The last inequality is by the definition of \(p_{a,v}\). In conclusion, \(p_{a,v} \leq 2q_{a,v}\) and we are done.
\end{proof}

We state this immediate corollary:
\begin{corollary} \label{cor:key-cor-two-sided}
Consider the following distribution \((v,a,s,a') \sim D_{vasa}\), where \((a,s,a')\) are chosen by \(D_{comp}\) and \(v\) is sampled from \(s \setminus (a \dunion a')\) uniformly at random.
Then
\[\Prob[(v,a,s,a') \sim D_{vasa}]{\rest{f_s}{a_i} = h_{a_i} \ve g_{a_1} \ne g_{a_2} \ve a_i \notin A^*_v \text{ for } i=1,2} = \bigO{\frac{\varepsilon}{d}}.\]
\(\qed\)
\end{corollary}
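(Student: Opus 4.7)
The plan is to derive the corollary as a direct two-fold application of \pref{lem:g-lemma-two-sided}, combined with a union bound. The key observation driving the argument is that if $g_a(v) \ne g_{a'}(v)$, then $f_s(v)$ cannot simultaneously agree with both values, so at least one of the two ``bad events'' from \pref{lem:g-lemma-two-sided}, on the triple $(a,s,v)$ or on $(a',s,v)$, must occur.

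First I would check that the marginal of $D_{vasa}$ on the triple $(a,s,v)$ (and by symmetry on $(a',s,v)$) coincides with the distribution $D$ used in \pref{lem:g-lemma-two-sided}. After conditioning on $s$ and $a$, the VASA process picks a uniformly random $a' \subset s\setminus a$ of size $\ell$ and then a uniform $v \in s \setminus (a \dunion a')$; integrating out the choice of $a'$, the induced distribution of $v$ is uniform on $s \setminus a$ by symmetry. This is precisely the distribution $D$. Hence \pref{lem:g-lemma-two-sided} applies verbatim to the $(a,s,v)$ and $(a',s,v)$ marginals of $D_{vasa}$.

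Next I would exploit the fact that $A^*_v \supseteq A^*$, so the condition $a_i \notin A^*_v$ in the corollary is strictly stronger than the condition $a_i \notin A^*$ appearing in the lemma. Writing the event of interest as
\[ E = \{\rest{f_s}{a} = h_a \ve \rest{f_s}{a'} = h_{a'} \ve g_a(v)\ne g_{a'}(v) \ve a\notin A^*_v \ve a'\notin A^*_v\}, \]
I would observe that on $E$, at least one of $f_s(v) \ne g_a(v)$ or $f_s(v) \ne g_{a'}(v)$ must hold (otherwise $g_a(v) = f_s(v) = g_{a'}(v)$, contradicting the third clause of $E$). Therefore $E$ is contained in the union of two events, each of which matches the event bounded by \pref{lem:g-lemma-two-sided} on the corresponding marginal.

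Finally, a union bound together with \pref{lem:g-lemma-two-sided} gives $\Pr[E] = \bigO{\varepsilon/d}$, as required. I do not anticipate any real technical obstacle; the only step requiring care is the marginal-equivalence argument above, which is a short symmetry computation once the VASA sampling procedure is unpacked.
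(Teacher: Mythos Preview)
Your proposal is correct and matches the paper's own proof, which is stated in one line as ``just applying \pref{lem:g-lemma-two-sided} for each $a_i$ and using a union bound.'' Your extra care in verifying that the $(a,s,v)$-marginal of $D_{vasa}$ agrees with $D$ and that $a_i\notin A^*_v$ implies $a_i\notin A^*$ fills in the details the paper leaves implicit.
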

The proof for this corollary is just applying \pref{lem:g-lemma-two-sided} for each \(a_i\) and using a union bound.

It remains to prove \pref{lem:global-agreement-links-two-sided}.
\restatelemma{lem:global-agreement-links-two-sided}

For the proof of the lemma, we'll need the following property of expander graphs. In an expander graph, the number of outgoing edges from some $A \subset V$, is an approximation to the size of $A$ or $V \setminus A$. The following claim generalizes this fact to the setting where we count only outgoing edges from $A$ to a (large) set $B \subset V \setminus A$.

\begin{claim*}[Restatement of \pref{claim:almost-cut}]
    Let $G = (V,E)$ be a $\lambda$-two sided spectral expander. Let $V = A \dunion B \dunion C$, s.t. $\prob{A} \leq \prob{B}$. Then
    \begin{equation} 
            \prob{A} \leq \frac{1}{(1-\lambda)\prob{B}} \left ( \prob{E(A,B)} + \lambda \prob{C} \right ).
    \end{equation}

    In particular, if  \( \prob{A}, 1-\lambda = \Omega(1)\) then
    \[ \prob{A} = \bigO{  \prob{E(A,B)} + \lambda \prob{C} }. \]
\end{claim*}

\begin{proof}[Proof of \pref{lem:global-agreement-links-two-sided}]
Fix some \(v_0 \in X(0)\). We view the local complement graph \(H_0\) from \pref{def:local-complement-graph-two-sided}.

The walk on this graph is the \(\ell-1,\ell-1\)-complement walk in the link of \(v\). By \pref{thm:complement-walk-is-a-good-expander}, that we will show in \pref{sec:complement-walk}, this graph is a \(\bigO{\frac{1}{d}}\)-two-sided spectral expander.

Consider the following sets in this graph:
\[ M_{v_0} = \sett{a \in X_{v_0}(\ell-1) \setminus A^*_{v_0}}{g_a(v_0) =   G(v)}, \text{ the popular vote, }\]
\[ N_{v_0} = \sett{a \in X_{v_0}(\ell-1) \setminus A^*_{v_0}}{g_a(v_0) \ne G(v)}, \text{ the other votes, }\]
\[ C_{v_0} = A^*_{v_0}, \]
The $a \in N_{v_0}$ are those where $g_a(v_0) \ne G(v_0)$ and $a \notin A^*_{v_0}$. Hence we need to bound
\[ \Ex[v_0]{\prob{N_{v_0}}}.\]

We invoke \pref{claim:almost-cut} for $N_{v_0},M_{v_0},C_{v_0}$ and get that
\begin{equation} \label{eq:minority-bound-two-sided}
\prob{N_{v_0}} \leq \frac{1}{(1-\bigO{\frac{1}{d}})\prob{M_{v_0}}} \prob{E(N_{v_0},M_{v_0})} + \bigO{\frac{1}{d}} \prob{C_{v_0}}.
\end{equation}

The proof now has two steps:
\begin{enumerate}
    \item We show that $\prob{M_{v_0}} \geq \frac{9}{20}$ for all but $\bigO{\frac{\varepsilon}{d}}$ of the vertices $v_0$ (the constant is arbitrary). This will imply that the denominator in \eqref{eq:minority-bound-two-sided} is larger than some constant (say \(\frac{1}{2}\)).
    \item We show that the right hand side of \eqref{eq:minority-bound-two-sided} is $\bigO{\frac{\varepsilon}{d}}$ in expectation.
\end{enumerate}

\paragraph{To show step 1} We will need to show that for all but $\bigO{\frac{\varepsilon}{d}}$ of the $v_0$, the size of $C_{v_0}$ is smaller than $\frac{1}{20}$, namely

 \begin{equation}\label{eq:locally-bad-set-is-small-two-sided}
   \Prob[v]{\prob{A^*_v} > \twomyconst } = \bigO{\frac{\varepsilon}{d}}
\end{equation}

Assuming that for $\prob{C_{v_0}} \leq \frac{1}{20}$, it is obvious that $\prob{M_{v_0}} \geq \frac{9}{20}$, using the fact that the alphabet is binary in this special case, thus \(M_{v_0}\) is the larger set between \(M_{v_0}, N_{v_0}\). 

\bigskip

To show \eqref{eq:locally-bad-set-is-small-two-sided} consider the complement graph between $X(0)$ and $X(\ell-1)$, where the edges are all \((v,a)\) so that \(\set{v} \dunion a \in X(\ell)\). This is the \(0,(\ell-1)\)-complement walk. 

The set of vertices $v$ that we need to bound is the set of $v$'s with large $\prob{A_v^*}>\twomyconst$. 
There are two types of vertices $v$:
\begin{itemize}
    \item $\prob{A_v^* \cap A^*}\leq \myconst$
    \item $\prob{A_v^* \cap A^*}> \myconst$
    \end{itemize}
By \pref{claim:not-global-bad-implies-not-local-bad-two-sided}, $\Prob[(a,v)]{a \in A^*_v \ve a \notin A^*} = \bigO{\frac{\varepsilon}{d}}$. Thus by Markov's inequality, only   $\bigO{\frac{\varepsilon}{d}}$ of the vertices can see $\myconst$ fraction of neighbors $a\in A^*_v\setminus A^*$, thus bounding by $\bigO{\frac{\varepsilon}{d}}$ the fraction of $v$'s of the first type.

To bound the vertices of the second type, note that these are vertices that have a large $(>\myconst)$ fraction of neighbors in $A^*$. By \pref{cor:a-star-is-small-two-sided}, $\prob{A^*} = \bigO{\varepsilon}$. 
According to \pref{thm:complement-walk-is-a-good-expander}, our graph is a $\sqrt{\frac{1}{d}}$-bipartite expander. Thus by the sampler lemma \pref{lem:sampler-lemma}, the set of vertices $v_0 \in X(0)$ who have more than $\myconst$-fraction neighbours in $A^*$, is $\bigO{\frac{\varepsilon}{d}}$.

\paragraph{As for step 2} Taking expectation on \eqref{eq:minority-bound-two-sided} we have that

\[ \ex{\prob{N_{v_0}}} \leq \ex{\frac{1}{(1-\bigO{\frac{1}{d}})\prob{M_{v_0}}} \prob{E(N_{v_0},M_{v_0})}} + \bigO{\frac{1}{d}} \ex{\prob{C_{v_0}}}\]

\begin{equation} \label{eq:expectation-minority-bound-two-sided}
\leq \Prob[v]{\prob{A^*_v} > \twomyconst } + \ex{4\prob{E(N_{v_0},M_{v_0})}}  + \bigO{\frac{1}{d}} \ex{\prob{C_{v_0}}},
\end{equation}

where the second inequality is due to the fact that when  \(\Prob[v]{\prob{A^*_v} \leq \twomyconst }\) then
\[\frac{1}{(1-\bigO{\frac{1}{d}})\prob{M_{v_0}}} \leq 4.\]

We bound each of the terms on the right hand side of \eqref{eq:expectation-minority-bound-two-sided} separately.

By \eqref{eq:locally-bad-set-is-small-two-sided},
\[\Prob[v]{\prob{A^*_v} > \twomyconst } = \bigO{\frac{\varepsilon}{d}}.\]

By \pref{cor:a-star-is-small-two-sided} and \pref{claim:not-global-bad-implies-not-local-bad-two-sided} \[\bigO{\frac{1}{d}}\Ex[v_0]{\prob{C_{v_0}}} = \bigO{\frac{1}{d}}\Ex[v]{\prob{A^*_v}} = \bigO{\frac{\varepsilon}{d}}.\]

We continue to bound $\prob{E(N_{v_0},M_{v_0})}$ in expectation. Every edge counted in $E(N_{v_0},M_{v_0})$ is either a bad triple (i.e. and edge $(a_1,s,a_2)$ s.t. $\rest{f_s}{a_i} \ne h_{a_i}$ for $i=1$ or $2$), or a non-bad edge (an edge who is not bad) for which we see a disagreement. By \pref{cor:key-cor-two-sided} there are $\bigO{\frac{\varepsilon}{d}}$ non-bad edges in the cut.

As for the bad edges, notice that $a \in N_{v_0}$ is not a member of $A^*_{v_0}$, thus the amount of bad edges that are connected to $a$ is at most $\twomyconst$-fraction of the edges connected to $a$ (by definition). Thus the amount of bad edges is bounded by $\fourmyconst\prob{N_{v_0}}$, and
\[\prob{E(N_{v_0},M_{v_0})} \leq \bigO{\frac{\varepsilon}{d}} + \fourmyconst\prob{N_{v_0}}.\]

By summing up the bounds we get that
\[ \ex{\prob{N_{v_0}}} \leq \bigO{\frac{\varepsilon}{d}} + \frac{4}{20}\ex{\prob{N_{v_0}}}\]
hence
\[ \ex{\prob{N_{v_0}}} = \bigO{\frac{\varepsilon}{d}}.\]
\end{proof}

\subsection{Proof for the General Case}
Next we prove \pref{thm:main-STAV-agreement-theorem} in full generality.

The proof of the theorem goes through these auxiliary functions:
\begin{definition}[local popularity function]
\torestate{\label{def:h-sigmas}
For every $a \in A$ define $h_a: \dl{a}{V} \to \Sigma$ by popularity, i.e. $h_a = \plu_{s\supset a} \rest{f_s}{a}$. The notation $\plu$ refers to the value $\rest {f_s}a$ with highest probability over $s\supset a$, ties are broken arbitrarily. }
\end{definition}

\begin{definition}[the reach function]\torestate{
\label{def:g-sigmas}
For every $a \in A$ define $g_a: \adj{a} \to \Sigma$ by the popularity conditioned on $\rest{f_s}{a} = h_a$, i.e.
\[g_a(v) = \plu \set{f_s(v) : a \subset s, \rest{f_s}{a} = h_a}.\]
Ties are broken arbitrarily.}
\end{definition}

First, We will prove the following lemma on the local popularity functions:

\begin{lemma} \torestate{
\label{lem:h-lemma}
    For any $a \in A$, let $h_a$ be as in \pref{def:h-sigmas}. Denote by $\varepsilon_a$ the disagreement probability given that $t' \supset a$, i.e.
    \[\varepsilon_{a} = \cProb{}{\rest{f_{s_1}}{\dl{t'}{V}} \ne \rest{f_{s_2}}{\dl{t'}{V}}}{t' \in \ul{a}{t}}. \]

    Then for every $a \in A$:
    \[ \Prob[s \in \ul{a}{s}]{\rest{f_s}{\dl{a}{V}} \ne h_a } = \bigO{\varepsilon_a}.\]
    }
\end{lemma}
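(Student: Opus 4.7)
The proof follows the same template as the two-sided HDX special case already proved for \pref{lem:h-lemma-two-sided}, now invoked at the level of abstract STAV assumptions. Fix $a \in A$ and consider the $\sts_a$-graph from \pref{def:sts-a-graph}: its vertex set is $\ul{a}{s} = \{s \in S : s \supset a\}$, and its edge weights are the conditional STS weights given that the chosen $t$ contains $a$. By construction, the total weight of an edge $\{s_1, s_2\}$ in this graph equals the STS-probability of $(s_1, t, s_2)$ with $t \supset a$, rescaled by $\Pr[t \supset a]$. Color each vertex $s$ by the value $\rest{f_s}{\dl{a}{V}} \in \Sigma^{\dl{a}{V}}$, let $S_1, S_2, \ldots$ be the resulting color classes ordered by decreasing weight, and note that by construction $S_1 = \{s \in \ul{a}{s} : \rest{f_s}{\dl{a}{V}} = h_a\}$.

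The fraction of edges of the $\sts_a$-graph between distinct color classes is exactly $\varepsilon_a$, because an edge $\{s_1,s_2\}_{t}$ is cut precisely when $\rest{f_{s_1}}{\dl{a}{V}} \ne \rest{f_{s_2}}{\dl{a}{V}}$, which is a strictly weaker event than $\rest{f_{s_1}}{\dl{t}{V}} \ne \rest{f_{s_2}}{\dl{t}{V}}$. More carefully: any edge between two different color classes is counted in $\varepsilon_a$, and if anything this undercounts, so the weight of the multicolor cut is at most $\varepsilon_a$. If $\varepsilon_a$ exceeds an absolute constant (say $\tfrac{1}{6}$), the conclusion $\Pr[\rest{f_s}{\dl{a}{V}} \ne h_a] \le 1 = O(\varepsilon_a)$ is immediate, so we may assume $\varepsilon_a < \tfrac{1}{6}$.

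Now I would invoke assumption \pref{ass:edge-expander}, which says the $\sts_a$-graph is a $\tfrac{1}{3}$-edge expander, together with the same partition claim that was used in the HDX special case (the auxiliary \pref{claim:edge-expander-partition-property}). That claim, applied to the partition $\ul{a}{s} = S_1 \dunion S_2 \dunion \cdots$, gives $\Pr[S_1] \ge \tfrac{1}{2}$ as soon as the total multicolor-edge weight is smaller than some constant strictly less than the edge-expansion constant—which is exactly our setting. Once $S_1$ is shown to be the majority side, the $\tfrac{1}{3}$-edge-expansion inequality applied to the cut $(S_1, S_1^c)$ yields
\[
\Pr[S_1^c] \;\le\; 3 \cdot \Pr[E(S_1, S_1^c)] \;\le\; 3\varepsilon_a,
\]
and this is exactly the bound $\Pr_{s \in \ul{a}{s}}[\rest{f_s}{\dl{a}{V}} \ne h_a] = O(\varepsilon_a)$ that the lemma demands.

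There is no genuine obstacle: the only thing to verify carefully is that the local weight on the $\sts_a$-graph really gives the multicolor cut weight $\le \varepsilon_a$, which is a bookkeeping matter using the symmetry of $\sts$ and the definition of the conditional distribution. Everything else is a direct transcription of the two-sided HDX argument, with \pref{ass:edge-expander} replacing the ad hoc edge-expansion fact established there via \pref{thm:containment-graph-is-a-good-expander}. The rest of the STAV machinery (the VASA graphs, the reach graph, the surprise parameter) plays no role in this particular lemma; it is purely a statement about plurality voting on a single edge expander.
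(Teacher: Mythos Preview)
Your proposal is correct and follows essentially the same argument as the paper: color the vertices of the $\sts_a$-graph by $\rest{f_s}{\dl{a}{V}}$, observe that the multicolor cut has weight at most $\varepsilon_a$, invoke \pref{ass:edge-expander} together with \pref{claim:edge-expander-partition-property} to get $\Pr[S_1]\ge\tfrac12$, and finish with the edge-expansion inequality $\Pr[S_1^c]\le 3\,\Pr[E(S_1,S_1^c)]\le 3\varepsilon_a$. If anything, your treatment of the inequality ``multicolor cut $\le \varepsilon_a$'' (since disagreement on $a$ is implied by disagreement on $t\supset a$) is slightly more careful than the paper's phrasing.
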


Next, we move towards showing that when \(\rest{f_s}{a} = h_a\), then for a typical \(a\),  \(f_s(v) = g_a(v)\) occurs with probability \(1-\bigO{\gamma \varepsilon}\).

We consider the $\VASA$-distribution.
We say that a triple $(a,s,a')$ is \emph{bad} if $\rest{f_s}{\dl{a}{V}} \ne h_a$ or $\rest{f_s}{\dl{a'}{V}} \ne h_{a'}$, in the context of the $\vASA{v}$-graphs defined in \pref{sec:stav}, we call these triples bad edges, since the edges of the $\vASA{v}$-graphs correspond to triples $(a,s,a')$. It is easy to see from \pref{lem:h-lemma} that there are $\bigO{\varepsilon}$ bad edges at most.

We use the bad triples to define the set of globally bad elements in $A$, to be all $a \in A$ with many bad triples touching them
\[ A^* = \sett{a \in A}{\Prob[(s,a_2)]{(a,s,a_2) \text{ is a bad edge} } \geq \myconst}\]
namely, all the $a \in A$ so that the probability in \pref{lem:h-lemma} given that we chose a fixed $a \in A$, is larger than a constant. We shall use this set $A^*$ to filter and disregard certain $a \in A$, that ruin the probability to agree with the $\set{g_a}_{a \in A}$, and later on with the global function. The constant $\myconst$ is arbitrary, and once it is fixed, we can say that $\prob{A^*} = \bigO{\varepsilon}$ by Markov's inequality.

\begin{lemma}[agreement with link function] \torestate{
    \label{lem:g-lemma}
    Let $D$ be a distribution over $(a,s,v) \in A \times S \times V$ defined by the STAV-structure, that is:
    \begin{enumerate}
        \item Choose some $(a,v)$ where $v \in \adj{a}$.
        \item Choose some $(a,v) \subset s$ (where we mean $\set{v},a \subset s$).
    \end{enumerate}
    Then
    \begin{equation} \label{eq:g-lemma-to-bound}
    \Prob[(a,v,s) \sim D]{f_s(v) \ne g_a(v) \ve \rest{f_s}{\dl{a}{V}} = h_a \ve a \notin A^*_v } = \bigO{\gamma \epsilon}.
    \end{equation}
    }
\end{lemma}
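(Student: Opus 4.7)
The plan is to adapt the binary-alphabet/HDX proof of the analogous statement \pref{lem:g-lemma-two-sided} to arbitrary alphabets and to general $\gamma$-good STAV structures. The three substitutions are: (i) replace the HDX-specific surprise bound $O(\varepsilon/d)$ by the STAV surprise hypothesis $\surp(X,f)=O(\gamma)$; (ii) replace the exact independence of $s_1,s_2$ given $t$ (true for the HDX agreement distribution) by the approximate independence provided by assumption \pref{ass:v-a-graph}, which says the $\sts_{(a,v)}$-graph is a $\gamma$-two-sided spectral expander; and (iii) replace the binary-majority comparison of $p_{a,v}$ and $q_{a,v}$ by a plurality (mode-probability) argument that works for any alphabet.

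\textbf{Phase 1 (Surprise translation).} Define $q_{a,v}=\Pr[\rest{f_{s_1}}{a}=\rest{f_{s_2}}{a}\wedge f_{s_1}(v)\ne f_{s_2}(v)\mid (a,v)]$ where $(s_1,s_2)$ are drawn from the $\sts_{(a,v)}$-graph. By the definition of the STS distribution, the marginal of $(s_i,t,a,v)$ matches $\stavd$, so combining $\surp(X,f)=O(\gamma)$ with $\disagr{X}(f)\le\varepsilon$ immediately gives $\Ex[(a,v)]{q_{a,v}}=O(\gamma\varepsilon)$.

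\textbf{Phase 2 (Per-$(a,v)$ comparison).} Fix $(a,v)$ with $a\notin A^*_v$ and set $G=\sett{s\supset(a,v)}{\rest{f_s}{a}=h_a}$. The assumption $a\notin A^*_v$ forces $\Pr[s\in G]\ge 1-\twomyconst$, a constant bounded away from $0$. Let $\mu$ be the distribution of $f_s(v)$ given $s\in G$ and $m=\mu(g_a(v))$ the mode probability, so $p_{a,v}=\Pr[G](1-m)$. Apply the two-sided expander mixing lemma to the $\sts_{(a,v)}$-graph together with the family of indicators $\chi_{G,\sigma}(s)=\chi[s\in G,\ f_s(v)=\sigma]$. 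Since $\chi[s_1,s_2\in G,\ f_{s_1}(v)=f_{s_2}(v)]=\sum_\sigma \chi_{G,\sigma}(s_1)\chi_{G,\sigma}(s_2)$, summing the mixing inequality over $\sigma$ and using the elementary inequality $\sum_\sigma \mu(\sigma)^2\le m\sum_\sigma \mu(\sigma)=m$ yields
\[
q_{a,v}\;\ge\;\Pr[G]^2\Bigl(1-\textstyle\sum_\sigma \mu(\sigma)^2\Bigr)\;-\;O(\gamma)\;\ge\;\Pr[G]\cdot p_{a,v}\;-\;O(\gamma).
\]
With $\Pr[G]=\Omega(1)$ this rearranges to $p_{a,v}\le O(q_{a,v})+O(\gamma)$.

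\textbf{Phase 3 (Integration).} Taking expectation over $(a,v)$ drawn from the $\stavd$-marginal and using Phase 1 gives $\Ex{p_{a,v}\cdot\chi_{a\notin A^*_v}}=O(\gamma\varepsilon)$, which is the desired bound.

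\textbf{Main obstacle.} The most delicate step is Phase 2: the naive expander mixing calculation produces an additive error of $O(\gamma)$ per $(a,v)$, whereas the lemma asks for the sharper bound $O(\gamma\varepsilon)$. Closing the gap requires the multiplicative form of the mixing inequality (errors of the form $\gamma\cdot\Pr[A](1-\Pr[A])$ rather than $\gamma\cdot\Pr[A]$), together with the fact that $\Pr[G]$ is very close to $1$ by the local-goodness of $a$. This should make the error scale with $\gamma\cdot\Pr[G](1-\Pr[G]m)$, which in the relevant regimes can be absorbed into $O(p_{a,v})$, so that the per-$(a,v)$ bound becomes truly multiplicative. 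The binary argument sidestepped this entirely because independence was exact; here the mild non-independence from the $\sts_{(a,v)}$-walk is what drives the subtle bookkeeping.
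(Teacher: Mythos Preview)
Your Phases 1 and 3 and the overall plan are correct and match the paper. The gap is exactly where you place it: your displayed Phase-2 inequality $q_{a,v}\ge \Pr[G]\,p_{a,v}-O(\gamma)$ loses an additive $O(\gamma)$ per $(a,v)$, so integration only gives $O(\gamma)$, not $O(\gamma\varepsilon)$. Your proposed fix is gestured at but not carried out, and the claim that $\Pr[G]$ is ``very close to $1$'' is not true in the sense you need: $a\notin A^*_v$ only guarantees $\Pr[G]\ge 1-\twomyconst$, a fixed constant, so $(1-\Pr[G])$ is not $O(\varepsilon)$.

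The paper closes the gap differently and more cleanly. Instead of working inside $G$ and summing over alphabet symbols, it partitions the vertex set of the $\sts_{(a,v)}$-graph into \emph{three} pieces: $M$ (agree with $h_a$ on $a$ and with $g_a(v)$ at $v$), $N$ (agree with $h_a$ but disagree with $g_a(v)$), and $C$ (disagree with $h_a$). The target quantity is $\Pr[N]$. Every edge between $N$ and $M$ lies in $H_{a,v}$ (same value on $a$, different at $v$), so $\Pr[E(N,M)]\le \Pr[H_{a,v}]$. The almost-cut property of a $\gamma$-two-sided expander (\pref{claim:almost-cut}) then gives, once $\Pr[M]\ge\tfrac12$,
\[
\Pr[N]\;\le\;O\bigl(\Pr[E(N,M)]+\gamma\Pr[C]\bigr)\;\le\;O\bigl(\Pr[H_{a,v}]+\gamma\Pr[C]\bigr).
\]
The crucial point is that the spectral error is $\gamma\Pr[C]$, not $\gamma$. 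Integrating: $\Ex{\Pr[H_{a,v}]}=O(\gamma\varepsilon)$ by surprise (your Phase 1), and $\Ex{\Pr[C]}=\Ex{\Pr[\rest{f_s}{a}\ne h_a]}=O(\varepsilon)$ by \pref{lem:h-lemma}, so $\gamma\,\Ex{\Pr[C]}=O(\gamma\varepsilon)$. This is the missing idea you were circling: route the spectral slack through the set $C$, whose expected measure is already $O(\varepsilon)$. The paper separately verifies $\Pr[M]\ge\tfrac12$ for all but an $O(\gamma\varepsilon)$ fraction of pairs $(a,v)$ via the edge-expander partition property (\pref{claim:edge-expander-partition-property}).

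Your ``multiplicative mixing'' idea can in fact be pushed through to the same bound: a careful computation shows the total mixing error in your Phase-2 inequality is $O(\gamma(\Pr[C]+p_{a,v}))$ rather than $O(\gamma)$, and the $\gamma\,p_{a,v}$ part is then absorbed on the left. But you have not done that computation; as written, Phase 2 is a sketch with its key inequality left unproved.
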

Finally our goal is to stitch\(g_a\)'s functions together to one global function.

\pref{lem:g-lemma} motivates us to define the global function as the popularity vote on $g_a(v)$ for all $a \notin A^*_v$ such that $v\in \adj a$. However, in order to properly define the global function, we need to define another process that takes into account the agreement of two functions $g_a, g_{a'}$. For this we use the $\vASA{v}$-graphs described in \pref{ass:vasa-graphs-v}.

For $v \in V$, we say $a \in \adj{v}$ is \emph{locally bad for $v$}, if
\[ \cProb{(a_1,s,a_2)\in E(\vASA{v})}{(a_1,s,a_2) \text{ is bad}}{a_1 = a}> \twomyconst. \]
The constant here is also arbitrary.

Finally, for every $v \in V$, we define $A^*_v$ to be the set of all $a \in \adj{v}$ that are either globally bad, or  locally bad for $v$.

We show using the sampler lemma, \pref{lem:sampler-lemma}, that if $a \in A$ is not globally bad, then the probability over $v \in V$, that it will be locally bad for $v$ is small, i.e.
\begin{claim}[Not Globally Bad implies Not Locally Bad]\torestate{
\label{claim:not-global-bad-implies-not-local-bad}
\[ \Prob[a \in A, v \in \adj{a}]{a \in A^*_v \ve a \notin A^*} = \bigO{\gamma \varepsilon}.\]
}
\end{claim}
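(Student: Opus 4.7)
The plan is to imitate the argument from the two-sided HDX case (proof of \pref{claim:not-global-bad-implies-not-local-bad-two-sided}), but replacing the concrete bipartite link expander with the abstract $\VAS{a}$-graph guaranteed by \pref{ass:vasa-graphs-a}. First I would fix an arbitrary $a \notin A^*$ and work inside the bipartite $\VAS{a}$-graph, whose left side is $\adj{a}$ and whose right side is the set of pairs $(a',s)$ with $(v,a,s,a') \in \supp(\vasad)$ for some $v$. By \pref{ass:vasa-graphs-a} this graph is a $\sqrt{\gamma}$-bipartite expander, and the probability distribution on its right-side vertices is precisely the conditional distribution of $(a',s)$ given the first coordinate $a$ in $\vasad$.

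Next I would designate the ``bad'' set on the right:
\[ B = \sett{(a',s)}{(a,s,a') \text{ is a bad triple}}. \]
By the definition of $A^*$, since $a \notin A^*$ we have $\Pr[B] < \myconst$. Then I would identify the locally bad left-vertices
\[ V^* = \sett{v \in \adj{a}}{\cProb{(a',s)}{(a',s) \in B}{v \sim (a',s)} \geq \twomyconst}, \]
which by construction is exactly the set of $v$'s for which $a$ is locally bad in the sense of the $\vASA{v}$-graph. This is where the key step enters: applying the sampler lemma (\pref{lem:sampler-lemma}) to the $\sqrt{\gamma}$-bipartite expander with threshold gap a constant $\twomyconst - \Pr[B] = \Omega(1)$ yields
\[ \Pr[V^*] = \bigO{\gamma} \cdot \Pr[B]. \]

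Finally I would take expectation over $a$. Using that the conditional expectation over $a \notin A^*$ is at most $1/\Pr[a \notin A^*]$ times the unconditional one, and that $\Ex[a \in A]{\Pr[B]} = \bigO{\varepsilon}$ by \pref{lem:h-lemma} (since this expectation counts the global fraction of bad triples), I get
\begin{align*}
\Prob[a, v \in \adj{a}]{a \in A^*_v \ve a \notin A^*}
&= \Pr[a \notin A^*] \cdot \Ex[a \notin A^*]{\Pr[V^*]} \\
&\le \bigO{\gamma} \cdot \Pr[a \notin A^*] \cdot \Ex[a \notin A^*]{\Pr[B]} \\
&\le \bigO{\gamma} \cdot \Ex[a \in A]{\Pr[B]}
= \bigO{\gamma \varepsilon}.
\end{align*}

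The main subtlety, rather than an obstacle, is double-checking that the marginal on $(a',s)$ in the $\VAS{a}$-graph really matches the marginal used to define $A^*$ (both come from $\vasad$ conditioned on the first coordinate being $a$), so that the two notions of ``fraction of bad triples touching $a$'' coincide. Once this is verified, the sampler lemma does the work, and the improvement from the $\bigO{\varepsilon/d}$ bound in the two-sided HDX case to the $\bigO{\gamma \varepsilon}$ bound here simply reflects that $\sqrt{\gamma}$ now replaces the $\bigO{1/\sqrt{d}}$ expansion constant of the concrete link graph.
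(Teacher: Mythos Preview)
Your proposal is correct and follows essentially the same approach as the paper's proof: fix $a \notin A^*$, work in the $\VAS{a}$-graph, identify the bad set $B$ on the right side, apply the sampler lemma (\pref{lem:sampler-lemma}) using the $\sqrt{\gamma}$-expansion from \pref{ass:vasa-graphs-a}, and then average over $a$ using \pref{lem:h-lemma}. The subtlety you flag about matching marginals is exactly right and is implicit in the paper's argument as well.
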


Now we can define our global function $G:V\to \Sigma$ as follows:
\[G(v) = \plu \sett{g_a(v) }{a \in \adj{v},\; a \notin A^*_v },\]
as usual, ties are broken arbitrarily. In words, we remove a small amount of bad $a \in A$, where many functions $f_s$'s don't agree with the $g_a$'s, and take the popular vote of the remainder.

We can now prove:
\begin{lemma}[agreement with global function]
\torestate{
\label{lem:global-agreement-links}
\[\Prob[a \in A, v \in \adj{a}]{g_a(v) \ne G(v) \ve a \notin A^*_v } = \bigO{\gamma \varepsilon}.\]
}
\end{lemma}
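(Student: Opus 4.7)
The plan is to follow the structure of the binary-case proof \pref{lem:global-agreement-links-two-sided}, adapting it to an arbitrary alphabet $\Sigma$. The only serious new issue is that for large $|\Sigma|$ the plurality class $M_{v_0} = \{a \in \adj{v_0}\setminus A^*_{v_0} : g_a(v_0) = G(v_0)\}$ is no longer automatically a majority of $\adj{v_0}\setminus A^*_{v_0}$, so one cannot directly invoke \pref{claim:almost-cut} with $A = N_{v_0}$, $B = M_{v_0}$, $C = A^*_{v_0}$ as was done in the binary case.

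I would begin by fixing $v_0 \in V$, considering the $\vASA{v_0}$-graph (which by \pref{ass:vasa-graphs-v} is a $\gamma$-spectral expander) and setting up the same three-way partition $\adj{v_0} = C_{v_0} \dunion M_{v_0} \dunion N_{v_0}$ with $C_{v_0} = A^*_{v_0}$. The crucial new step is to show $\prob{M_{v_0}} > 1/2$ for all but an $O(\gamma\varepsilon)$ fraction of $v_0$. Three ingredients go into this: (i) the sampler lemma applied via \pref{ass:global-A-V-graph} to the set $A^*$, together with Markov's inequality on \pref{claim:not-global-bad-implies-not-local-bad}, yields $\Pr_{v_0}[\prob{A^*_{v_0}} > c] = O(\gamma\varepsilon)$ for any constant $c$; (ii) an analog of \pref{cor:key-cor-two-sided}, obtained by applying \pref{lem:g-lemma} to both coordinates plus a union bound, gives that the expected fraction of non-bad cross-color edges in the $\vASA{v_0}$-graph is $O(\gamma\varepsilon)$, so Markov controls this quantity for all but $O(\gamma\varepsilon)$ of $v_0$; (iii) a direct expander-mixing calculation on the coloring $a \mapsto g_a(v_0)$ (with $a \in A^*_{v_0}$ treated as a separate color) shows that when the cross-color-non-bad edge fraction and $\prob{A^*_{v_0}}$ are both small, $\sum_\sigma q_\sigma^2 \geq 1 - O(\gamma)$, which forces $\prob{M_{v_0}} = \max_\sigma q_\sigma$ to strictly exceed $1/2$.

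Once this is established, the remainder of the argument follows the binary proof almost verbatim. For such ``good'' $v_0$, \pref{claim:almost-cut} yields $\prob{N_{v_0}} \leq O\bigparen{\prob{E(N_{v_0}, M_{v_0})} + \gamma\prob{A^*_{v_0}}}$. I would then decompose edges in $E(N_{v_0}, M_{v_0})$ into bad triples, whose total mass is at most $\fourmyconst \prob{N_{v_0}}$ since no vertex of $N_{v_0}$ lies in $A^*_{v_0}$, and non-bad cross-color edges, whose expectation is $O(\gamma\varepsilon)$ by the analog of \pref{cor:key-cor-two-sided}. Absorbing the $\fourmyconst\prob{N_{v_0}}$ term into the left-hand side and taking expectation yields $\mathbb{E}_{v_0}[\prob{N_{v_0}}\cdot\mathbf{1}_{\text{good}}] = O(\gamma\varepsilon + \gamma\cdot\varepsilon) = O(\gamma\varepsilon)$, while the $O(\gamma\varepsilon)$ fraction of ``bad'' $v_0$ contribute at most $O(\gamma\varepsilon) \cdot 1$ trivially.

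The main obstacle is step (iii) above: the binary proof relies on the trivial bound $\prob{M_{v_0}} \geq (1-\prob{A^*_{v_0}})/2$, and replacing this with a spectral argument whose constants remain independent of $|\Sigma|$ is the essential new ingredient that the general case demands. Should the $\vASA{v_0}$-graph be bipartite rather than two-sided spectral (the other option in \pref{ass:vasa-graphs-v}), the argument carries over with an analogous bipartite expander-mixing statement replacing \pref{claim:almost-cut}.
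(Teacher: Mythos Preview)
Your proposal is correct and follows essentially the same architecture as the paper's proof: fix $v_0$, partition $\adj{v_0}$ into $M_{v_0}$, $N_{v_0}$, $C_{v_0}=A^*_{v_0}$, apply the almost-cut property (\pref{claim:almost-cut} or its bipartite analogue \pref{claim:almost-cut-bipartite}), split $E(N_{v_0},M_{v_0})$ into bad triples and non-bad disagreement edges, absorb the $\fourmyconst\prob{N_{v_0}}$ term, and average.

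The only substantive difference is how you establish $\prob{M_{v_0}}\ge 1/2$ for typical $v_0$. The paper does not use an expander-mixing computation of $\sum_\sigma q_\sigma^2$; instead it invokes the edge-expander partition property (\pref{claim:edge-expander-partition-property}) on the finer partition $B_0=C_{v_0}$, $B_1=M_{v_0}$, and one class $B_\sigma$ per non-plurality symbol. Bounding the total crossing-edge mass by a constant below $1/6$ (using $\prob{C_{v_0}}\le\twomyconst$, the $\twomyconst$ bad-edge budget for $a\notin A^*_{v_0}$, and Markov on \pref{cor:key-cor}) forces some $B_i$ to have mass $\ge 1/2$; since $B_0$ is small and $B_1$ is the largest among the rest by definition of $G$, it must be $B_1=M_{v_0}$. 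Your route via $\sum_\sigma q_\sigma^2$ is a valid alternative, though note a small slip: the bound you would actually obtain is $\sum_\sigma q_\sigma^2 \ge 1-c$ for a fixed constant $c$ (driven by the $\twomyconst$ bad-edge budget, which is independent of $\gamma$), not $1-O(\gamma)$. With the paper's choice of constants this $c$ is below $1/2$ and your conclusion $\max_\sigma q_\sigma>1/2$ follows.
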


Given the lemmata above, we prove the theorem for STAV-structures.
\begin{proof}[Proof of \pref{thm:main-STAV-agreement-theorem}]
We first show that based on \pref{ass:a-not-large} or \pref{ass:AVS-graph-good-sampler}, it is enough to prove
\begin{equation} \label{eq:real-thm-then}
    \Prob[s \in S, a \in \dl{s}{A}]{\rest{f_s}{s \cap \adj{a}} \ane{\frac{1}{2} r \gamma} \rest{G}{s \cap \adj{a}}} = \bigO{\left (1+\frac{1}{r} \right )\varepsilon}.
\end{equation}
Indeed for any $r > 0$,
\begin{itemize}
    \item If \pref{ass:a-not-large} holds, and 
    \[{f_s \ane{r \gamma} \rest{G}{s}}.\]
    it implies that
\[\rest{f_s}{s \cap \adj{a}} \ane{\frac{1}{2}r \gamma} \rest{G}{s \cap \adj{a}}\]
for all $a \subset s$. Thus there cannot be more than $\bigO{\left (1+\frac{1}{r} \right )\varepsilon}$ $s \in S$ as above.

\item If \pref{ass:AVS-graph-good-sampler} holds for $r \gamma$, then for any \[\rest{f_s}{s \cap \adj{a}} \ane{ r \gamma} \rest{G}{s \cap \adj{a}},\]
it is true by the assumption that a $\frac{2}{3}$-fraction of the $a \subset s$ have the property that
\[\rest{f_s}{s \cap \adj{a}} \ane{\frac{1}{3}r \gamma} \rest{G}{s \cap \adj{a}}.\]
Hence
    \begin{equation*}
    \Prob[s]{f_s \ane{r \gamma} \rest{G}{s}}
    \leq
    \frac{3}{2}\Prob[s \in S, a \in \dl{s}{A}]{\rest{f_s}{s \cap \adj{a}} \ane{\frac{1}{3} r \gamma} \rest{G}{s \cap \adj{a}}} = \bigO{\left (1+\frac{1}{r} \right )\varepsilon}
    \end{equation*}
    and we are done.
\end{itemize}

Next, we prove \eqref{eq:real-thm-then}.
We define the following events:
\begin{enumerate}
    \item $E_1$ - the event that $\rest{f_s}{a} \ne h_a$.
    \item $E_2$ - the event that $a \in A^*$, i.e. the $a$ chosen has many bad edges.
\end{enumerate}
Define a random variable $Z$, that samples $s,a$ and outputs
\begin{equation}\label{eq:z-prob}
\Prob[v \in s \cap \adj{a}]{f_s(v) \ne G(v)}.
\end{equation}
The probability for $E_1 \vee E_2$ is $\bigO{\varepsilon}$ by \pref{lem:h-lemma} and Markov's inequality.

If $\neg (E_1 \vee E_2)$, yet a vertex $v$ contributes to the probability in \eqref{eq:z-prob}, then one of the three must occur:
\begin{enumerate}
    \item $a \in A^*_v$.
    \item $f_s(v) \ne g_a(v)$ and $a \notin A^*$.
    \item $a \notin A^*_v$ but $f_s(v) = g_a(v) \ne G(v)$.
\end{enumerate}
The first event occurs with probability $\bigO{\gamma \varepsilon}$ by \pref{claim:not-global-bad-implies-not-local-bad}. The second occurs with probability $\bigO{\gamma \varepsilon}$ by \pref{lem:g-lemma}. The third occurs with probability $\bigO{\gamma \varepsilon}$ by \pref{lem:global-agreement-links}.
Thus by the expectation of $Z$ given that $\neg (E_1 \vee E_2)$ is $O(\gamma \varepsilon)$. By Markov's inequality for any $r > 0$,
\[\cProb{s \in S, a \in \dl{s}{A}}{\rest{f_s}{s \cap \adj{a}} \ane{r\gamma} \rest{G}{s \cap \adj{a}}}{\neg (E_1 \vee E_2)} = \bigO{\frac{\varepsilon}{r}}.\]

In conclusion
\begin{align*}
    \Prob[s \in S, a \in \dl{s}{A}]{\rest{f_s}{s \cap \adj{a}} \ane{r \gamma} \rest{G}{s \cap \adj{a}}} \leq  & \\
     \prob{E_1 \vee E_2} +  \cProb{s \in S, a \in \dl{s}{A}}{\rest{f_s}{s \cap \adj{a}} \ane{r \gamma} \rest{G}{s \cap \adj{a}}}{\neg (E_1 \vee E_2)} & = \\
     \bigO{\left (1+\frac{1}{r} \right ) \varepsilon} & .
\end{align*}
\end{proof}

In a special case, we can say something even stronger.
\begin{theorem}[Extension of \pref{thm:main-STAV-agreement-theorem}]
\label{thm:stronger-equation-in-main-theorem}
Let $X, \FF$ be as in \pref{thm:main-STAV-agreement-theorem}. Suppose that we have a distribution $(v,b,a,s)$ of sets $b$ where $v \in b \subset \dl{s}{V} \cap \adj{a}$. Suppose that the marginal $(v,a,s)$ is the marginal of $V \times A \times S$ in $D_{stav}$, then the following holds:
\begin{equation} \label{eq:stronger-equation-in-main-theorem}
    \Prob[s \in S, a \in \dl{s}{A}, b \subset \dl{s}{V} \setminus \dl{a}{V}]{\rest{f_s}{b} \ane{r \gamma} \rest{G}{b}} = \bigO{\left (1+\frac{1}{r} \right )\varepsilon}.
\end{equation}
\end{theorem}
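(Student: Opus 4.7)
The plan is to rerun the proof of \pref{thm:main-STAV-agreement-theorem} essentially verbatim, only replacing the quantity ``$s \cap \adj{a}$'' in the random variable $Z$ by the new set $b$ drawn from the auxiliary distribution. The crucial input is the hypothesis that the marginal of $(v,a,s)$ in the new distribution coincides with the marginal of $V \times A \times S$ under $\stavd$; this means every expectation over $(v,a,s)$ that appeared in the bounds \pref{lem:h-lemma}, \pref{lem:g-lemma}, \pref{claim:not-global-bad-implies-not-local-bad}, and \pref{lem:global-agreement-links} is unchanged when we first draw $(s,a,b)$ and then draw $v \in b$ from the conditional distribution of $v$ given $(s,a,b)$.

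Concretely, first I would keep the global function $G:V\to\Sigma$, the popularity functions $h_a$, the reach functions $g_a$, and the sets $A^*, A^*_v$ exactly as constructed in the proof of \pref{thm:main-STAV-agreement-theorem}; none of these depend on the new distribution over $b$. Next, I would define the events $E_1$ (that $\rest{f_s}{a}\ne h_a$) and $E_2$ (that $a\in A^*$); these depend only on $(s,a)$, whose marginal is unchanged, so $\Pr[E_1\vee E_2] = \bigO{\varepsilon}$ follows as before from \pref{lem:h-lemma} and Markov's inequality. I would then replace the random variable $Z$ in the earlier proof by
\[
Z(s,a,b) \;=\; \Prob[v \in b]{f_s(v) \ne G(v)},
\]
where the inner probability uses the conditional distribution of $v$ given $(s,a,b)$.

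The heart of the argument is the bound $\Ex{Z \mid \neg(E_1 \vee E_2)} = \bigO{\gamma \varepsilon}$. Expanding $\Ex{Z}$ as an iterated expectation first over $(s,a,b)$ and then over $v \in b$ collapses back to the expectation over $(v,a,s)$ with its original $\stavd$-marginal; hence the same three-case decomposition works unchanged. Namely, a vertex $v$ contributes to $Z$ only if either (i) $a\in A^*_v$, contributing $\bigO{\gamma\varepsilon}$ by \pref{claim:not-global-bad-implies-not-local-bad}, or (ii) $f_s(v)\ne g_a(v)$ with $a\notin A^*_v$, contributing $\bigO{\gamma\varepsilon}$ by \pref{lem:g-lemma}, or (iii) $g_a(v)\ne G(v)$ with $a\notin A^*_v$, contributing $\bigO{\gamma\varepsilon}$ by \pref{lem:global-agreement-links}. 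Summing yields $\Ex{Z\mid \neg(E_1\vee E_2)} = \bigO{\gamma\varepsilon}$, and Markov's inequality then gives $\cProb{}{Z\ge r\gamma}{\neg(E_1\vee E_2)}=\bigO{\varepsilon/r}$. Combining with $\Pr[E_1\vee E_2]=\bigO{\varepsilon}$ yields the desired $\bigO{(1+1/r)\varepsilon}$ bound.

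The main (minor) obstacle is a bookkeeping one: we must verify that each invocation of the lemmas in the three cases uses only the marginal of $(v,a,s)$ under $\stavd$, and not any finer property of how $v$ is drawn inside $s\cap\adj{a}$. This is visible from the statements of \pref{lem:g-lemma}, \pref{claim:not-global-bad-implies-not-local-bad}, \pref{lem:global-agreement-links}, whose probabilities are over the $\stavd$/$\vasad$ distributions; the only place the specific conditional of $v$ given $(s,a)$ enters is in the final Markov step, and the hypothesis on the marginal of $(v,a,s)$ lets us replace this step by the analogous one for the auxiliary distribution over $b$. No new spectral input is needed.
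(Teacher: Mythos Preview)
Your proposal is correct and follows essentially the same approach as the paper: define $Z'(s,a,b)=\Prob[v\in b]{f_s(v)\ne G(v)}$, reuse the events $E_1,E_2$ and the three-case decomposition, observe that the marginal hypothesis on $(v,a,s)$ makes all the lemma invocations go through unchanged, and finish with Markov. If anything, your write-up is more explicit than the paper's about \emph{why} the marginal assumption suffices.
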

We will need \pref{thm:stronger-equation-in-main-theorem} for some of our applications.

\begin{proof}[Proof of \pref{thm:stronger-equation-in-main-theorem}]
The case discussed in \eqref{eq:stronger-equation-in-main-theorem} has a similar proof to \pref{thm:main-STAV-agreement-theorem}. We define the random variable $Z'$ that samples $(s,a,b)$ and outputs $\Prob[v \in b]{f_s(v) \ne G(v)}$. Consider the same events as in the proof of \pref{thm:main-STAV-agreement-theorem}. Since
\begin{enumerate}
    \item $E_1 \vee E_2$ occur with the same probability.
    \item The expectation of $Z'$ given that $\neg(E_1 \vee E_2)$ is still $\bigO{\gamma \varepsilon}$.
\end{enumerate}
Then by Markov's inequality for any $r > 0$, \eqref{eq:stronger-equation-in-main-theorem} holds.
\end{proof}

\subsection{Proof of the Lemmata}
\restatelemma{lem:h-lemma}

\begin{proof}[Proof of \pref{lem:h-lemma}]
Fix $a \in A$, and denote by $\varepsilon_a$ the probability to succeed in the STS-test given that $s_1,s_2,t \supset a$. If $\varepsilon_a \geq \frac{1}{6}$ we are trivially done, so assume otherwise. Denote by $\set{h^i_a}_i$ all possible functions from $\dl{a}{V}$ to $\Sigma$, where $h^1_a = h_a$. Consider the $STS_a$-graph. According to \pref{ass:edge-expander}, this is a $\frac{1}{3}$-edge expander.

Denote by $S_i = \set{s :\rest{f_s}{\dl{a}{V}} = h^i_a}$. We need to show that the set of vertices $S_1 = \set{s : \rest{f_s}{\dl{a}{V}} = h_a}$ (the largest of all $S_i$) is $1 - \bigO{\varepsilon_a}$.

The quantity $\varepsilon_a$, i.e. the amount of edges between $S_i$'s, is by assumption less than $\frac{1}{6}$. The $STS_a$-graph is a $\frac{1}{3}$-edge expander.

It is a known fact that if we partition a vertex of an edge-expander graph, and there are few outgoing edges, then one of the parts in the partition is large. This fact is formulated in \pref{claim:edge-expander-partition-property}.

We invoke \pref{claim:edge-expander-partition-property}, using the fact that the graph is a $\frac{1}{3}$-edge expander and the fact that the fraction of edges between the $S_i$'s is less that $\frac{1}{6}$. We get that $\prob{S_1} \geq \frac{1}{2}$.

By the edge-expander property $\prob{S_1^c} \leq 3E(S_1,S_1^c) \leq 3\varepsilon_a$.

\end{proof}

\begin{corollary}
\label{cor:a-star-is-small}
$\prob{A^*} = \bigO{\varepsilon}$.
\end{corollary}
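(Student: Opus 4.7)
The plan is to bound the total probability mass of bad edges under the $\vasad$-distribution, and then apply Markov's inequality at the threshold $\myconst = 1/40$ that defines $A^*$.

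First I would observe that a triple $(a_1,s,a_2) \sim \vasad$ is bad exactly when one of $\rest{f_s}{\dl{a_i}{V}} \ne h_{a_i}$ holds for $i \in \{1,2\}$, so by the union bound together with the $a_1 \leftrightarrow a_2$ symmetry of $\vasad$,
\[ \Prob[\vasad]{(a_1,s,a_2)\text{ bad}} \le 2\Prob[\vasad]{\rest{f_s}{\dl{a_1}{V}} \ne h_{a_1}}. \]
Since the marginal of $(v,a_1,s)$ in $\vasad$ coincides with the marginal of $(v,a,s)$ in $\stavd$ (by the STAV axioms in \pref{def:STAV}), the right-hand side equals $2\,\Ex[a \sim \stavd]{\Prob[s \supset a]{\rest{f_s}{\dl{a}{V}} \ne h_a}}$.

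Next I would invoke \pref{lem:h-lemma} to bound the inner probability by $\bigO{\varepsilon_a}$, so that the total mass of bad edges is at most $\bigO{\Ex[a]{\varepsilon_a}}$. The identity $\Ex[a]{\varepsilon_a} = \varepsilon$ then follows from the compatibility of the STS and STAV marginals on $t$: averaging the conditional failure probability $\varepsilon_a$ against the STAV-induced distribution on $a$ (namely $a\subset t$ with $t$ drawn from the common marginal) recovers the unconditional test failure probability $\disagr{X}(f) \le \varepsilon$. Combining, the total $\vasad$-probability of a bad edge is $\bigO{\varepsilon}$.

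Finally, since $a \in A^*$ iff the conditional probability of a bad triple given first coordinate $a$ exceeds the constant $\myconst$, Markov's inequality immediately gives $\prob{A^*} \le \bigO{\varepsilon}/\myconst = \bigO{\varepsilon}$. The argument is essentially routine given \pref{lem:h-lemma}; the only bookkeeping step is verifying the marginal compatibility that yields $\Ex[a]{\varepsilon_a} = \varepsilon$, and there is no substantial obstacle to anticipate.
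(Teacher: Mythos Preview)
Your proposal is correct and follows essentially the same approach as the paper: bound the total $\vasad$-mass of bad triples by $\bigO{\varepsilon}$ via \pref{lem:h-lemma}, then apply Markov's inequality at the threshold $\myconst$. The paper's proof is terser (it simply asserts that the total amount of bad edges is $\bigO{\varepsilon}$ by \pref{lem:h-lemma} and invokes Markov), whereas you have spelled out the union bound, the symmetry of $\vasad$, and the marginal-compatibility step $\Ex[a]{\varepsilon_a}=\varepsilon$; but these are exactly the details underlying the paper's one-line argument.
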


\begin{proof}[Proof of \pref{cor:a-star-is-small}]
$a \in A$ contributes to $A^*$ if the amount of bad edges that $a$ participates in is $\geq \frac{1}{40}$. The total amount of bad edges is $\bigO{\varepsilon}$ by \pref{lem:h-lemma}. Thus by Markov's inequality $\prob{A} = \bigO{\varepsilon}$.
\end{proof}

\bigskip

We move to \pref{claim:not-global-bad-implies-not-local-bad}.
\restateclaim{claim:not-global-bad-implies-not-local-bad}
\begin{proof}[Proof of \pref{claim:not-global-bad-implies-not-local-bad}]

Fix some $a \notin A^*$. Consider the $\VAS{a}$-graph for this $a$.
This is the bipartite graph, where
\[ L = \adj{a_0}, \; R = \sett{(a,s)}{\exists v \in L \; (v,a_0,s,a) \in Supp(D) },\]
\[ E = \set{(v,(a,s)) : (v,a_0,s,a) \in Supp(D) }. \]
The probability of choosing an edge $(v,(a',s))$ is given by $\cProb{D}{(v,a_0,s,a')}{a_0 = a}$.

Denote by $B \subset L$ the that consists of all $(s,a')$ s.t. $(a,s,a')$ is bad. If $a \notin A^*$ then $\prob{B} < \myconst$. From \pref{ass:vasa-graphs-a}, this graph is a $\sqrt{\gamma}$-bipartite expander.
Define the set
\[V^* = \sett{v \in \adj{a}}{\cProb{(s,a')}{B}{v \sim (s,a')} \geq \twomyconst}, \]
the set of $v \in \adj{a}$ so that the probability for a bad edge is larger than $\twomyconst$, namely, that $a$ is locally bad for $v$. In the sampler lemma, \pref{lem:sampler-lemma}, we see that bipartite-expanders are good samplers.

We use \pref{lem:sampler-lemma} to get that $\prob{V^*} = \bigO{\gamma}\prob{B}$. Taking expectation on all $a \in A$ we get that
\[ \Prob[a \in A, v \in \adj{a}]{a \in A^*_v \ve a \notin A^*} = \prob{a \notin A^*} \cdot \Ex[a \notin A^*]{\prob{V^*} } =\]
\[ \prob{a \notin A^*} \cdot \bigO{\gamma}\Ex[a \notin A^*]{\prob{B}} \leq \bigO{\gamma} \Ex[a \in A]{\prob{B}}= \bigO{\gamma \varepsilon},\]
The last inequality is due to the fact that taking expectation on this set conditioned on $a \notin A^*$, is less than the expectation on all $A$ (by definition when $a \in A^*$, then $\prob{B} \geq \frac{1}{40}$, and when $a \notin A^*$, $\prob{B} < \frac{1}{40}$).
The last equality is since $\prob{B} = \bigO{\varepsilon}$ by \pref{lem:h-lemma}.
\end{proof}

\bigskip

Moving on to \pref{lem:g-lemma}:

\begin{lemma*}[Restatement of \pref{lem:g-lemma}] 
    Let $D$ be a distribution over $(a,s,v) \in A \times S \times V$ defined by the STAV-structure, that is:
    \begin{enumerate}
        \item Choose some $(a,v)$ where $v \in \adj{a}$.
        \item Choose some $(a,v) \subset s$ (where we mean $\set{v},a \subset s$).
    \end{enumerate}
    Then
    \begin{equation*} 
    \Prob[(a,v,s) \sim D]{f_s(v) \ne g_a(v) \ve \rest{f_s}{\dl{a}{V}} = h_a \ve a \notin A^*_v } = \bigO{\gamma \epsilon}.
    \end{equation*}
\end{lemma*}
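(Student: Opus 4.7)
The strategy is to follow the argument used for the binary alphabet in \pref{lem:g-lemma-two-sided}, replacing only the step where binarity is used by an alphabet-free comparison. Write the probability we want to bound as an expectation $\Ex[(a,v)]{p_{a,v}}$ with
\[ p_{a,v} = \Pr_s[f_s(v) \ne g_a(v) \ve \rest{f_s}{\dl{a}{V}} = h_a \mid s \supset a \cup \set{v}] \cdot \mathbf{1}[a \notin A^*_v], \]
and compare it to
\[ q_{a,v} = \Pr_{s_1, s_2}[\rest{f_{s_1}}{\dl{a}{V}} = \rest{f_{s_2}}{\dl{a}{V}} \ve f_{s_1}(v) \ne f_{s_2}(v)], \]
where $s_1, s_2$ are drawn independently from $\stavd$ conditioned on $s \supset a \cup \set{v}$. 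The two ingredients are (i) $\Ex[(a,v)]{q_{a,v}} = \bigO{\gamma \varepsilon}$, and (ii) a pointwise bound $p_{a,v} \leq \bigO{q_{a,v}}$; together they yield the lemma.

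For (i), consider the distribution that first samples $(s_1, t, s_2)$ from the STS distribution and then $(a, v)$ from $t$ via the STAV conditional. Because $(s_i, t, a, v)$ is then $\stavd$-distributed, the marginals on $(a, v, s_i)$ coincide with the sampling used to define $p_{a,v}$ and $q_{a,v}$. Whenever the event defining $q_{a,v}$ holds with $t = a \dunion \set{v}$, we have automatically $\rest{f_{s_1}}{\dl{t}{V}} \ne \rest{f_{s_2}}{\dl{t}{V}}$, so that event is contained in the STS rejection event. Factoring out the STS rejection probability and using the surprise and agreement hypotheses \pref{eq:agreement}--\pref{eq:surprise}, we obtain $\Ex[(a,v)]{q_{a,v}} \le \surp(X,f) \cdot \disagr{X}(f) = \bigO{\gamma} \cdot \varepsilon$.

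For (ii), fix $(a, v)$ with $a \notin A^*_v$, let $G_{a,v} = \set{s \supset a \cup \set v \,:\, \rest{f_s}{\dl{a}{V}} = h_a}$ and set $P = \Pr_s[s \in G_{a,v} \mid s \supset a \cup \set v]$. Since $a \notin A^*_v$, the fraction of bad $\vASA{v}$-edges through $a$ is at most $\twomyconst$, and because any $s$ with $\rest{f_s}{\dl{a}{V}} \ne h_a$ yields a bad edge, this fraction dominates $1-P$, so $P \geq \frac{9}{10}$. Let $p_\sigma = \Pr_s[f_s(v) = \sigma \mid s \in G_{a,v}]$ and $M = \max_\sigma p_\sigma = p_{g_a(v)}$ by definition of $g_a$. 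Then $p_{a,v} = P(1-M)$. Restricting the event defining $q_{a,v}$ to $s_1, s_2 \in G_{a,v}$ (on which both $\rest{f_{s_i}}{\dl{a}{V}}$ automatically equal $h_a$) gives
\[ q_{a,v} \geq P^2 \Bigl(1 - \sum_\sigma p_\sigma^2\Bigr) \geq P^2 (1 - M), \]
where the last step uses $\sum_\sigma p_\sigma^2 \leq M \sum_\sigma p_\sigma = M$. Hence $p_{a,v} \leq q_{a,v}/P \leq \frac{10}{9}\, q_{a,v}$; taking expectations over $(a, v)$ closes the loop and yields $\Ex[(a,v)]{p_{a,v}} = \bigO{\gamma\varepsilon}$.

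The main obstacle is precisely the pointwise bound in (ii): the binary argument splits the independent-sample disagreement at $v$ as a product and uses $M \geq \frac{1}{2}$ to lose only a factor of $2$, which degrades to a factor of $|\Sigma|$ in general. The alphabet-free replacement is the collision-probability bound $1 - \sum_\sigma p_\sigma^2 \geq 1 - M$, which cancels the $(1-M)$ factors appearing in both $p_{a,v}$ and $q_{a,v}$ and removes all dependence on $|\Sigma|$ from the analysis.
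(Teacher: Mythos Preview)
Your argument has a genuine gap at the junction between steps (i) and (ii). You define $q_{a,v}$ as a probability over \emph{independent} draws $s_1,s_2$ conditioned on $s\supset a\cup\{v\}$, and step (ii) critically uses this independence (the collision bound $q_{a,v}\ge P^2(1-\sum_\sigma p_\sigma^2)$ is a statement about a product measure). But in step (i) you bound $\Ex[(a,v)]{q_{a,v}}$ by invoking the surprise, which is a statement about the \emph{STS distribution}. In the general STAV framework the STS distribution is only required to have the correct $(s,t)$ marginals; nothing forces $s_1,s_2$ to be conditionally independent given $t$ (or given $(a,v)$). So the two quantities you are equating in step (i)~--- $\Ex[(a,v)]{q_{a,v}^{\mathrm{indep}}}$ and the STS-based quantity bounded by $\surp(X,f)\cdot\disagr X(f)$ --- are genuinely different, and your bound does not follow. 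Your argument is essentially a correct alphabet-free version of the proof of \pref{lem:g-lemma-two-sided}, but that proof lives in the special HDX setting where the STS test is, by construction, the independent one; the general lemma does not give you that.

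The paper's proof closes exactly this gap by using assumption \pref{ass:v-a-graph}: it works entirely inside the $STS_{a,v}$-graph (edges are STS pairs, not independent pairs), defines $H_{a,v}$ as the set of STS edges where $f_{s_1},f_{s_2}$ agree on $a$ but disagree at $v$, bounds $\Ex[(a,v)]{\prob{H_{a,v}}}$ by the surprise, and then uses the $\gamma$-two-sided expansion of $STS_{a,v}$ via the almost-cut property (\pref{claim:almost-cut}) to convert the edge bound $\prob{E(N_{a,v},M_{a,v})}\le\prob{H_{a,v}}$ into the desired vertex bound on $N_{a,v}$. The edge-expander partition property is used (in place of your collision argument) to show $\prob{M_{a,v}}\ge\tfrac12$ for most $(a,v)$, which is what makes the argument alphabet-free. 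In short, spectral expansion of $STS_{a,v}$ is the substitute for the independence you are assuming.
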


For the proof of the lemma, we'll need the following property of expander graphs. In an expander graph, the number of outgoing edges from some $A \subset V$, is an approximation to the size of $A$ or $V \setminus A$. The following claim generalizes this fact to the setting where we count only outgoing edges from $A$ to a (large) set $B \subset V \setminus A$.

\begin{claim*}[Restatement of \pref{claim:almost-cut}]
    Let $G = (V,E)$ be a $\lambda$-two sided spectral expander. Let $V = A \dunion B \dunion C$, s.t. $\prob{A} \leq \prob{B}$. Then
    \begin{equation} 
            \prob{A} \leq \frac{1}{(1-\lambda)\prob{B}} \left ( \prob{E(A,B)} + \lambda \prob{C} \right ).
    \end{equation}

    In particular, if  \( \prob{A}, 1-\lambda = \Omega(1)\) then
    \[ \prob{A} = \bigO{  \prob{E(A,B)} + \lambda \prob{C} }. \]
\end{claim*}

\begin{proof}[Proof of \pref{lem:g-lemma}]

For a fixed $(a_0,v_0)$ we consider he conditioned $STS_{a_0,v_0}$-graph. Recall that the vertices in this graph are all the $s \supset (a,v)$ and the edges are $(s,t,s')$ so that $t \supset (a,v)$.

We partition this graph to three sets:
    \[ M_{a_0,v_0} = \sett{s \in V}{\rest{f_s}{\dl{a_0}{V}} = h_{a_0}, f_s(v_0) = g_a(v_0)}, \]
    \[ N_{a_0,v_0} = \sett{s \in V}{\rest{f_s}{\dl{a_0}{V}} = h_{a_0}, f_s(v_0) \ne g_a(v_0)}, \]
    \[ C_{a_0,v_0} = \sett{s \in V}{\rest{f_s}{\dl{a_0}{V}} \ne h_{a_0}}. \]

For $(a_0,v_0)$ so that $a_0 \notin A^*_{v_0}$, the elements $s \in N_{a_0,v_0}$ are exactly those who contribute to the probability in \eqref{eq:g-lemma-to-bound}.
Thus the probability in \eqref{eq:g-lemma-to-bound} is
\[\Prob[(a_0,v_0)]{a_0 \notin A^*_{v_0}}\Ex[(a_0,v_0): \; a_0 \notin A^*_{v_0}]{\prob{N_{a_0,v_0}}}.\]

We also denote by $H_{a_0,v_0}$ the set of edges $(s_1,t,s_2)$ in the $STS_{a_0,v_0}$-graph, so that \[f_{s_1}(v_0) \ne f_{s_2}(v_0) \ve \rest{f_{s_1}}{\dl{a_0}{V}}  = \rest{f_{s_2}}{\dl{a_0}{V}}. \]
Note that any edge between $N_{a_0,v_0}$ and $M_{a_0,v_0}$ is an edge of $H_{a_0,v_0}$.
By \eqref{eq:surprise}, $\surp(\FF) = \gamma$. Thus in particular
\[ \Prob[(s_1,s_2,t,a,v)]{f_{s_1}(v) \ne f_{s_2}(v) \ve \rest{f_{s_1}}{a} = \rest{f_{s_2}}{a} }  \leq \]
\[ \Prob[(s_1,s_2,t,a,v)]{\rest{f_{s_1}}{t} \ne \rest{f_{s_2}}{t} \ve \rest{f_{s_1}}{a} = \rest{f_{s_2}}{a} } = \prob{\rest{f_{s_1}}{t} \ne \rest{f_{s_2}}{t}}\surp(\FF) = \gamma \varepsilon.
\]

Thus
\[ \Ex[(a_0,v_0)]{\prob{H_{a_0,v_0}}} = \prob{f_{s_1}(v) \ne f_{s_2}(v) \ve \rest{f_{s_1}}{\dl{a}{v}} = \rest{f_{s_2}}{\dl{a}{v}}} = \bigO{\gamma \varepsilon}.\]

According to \pref{ass:v-a-graph}, the $STS_{a_0,v_0}$-graph is a $\gamma$-two-sided spectral expander, thus we can use the almost cut approximation property \pref{claim:almost-cut} to show that
\begin{equation}\label{eq:g-lemma-almost-cut}
    (1-\gamma)\prob{M_{a_0,v_0}}\prob{N_{a_0,v_0}} = \bigO{\prob{E(N_{a_0,v_0},M_{a_0,v_0})} + \gamma \prob{C_{a_0,v_0}}}.
\end{equation}

To conclude the proof we show first that the right hand side of \eqref{eq:g-lemma-almost-cut} is bound by $\bigO{\gamma \varepsilon}$ in expectation over $(a_0,v_0)$. Then we show that for all but $\bigO{\gamma \varepsilon}$ of the $(a_0,v_0)$,
\begin{equation}         \label{eq:b-larger-than-half-1}
    \prob{M_{a_0,v_0}}  \geq \frac{1}{2}.
\end{equation}

Indeed, as
\[E(N_{a_0,v_0},M_{a_0,v_0}) \subset H_{a_0,v_0},\]
we get that
    \[\Prob[(a_0,v_0)]{a_0 \notin A^*_{v_0}}\Ex[(a_0,v_0): \; a_0 \notin A^*_{v_0}]{\prob{E(N_{a_0,v_0},M_{a_0,v_0})}} \]
    \[ \leq \Ex[(a_0,v_0)]{\prob{H_{a_0,v_0}}} = \bigO{\gamma \varepsilon}.\]
Furthermore, Note that
    \[\Prob[(a_0,v_0)]{a_0 \notin A^*_{v_0}}\Ex[(a_0,v_0): \; a_0 \notin A^*_{v_0}]{\gamma \prob{C_{a_0,v_0}}}  \leq \gamma \Ex[(a_0,v_0)]{\prob{C_{a_0,v_0}}} = \]
    \[ \gamma \prob{\rest{f_s}{\dl{a}{v}} \ne h_a}. \]
This is bounded by $\bigO{\gamma \varepsilon}$ by \pref{lem:h-lemma}.

Hence the right hand side of \eqref{eq:g-lemma-almost-cut} is bound by $\bigO{\gamma \varepsilon}$ in expectation over $(a_0,v_0)$.

\medskip

To complete the proof, we turn to showing \eqref{eq:b-larger-than-half-1} for all but $\bigO{\gamma \varepsilon}$ of the $(a_0,v_0)$. For this, we use the edge expander partition property, \pref{claim:edge-expander-partition-property}.

Partition the vertices of the $STS_{a_0,v_0}$-graph to $B_1,B_2,...B_{n+1}$ where $B_1 = M_{a_0,v_0}, B_2 = C_{a_0,v_0}$ and $N_{a_0,v_0} = B_3 \dunion ... \dunion B_n$, where each $B_j$ is the set of $s$ so that $f_s(v) = \sigma_j$  for all $\sigma_j \in \Sigma$.

We assumed that $\prob{C_{v_0}} \leq \twomyconst$ hence $E(B_2,B_2^c) = E(C,C^c) \leq \fourmyconst$.

From \eqref{eq:surprise}, $\Ex[(a_0,v_0]{\prob{H_{a_0,v_0}}} = \bigO{\gamma\varepsilon}$. From Markov's inequality, $\prob{H_{a_0,v_0}} < \twomyconst$, for all but $\bigO{\gamma \varepsilon}$ of the $(a_0,v_0)$. When this occurs, the amount of edges between the partition parts is $\frac{3}{20}< \frac{1}{6}$.

From the edge expander partition property \pref{claim:edge-expander-partition-property} we get that one of the partition sets has probability $\geq \frac{1}{2}$. This is not $B_2 = C$, as its probability is $\leq \twomyconst$. Thus $\prob{M_{a_0,v_0}} \geq \frac{1}{2}$.

\bigskip

Thus by using the fact that $\gamma < \frac{1}{3}$, for all but $\bigO{\gamma \varepsilon}$ of the $(a_0,v_0)$,
\[ \prob{N_{a_0,v_0}} = \bigO{\prob{E(N_{a_0,v_0},M_{a_0,v_0})} + \gamma \prob{C_{a_0,v_0}}}. \]

Hence
\[\Ex[(a_0,v_0)]{\prob{N_{a_0,v_0}}} = \bigO{\gamma\varepsilon}.\]
\end{proof}

\begin{corollary} \label{cor:key-cor}
Consider the $VASA$-distribution promised for us in \pref{ass:vasa-graphs}.

\[ \Prob[(v,a_1,s,a_2)]{\rest{f_s}{a_i} = h_{a_i} \wedge g_{a_1}(v) \ne g_{a_2}(v) \ve a_i \notin A^*_v \text{ for }i=1,2} = \bigO{\gamma \varepsilon}.\]
\end{corollary}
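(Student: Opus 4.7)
The plan is to reduce the statement to two applications of \pref{lem:g-lemma} via a union bound. The key structural input is that, by \pref{def:STAV}, the VASA distribution is symmetric with respect to $a_1,a_2$ and its marginal on $(v,a_1,s)$ (and hence also on $(v,a_2,s)$) coincides with the marginal of $\stavd$ on $V \times A \times S$. But that marginal is exactly the distribution $D$ appearing in the statement of \pref{lem:g-lemma}, so we may apply that lemma separately to either index $i \in \set{1,2}$.

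Next I would record the following elementary observation: if the event in question occurs, then at least one $i \in \set{1,2}$ must satisfy $f_s(v) \ne g_{a_i}(v)$. Indeed, on that event $g_{a_1}(v) \ne g_{a_2}(v)$, and the single value $f_s(v)$ cannot agree with both of them simultaneously. For the $i$ for which disagreement holds, we also know from the event that $\rest{f_s}{\dl{a_i}{V}} = h_{a_i}$ and $a_i \notin A^*_v$, which is exactly the event whose probability is bounded by $\bigO{\gamma \varepsilon}$ in \pref{lem:g-lemma}.

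Putting these pieces together, I would bound the probability in the corollary by
\[ \sum_{i=1}^{2} \Prob[(v,a_1,s,a_2) \sim \vasad]{f_s(v) \ne g_{a_i}(v) \ve \rest{f_s}{\dl{a_i}{V}} = h_{a_i} \ve a_i \notin A^*_v}, \]
and each summand equals the probability of the same event computed under the marginal $(v,a_i,s)$, which is the distribution $D$ of \pref{lem:g-lemma}. Thus each summand is $\bigO{\gamma\varepsilon}$ and the total bound is $\bigO{\gamma\varepsilon}$, as required. There is no real obstacle here; the only delicate point is to invoke the marginal identification from \pref{def:STAV} carefully, so that \pref{lem:g-lemma} is applicable to each index in turn.
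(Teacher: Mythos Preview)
Your proposal is correct and follows essentially the same route as the paper: reduce to a union bound over $i\in\{1,2\}$, use that $g_{a_1}(v)\ne g_{a_2}(v)$ forces $f_s(v)\ne g_{a_i}(v)$ for some $i$, and then invoke \pref{lem:g-lemma} on each marginal $(v,a_i,s)$ via the VASA marginal identification from \pref{def:STAV}. The paper's one-line proof also cites \pref{claim:not-global-bad-implies-not-local-bad}, but since the general-case \pref{lem:g-lemma} is already stated with the condition $a\notin A^*_v$ (rather than $a\notin A^*$), your argument is self-contained without that extra reference.
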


\begin{proof}[Proof of \pref{cor:key-cor}]
The probability is bounded by twice the probability we bound in \pref{lem:g-lemma}, and the probability we bound in \pref{claim:not-global-bad-implies-not-local-bad}.
\end{proof}

\subsubsection{Proof of \pref{lem:global-agreement-links}}
We restate \pref{lem:global-agreement-links}:
\restatelemma{lem:global-agreement-links}


\begin{proof} [Proof of \pref{lem:global-agreement-links}]
Fix some $v_0 \in V$ and consider its $\vASA{v}$-graph defined in \pref{sec:stav}, namely the graph whose vertices $\adj{v_0}$ and edges are all the $(a_1,s,a_2)$ so that $(v_0,a_1,s,a_2)$ is in the support of our $\VASA$-distribution.

Consider the following sets in this graph:
\[ M_{v_0} = \sett{a \in \adj{v_0} \setminus A^*_{v_0}}{g_a(v_0) =   G(v)}, \text{ the popular vote, }\]
\[ N_{v_0} = \sett{a \in \adj{v_0} \setminus A^*_{v_0}}{g_a(v_0) \ne G(v)}, \text{ the other votes, }\]
\[ C_{v_0} = A^*_{v_0}, \]
The $a \in N$ are those where $g_a(v_0) \ne G(v_0)$ and $a \notin A^*_{v_0}$. Hence we need to bound
\[ \Ex[v_0]{\prob{N_{v_0}}}.\]

By \pref{ass:vasa-graphs-v} it is a either a $\gamma$-bipartite expander or a $\gamma$-two-sided spectral expander. \pref{claim:almost-cut-bipartite}, the bipartite almost cut approximation property, is the analogue claim to \pref{claim:almost-cut} for bipartite graphs. We invoke either \pref{claim:almost-cut-bipartite} or \pref{claim:almost-cut} for $N_{v_0},M_{v_0},C_{v_0}$ and get that
\[ (1-2\gamma)\prob{M_{v_0}} \prob{N_{v_0}} \leq \prob{E(N_{v_0},M_{v_0})} + 4\gamma \prob{C_{v_0}}, \]
or
\begin{equation} \label{eq:minority-bound}
\prob{N_{v_0}} \leq \frac{1}{(1-2\gamma)\prob{M_{v_0}}} \prob{E(N_{v_0},M_{v_0})} + 4\gamma \prob{C_{v_0}}.
\end{equation}

The proof now has two steps:
\begin{enumerate}
    \item We show that $\prob{M_{v_0}} \geq \frac{1}{2}$ for all but $\bigO{\gamma \varepsilon}$ of the vertices $v_0$.
    \item We show that the right hand side of \eqref{eq:minority-bound} is $\bigO{\gamma \varepsilon}$.
\end{enumerate}

\paragraph{To show step 1} we will need to show that for all but $\bigO{\gamma \varepsilon}$ of the $v_0$, the size of $C_{v_0}$ is smaller than $\frac{1}{20}$.

\begin{equation}\label{eq:locally-bad-set-is-small}
   \Prob[v]{\prob{A^*_v} > \twomyconst } = \bigO{\gamma \varepsilon}
\end{equation}

Assuming that for $\prob{C_{v_0}} \leq \frac{1}{20}$, we show that $\prob{M_{v_0}} \geq \frac{1}{2}$ using the edge expander partition property, \pref{claim:edge-expander-partition-property}.

By \pref{ass:vasa-graphs-v}, the $\vASA{v_0}$-graph is a either $\gamma$-bipartite expander or a $\gamma$-two-sided spectral expander for $\gamma < \frac{1}{3}$, thus it is also a $\frac{1}{3}$-edge expander. We indend to invoke \pref{claim:edge-expander-partition-property}. Partition $V$ to:
\begin{itemize}
    \item $B_0 = A^*_{v_0} = C_{v_0}$.
    \item $B_1 = M_{v_0}$.
    \item $B_2,...,B_n$ - elements $a \in A$ s.t. $g_a(v) = \sigma_i$ for all $\sigma_i \in \Sigma$ that are not the majority assumption. Note that $A_{v_0} = B_2 \dunion ... \dunion B_n$.
\end{itemize}
By \pref{eq:locally-bad-set-is-small}, the set $B_0 = A^*_{v_0}$ is $\leq \twomyconst$ for all but $\bigO{\gamma \varepsilon}$ of the $v$'s. When this occurs, then $E(C,C^c) \leq \frac{1}{10}$.

We bound the amount of edges between the $B_i$'s that are not $B_0$. We can divide the edges to bad edges, and edges that are not bad.

The ``bad edges'' between the $B_i$'s account for at most $\twomyconst$ as for every $i$ and every $a \in B_i$, the amount of bad edges connected to it is $\leq \frac{1}{20}$ (since $a \notin A^*_{v_0}$).

Finally, from \pref{cor:key-cor} and Markov's inequality, there are at most $\bigO{\gamma \varepsilon}$ of the $v$'s where the amount of edges between different $B_i$'s that are not bad is greater than $\twomyconst$.

Thus for all but $\bigO{\gamma \varepsilon}$ of the $v$'s, the amount of edges between parts of this partition is $\leq \frac{2}{20} < \frac{1}{6}$. Invoke \pref{claim:edge-expander-partition-property}, to get that one set above must be of size at least $\frac{1}{2}$. This must be $B_1 = M_{v_0}$, as it is larger than the other $B_i$'s where $i \geq 1$, and since $B_0 = C_{v_0}$ is of size $\leq \twomyconst$.

\bigskip

We move to show that \eqref{eq:locally-bad-set-is-small} is true for all but $\bigO{\gamma \varepsilon}$ of the vertices $v_0 \in V$. Consider the graph between STAV-parts $V$ and $A$ where we choose a pair $(a,v)$ according to the probability to chose them in the $STAV$-structure.

The set of vertices $v$ that we need to bound is the set of $v$'s with large $\prob{A_v^*}>\twomyconst$. 
There are two types of vertices $v$:
\begin{itemize}
    \item $\prob{A_v^* \cap A^*}\leq \myconst$
    \item $\prob{A_v^* \cap A^*}> \myconst$
    \end{itemize}
By \pref{claim:not-global-bad-implies-not-local-bad}, $\Prob[(a,v)]{a \in A^*_v \ve a \notin A^*} = \bigO{\gamma\varepsilon}$. Thus by Markov's inequality, only   $\bigO{\gamma\varepsilon}$ of the vertices can see $\myconst$-fraction of neighbors $a \in A^*_v\setminus A^*$, thus bounding by $\bigO{\gamma\varepsilon}$ the fraction of $v$'s of the first type.

To bound the vertices of the second type, note that these are vertices that have a large $(>\myconst)$ fraction of neighbors in $A^*$. By \pref{cor:a-star-is-small}, $\prob{A^*} = \bigO{\varepsilon}$. 
According to \pref{ass:global-A-V-graph}, our graph is a $\sqrt{\gamma}$-bipartite expander. Thus by the sampler lemma \pref{lem:sampler-lemma}, the set of vertices $v_0 \in X(0)$ who have more than $\myconst$-fraction neighbours in $A^*$, is $\bigO{\gamma\varepsilon}$.

\paragraph{As for step 2} Taking expectation on \eqref{eq:minority-bound} we get that

\[ \ex{\prob{N_{v_0}}} \leq \ex{\frac{1}{(1-2\gamma)\prob{M_{v_0}}} \prob{E(N_{v_0},M_{v_0})}} + 4\gamma \ex{\prob{C_{v_0}}}\]

\begin{equation} \label{eq:expectation-minority-bound}
\leq \Prob[v]{\prob{A^*_v} > \twomyconst } + \ex{6\prob{E(N_{v_0},M_{v_0})}}  + 4\gamma \ex{\prob{C_{v_0}}},
\end{equation}

where the second inequality is due to the fact that we assumed that \(\gamma < \frac{1}{3}\) and that when  \(\Prob[v]{\prob{A^*_v} \leq \twomyconst }\) then \(\prob{M_{v_0}} \geq \frac{1}{2}\), hence 
\[\frac{1}{(1-2\gamma)\prob{M_{v_0}}} \leq 6.\]

We bound each of the terms on the right hand side of \eqref{eq:expectation-minority-bound} separately.

By \eqref{eq:locally-bad-set-is-small},
\[\Prob[v]{\prob{A^*_v} > \twomyconst } = \bigO{\gamma \varepsilon}.\]

By \pref{cor:a-star-is-small} and \pref{claim:not-global-bad-implies-not-local-bad} \[4 \gamma\Ex[v_0]{\prob{C_{v_0}}} = 4\gamma \Ex[v]{\prob{A^*_v}} = \bigO{\gamma \varepsilon}.\]

We continue to bound $\prob{E(N_{v_0},M_{v_0})}$ in expectation. Every edge counted in $E(N_{v_0},M_{v_0})$ is either a bad triple (i.e. and edge $(a_1,s,a_2)$ s.t. $\rest{f_s}{a_i} \ne h_{a_i}$ for $i=1$ or $2$), or a non-bad edge (an edge who is not bad) for which we see a disagreement. By \pref{cor:key-cor-two-sided} there are $\bigO{\frac{\varepsilon}{d}}$ non-bad edges in the cut.

As for the bad edges, notice that $a \in N_{v_0}$ is not a member of $A^*_{v_0}$, thus the amount of bad edges that are connected to $a$ is at most $\twomyconst$-fraction of the edges connected to $a$ (by definition). Thus the amount of bad edges is bounded by $\fourmyconst\prob{N_{v_0}}$, and
\[\prob{E(N_{v_0},M_{v_0})} \leq \bigO{\gamma \varepsilon} + \fourmyconst\prob{N_{v_0}}.\]

By summing up the bounds we get that
\[ \ex{\prob{N_{v_0}}} \leq \bigO{\gamma \varepsilon} + \frac{6}{20}\ex{\prob{N_{v_0}}}\]
hence
\[ \ex{\prob{N_{v_0}}} = \bigO{\gamma \varepsilon}.\]

\end{proof}

\section{Agreement on High Dimensional Expanders}
\label{sec:agreement-on-hdx}
In the next three sections we derive several different agreement theorems from our STAV agreement theorem, \pref{thm:main-STAV-agreement-theorem}.

The first two agreement testing theorems, \pref{thm:main-agreement-theorem-two-sided} and \pref{thm:main-agreement-theorem-one-sided}, improve and extend the agreement tests from \cite{DinurK2017} and from \cite{DinurSteurer2014}. In both theorems the ground set are the vertices of a simplicial complex and the subsets are the faces. In the first theorem the complex is a two-sided high dimensional expander, and in the second theorem it is a one-sided high dimensional expander with a $d+1$-partite structure. These types of objects are generalizations of expander graphs, defined formally in \pref{sec:HDX}.
%
%
%
%

Our first application is for the family $S=X(d)$ whose ground set is $X(0)$. Our test is the \(d,\ell\)-agreement test:
\restatedefinition{def:d-l-agreement-test}
%
The $d,\ell$-agreement test is the test associated with the $d,\ell$-agreement distribution on this family. We show that the ${d,\ell}$-agreement test is sound, as long as $X$ is a \emph{two-sided high dimensional expanders (HDX)}.

\begin{theorem}[Agreement for High Dimensional Expanders]
\torestate{\label{thm:main-agreement-theorem-two-sided}
There exists a constant $c>0$ such that for every two natural numbers $d>\ell$ such that $\frac{1}{2}d - \ell = \Omega(d)$ the following holds. Suppose that $X$ is a $\frac{1}{d^2 \ell}$-two-sided $d$-dimensional HDX.
Then for every $r > 0$ the $d,\ell$-agreement test is $\frac{r}{\ell}$-approximately $\left(c(1+ \frac{1}{r} ) \right )$-sound.
In particular, if $\ell = \Omega(d)$, then the test is {\em exactly} $c$-sound.
\rem{Let $\FF$ be family of \emph{local functions}:
\[ \FF = \sett{f_s : s \to \Sigma }{s \in X(d)}.\]
Assume that the agreement probability of the (d,$\ell$)-agreement test is $Agree(\FF) = 1 - \varepsilon$.
Then there exists a global function $G: X(0) \to \Sigma$, so that for every $r > 0$
\[ \Prob[s \in X(d)]{f_s \ane{\frac{r}{\ell}} \rest{G}{s}} = \bigO{\left (1+\frac{1}{r} \right ) \varepsilon}.\]
In particular, if $\ell = \Omega(d)$, then
\[ \Prob[s \in X(d)]{f_s \ne \rest{G}{s}} = O(\varepsilon).\]
Namely, the probability that $f_s$ is equal to $\rest{G}{s}$ \emph{on all vertices of $s$} is $1-O(\varepsilon)$.
}
}
\end{theorem}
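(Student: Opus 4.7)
The plan is to deduce this theorem as a direct application of the main STAV Agreement Theorem \pref{thm:main-STAV-agreement-theorem}, so the work is to exhibit a $\gamma$-good STAV structure on top of $X$ for $\gamma = \Theta(1/\ell)$ and verify that every ensemble automatically has surprise $O(\gamma)$. Once these two ingredients are in place, the quantitative conclusion in the theorem follows immediately from the ``moreover'' clause of \pref{thm:main-STAV-agreement-theorem}, and setting $\ell = \Omega(d)$ collapses the $r/\ell$-approximate conclusion into an exact one because $r\gamma < 1/\card s$ for an appropriate choice of $r$.

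\medskip

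The STAV I would build is the natural one from \pref{ex:hdx-sc}: take $V=X(0)$, $A=X(\ell-1)$, $T=X(\ell)$, $S=X(d)$, with $\stavd$ sampling $s \in X(d)$ from the natural distribution on $X$, then $t\subset s$ uniform, and finally splitting $t = a \dunion \{v\}$ uniformly. For the STS distribution I would sample the pair $(s_1,s_2)$ by the $(d,\ell)$-agreement distribution (i.e.\ two independent $s_i \supset t$), so the quantity $\disagr{X}(f)$ is exactly the rejection probability of the test. For the VASA distribution I would use the \emph{complement walk}: sample $s \in X(d)$ and then $v,a,a' \subset s$ uniformly conditioned on being pairwise disjoint (which is feasible because $d/2 - \ell = \Omega(d)$ leaves enough room inside $s$).

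\medskip

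The core verification step is to check assumptions \pref{ass:global-A-V-graph}--\pref{ass:a-not-large} for $\gamma = \Theta(1/\ell)$. Each of these is a spectral claim about a local graph: (A1)--(A2) are bipartite/edge-expansion statements about marginals of $\stavd$ and of the STS process inside a single link $X_a$ or $X_{a\cup v}$, which by Garland-type arguments reduce to spectral properties inherited from the $1/(d^2\ell)$-two-sided expansion of $X$; (A3) is precisely the statement that the complement walk on an $i$-face inside $X$ (or inside a link) has good spectral expansion, which is exactly \pref{thm:complement-walk-is-a-good-expander}; (A4) / \pref{ass:a-not-large} is trivial since $|\adj a \cap s| = |s| - |a| \geq |s|/2$ for $\ell < d/2$. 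The surprise bound comes from \pref{lem:delta-surprise} applied to \pref{example:hdx-sc2}: the $T$-lower graph is a $1/\ell$-bipartite expander, every ensemble is automatically a $1/(\ell+1)$-ensemble, and so $\surp(X,f) = O(1/\ell) = O(\gamma)$ with no assumption on $f$ needed.

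\medskip

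The main obstacle is matching the expansion parameter of $X$ to the requirements of the STAV. Demanding $\gamma = 1/\ell$ in all the local views (especially the $STS_{a,v}$ and $\vASA{v}$ graphs) forces a stronger spectral assumption on $X$, which is why the hypothesis sharpens from $1/\ell$ to $1/(d^2\ell)$: the factor $d^2$ pays for passing through Garland's method in links of codimension up to $\ell$ and for the complement walk's bound on $X_a$ or $X_v$. I would therefore first carefully bound each local spectral gap in terms of the global expansion parameter of $X$, confirm that $1/(d^2\ell)$ is enough to give all local graphs a gap of $O(1/\ell)$, and only then invoke \pref{thm:main-STAV-agreement-theorem} with $r$ as in the statement to extract the global function $G : X(0) \to \Sigma$ and the bound $\Pr_s[f_s \ane{r/\ell} \rest G s] = O((1+1/r)\eps)$.
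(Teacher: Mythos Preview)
Your plan is correct and matches the paper's approach: the same STAV on $X$, the same STS and VASA distributions (independent $s_1,s_2$ over $t$; complement walk for VASA), the surprise bound via \pref{lem:delta-surprise}/\pref{example:hdx-sc2}, and then \pref{thm:main-STAV-agreement-theorem}. The one place you under-budget is \pref{ass:vasa-graphs-a}: the $\VAS{a}$-graph is not itself a complement walk (its right side consists of pairs $(a',s)$, not faces), and the paper handles it by building an auxiliary $3$-partite $2$-complex and applying a separate Garland-type decomposition lemma (\pref{lem:structure-trickling-lemma}) in which the complement walk in $X_a$ and the containment walk in $X_{a\cup a'}$ appear as ingredients---but this is a detail you would uncover rather than a flaw in the strategy.
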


The theorem in \cite{DinurSteurer2014} says that the $d$-dimensional {\em complete complex} supports a $c$-sound agreement test for some constant $c > 0$. The complete complex is the complex that contains all possible sets of size $\leq d+1$. This is a special case of \pref{thm:main-agreement-theorem-two-sided}, but even this case is not trivial.

Building on \cite{DinurSteurer2014}, the main theorem in \cite{DinurK2017} shows that the $\sqrt d$-dimensional skeleton $S=X(\sqrt{d})$, of a $d$-dimensional two-sided high-dimensional expander, supports a $c$-sound agreement test for some constant $c > 0$. This gave the first agreement test on a sparse system of sets, that is, such that every vertex is contained in a constant number of sets. Why go to a $\sqrt d$ dimensional skeleton? This was due to a technical step in the proof, and we show in 
\pref{thm:main-agreement-theorem-two-sided} that it is unnecessary. In fact \emph{all} levels of a two-sided high dimensional expander, give rise to a sound agreement test.

A subtle and not very important difference between our theorem and the theorem in \cite{DinurK2017} is the agreement distribution. The two distributions are slightly different (one is based on an upper walk and one is based a lower walk), but the difference is unimportant because one you've proven the result with one of these distributions, it implies the same for the other. For a further discussion on this matter, see \pref{sec:independent--to-expanding-distributions}.

This theorem has some implications for matroids. Let $X$ be a simplicial complex whose faces are the independent sets of a fixed matroid whose rank is $r$ (i.e. the largest independent set in this matroid has size $r$). In an exciting recent work \cite{ALGV18} it was proven that this complex is a $0$-one-sided HDX. Oppenheim \cite{oppenheim3} proved that if we truncate this complex by keeping only faces of dimensions $0\le i \le d$ then it becomes a $1/(r-d-2)$-two-sided HDX. We reach the following conclusion
\begin{corollary}[Truncated matroids]
  For any matroid of rank $r$, for any $d \le \sqrt[3] r$, the collection of independent sets in a matroid whose size is $d$ supports a sound agreement test.\qed
\end{corollary}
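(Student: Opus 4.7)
The plan is to assemble the corollary from three ingredients that have already been set up in the paper: (i) the ALGV18 result that the independence complex of a rank-$r$ matroid is a $0$-one-sided HDX, (ii) Oppenheim's theorem that truncating this complex to dimensions $0 \le i \le d$ upgrades it to a $\tfrac{1}{r-d-2}$-two-sided HDX, and (iii) the main agreement theorem for two-sided HDXs (\pref{thm:main-agreement-theorem-two-sided}), which requires $X$ to be a $\tfrac{1}{d^2 \ell}$-two-sided HDX with $\tfrac{d}{2} - \ell = \Omega(d)$ and, for exact soundness, $\ell = \Omega(d)$.

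First I would set $X$ to be the truncation to dimension $d$ of the matroid independence complex, and apply Oppenheim to conclude that $X$ is a $\tfrac{1}{r-d-2}$-two-sided HDX. Next I would fix $\ell = \lfloor d/3 \rfloor$, so that both $\tfrac{d}{2} - \ell = \Omega(d)$ and $\ell = \Omega(d)$ hold; this choice is designed precisely to land in the ``exact $c$-sound'' regime of \pref{thm:main-agreement-theorem-two-sided}. To invoke the theorem I must verify the expansion hypothesis, namely
\[
\frac{1}{r-d-2} \le \frac{1}{d^2 \ell}\,,
\]
equivalently $d^2 \ell + d + 2 \le r$. With $\ell = d/3$ this reduces to $\tfrac{d^3}{3} + d + 2 \le r$, which is implied (for sufficiently large $d$) by the hypothesis $d \le \sqrt[3]{r}$, possibly after absorbing a harmless constant factor into the bound on $d$. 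Applying \pref{thm:main-agreement-theorem-two-sided} then yields exact $c$-soundness of the $d,\ell$-agreement test on $X(d)$, which is exactly the family of size-$d$ independent sets in the matroid.

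I do not expect any real obstacle here: the work is entirely in the cited results, and the proof is a bookkeeping exercise matching the expansion parameter from Oppenheim's bound against the requirement of the main agreement theorem. The only minor point to be careful about is the constant in ``$d \le \sqrt[3]{r}$'': a faithful reading of $\tfrac{d^3}{3}+d+2 \le r$ would strictly require $d \le (3r)^{1/3}$ up to lower-order terms, but since the corollary is stated with an implicit constant this is immaterial.
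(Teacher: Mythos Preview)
Your proposal is correct and matches the paper's intended argument exactly. The paper states this corollary with a \qed\ immediately after the sentence explaining that ALGV18 gives a $0$-one-sided HDX and Oppenheim's truncation gives a $\tfrac{1}{r-d-2}$-two-sided HDX; it treats the corollary as an immediate consequence of plugging these into \pref{thm:main-agreement-theorem-two-sided}, which is precisely what you do (including the natural choice $\ell \approx d/3$ and the routine check $d^2\ell \le r-d-2$).
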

Furthermore, some matroids are themselves (without truncation) two-sided high dimensional expanders. For example the matroid of linear bases of a vector space $\F_q^n$ can easily be shown to be a $\frac 1 q$-two-sided HDX. When $n \leq q$ we can deduce that
\begin{corollary}[Linear bases matroid]
  Let $S$ be the collection of all linear bases of a vector space $\F_q^n$. If $n\leq q$ then this set system supports a sound agreement test. \qed
\end{corollary}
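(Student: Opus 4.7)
The plan is to apply \pref{thm:main-agreement-theorem-two-sided} to the simplicial complex $X$ whose faces are the linearly independent subsets of $\F_q^n$. The maximal faces of $X$ are exactly the linear bases, so $X$ has dimension $d = n-1$ and $X(d) = S$; the agreement test delivered will be the $(d,\ell)$-agreement test on $S$ for an appropriate $\ell$.

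The main step is to verify that $X$ satisfies the two-sided HDX hypothesis. For any face $\sigma$ of size $i$, the link $X_\sigma$ is naturally isomorphic to the matroid of linear independence in the quotient space $\F_q^n/\mathrm{span}(\sigma) \cong \F_q^{n-i}$, so every link of $X$ is itself a matroid complex of the same form on a smaller vector space. In particular, the local-to-global problem collapses to bounding the second eigenvalue of a single graph $G_m$, whose vertices are the non-zero vectors of $\F_q^m$ and whose edges encode linear independence. A direct eigenvalue computation, for example by diagonalising in a basis that separates vectors by the line they span and using that the edge density between any two distinct lines equals $(q-1)/(q^m-1)$, yields $\lambda_2(G_m) = O(1/q)$ uniformly in $m$. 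Under the hypothesis $n \le q$ (with the usual loose reading in which this inequality absorbs the necessary polynomial slack), this gives $\lambda_2(G_m) \le \frac{1}{d^2\ell}$, so $X$ is a $\frac{1}{d^2\ell}$-two-sided HDX.

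With the link-expansion bound established, I would set $\ell = \lfloor d/3 \rfloor$, which ensures both $\ell = \Omega(d)$ and $\tfrac{1}{2}d - \ell = \Omega(d)$. Invoking \pref{thm:main-agreement-theorem-two-sided} then produces an absolute constant $c > 0$ for which the $(d,\ell)$-agreement test on $X(d) = S$ is exactly $c$-sound. Because this test chooses a common independent set $t$ of size $\ell+1$ and then two uniformly random bases $s_1, s_2 \supset t$, it is genuinely an agreement distribution on $S$ in the sense of \pref{def:aexpander}, completing the proof.

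The main technical obstacle is the link-expansion estimate; fortunately it collapses, uniformly across levels, to the single spectral bound $\lambda_2(G_m) = O(1/q)$ thanks to the self-similarity of the links. Once that bound is in hand, the corollary is a one-line invocation of \pref{thm:main-agreement-theorem-two-sided}.
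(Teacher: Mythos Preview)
Your proposal is correct and follows exactly the paper's route: the paper's ``proof'' is the one sentence preceding the corollary, namely that the linear-basis matroid is easily seen to be a $\frac{1}{q}$-two-sided HDX, after which \pref{thm:main-agreement-theorem-two-sided} applies directly. Your sketch just fills in the details the paper omits.

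One small correction worth flagging: the link $X_\sigma$ is not literally isomorphic to the independence complex of $\F_q^{n-i}$. Its vertex set is $\{v \in \F_q^n \setminus \{0\} : v \notin \mathrm{span}(\sigma)\}$, which has $q^n - q^i$ elements, not $q^{n-i}-1$. What is true is that the underlying graph of $X_\sigma$ is the balanced blow-up of $G_{n-i}$ in which each vertex is replaced by its $q^i$ preimages under the quotient map $\F_q^n \to \F_q^n/\mathrm{span}(\sigma)$; two lifted vertices are adjacent iff their images are. Since such a blow-up has the same normalized eigenvalues as the base graph together with extra zeros, your bound $\lambda_2 \le 1/q$ (which is exact for the complete multipartite graph $G_m$) carries over unchanged, and the rest of your argument goes through.
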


If simplicial complexes are high dimensional analogues to graphs, then $d+1$-partite simplicial complexes are analogues to bipartite graphs: in these complexes we can partition the vertex set $V$ to $V_1,...,V_{d+1}$, so that every set of size $d$ contains exactly one vertex from each set $V_i$.

Our second theorem shows that the $d,\ell$-agreement test is sound when $X$ is a $d+1$-partite complex that is an \emph{one-sided high dimensional expander (HDX)}. One sided HDX are the high dimensional analogue to bipartite expanders. They are formally defined \pref{sec:agreement-in-hdx-one-sided}.

\begin{theorem}[Agreement for $(d+1)$-Partite High Dimensional Expanders]
\torestate{\label{thm:main-agreement-theorem-one-sided}
There exists a constant $c > 0$ such that for every two natural numbers $k, \ell$ so that $k \geq 4\ell+4$ the following holds. Suppose $X$ is a $k$-dimensional skeleton of a $(d+1)$-Partite $\frac{1}{k^2\ell}$-one sided HDX (including $k = d$)\footnote{a $k$-skeleton of a $d$-dimensional simplicial complex $Y$ is $X=\sett{s \in X}{|s| \leq k+1}$.}. Then for every $r > 0$ the $d,\ell$-agreement test is $\frac{r}{\ell}$-approximately $\left ( c \left ( 1 + \frac{1}{r} \right ) \right )$-sound. In particular, if $\ell = \Omega(k)$, then the test is exactly $c$-sound.

}
\end{theorem}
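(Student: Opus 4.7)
The strategy is to embed the $(d+1)$-partite one-sided HDX setting into the STAV framework and invoke Theorem \ref{thm:main-STAV-agreement-theorem}. Set $V = X(0)$, $S = X(k)$, $T = X(\ell)$, and $A = X(\ell-1)$, with $\stavd$ the process that samples $s \in X(k)$ from the measure of $X$, then a uniform $\ell$-face $t \subset s$, and finally splits $t = a \dunion \{v\}$ uniformly. The STS distribution is precisely the $d,\ell$-agreement distribution of Definition \ref{def:d-l-agreement-test}. For the VASA distribution I exploit the partite structure: writing the color classes as $V_1, \ldots, V_{d+1}$ and $\mathrm{col}(s) \subset \{1,\dots,d+1\}$ for the colors appearing in $s$, sample three pairwise-disjoint subsets $I_1, I_2 \subset \mathrm{col}(s)$ of size $\ell$ and a color $j \in \mathrm{col}(s) \setminus (I_1 \cup I_2)$; put $a_1 = s \cap \bigcup_{c \in I_1} V_c$, $a_2 = s \cap \bigcup_{c \in I_2} V_c$, and let $v$ be the vertex of $s$ with color $j$. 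The hypothesis $k \geq 4\ell + 4$ ensures $2\ell + 1 \leq k+1$, so such color triples are abundant and the $(v,a_1,s)$-marginal agrees with that of $\stavd$ as required by Definition \ref{def:STAV}.

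I would set the STAV parameter $\gamma = \Theta(1/\ell)$, matching the approximation radius $r\gamma = r/\ell$ in the conclusion. The surprise is bounded via Lemma \ref{lem:delta-surprise}: each $T$-lower graph is the inclusion graph between the $\ell+1$ vertices of a single $t$ and its $(\ell-1)$-faces, which by the same computation as in Example \ref{example:hdx-sc2} is a $(1/\ell)$-bipartite expander; combined with the trivial bound $\delta \geq 1/|t| = \Theta(1/\ell)$ this yields $\surp(X,f) \leq O(1/\ell) = O(\gamma)$ for every ensemble $f$. All the remaining graphs in Definition \ref{def:good-stav-structure} (the reach graph, the $\sts_a$ and $\sts_{a,v}$ graphs, the $\vASA{v}$ and $\VAS{a}$ graphs, and the local reach graph) are either partite containment walks in $X$ itself, or the same type of walks inside a link $X_z$ for a face $z$ of bounded size; since every link of a $(d+1)$-partite one-sided HDX with spectral parameter $\frac{1}{k^2\ell}$ is again a partite one-sided HDX of comparable quality in its remaining colors, it suffices to establish that the partite complement walk on a one-sided partite HDX (between faces in disjoint color classes) has spectral gap $1 - O(\gamma)$. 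Assumption \ref{ass:AVS-graph-good-sampler} then follows because bipartite spectral expanders are samplers, by Lemma \ref{lem:sampler-lemma}.

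The hard part is exactly this last spectral input. In the two-sided setting of Theorem \ref{thm:main-agreement-theorem-two-sided-restricted-version}, Theorem \ref{thm:complement-walk-is-a-good-expander} is invoked as a black box, but a one-sided HDX need not have any expansion in its upper or lower walks. The partite structure restores expansion: by restricting moves to faces indexed by disjoint color classes, one eliminates the parasitic ``same-color'' walks that obstruct one-sided expansion, and a Garland-style inductive argument through the links---each of which is again a partite one-sided HDX with enough remaining colors because $k \geq 4\ell + 4$---establishes the required $O(\gamma)$ spectral bound. Once assumptions (A1)--(A3) and (A4($r$)) hold with $\gamma = \Theta(1/\ell)$, Theorem \ref{thm:main-STAV-agreement-theorem} yields the claimed $\frac{r}{\ell}$-approximate $c(1 + 1/r)$-soundness. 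The ``in particular'' case $\ell = \Omega(k)$ follows by choosing $r$ a small enough constant so that $r/\ell < 1/(k+1) \leq 1/|s|$, at which point $\frac{r}{\ell}$-approximate soundness upgrades to exact $c$-soundness.
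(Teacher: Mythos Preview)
Your plan to use a single STAV $(S,T,A,V) = (X(k), X(\ell), X(\ell-1), X(0))$ does not work in the one-sided partite case, and the paper says so explicitly: ``In the one-sided case the STAV defined above is not $\gamma$-good, so we need to work a little harder.'' The concrete obstruction is assumption \ref{ass:vasa-graphs-v}. Your $\vASA{v}$-graph is a walk on all of $X_v(\ell-1)$, and in a $(d{+}1)$-partite complex that set splits according to $\mathrm{col}(a)\in\binom{[d{+}1]\setminus\{\mathrm{col}(v)\}}{\ell}$. Under your VASA (which forces $\mathrm{col}(a_1)\cap\mathrm{col}(a_2)=\emptyset$) the induced walk on colors is the Kneser walk on disjoint $\ell$-subsets, whose normalized second eigenvalue is $\ell/(d-\ell)$; when $k=d$ this is $\Theta(1)$, not $O(1/\ell)$. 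No alternative VASA rescues this: sampling $a_1,a_2$ independently inside $s\setminus\{v\}$ gives the down--up walk $X_v(\ell{-}1)\to X_v(k{-}1)\to X_v(\ell{-}1)$, with eigenvalue $\approx \ell/k=\Theta(1)$. The point is that functions $f(a)=g(\mathrm{col}(a))$ are genuine slow modes; the fact that each \emph{colored} walk $\comp{I,J}$ is an $O(1/\ell)$-bipartite expander (Claim~\ref{claim:colored-walk-is-a-good-expander}) does not make the mixture over all $(I,J)$ one, because each $\comp{I,J}$ expands on a different bipartition of the vertex set. Your sentence ``it suffices to establish that the partite complement walk on a one-sided partite HDX \ldots\ has spectral gap $1-O(\gamma)$'' is exactly the step that fails.

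The paper's fix is to shrink the $A$-layer so that colors are fixed. For each pair of disjoint color sets $I,J$ of size $\ell$ it builds a separate $(I,J)$-STAV with $A_{(I,J)}=X[I]\cup X[J]$ and $V_{(I,J)}=\{v:\mathrm{col}(v)\notin I\cup J\}$; now the $\vASA{v}$-graph is the single colored walk between $X_v[I]$ and $X_v[J]$, and Lemma~\ref{lem:one-sided-stav-are-1-over-t-good} verifies that this STAV really is $O(1/\ell)$-good. The price is that each resulting global function $G_{I,J}$ is defined only on vertices whose color lies outside $I\cup J$. The bulk of the proof (Lemma~\ref{lem:four-good-colors} and the argument after it) is an averaging-and-stitching step: one shows that for a positive fraction of color $4$-tuples $(I_1,J_1,I_2,J_2)$ both $(I_i,J_i)$-STAVs have $O(\varepsilon)$ rejection and $O(1/\ell)$ surprise, applies Theorem~\ref{thm:main-STAV-agreement-theorem} (in fact its extension, Theorem~\ref{thm:stronger-equation-in-main-theorem}) to each to get $G_{I_1,J_1}$ and $G_{I_2,J_2}$, and then glues them into a single $G:X(0)\to\Sigma$ using an auxiliary ``$(I,J)$-in-one-set'' agreement test. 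This entire layer of the argument is absent from your proposal.
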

Interestingly, in the known one-sided $d+1$-partite simplicial complexes, the distribution on $X(d)$ is uniform. Thus this theorem gives us a sparse \emph{uniformly distributed} set system with a sound agreement test. This is unlike the known constructions for two-sided high-dimensional expanders that come from truncating one-sided high-dimensional expanders and for which the distribution of the test over $S=X(d)$ is not uniform.

\paragraph{Organization of this section} This section is a bit long so let us quickly explain its contents. In \pref{sec:rw-in-sc} we describe random walks on simplicial complexes, both the well-known ``containment'' random walks as well as the new complement random walks. In \pref{sec:agr2sided} we prove \pref{thm:main-agreement-theorem-two-sided} by showing that any two sided HDX supports a STAV structure. In \pref{sec:agreement-in-hdx-one-sided} we prove \pref{thm:main-agreement-theorem-one-sided}. The proof of this theorem is more intricate, as we don't only find one STAV structure but rather many different STAVs. We apply our main technical theorem on each and then combine the outcomes together.

\subsection{Random Walks on Simplicial complexes}\label{sec:rw-in-sc}
We refer to the definition of a weighted simplicial complex and High Dimensional Expanders in \pref{sec:HDX}.

\paragraph{The Containment Walk} On a $d$-dimensional simplicial complex we can define the $k,\ell$-lower random walk, for $\ell< k \leq d$:

\begin{definition}[The lower walk]\torestate{\label{def:lower-walk}
Given $s \in X(k)$ we choose $s' \in X(k)$ by:
\begin{itemize}
\item Choose $t \in X(\ell)$ given that $t \subset s$.
\item Choose $s' \in X(k)$ given that $t \subset s'$.
\end{itemize}}
\end{definition}

One can also define the \emph{$\ell,k$-upper walk} on $X(\ell)$, where we given $t \in X(\ell)$ we choose $s \supset t$ in $X(k)$, and then choose $t' \subset s$.

This random walk is in fact two independent steps in the $k,\ell$-containment graph:
\[ L = X(k), \; R = X(\ell), \; E = \sett{(s,t)}{t \subset s}. \]
We denote the bipartite operator of this graph by $D_{k,\ell}$. Note that
\[ \Down_{k,\ell} = \Down_{\ell+1,\ell} \Down_{\ell+2,\ell+1} ... \Down_{k,k-1}. \]

This random walk has been studied in ~\cite{KaufmanM17, KaufmanO2017,DinurK2017,DiksteinDFH2018} and more. In particular \cite{KaufmanO2017} proved the following theorem:

\begin{theorem}
\label{thm:containment-graph-is-a-good-expander}
Let $X$ be a $\lambda$-one sided link expander, then  $\lambda(\Down_{k+1,k})$, the second largest eigenvalue of the upper walk, is $\sqrt{\frac{k+1}{k+2}} + \bigO{k\lambda}$.

\end{theorem}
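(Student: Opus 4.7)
The goal is to bound the second singular value of the bipartite containment operator $\Down_{k+1,k}$, equivalently the second eigenvalue of the self-adjoint ``up-down'' Markov operator $M := \Down_{k+1,k}\Down_{k+1,k}^{*}$ on $L^{2}(X(k))$. This walk picks $s \in X(k)$, moves up to a uniformly chosen $s' \supset s$ in $X(k+1)$, and then down to a uniformly chosen $s'' \subset s'$ in $X(k)$. Its largest eigenvalue is $1$, but even on a complete complex the inherent laziness of the walk (returning to $s$ with probability $1/(k+2)$) forces the next eigenvalue to be at least $1/(k+2)$; after a square root this is exactly the leading term $\sqrt{(k+1)/(k+2)}$ in the claim, so the job is to show that link expansion contributes only an additive $\bigO{k\lambda}$ error on top of this deterministic floor.

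The plan is the standard Garland-plus-Oppenheim ``local-to-global'' induction on the dimension. Given $f:X(k) \to \RR$ orthogonal to the constants, the key observation is that every edge of $M$ sits inside a unique $(k+1)$-face, which in turn contains many $(k-1)$-faces; averaging over a uniform $(k-1)$-face $a$ re-expresses $\langle f, Mf\rangle$ as an expectation of the corresponding quadratic form on the link $X_a$. Since link expansion is hereditary, the inductive hypothesis applies inside each link, and a global spectral estimate is thereby reduced to an average of local ones. The discrepancy between the global and local quantities is controlled by the second eigenvalue of the underlying graph at the current level, contributing an additive $\bigO{\lambda}$ slack per inductive step.

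Concretely one obtains a recursion of the form
\[
    \mu_k \;\le\; \frac{1}{k+2} \;+\; \frac{k+1}{k+2}\,\mu_{k-1} \;+\; \bigO{\lambda},
\]
where $\mu_j$ is the second eigenvalue of the analogous walk at level $j$. The base case $\mu_0 \le \tfrac{1}{2} + \bigO{\lambda}$ is simply the second eigenvalue of the lazy random walk on the underlying graph, immediate from the link-expansion hypothesis applied at the empty face. Unrolling the recursion from $k=0$ upward gives $\mu_k \le (k+1)/(k+2) + \bigO{k\lambda}$, since each of the $k$ inductive steps contributes an additive $\bigO{\lambda}$ whose cumulative weight is a constant. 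Taking the square root yields the advertised bound on $\lambda(\Down_{k+1,k})$.

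The main technical obstacle is implementing the Garland decomposition with the correct weights: the marginal distribution on $X(k-1)$ induced by sampling a random edge of $M$ is not the naive uniform measure, and one must track these reweightings carefully for the recursion to close cleanly. A parallel delicate point is cleanly separating the diagonal ``laziness'' contribution of $M$ from the genuine mixing contribution, since only the latter is controlled by link expansion; it is precisely this separation that isolates the deterministic $(k+1)/(k+2)$ floor and leaves an honest spectral-gap term on which the induction can bite.
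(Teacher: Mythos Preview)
Your sketch is correct and is precisely the Kaufman--Oppenheim local-to-global argument: decompose the non-lazy upper walk via Garland over $(k-1)$-faces, compare it to the down-up walk $M_k^-$ using one-sided link expansion (only the upper bound $\lambda_2(A_a)\le\lambda$ is needed, which is why one-sidedness suffices), use that $M_k^-$ and $M_{k-1}^+$ share nonzero spectrum to obtain the recursion $\mu_k \le \tfrac{1}{k+2} + \tfrac{k+1}{k+2}\mu_{k-1} + O(\lambda)$, and unroll. Note, however, that the paper does not give its own proof of this statement at all: it is quoted as a result of \cite{KaufmanO2017} and used as a black box, so there is no ``paper's proof'' to compare against beyond observing that your outline matches the cited source.
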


\pref{thm:containment-graph-is-a-good-expander} immediately implies the following useful corollary:
\begin{corollary}
\label{cor:containment-generalized-graph-is-a-good-expander}
Let $X$ be a $\lambda$-one sided link expander, then  $\lambda(\Down_{k,\ell})$ is $\sqrt{\frac{\ell+1}{k+1}} + O_{k+t}(\lambda)$.
$\qed$
\end{corollary}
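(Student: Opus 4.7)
The plan is to exploit the factorization
\[ \Down_{k,\ell} \;=\; \Down_{\ell+1,\ell}\,\Down_{\ell+2,\ell+1}\,\cdots\,\Down_{k,k-1} \]
noted in the excerpt, together with the one-step bound from \pref{thm:containment-graph-is-a-good-expander}, and to multiply the second singular values through the composition.

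First I would record the single-step bound: applying \pref{thm:containment-graph-is-a-good-expander} to each factor $\Down_{j,j-1}$ (taking $k=j-1$ in the theorem) gives
\[ \lambda\bigl(\Down_{j,j-1}\bigr) \;\leq\; \sqrt{\tfrac{j}{j+1}} + O(j\lambda) \qquad \text{for each } j=\ell+1,\ldots,k. \]
Next I would observe that each $\Down_{j,j-1}$, as a bipartite averaging operator between $\ell_2(X(j))$ and $\ell_2(X(j-1))$, sends the constant function on $X(j)$ to the constant function on $X(j-1)$ (this is essentially the statement that the marginal of the distribution on $X(j-1)$ induced by the complex coincides with the native distribution). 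Hence the constant function is the top singular direction of every factor, with singular value $1$, and the single-step bound above is exactly the operator norm of $\Down_{j,j-1}$ on the orthogonal complement of constants.

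The main step is then the standard multiplicativity argument: on functions orthogonal to constants, $\Down_{k,\ell}$ is the composition of operators each of which maps the constants-orthogonal subspace of $\ell_2(X(j))$ into the constants-orthogonal subspace of $\ell_2(X(j-1))$ (because constants map to constants under the adjoint as well). Therefore
\[ \lambda\bigl(\Down_{k,\ell}\bigr) \;\leq\; \prod_{j=\ell+1}^{k}\lambda\bigl(\Down_{j,j-1}\bigr) \;\leq\; \prod_{j=\ell+1}^{k}\Bigl(\sqrt{\tfrac{j}{j+1}} + O(j\lambda)\Bigr). \]
The product of the leading terms telescopes:
\[ \prod_{j=\ell+1}^{k}\sqrt{\tfrac{j}{j+1}} \;=\; \sqrt{\tfrac{\ell+1}{k+1}}. \]
Finally I would expand the perturbed product: writing each factor as $\sqrt{j/(j+1)}(1+O(j\sqrt{(j+1)/j}\,\lambda))$ and using $\prod(1+x_j)\leq 1 + \sum x_j + O((\sum x_j)^2)$ when the $x_j$'s are small (which holds since $\lambda \le 1/k^{\Omega(1)}$ is small in our regime), the error contributes at most $O_k(\lambda)$, where the hidden constant depends only on $k$. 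Combining, we obtain the desired bound
\[ \lambda\bigl(\Down_{k,\ell}\bigr) \;\leq\; \sqrt{\tfrac{\ell+1}{k+1}} + O_k(\lambda). \]
The only minor obstacle is verifying that constants are preserved by each factor (so that the single-step bounds give the full operator norm on the orthogonal complement), but this is a direct consequence of how the weights on $X(\ell)$ are defined from the weights on $X(k)$ via containment, so no additional spectral work is needed beyond the telescoping above.
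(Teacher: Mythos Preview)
Your proof is correct and follows exactly the approach the paper intends: the paper states this corollary with a $\qed$ immediately after noting the factorization $\Down_{k,\ell} = \Down_{\ell+1,\ell}\cdots\Down_{k,k-1}$ and \pref{thm:containment-graph-is-a-good-expander}, so the implicit argument is precisely the multiplicativity-and-telescoping computation you wrote out. Your verification that each $\Down_{j,j-1}$ preserves the constants-orthogonal subspace (via the adjoint $\Up_{j-1,j}$ fixing constants) is the one point worth making explicit, and you handled it correctly.
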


\paragraph{The Complement Walk}
As we noted in the introduction, we needed a random walk for a $\VASA$-distribution on two-sided high dimensional expanders. The spectral gap of this walk needed to be $\bigO{\frac{1}{\ell}}$. Unfortunately, the lower walk, or its dual, the upper walk, had spectral gap of approximately $\frac{\ell+1}{k+1}$. This is a constant when $\ell = \Omega (k)$.

The complement walk, is a walk between $X(k)$ and $X(\ell)$, where we go from $s \in X(k)$ to $t \in X(\ell)$ by $t \dunion s \in X(k+\ell+1)$.

\begin{definition}[The Complement Walk]
\torestate{
\label{def:complement-walk}
Let $X$ be a $d$-dimensional simplicial complex. Let $k,\ell$ be integers s.t. $k+\ell+1 \leq d$. The $k,\ell$-complement walk is the bipartite graph with edges $(L,R,E)$:
\begin{itemize}
    \item The vertices are $L = X(k), \; R = X(\ell)$.
    \item The edges are $E = \sett{ (s,t)}{s \dunion t \in X(k+\ell+1)}$.
\end{itemize}
The probability of choosing an edge $(s,t)$ is the probability of choosing $s \dunion t \in X(k+\ell+1)$ and then choosing $s \in X(k)$, given that we chose $s \dunion t$.
}
\end{definition}

We will show that in a $\lambda$-two-sided spectral expander, this walk has spectral gap proportionate to $\ell,k$ and $\lambda$. More formally, we will prove the following claim (\pref{thm:complement-walk-is-a-good-expander}, item 1):
\begin{claim}
\label{claim:two-sided-complement-walk-is-a-good-expander}
Let $X$ be a $\lambda$-two-sided link expander. $\ell_1,\ell_2$ integers so that $\ell_1 + \ell_2 +1 \leq d$. Denote by $\comp{\ell_1,\ell_2}$, the bipartite operator of the $\ell_1,\ell_2$-complement walk. Then
            \[\lambda(\comp{\ell_1,\ell_2}) \leq (\ell_1+1)(\ell_2+1)\lambda.\]
\end{claim}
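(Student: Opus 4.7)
My plan is to prove the bound by induction on $\ell_1 + \ell_2$, using Garland's local-to-global method to reduce to smaller-dimensional complement walks inside vertex links. The key observation is that in a $\lambda$-two-sided link expander every link (including the empty link, which is $X$ itself) is a $\lambda$-two-sided spectral expander on its underlying graph, and in particular every vertex-link $X_v$ is again a $\lambda$-two-sided link expander of one lower dimension. So the induction will carry the same expansion parameter throughout.

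The base case is $\ell_1 = \ell_2 = 0$, where the $0,0$-complement walk is precisely the non-lazy random walk on the weighted graph $(X(0), X(1))$ of $X$: a step starts at $s \in X(0)$, chooses an edge $s \dunion t \in X(1)$, and lands on $t$. By the two-sided link-expansion hypothesis applied to the empty link, this operator has second singular value at most $\lambda = (0+1)(0+1)\lambda$, giving the base.

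For the inductive step, I would take mean-zero test functions $f: X(\ell_1) \to \RR$ and $g: X(\ell_2) \to \RR$ and rewrite the bilinear form $\langle f, \comp{\ell_1, \ell_2} g\rangle$ as an average over vertices $v \in X(0)$. Any edge $(s, t)$ of $\comp{\ell_1, \ell_2}$ corresponds to a top face $s \dunion t \in X(\ell_1+\ell_2+1)$ containing $\ell_1+\ell_2+2$ vertices, and can equivalently be produced by first sampling a vertex $v \in s \dunion t$ and then picking the remaining complement-walk edge inside the link $X_v$. With probability $\frac{\ell_1+1}{\ell_1+\ell_2+2}$ the vertex $v$ falls in $s$, yielding $(s \setminus v, t)$ as an edge of $\comp{\ell_1-1, \ell_2}_v$, and otherwise yielding an edge of $\comp{\ell_1, \ell_2-1}_v$. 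This Garland-style decomposition reduces the analysis to bilinear forms of the two smaller complement walks, each controlled by the inductive hypothesis inside each link, and the combinatorial coefficients $(\ell_1+1)$ and $(\ell_2+1)$ arise naturally from the splitting of the top face into its $(s,t)$-halves.

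The main obstacle I anticipate is that the localizations $f_v, g_v$ used inside the link $X_v$ are not automatically orthogonal to the link's constant functions, so the inductive hypothesis does not apply directly. The standard remedy is to split $f_v$ (respectively $g_v$) into its link-local mean plus a mean-zero residual; the mean-zero residuals are handled by the induction, while the link-local means contribute a term that essentially collapses to the quadratic form of the non-lazy vertex-graph walk and is controlled by the base case applied at each link. Balancing these contributions, together with the careful bookkeeping of the combinatorial factors emerging from the vertex-sampling step, should yield the stated bound $(\ell_1+1)(\ell_2+1)\lambda$.
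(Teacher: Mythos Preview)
Your overall strategy---induct on $\ell_1+\ell_2$, base case the underlying graph of $X$, inductive step via Garland localization---matches the paper's. The execution differs in one clean way: rather than pulling a single vertex from $s\dunion t$ and splitting into the two cases $v\in s$ versus $v\in t$, the paper localizes at an entire $\ell_1$-face. Concretely, to pass from $\comp{\ell_1,\ell_2}$ to $\comp{\ell_1+1,\ell_2}$ it builds an auxiliary $3$-partite $2$-complex with sides $X(\ell_1),X(\ell_1+1),X(\ell_2)$ and applies a generic lemma (\pref{lem:structure-trickling-lemma}): the walk between sides $2$ and $3$ is bounded by the worst link-walk over $r\in X(\ell_1)$ plus $\lambda(A^{1,2})\lambda(A^{1,3})$. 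The link at any $r$ is the $\comp{0,\ell_2}$ walk in $X_r$, bounded by $(\ell_2+1)\lambda$ by induction, and $A^{1,3}=\comp{\ell_1,\ell_2}$ is also bounded by induction; with $\norm{A^{1,2}}\le 1$ these sum to $(\ell_1+2)(\ell_2+1)\lambda$, no case split needed. Your symmetric vertex-level decomposition also works but makes the bookkeeping heavier without buying anything extra.

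One correction to your sketch: the link-local means do \emph{not} in general collapse to the non-lazy vertex-graph walk. In your Case~1 ($v\in s$), the mean of the localization $f^v$ over $X_v(\ell_1-1)$ is $D_{\ell_1,0}f(v)$, while the mean of the restriction $g_v$ over $X_v(\ell_2)$ is $\comp{0,\ell_2}g(v)$; the resulting constant-part term $\iprod{D_{\ell_1,0}f,\,\comp{0,\ell_2}g}$ is controlled by the \emph{induction hypothesis} on the smaller complement walk $\comp{0,\ell_2}$, not by the base case. (Only when one of $\ell_1,\ell_2$ is already zero does the constant part genuinely become the vertex-graph operator.) With this fix your arithmetic closes exactly: the perp contributions from the two cases sum to $\tfrac{\ell_1+\ell_2}{\ell_1+\ell_2+2}(\ell_1+1)(\ell_2+1)\lambda$ and the constant contributions to $\tfrac{2}{\ell_1+\ell_2+2}(\ell_1+1)(\ell_2+1)\lambda$, giving the claimed bound.
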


\paragraph{Colored Walks in $d+1$-Partite Simplicial Complexes} On one-sided high dimensional expanders, the complement walk may not be a good expander. However, in the $d+1$-partite case we can define an analogue to this walk, the \emph{colored walk}. For two colors $I,J$, this walk goes from $t \in X[ I]$ to $s \in X[ J]$ by a face in $X[I \dunion J]$.
\begin{definition}[The Colored Walk]\torestate{\label{def:colored-walk}
Let $X$ be a $d$-dimensional $d+1$-partite simplicial complex. Let $I,J \subset [d]$ be two disjoint sets of colors. The $I,J$-colored walk is the bipartite graph with edges $(L,R,E)$:
\begin{itemize}
    \item The vertices are $L = X[ I], \; R = X[ J]$.
    \item The edges are $E = \sett{ (s,t)}{ s \dunion t \in X[ I \dunion J]}$.
\end{itemize}
The probability of choosing an edge $(s,t)$ is the probability of choosing $s \dunion t \in X[ I \dunion J]$.}
\end{definition}

Denote the bipartite adjacency operator of this walk by $\comp{I,J}$. We show that if $X$ is a $d+1$-partite $\lambda$-one sided link expander, $\lambda(\comp{I,J})$ is proportionate to $\abs{I} \abs{J}$ and  $\lambda$. We state the following claim that bounds the spectral gap of the colored walks (\pref{thm:complement-walk-is-a-good-expander}, item 2):
\begin{claim}
\label{claim:colored-walk-is-a-good-expander}
Let $X$ be a $d+1$-partite $\frac{\lambda}{(d+1)\lambda + 1}$-one-sided link expander, where $\lambda < \frac{1}{2}$. Let $I,J \subset [d]$ be two disjoint colors. Denote by $\comp{I,J}$ the $I,J$-colored walk. Then
    \[\lambda(\comp{I,J}) \leq |I| |J| \lambda.\]
\end{claim}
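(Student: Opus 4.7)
The plan is to prove the claim by induction on $n = |I| + |J|$, mirroring the Garland-method argument behind \pref{claim:two-sided-complement-walk-is-a-good-expander} but exploiting the partite structure, which eliminates self-intersection issues and allows a sharper bound. Garland's method reduces global spectral questions to local ones about links, which by induction on the dimension (or on $n$) are controlled by the one-sided link expansion hypothesis.

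For the base case $n = 2$, with $I = \{i\}$ and $J = \{j\}$, the walk $\comp{I,J}$ is the bipartite graph on $X[\{i\}] \cup X[\{j\}]$ whose edges are given by $X[\{i,j\}]$. I would bound its second singular value by $\lambda$ via a trickling-down argument in the spirit of Oppenheim: every codimension-$2$ link of $X$ supported on the colors $\{i,j\}$ is a $\frac{\lambda}{(d+1)\lambda+1}$-bipartite expander by hypothesis, and propagating this bound up through the remaining $d-1$ colors one at a time, using the one-sided link expansion at each level, yields a $\lambda$-bound on the top-level bipartite walk. The specific normalization $\frac{\lambda}{(d+1)\lambda+1}$ together with $\lambda < 1/2$ is chosen precisely so that the $d$-step trickling-down recursion converges to $\lambda$ rather than blowing up.

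For the inductive step, fix some $k \in I$ and write $I = \{k\} \dunion I'$. The key structural observation is that $\comp{I,J}$ factors through links of color-$k$ vertices: for $t \in X[I]$ with color-$k$ vertex $v_t$ and $t' = t \setminus \{v_t\} \in X_{v_t}[I']$, the distribution over $s \in X[J]$ produced by $\comp{I,J}$ from $t$ coincides with that produced by the colored walk $\comp{I', J}^{(v_t)}$ inside the link $X_{v_t}$, starting at $t'$. I would then apply Garland's method to decompose the quadratic form $\langle f, \comp{I,J}^{*} \comp{I,J} f\rangle$ (for $f$ orthogonal to constants) as an average over $v \in X[\{k\}]$ of the corresponding quadratic forms of $\comp{I',J}^{(v)}$ in the links, plus a correction term measuring the variation of $f$ along the color-$k$ projection. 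Each link $X_v$ inherits the one-sided link expansion parameter from $X$, so the inductive hypothesis bounds the link contributions by $(|I|-1)|J|\lambda$, while the correction term is controlled by the containment walk between $X[I]$ and $X[\{k\}]$ composed with the base-case bipartite expansion into $X[J]$, contributing an additional $|J|\lambda$. Adding these gives $(|I|-1)|J|\lambda + |J|\lambda = |I||J|\lambda$, as required.

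The main obstacle will be the careful Garland-style bookkeeping: ensuring that the decomposition of the global quadratic form into link contributions correctly handles the partite weighting on faces (which differs from the non-partite simplicial case), and that the ``correction'' term arising from the projection to $X[\{k\}]$ is absorbed into a clean $|J|\lambda$ bound with no hidden $d$-dependent factors. The normalization $\frac{\lambda}{(d+1)\lambda+1}$ in the link-expansion hypothesis and the assumption $\lambda < 1/2$ exist precisely to uniformly absorb the lower-order terms that otherwise accumulate across levels of the induction.
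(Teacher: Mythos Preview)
Your proposal is correct and takes essentially the same approach as the paper: induction on $|I|+|J|$ with the base case handled by a partite trickling-down lemma (the paper's \pref{lem:colorful-trickling-down-lemma} and \pref{cor:colorful-trickling-corollary}) and the inductive step by a Garland-style localization (the paper's \pref{lem:structure-trickling-lemma}). The only cosmetic difference is that you localize at a single-color vertex $v\in X[\{k\}]$, giving a link bound of $(|I|-1)|J|\lambda$ plus a correction of $|J|\lambda$, whereas the paper localizes at an $(|I|-1)$-color face in $X[I\setminus\{i\}]$, giving a link bound of $|J|\lambda$ plus a correction of $(|I|-1)|J|\lambda$; both sum to $|I||J|\lambda$.
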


\subsection{Agreement for Two-Sided High Dimensional Expanders}\label{sec:agr2sided}

\begin{proof}[Proof of \pref{thm:main-agreement-theorem-two-sided}]
First, note that when $d$ is small, the theorem is true by a simple union bound. Thus we may assume $d>>1$.

To show the theorem is true, we need to take some ensemble of functions $\FF$ and show that if $\disagr(\FF) = \varepsilon$ then there exists a global function $G: X(0) \to \Sigma$ such that
\[ \Prob[s \sim D_{d,\ell}]{f_s \ane{\frac{r}{\ell}} \rest{G}{s}} = c \left ( 1 + \frac{1}{r} \right ) \varepsilon, \]
for some constant $c$.

The STAV structure we examine for this agreement test is as follows:
\[S = X(d), \; T = X(t); A = X(t-1); V = X(0). \]
Our distribution is
\begin{enumerate}
    \item Choosing $s \in S$ according to the distribution of the simplicial complex.
    \item Choosing $t \subseteq s$ uniformly at random.
    \item Choosing $(v,a)$ by choosing $v \in t$ uniformly at random and setting $a = t \setminus \set{v}$.
\end{enumerate}
The $STS$-test of this structure is the $(d,\ell)$-agreement test. The $\VASA$-distribution is the following:
    \begin{enumerate}
        \item Choose $s \in X(d)$.
        \item Choose $a_1,a_2,v$ so that $a_1 \dunion a_2 \dunion \set{v} \subset s$.
    \end{enumerate}
This distribution is obviously symmetric in $a_1$ and $a_2$. Furthermore, the choice of $(v,a_i,s)$ is identical to the marginal in the STAV-structure.

First, we claim that for any simplicial complex, the STAV-structure above has $\surp(X) = \bigO{\frac{1}{\ell}}$.
\begin{claim}\torestate{
\label{claim:agreement-on-last-vertex}
Let $X$ be any simplicial complex, then the surprise of a STAV-structure with $T = X(\ell), \; A = X(\ell-1); \; V = X(0)$ has $\surp(X) = \bigO{\frac{1}{\ell}}$.}
\end{claim}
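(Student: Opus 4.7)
The plan is to condition on the STS sample $(s_1,s_2,t)$ first and then exploit that the extra step of the STAV distribution splits $t$ into $a\dunion\{v\}$ by choosing $v$ uniformly from $t$. Under this two-stage view, the law of total probability gives
\[
 \surp(X,f) = \Ex[(s_1,s_2,t)\sim \sts,\; \rest{f_{s_1}}t \ne \rest{f_{s_2}}t]{\;\Pr_{v\in t}\Brac{\rest{f_{s_1}}{t\setminus\{v\}} = \rest{f_{s_2}}{t\setminus\{v\}} \ve f_{s_1}(v)\ne f_{s_2}(v)}},
\]
so it suffices to uniformly bound the inner conditional probability over every admissible triple $(s_1,s_2,t)$.

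The key observation is purely combinatorial: for a fixed $(s_1,s_2,t)$, let $D\subseteq t$ be the set of vertices on which $f_{s_1}$ and $f_{s_2}$ disagree. The event ``agree on $a$ and disagree on $v$'' forces the single disagreement position in $t$ to land at $v$, i.e.\ it requires $D=\{v\}$. Hence the inner probability is $0$ whenever $|D|\ne 1$, and is exactly $1/|t|=1/(\ell+1)$ when $|D|=1$. In either case it is bounded by $1/(\ell+1)$, uniformly in $(s_1,s_2,t)$ and in the ensemble $f$.

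Plugging this uniform bound back into the expectation yields $\surp(X,f)\le 1/(\ell+1)=O(1/\ell)$ for every ensemble $f$, and taking the supremum over $f$ gives $\surp(X)=O(1/\ell)$. Note that the argument does not use any expansion or structural property of $X$ beyond the fact that $|t|=\ell+1$ and $v$ is chosen uniformly inside $t$; this is why the claim holds for any simplicial complex. There is really no obstacle here, the proof is a one-line averaging; the only mild subtlety is remembering that the conditional probability is taken over the randomness of the partition $t=a\dunion\{v\}$, which is independent of the STS sample, so the two-stage decomposition of $\stavd$ is legitimate.
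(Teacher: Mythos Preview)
Your proof is correct. The decomposition into the STS step followed by the uniform choice of $v\in t$ is exactly right, and the observation that ``agree on $a=t\setminus\{v\}$ and disagree on $v$'' is equivalent to $D=\{v\}$ gives the sharp bound $1/(\ell+1)$ on the inner probability.

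The paper's proof of this claim takes a different, less direct route: it observes that the $T$-lower graph (the bipartite containment graph between the $\ell$-subsets of $t$ and the vertices of $t$) is a $1/\ell$-bipartite expander, notes that every ensemble is trivially a $1/(\ell+1)$-ensemble, and then applies \pref{lem:delta-surprise} (which in turn rests on the sampler lemma) to obtain $\surp(X)\le O((1/\ell)^2\cdot(\ell+1))=O(1/\ell)$. Your argument bypasses all of this machinery by using the very special structure that $a$ misses exactly one vertex of $t$; this makes the sampling statement exact rather than approximate and yields a cleaner constant. The paper's route, on the other hand, is an instance of the general \pref{lem:delta-surprise}, which is what one needs in settings (e.g.\ the Grassmann) where $a$ is not the complement of a single point and the direct counting does not apply. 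It is worth noting that the paper does use essentially your elementary argument elsewhere, in the proof of \pref{claim:agreement-on-last-vertex-two-sided} in the representative-case section; for the general claim it simply chose to showcase the lemma.
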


\begin{proof}[Proof of \pref{claim:agreement-on-last-vertex}]
Let $\FF$ be any ensemble of local functions. We need to show that
\[ \cProb{(s_1,s_2,t,a)}{\rest{f_{s}}{a} = \rest{f_{s}}{a}}{\rest{f_{s}}{t} = \rest{f_{s}}{t}}.\]
To do so, we want to invoke \pref{lem:delta-surprise}. For every $t \in T$, the $T$-lower graph is the containment graph where on one side we have
\[ L = \binom{t}{\ell-1},\]
and on the other we have
\[ R = \set{v \in t}.\]
It is a well known fact that this graph is a $\frac{1}{\ell}$-bipartite expander. Trivially, if $\rest{f_{s_1}}{t} \ne \rest{f_{s_2}}{t}$, they differ on at least $\frac{1}{\ell+1}$-fraction of the vertices (namely, one vertex). By \pref{lem:delta-surprise}, we get a surprise of $\frac{\ell^{-2}}{(\ell+1)^{-1}} = \bigO{\frac{1}{\ell}}$.
\end{proof}

If we show STAV-structure defined above theorem is $\bigO{\gamma}$-good as in \pref{def:good-stav-structure}, for $\gamma = \frac{1}{\ell}$, we could invoke \pref{thm:main-STAV-agreement-theorem} and conclude. Hence we need, that it fulfils the assumptions in \pref{def:good-stav-structure}:
\begin{enumerate}

    \item We begin with the proof of \pref{ass:a-not-large}. We require that the probability of choosing some $v \in s$ so that $v \in \adj{a}$ is greater or equal to $\frac{1}{2}$ for any $a \subset s$. as $\ell < \frac{1}{2}d$, and $s \cap \adj{a} = s \setminus a$ for any $a \subset s$,
    \[ \cProb{}{v \in \adj{a}}{v \in s} = \frac{|s \setminus a|}{|s|} \geq \frac{1}{2}.\]

    \item \pref{ass:global-A-V-graph}: The graph described in this assumption is the $0,\ell-1$ complement walk graph in $X$. By our assumption $X$ is a $\frac{1}{d^2 \ell}$-two-sided HDX. Thus from \pref{claim:two-sided-complement-walk-is-a-good-expander}, this graph is a $\bigO{\frac{1}{\ell}}$-bipartite expander.

    \item \pref{ass:edge-expander}: Fix $a \in A$. The conditioned $STS$-graph, $STS_{a}$ is the graph whose vertices are all $\ul{a}{s}$. Traversing from $s_1$ to $s_2$ is going by $t = a \dunion \set{v}$. We are to show that this graph is a $\bigO{\frac{1}{\ell}}$-two-sided spectral expander. Indeed this graph is (isomorphic to) the graph obtained whose vertices are $X_a(d - \ell)$, and $s_1 \setminus a, s_2 \setminus a$ are connected by an edge if their intersection contains a vertex in $X_a(0)$. $d-\ell = \Omega(\ell)$ thus by \pref{thm:containment-graph-is-a-good-expander}, and the fact that $X$ is an $\frac{1}{\ell d^2}$-HDX, this graph is a $\eta$-two-sided spectral expander, for
    \[ \eta = \frac{1}{d- \ell} + \bigO{\frac{d}{d^2 \ell}} = \bigO{\frac{1}{\ell}}. \]
     In particular, it is an $\frac{1}{3}$-edge-expander (for a large enough $d$).

    \item \pref{ass:v-a-graph}:
    We are to bound the spectral gap in the conditioned $STS$-graph, namely $STS_{a,v}$, whose vertices are $s \supset (a,v)$, and edges are $(s_1,t,s_2)$ where $t \supset (a,v)$. (When we say for instance $s \supset (a,v)$ we mean of course that $s \supset a,v$.)

    In the context of the STAV-structure above, conditioning on $a \in A, v \in X(0)$ is the same as conditioning on $t = a \dunion \set{v} \in T$. In this case, the choices of $s_1,s_2$ are independent - i.e. the graph we get is a clicque with self loops. This graph is a $0$-two-sided spectral expander.

    \item \pref{ass:vasa-graphs-v}: Fix some $v \in V$. The $\vASA{v}$-graph is the graph whose vertices are all $a,a'$ so that $(v,a,s,a')$ are in the support of the $\VASA$-distribution. In this case these are exactly $X_v(\ell-1)$. We go from $a$ to $a'$ by choosing $(v,a,s,a')$ in the $\VASA$-distribution. Thus in this case we go from $a$ to $a'$ if they are disjoint and share a face $s \setminus \set{v} \in X_v(d-1)$.

    The graph we just described is the graph whose double cover is the $(\ell-1),(\ell-1)$-complement walk graph in $X_v$, the link of $v$. $X$ is a $\bigO{\frac{1}{\ell d^2}}$-two-sided HDX, thus by \pref{claim:two-sided-complement-walk-is-a-good-expander}, this graph is a $\ell^2\bigO{\frac{1}{\ell d^2}} = \bigO{\frac{1}{\ell}}$-two sided spectral expander expander.

    \item \pref{ass:vasa-graphs-a}: This is the only part of the proof that is not immediate. Fix some $a \in A$. The graph in this assumption is the $\VAS{a}$-graph, the bipartite graph where
    \[ L = \adj{a_0}, \; R = \sett{(a,s)}{\exists v \in L \; (v,a_0,s,a) \in Supp(D) },\]
    \[ E = \set{(v,(a,s)) : (v,a_0,s,a) \in Supp(D) }. \]
    The probability of choosing an edge $(v,(a',s))$ is given by $\cProb{D}{(v,a_0,s,a')}{a_0 = a}$.

    We describe the graph in this case explicitly in this following proposition, that says this graph is a $\sqrt{\frac{1}{\ell}}$-bipartite expander.

    \begin{proposition}
        \label{prop:two-sided-graph}
    Fix some $a \in A$, and consider the following graph
    \begin{itemize}
        \item $L = \set{(a', s) : a' \dunion a \subset s}.$
        \item $R = \adj{a} = X_a(0).$
        \item $E = \set{(v,(a',s)) : \set{v} \dunion a' \dunion a \subset s},$ and the probability to choose each edge is given by the distribution that chooses $(s, a', v)$ in the link of $a$.
    \end{itemize}

    The graph described above is an $\bigO{\frac{1}{\sqrt{\ell}}}$-bipartite expander.
\end{proposition}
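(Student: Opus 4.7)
The plan is to compare the target bipartite graph, call it $M$, with a closely-related \emph{unconstrained} auxiliary bipartite graph $M'$ on the same two vertex sets, and then bound the difference using a separate containment walk. After moving into the link $X_a$, we have $R = X_a(0)$ and $L = \set{(a',s') : a' \subset s',\, a' \in X_a(\ell-1),\, s' \in X_a(k)}$ with $k=d-\ell$; the edge $((a',s'),v)$ of $M$ carries probability $\pi(s')/(\binom{k+1}{\ell}(k+1-\ell))$ whenever $v \in s'\setminus a'$. Define $M'$ on the same $(L,R)$ by dropping the disjointness requirement: the edge $((a',s'),v)$ exists for every $v \in s'$ with probability $\pi(s')/(\binom{k+1}{\ell}(k+1))$. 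Using the identity $\binom{k}{\ell}/\binom{k+1}{\ell} = (k+1-\ell)/(k+1)$, a direct computation shows that $M$ and $M'$ have the same marginals on both sides, and that for mean-zero $f \in \ell^2(L)$ and $g \in \ell^2(R)$,
\[
\iprod{f,g}_M \;=\; \frac{k+1}{k+1-\ell}\Bigl(\iprod{f,g}_{M'} \;-\; \E_{M'}\bigl[\one_{\set{v \in a'}}\, f(a',s')\, g(v)\bigr]\Bigr).
\]

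To bound the $M'$ term, I would observe that the walk $(a',s')\to v$ in $M'$ factors as the deterministic projection $(a',s')\mapsto s'$ followed by the bipartite containment operator $\Down_{k,0}$ in the link $X_a$. Since the projection is a contraction and preserves mean-zero functions, applying \pref{cor:containment-generalized-graph-is-a-good-expander} inside $X_a$ (which inherits one-sided link expansion from $X$) gives $\sigma_2(M') \le \sigma_2(\Down_{k,0}) = \bigO{1/\sqrt{k+1}} = \bigO{1/\sqrt d}$, which is $\bigO{1/\sqrt\ell}$ since $d \ge \ell$.

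The main obstacle is the corrective ``bad'' term $\E_{M'}[\one_{\set{v\in a'}} fg]$: a naive Cauchy--Schwarz only bounds it by $\ell/(k+1)$, which is $\Theta(1)$ when $\ell$ is a constant fraction of $d$. To sharpen this, I would use the explicit factorization
\[
\E_{M'}\bigl[\one_{\set{v \in a'}}\, f(a',s')\, g(v)\bigr] \;=\; \frac{\ell}{k+1}\,\iprod{f,\tilde h}_{\pi_L},\qquad \tilde h(a',s') := \tfrac{1}{\ell}\sum_{v \in a'} g(v),
\]
and note that $\tilde h$ depends only on $a'$, so it is the image of $g$ under the bipartite averaging operator of the containment graph $\Down_{\ell-1,0}$ in $X_a$. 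Because $\E_{\pi_R} g = 0$ we also have $\E \tilde h = 0$, so \pref{cor:containment-generalized-graph-is-a-good-expander} gives $\|\tilde h\| \le \bigO{1/\sqrt\ell}\|g\|$. Combined with Cauchy--Schwarz in $f$, this yields $|\E_{M'}[\one_{\set{v\in a'}} fg]| = \bigO{\sqrt\ell/d}\|f\|\|g\|$, which is again $\bigO{1/\sqrt\ell}$.

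Finally, the ambient hypothesis $\tfrac12 d - \ell = \Omega(d)$ forces $(k+1)/(k+1-\ell) = \bigO{1}$, so substituting both estimates into the displayed identity yields $|\iprod{f,g}_M| = \bigO{1/\sqrt\ell}\|f\|\|g\|$ for every pair of mean-zero $f,g$. This proves that $M$ is an $\bigO{1/\sqrt\ell}$-bipartite expander, as claimed. The only delicate step is the sharpening of the bad term via the containment operator $\Down_{\ell-1,0}$; every other piece is a one-line reduction to \pref{cor:containment-generalized-graph-is-a-good-expander} inside the link.
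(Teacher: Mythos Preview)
Your proof is correct and takes a genuinely different route from the paper's.

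The paper proves this proposition by constructing a $3$-partite $2$-dimensional complex $Y$ with parts $Y[1]=X_a(\ell-1)$, $Y[2]=X_a(0)$, $Y[3]=\{(a',s)\}$, and then invoking the Garland-style structure-trickling lemma (\pref{lem:structure-trickling-lemma}) to bound $\lambda(A^{2,3})$ by $\eta + \lambda(A^{1,2})\lambda(A^{1,3})$. Here $\eta$ comes from the containment walk in each $X_{a\cup a'}$ and $\lambda(A^{1,2})$ is bounded via the \emph{complement} walk (\pref{claim:two-sided-complement-walk-is-a-good-expander}).

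Your argument instead compares $M$ directly with the ``unconstrained'' walk $M'$ (drop the $v\notin a'$ requirement), observing that $M'$ factors through the projection $(a',s')\mapsto s'$ followed by the containment operator $\Down_{k,0}$ in $X_a$, and then controls the correction term $\E_{M'}[\one_{\{v\in a'\}}fg]$ by recognising $\tilde h(a')=\tfrac1\ell\sum_{v\in a'}g(v)$ as the image of $g$ under the containment operator from $X_a(0)$ to $X_a(\ell-1)$. Both pieces are bounded using only \pref{cor:containment-generalized-graph-is-a-good-expander}; no appeal to the complement walk or to \pref{lem:structure-trickling-lemma} is needed. The key identity and both spectral estimates check out, and the hypothesis $\tfrac12 d-\ell=\Omega(d)$ indeed makes the prefactor $(k{+}1)/(k{+}1{-}\ell)$ and the ratio $\sqrt\ell/d$ harmless.

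What each approach buys: the paper's route is modular---\pref{lem:structure-trickling-lemma} is a general tool reused several times, and the proof here is just one more instantiation. Your route is more self-contained and arguably more elementary for this specific graph, since it relies only on the containment-walk spectral gap (a result already available from \cite{KaufmanO2017}) rather than on the paper's new complement-walk machinery.
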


\begin{figure}
    \centering
    \includegraphics[scale=0.6]{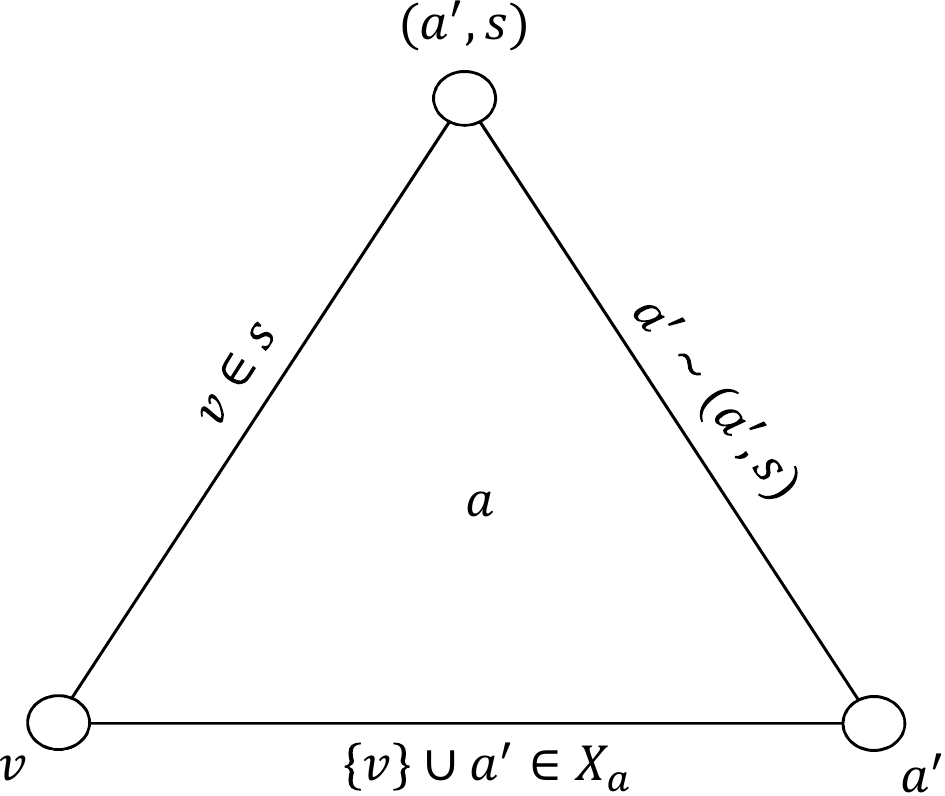}
    \caption{A triangle in $Y$. }
    \label{fig:my_label}
\end{figure}

To prove this proposition, we state \pref{lem:structure-trickling-lemma}. The proof of this lemma uses Garland's method, so we postpone its proof to \pref{sec:complement-walk}.

\begin{lemma}
    \torestate{
    \label{lem:structure-trickling-lemma}
    Let Y be a $2$-dimensional $3$-partite complex, and denote its parts by $X(0) = X[1] \dunion X[2] \dunion X[3]$. Suppose that for every $v \in X[1]$, $X_v$ is a $\eta$-bipartite expander. Denote by $A^{1,2}$, $A^{1,3}$  and $A^{2,3}$ the bipartite walks between $(V_1,V_2)$ $(V_2,V_3)$ and $(V_2,V_3)$ respectively. Then
    \[ \lambda (A^{2,3}) \leq \eta + \lambda(A^{1,2}) \lambda(A^{1,3}) . \]
    }
\end{lemma}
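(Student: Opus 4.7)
The plan is to use Garland's method: express the bipartite form associated with $A^{2,3}$ by conditioning on the shared vertex $v\in X[1]$ of a triangle, and then bound the contribution using the link expansion assumption together with the spectral gap of the walks $A^{1,2}$ and $A^{1,3}$. Throughout I write $\lambda(A^{i,j})$ as the supremum of $\Iprod{f_i, A^{i,j} f_j}$ over functions $f_i\in\ell_2(X[i])$ and $f_j\in\ell_2(X[j])$ of unit norm that are orthogonal to the constant functions (this is the second singular value of the bipartite operator).

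First, I would fix such $f_2$ and $f_3$ and observe that the natural weights on the edges of $X[\{2,3\}]$ come from the triangles of $Y$, so
\[
\Iprod{f_2, A^{2,3} f_3} \;=\; \Ex[(v,x_2,x_3)\in X[\{1,2,3\}]]{f_2(x_2) f_3(x_3)} \;=\; \Ex[v\in X[1]]{\Iprod{f_2\!\restriction_{X_v},\; B_v\, f_3\!\restriction_{X_v}}_{X_v}},
\]
where $B_v$ is the bipartite operator of the link $X_v$ and the inner product on the right uses the link weights. For each $v$, decompose the restriction of $f_i$ to $X_v$ as $f_i\!\restriction_{X_v} = c_{v,i}\cdot\mathbf{1} + g_{v,i}$, where $c_{v,i}$ is the average of $f_i$ on $X_v\cap X[i]$ and $g_{v,i}\perp \mathbf{1}$. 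A direct computation shows $c_{v,2} = (A^{1,2} f_2)(v)$ and $c_{v,3} = (A^{1,3} f_3)(v)$. Since $B_v$ preserves constants and has bipartite operator norm at most $\eta$ on the orthogonal complement,
\[
\Iprod{f_2\!\restriction_{X_v},\; B_v\, f_3\!\restriction_{X_v}}_{X_v} \;\leq\; c_{v,2}\,c_{v,3} \;+\; \eta\,\|g_{v,2}\|\,\|g_{v,3}\|.
\]

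Now take expectations over $v$. For the first term, by definition $\Ex{c_{v,2}c_{v,3}} = \Iprod{A^{1,2} f_2,\; A^{1,3} f_3}$. Since $f_2\perp\mathbf{1}$ and $f_3\perp\mathbf{1}$, both $A^{1,2}f_2$ and $A^{1,3}f_3$ are orthogonal to constants in $\ell_2(X[1])$, so by the definition of the spectral gaps and Cauchy–Schwarz this is at most $\lambda(A^{1,2})\,\lambda(A^{1,3})\,\|f_2\|\,\|f_3\|$. For the second term, Cauchy–Schwarz gives $\Ex[v]{\|g_{v,2}\|\|g_{v,3}\|}\leq (\Ex[v]{\|g_{v,2}\|^2})^{1/2}(\Ex[v]{\|g_{v,3}\|^2})^{1/2}$, and since $\|g_{v,i}\|^2\leq \|f_i\!\restriction_{X_v}\|^2$ the law of total variance yields $\Ex[v]{\|f_i\!\restriction_{X_v}\|^2}=\|f_i\|^2$. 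Combining,
\[
\Iprod{f_2, A^{2,3}f_3}\;\leq\; \lambda(A^{1,2})\lambda(A^{1,3})\,\|f_2\|\|f_3\| + \eta\,\|f_2\|\|f_3\|,
\]
which is exactly the bound claimed. Taking suprema over unit $f_2,f_3$ perpendicular to constants gives $\lambda(A^{2,3})\leq\eta+\lambda(A^{1,2})\lambda(A^{1,3})$.

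The only step requiring care is verifying the first identity: that the weight of an edge $\{x_2,x_3\}$ really decomposes as an average over triangles $\{v,x_2,x_3\}$, using that the complex is pure $2$-dimensional so every such edge lies in at least one triangle, and that the link weights of $X_v$ are the ones inherited by restricting the top-dimensional weights. This is where the partite structure is comfortable because the three parts of a triangle are distinguished by color, so there is no ambiguity when folding back from triangles to link-edges. After that, everything reduces to the two applications of Cauchy--Schwarz and the assumed link expansion, which is standard Garland-style bookkeeping.
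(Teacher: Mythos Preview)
Your proof is correct and follows essentially the same Garland-style argument as the paper: localize the bilinear form $\iprod{f_2,A^{2,3}f_3}$ to the links $X_v$ for $v\in X[1]$, split each restricted function into its constant part (which is $A^{1,i}f_i(v)$) and orthogonal part, bound the orthogonal contribution by $\eta$ via link expansion, and bound the constant contribution by $\lambda(A^{1,2})\lambda(A^{1,3})$ via Cauchy--Schwarz. The only cosmetic difference is that the paper handles the orthogonal term via AM--GM ($\|g_{v,2}\|\|g_{v,3}\|\le\tfrac12(\|g_{v,2}\|^2+\|g_{v,3}\|^2)$) before averaging, whereas you apply Cauchy--Schwarz to the expectation directly; both yield the same bound.
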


    \begin{proof}[Proof of \pref{prop:two-sided-graph}]
    Consider the following $2$-dimensional $3$-partite simplicial complex:
    \begin{itemize}
        \item The parts of the complex are $Y[1] = X_a(t-1), Y[2] = X_a(0), Y[3] = \sett{(a', s) \in X(t-1) \times X(d)}{a \dunion a' \subset s}$.
        \item We connect $(a',v,(a'',s)) \in Y(2)$ if $a' = a''$ and $\set{v} \dunion a' \dunion a \subset s$. The probability of choosing some triangle $(a',v,(a'',s))$ is the probability of choosing $s$ given $a$, and then choosing $a', v$ (given that they are disjoint from $a$): \[\cProb{X}{s}{a}\cProb{X}{a'}{s \setminus a}\cProb{X}{v}{s \setminus (a \dunion a')}. \]
    \end{itemize}
    We notice the following:
    \begin{enumerate}
        \item $A^{2,3}$ is the bipartite operator of the bipartite walk between $L,R$ in the we defined in the proposition.
        \item $A^{1,2}$ is the bipartite operator of the complement walk in the link of $a$, and from \pref{claim:two-sided-complement-walk-is-a-good-expander} $\lambda(A^{1,2}) \leq d^2 \lambda(X_a) = \bigO{\frac{1}{\ell}}$.
        \item for every $a' \in Y\sqbrack{1}$, the bipartite operator of the link of $a'$ is the containment walk between $X_{a \dunion a'}(0)$ and $X_{a \dunion a'}(d-2\ell)$. Recall that $\frac{1}{2}d - \ell = \Omega(d)$, thus $d-2\ell = \Omega(d)$. Hence this walk is also an $\bigO{\frac{1}{\sqrt{\ell}}}$ expander.
    \end{enumerate}
    Hence we can apply \pref{lem:structure-trickling-lemma} and conclude that
    \[ \lambda(A^{2,3}) \leq \bigO{\frac{1}{\sqrt{\ell}}} + \bigO{\frac{1}{\ell}} = \bigO{\frac{1}{\sqrt{\ell}}}.\]
    \end{proof}
\end{enumerate}
\end{proof}
\subsection{Agreement for One-Sided Partite High Dimensional Expanders}
\label{sec:agreement-in-hdx-one-sided}
We continue and prove an agreement theorem on one-sided partite high dimensional expanders. For a definition of partite simplicial complexes, and other terminology, see \pref{sec:partite-simplicial-complexes}.

In the proof of the two-sided case \pref{thm:main-agreement-theorem-two-sided}, we used a single STAV-structure derived from the sets $X(k),X(\ell),X(\ell-1),X(0)$. In the one-sided case the STAV defined above is not $\gamma$-good, so we need to work a little harder. As it turns out when the one-sided spectral expander is also $(d+1)$-partite, we can use the colored walks to substitute for the complement walk. Details follow.

\subsubsection*{Proof of \pref{thm:main-agreement-theorem-one-sided}}
As in \pref{thm:main-agreement-theorem-two-sided}, we are given an ensemble $\FF$ with $\disagr{D_{d,\ell}}(\FF) = \varepsilon$. We need to find a global function $\glob:X(0) \to \Sigma$ so that
\[ \Prob[s]{f_s \ane{\gamma} \rest{G}{s}} = \bigO{\varepsilon}.\]

Without loss of generality, $\ell > 1$. For any two disjoint colors $I,J$ of size $\ell$, we define the $I,J$-STAV-structure as follows:
\begin{enumerate}
    \item $S_{(I,J)} = \sett{s \in X(k)}{col(s) \supset I \dunion J}$.
    \item $T_{(I,J)} = \sett{t \in X(\ell)}{col(t) \cap (I \dunion J) \in \set{I,J}}$, i.e. $t$ so that it's color contains $I$ and is disjoint from $J$, or vice versa.
    \item $A_{(I,J)} = \sett{a \in X(\ell-1)}{col(a) = I \text{ or } col(a)=J}$.
    \item $V_{(I,J)} = \sett{v \in X(v)}{col(v) \notin I \dunion J}$.
\end{enumerate}

The $\sts$-test associated with the STAV-structure is:
\begin{enumerate}
    \item Choose $t \in X(\ell)$ given that $col(t)$ either contains $I$ and is disjoint from $J$, or contains $J$ and is disjoint from $I$.
    \item Choose $s_1,s_2 \supset t$ independently given that $col(s_i) \supset I\dunion J$ for $i=1,2$.
\end{enumerate}
We denote the test associated with this STAV-structure as the \emph{$I,J$-STAV-test}.

The $I,J$-STAV distribution is choosing $s,t$ as above, and then setting $a \subset t$ so that $col(a) = I$ or $col(a) = J$, and $\set{v} = t \setminus a$.  We denote the $I,J$-STAV distribution by $D_{I,J}$.
These STAV-structures come with $\VASA$-distributions that are choosing $v,s$ as in the $I,J$-STAV distribution, and taking $a_1,a_2$ to be the subsets of $s$ of colors $I,J$ respectively. 

We denote the surprise of $\FF$ in the $I,J$-STAV structure by $\surp_{(I,J)}(\FF)$, and the rejection probability by $\disagr{I,J}(\FF)$.

\begin{lemma}
\label{lem:one-sided-stav-are-1-over-t-good}
For any two disjoint colors $I,J$, each of size $\ell$, the STAV-structure above is $\gamma = \bigO{\frac{1}{\ell}}$-good.
\end{lemma}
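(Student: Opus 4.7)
My plan is to mirror the verification of the six assumptions in \pref{def:good-stav-structure} exactly as was done in the two-sided case (\pref{thm:main-agreement-theorem-two-sided}), but replacing each use of the complement walk and \pref{claim:two-sided-complement-walk-is-a-good-expander} with the colored walk and \pref{claim:colored-walk-is-a-good-expander}. The crucial point is that colored walks between two disjoint color sets are good expanders even when $X$ is only \emph{one-sided} partite, precisely because the $(d+1)$-partite structure breaks the adversarial examples (like the star around a common vertex) that force the lower/upper walks in the non-partite one-sided setting to have constant spectral gap. Since colors $I$ and $J$ in the STAV are disjoint by construction, every walk that appears in the analysis will connect disjoint color classes and hence qualify for \pref{claim:colored-walk-is-a-good-expander}.

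Concretely, I would verify the six assumptions as follows. For \pref{ass:a-not-large}, note that a vertex $v \in s$ lies in $\adj{a}$ iff $col(v) \notin I \cup J$; since $col(s)$ has $k+1 \geq 4\ell+4$ colors and only $2\ell$ are in $I \cup J$, at least a $\tfrac12$-fraction of the colors of $s$ (hence, by the partite structure, vertices of $s$) lie outside $I \cup J$. For \pref{ass:global-A-V-graph}, the reach graph is a bipartite graph between $V_{I,J}$ and $A_{I,J}$; conditioned on the color $I$ or $J$ of $a$, this is exactly the $I,\{c\}$ (or $J,\{c\}$) colored walk for a color $c \notin I \cup J$, so \pref{claim:colored-walk-is-a-good-expander} gives $\lambda = O(\ell \cdot \lambda(X)) = O(1/\ell)$. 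For \pref{ass:edge-expander}, the $STS_a$-graph is (isomorphic to) the containment/colored walk in the link $X_a$ from $X_a(k-\ell)$ to itself via a single colored vertex in $X_a$; since $k - \ell = \Omega(k)$, \pref{thm:containment-graph-is-a-good-expander} (applied to the link, which remains a partite one-sided HDX with appropriate parameter) gives it is a $O(1/\ell)$-two-sided expander, hence a $\tfrac13$-edge expander. For \pref{ass:v-a-graph}, once $a$ and $v$ are fixed, $t = a \cup \{v\}$ is fixed so $s_1, s_2$ are chosen independently and the graph is a $0$-expander. For \pref{ass:vasa-graphs-v}, fix $v$ and consider the $vASA_v$-graph, whose vertices are $a \in A_{I,J}$ containing $v$ in their upper neighborhood; the bipartite double cover between the $I$-colored and $J$-colored sides is precisely the $I,J$-colored walk inside the link $X_v$, so \pref{claim:colored-walk-is-a-good-expander} bounds its second eigenvalue by $|I||J|\lambda(X_v) = O(\ell^2/(k^2\ell)) = O(1/\ell)$.

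The main obstacle, as in the two-sided case, will be \pref{ass:vasa-graphs-a}: bounding the $VAS_a$-graph. I intend to prove this by an exact analogue of \pref{prop:two-sided-graph}. That is, construct a $2$-dimensional $3$-partite auxiliary complex $Y$ with parts $Y[1] = \{a' \in A_{I,J} : col(a') \neq col(a)\}$, $Y[2] = \adj a$, and $Y[3] = \{(a',s) : a \cup a' \subseteq s \in S_{I,J}\}$, with triangles weighted by the natural distribution induced from the link of $a$. The $A^{1,2}$ operator becomes a colored walk in the link $X_a$ between the opposite-color faces and the outside-color vertices, which is $O(1/\ell)$-expanding by \pref{claim:colored-walk-is-a-good-expander}; the $A^{1,3}$ operator is a containment/colored walk from $(\ell-1)$-faces to large top faces in $X_a$ and is $O(1/\sqrt\ell)$-expanding by \pref{cor:containment-generalized-graph-is-a-good-expander} (using $k - 2\ell = \Omega(k)$); and each link $Y_{a'}$ is again a containment walk in the link $X_{a \cup a'}$ which is a one-sided partite HDX of appropriate quality, so the links are $O(1/\sqrt\ell)$-bipartite expanders. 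Then \pref{lem:structure-trickling-lemma} yields $\lambda(A^{2,3}) = O(1/\sqrt\ell)$, which is exactly $\sqrt\gamma$ as required.

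Finally, assumption \pref{ass:AVS-graph-good-sampler} holds for $r \gamma = O(1/\ell)$: the $AV_s$-graph for $s \in S_{I,J}$ is a bipartite colored containment graph inside the (deterministic) simplex on $s$, which is a standard two-sided expander by an elementary calculation, and hence a good sampler via \pref{lem:sampler-lemma}. Combining all six checks gives that the STAV is $O(1/\ell)$-good, completing the proof of \pref{lem:one-sided-stav-are-1-over-t-good}.
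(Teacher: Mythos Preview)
Your overall plan is correct and mirrors the paper's approach closely. However, there is a genuine gap in your treatment of \pref{ass:edge-expander}. The $STS_a$-graph is \emph{not} the unrestricted containment walk in $X_a$: its vertices are only those $s' \in X_a(k-\ell)$ with $J \subset \col(s')$ (since $s \in S_{(I,J)}$ forces $\col(s) \supset I \cup J$), and two such $s'$ are adjacent only via a vertex $v \in X_a(0)$ with $\col(v) \notin J$. \pref{thm:containment-graph-is-a-good-expander} gives expansion for the full containment walk and says nothing about this doubly color-restricted version. The paper handles this by proving a separate statement (\pref{prop:one-sided-weird-graph}): the bipartite graph between $\{v : \col(v)\notin J\}$ and $\{s : J\subset \col(s)\}$ with containment edges is a $O(1/\sqrt{k-\ell})+k\gamma$ expander, proved via \pref{lem:structure-trickling-lemma} using a $3$-partite auxiliary complex with middle layer $X[J]$. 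This is precisely the kind of trickling construction you already sketch for \pref{ass:vasa-graphs-a}; you just need to recognize that \pref{ass:edge-expander} requires it too.

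Relatedly, for \pref{ass:vasa-graphs-a} the paper's route is actually simpler than yours. Because $X$ is $(d+1)$-partite, every $s \supset a$ has a \emph{unique} subface $a'$ of color $J$, so the right side $\{(a',s)\}$ of the $\VAS a$-graph collapses to $\{s' \in X_a(k-\ell) : J \subset \col(s')\}$. The $\VAS a$-graph is then \emph{literally} the bipartite graph of \pref{prop:one-sided-weird-graph}, so once that proposition is in hand no further $3$-partite construction is needed. Your proposed construction would also work (it is essentially how \pref{prop:one-sided-weird-graph} is proved anyway), but the paper abstracts the lemma once and reuses it for both \pref{ass:edge-expander} and \pref{ass:vasa-graphs-a}. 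Two smaller remarks: in your $3$-partite sketch the operator $A^{1,3}$ is not a containment walk (each $a'$ is adjacent only to pairs $(a',s)$ with that same $a'$), so your claimed $O(1/\sqrt\ell)$ bound on it is wrong, though the trivial bound $\lambda(A^{1,3})\le 1$ suffices since $\lambda(A^{1,2})=O(1/\ell)$; and for \pref{ass:a-not-large} the bound is actually $1$, not merely $\tfrac12$, since by construction every $v \in V_{(I,J)}$ has $\col(v)\notin I\cup J$ and hence lies in $\adj a$.
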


For a pair of disjoint $(I,J)$-we would like to define a global functions $G_{I,J}$, that will be defined on all vertices so that $col(v) \notin I,J$ (using  \pref{thm:main-STAV-agreement-theorem}). After that, we would like to stitch the $G_{I,J}$'s together. In fact, we only need two such global functions, to cover vertices of all colors.

However, in order to invoke \pref{thm:main-STAV-agreement-theorem}, we need that both $\disagr{I,J}(\FF)=O(\varepsilon)$ and $\surp_{(I,J)}(\FF) = \bigO{\frac{1}{\ell}}$. Furthermore, we will need to use the $(d,\ell)$-agreement test to stitch the two global functions together.

We define an additional agreement test. This test will be used to stitch the $G_{I,J}$'s together. We call it the \emph{$(I,J)$-in-one-set test}:
\begin{enumerate}
    \item Choose $t \in X(\ell)$ (with no conditioning on the color).
    \item Choose $s_1, s_2 \supset t$ independently given that $\col(s_1) \supset I,J$.
\end{enumerate}
We denote the rejection probability of this test by $\disagr{I,J}^{1-set}(\FF)$.

The following lemma states formally what we require from the $I,J$-STAV distributions:
\begin{lemma}
\label{lem:four-good-colors}
There exists four disjoint colors $I_1,J_1,I_2,J_2$ where for $i = 1,2$:
\begin{enumerate}
    \item $\disagr{I_i,J_i}(\FF) = \bigO{\varepsilon}$.
    \item $\surp_{(I_i,J_i)}(\FF) = \bigO{\frac{1}{\ell}}$.
\end{enumerate}

Furthermore, we can require from $I_i,J_i$ that
\begin{enumerate}[start=3]
    \item $\disagr{I_i,J_i}^{1-set}(\FF) = \bigO{\varepsilon}$.
\end{enumerate}
\end{lemma}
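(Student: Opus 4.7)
The plan is a three-step probabilistic argument over random pairs of disjoint colour classes, combined with Markov's inequality and a dimension-counting extraction at the end.

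\textbf{Step 1 (averaged rejection bounds).} Let $\mathcal{P}$ be the uniform distribution on ordered pairs $(I,J)$ of disjoint $\ell$-subsets of $[d+1]$. I will show $\EE_{(I,J)\sim\mathcal{P}}\disagr{I,J}(\FF) = O(\varepsilon)$ and $\EE_{(I,J)\sim\mathcal{P}}\disagr{I,J}^{1-set}(\FF) = O(\varepsilon)$ by coupling: draw $(t,s_1,s_2)\sim D_{d,\ell}$ and an independent $(I,J)\sim\mathcal{P}$. Whether $(t,s_1,s_2)$ is valid for the $I,J$-STAV test (resp.\ the $I,J$-in-one-set test) depends only on the colour triple $(\col(t),\col(s_1),\col(s_2))$, and by $(d+1)$-partiteness together with $k+1 \geq 4\ell + 5$ (which gives $I\cup J$ of size $2\ell$ enough room in each $\col(s_i)$), the conditional validity probability is bounded below by a universal constant. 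Hence the base rejection $\varepsilon$ controls the averaged conditional rejection up to a universal factor.

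\textbf{Step 2 (averaged surprise bound).} To establish $\EE_{(I,J)\sim\mathcal{P}}\surp_{(I,J)}(\FF) = O(1/\ell)$, observe that for a fixed $t$ with $\col(t) = C_t$ of size $\ell+1$, the pairs $(I,J)\in\supp(\mathcal{P})$ with $t\in T_{(I,J)}$ force $I$ or $J$ to equal $C_t\setminus\{c\}$ for some $c\in C_t$, and then the induced $v\in t$ is the colour-$c$ vertex. By symmetry of $\mathcal{P}$ the law of $v$ inside $t$ is uniform. The surprise event requires the disagreement of $(f_{s_1},f_{s_2})$ on $t$ to lie entirely on $v$, which can happen only if the disagreement set in $t$ has size exactly one; the conditional probability that the uniformly chosen $v$ hits that single disagreement is $1/(\ell+1)$. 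Combining with the rejection event gives the bound, in the spirit of \pref{claim:agreement-on-last-vertex}.

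\textbf{Step 3 (extracting two disjoint good pairs).} Apply Markov's inequality to each of the three averages with a sufficiently large absolute constant $M$ to obtain a set $\mathcal{G}$ of "good" pairs, with $\mathcal{P}(\mathcal{G}) \geq 9/10$, whose members simultaneously satisfy all three desired bounds (with the implicit constants enlarged by $M$). Fix any $(I_1,J_1)\in\mathcal{G}$. The fraction of pairs $(I,J)\in\supp(\mathcal{P})$ with $(I\cup J)\cap(I_1\cup J_1)=\emptyset$ equals $\binom{d+1-2\ell}{\ell}\binom{d+1-3\ell}{\ell}/\bigl[\binom{d+1}{\ell}\binom{d+1-\ell}{\ell}\bigr]$, which under $d+1\geq 4\ell+5$ is bounded below by a positive absolute constant. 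A union bound then yields $(I_2,J_2)\in\mathcal{G}$ with $(I_2\cup J_2)\cap(I_1\cup J_1)=\emptyset$, delivering the four pairwise-disjoint colour sets required by the lemma.

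The main obstacle is a clean execution of Step 2: one must keep careful track of the joint law of $(t,s_1,s_2,I,J)$ and verify that after averaging over $(I,J)$ the induced vertex $v\in t$ is symmetric enough for the elementary $1/(\ell+1)$ count to go through, without picking up hidden dependencies on $(s_1,s_2)$ that would destroy the $O(1/\ell)$ bound.
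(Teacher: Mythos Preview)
Your overall strategy (average over colour choices, apply Markov, extract good tuples) is the paper's strategy too, but two of your three steps contain gaps that the paper handles differently.

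\textbf{Step 3 fails as written.} The fraction of pairs $(I,J)$ disjoint from a fixed $(I_1,J_1)$ equals
\[
\frac{\binom{d+1-2\ell}{\ell}\binom{d+1-3\ell}{\ell}}{\binom{d+1}{\ell}\binom{d+1-\ell}{\ell}}
=\frac{[(d+1-2\ell)!]^2}{(d+1)!\,(d+1-4\ell)!},
\]
and at the minimal admissible value $d+1=4\ell+5$ this is $2^{-\Theta(\ell)}$, not a universal constant. So a $9/10$-dense set $\mathcal G$ of good pairs need not contain two mutually disjoint members. The paper avoids this entirely by running the averaging argument over the uniform distribution on \emph{ordered 4-tuples} $(I_1,J_1,I_2,J_2)$ of pairwise disjoint $\ell$-sets; each property is shown to hold for at least a $0.8$ fraction of 4-tuples (the marginal of $(I_i,J_i)$ in a random 4-tuple is uniform on disjoint pairs, so the pairwise expectation bounds transfer), and one intersects.

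\textbf{Step 1 for $\disagr{I,J}$ has a genuine gap.} Validity of $(t,s_1,s_2)$ for the $I,J$-STS test requires $I\cup J\subset\col(s_1)\cap\col(s_2)$. For fixed colour triple and random $(I,J)\sim\mathcal P$, the validity probability is proportional to $\binom{m}{\ell}$ with $m=|\col(s_1)\cap\col(s_2)|-(\ell+1)$; when $k<d$ this is \emph{not} constant across triples and is certainly not bounded below by a universal constant. Consequently $\E_{(I,J)}\disagr{I,J}(\FF)$ is a $\binom{m}{\ell}$-reweighted rejection probability, and an adversarial ensemble that correlates rejection with large $m$ could inflate it far beyond $\varepsilon$. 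The paper handles this by introducing an auxiliary one-sided test $\disagr{I,J}^*$ that conditions on $\col(t)$ and on $\col(s_1)\supset I\cup J$ but \emph{not} on $\col(s_2)$; for that test the validity count is $2(\ell+1)\binom{k-\ell}{\ell}$, constant across triples, so $\E_{(I,J)}\disagr{I,J}^*=\varepsilon$ exactly. Then, for each good $(I,J)$, it couples an extra face $\tilde s_1\supset t$ with $\col(\tilde s_1)\supset I\cup J$ and uses the triangle inequality through the unconditioned $s_2$ to pass from $\disagr{I,J}^*$ to $\disagr{I,J}$.

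Your treatment of $\disagr{I,J}^{1\text{-}set}$ (which conditions only on $\col(s_1)$) is fine, and your Step~2 is in the same spirit as the paper's. The dependency obstacle you flag at the end is the same one that makes Step~1 delicate; once you switch to one-sided conditioning as above and to 4-tuples in Step~3, it resolves along the lines you sketch.
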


Given the first two items in the lemma above, we can invoke \pref{thm:main-STAV-agreement-theorem} to get a global function
$G_{I_i,J_i}:V_{(I,J)} \to \Sigma$ so that for $i=1,2$
\[ \Prob[s \sim D_{I_i,J_i}]{f_s \ane{\gamma} \rest{G_{I_i,J_i}}{s}} = \bigO{\varepsilon}. \]

We glue these two functions to one global function $G: X(0) \to \Sigma$:
\[ G(v) = \begin{cases}
    G_{I_1,J_1}(v) & col(v) \notin I_1 \dunion I_2.\\
    G_{I_2,J_2}(v) & otherwise.
    \end{cases}
\]

Here's a short and informal overview the proof of the theorem given \pref{lem:four-good-colors}: We will choose $(s,t)$ as in the $d,\ell$-agreement test. Then we will choose an additional $t \subset s_1,s_2$, so that $s_1\supset I_1\dunion J_1$ and $s_2\supset I_2\dunion J_2$.

On the one hand, for $i=1,2$ the choice of $(s,t,s_i)$ is done as in the $(I,J)$-in-one-set distribution. By the third item of \pref{lem:four-good-colors}, $\rest{f_s}{t} \ne \rest{f_{s_i}}{t}$ with probability $\bigO{\varepsilon}$.

 On the other hand, by the first two items in \pref{lem:four-good-colors}, $\prob{\rest{f_{s_i}}{t} \ane{\gamma} \rest{G}{t}} = \bigO{\varepsilon}$. By union bound, we will get our theorem. Details follow.

\begin{proof}[Proof of \pref{thm:main-agreement-theorem-one-sided}]
First, we show that in order to prove
\[\Prob[s]{f_s \ane{\gamma} \rest{G}{s}} = \bigO{\varepsilon}\]
it is enough to prove that
\begin{equation} \label{eq:one-sided-key-probability}
\Prob[s \in x(d),t \subset s, t \in X(\ell)]{\rest{f_s}{t} \ane{\frac{1}{3}\gamma} \rest{G}{t}} = \bigO{ \varepsilon}.
\end{equation}

Denote by
\[H = \left\{s: \Prob[v \in s]{f_s(v) \ane{\gamma} G(v)} \geq \frac{1}{\ell}\right\}.\]
We need to show that given \eqref{eq:one-sided-key-probability}, $\prob{H} = \bigO{\varepsilon}$. Fix some $s \in H$, i.e. $f_s \ane{\gamma} \rest{G}{s}$.

Consider the following containment graph for $s$:
\[ L = s - \text{ the vertices in $s$,}\]
\[ R = \binom{s}{\ell}, \text { the subsets of $s$ of size $\ell$.}\]
This graph is a $\frac{2}{\sqrt{\ell}}$-bipartite expander by \pref{thm:containment-graph-is-a-good-expander}.
By \pref{lem:sampler-lemma}, this graph is a \emph{$\bigO{\frac{1}{\ell}}$-sampler graph}. Hence if
\[ \Prob[v \in s]{f_s(v) \ane{\gamma} G(v)} \geq \frac{1}{\ell}\]
then the set
\[T^*_s  = \sett{t \in \binom{s}{\ell}}{\Prob[v \in t]{f_s(v) \ne G(v)} < \frac{1}{3}\gamma }\]
has probability of at most $\frac{1}{3}$. In other words, at least $\frac{2}{3}$ of the $t \in \binom{s}{\ell}$ have that property that $\rest{f_s}{t} \ane{\frac{1}{3}\gamma} \rest{G}{t}$.

Hence $\bigO{\varepsilon} \geq \frac{2}{3}\prob{H}$, and we conclude that there may be on $\bigO{\varepsilon}$ of $s$'s so that $f_s \ane{\gamma} \rest{G}{s}$.

\bigskip

We move to showing \eqref{eq:one-sided-key-probability}. Observe the following distribution:
\begin{enumerate}
    \item Choose $s \in X(k)$ and $t \subset s$ according to the probability of the simplicial complex.
    \item Choose $\Delta \in X(d)$ given that $t \subset \Delta$.
    \item Choose two $s_1,s_2 \subset \Delta$ given that they also contain $t$, and so that $I_1 \dunion J_1 \subset \col(s_1)$ and $I_2 \dunion J_2 \subset \col(s_2)$.
\end{enumerate}

In a simplicial complex, a $k$-face $s_1$ (respectively $s_2$) is chosen by choosing a $d$-face $\Delta \in X(d)$ and choosing $s \subset \Delta$. Thus in this distribution $(s,t,s_1,s_2)$ are chosen so that
the marginals $(s,t,s_i)$ are chosen according to the \emph{$(I_i,J_i)$-in-one-set test}.

By the last item of \pref{lem:four-good-colors}, $f_s$ disagrees  on $t$ with $s_1$ or $s_2$, with probability $\bigO{\varepsilon}$.

\bigskip

Denote  $t_1 = \sett{v\in t}{col(v)\not\in I_1\dunion J_1}$ and $t_2 = \sett{v\in t}{col(v)\not\in I_2\dunion J_2}$, clearly $t = t_1 \cup t_2$ and some vertices might appear in both sets.

We would like to invoke \pref{thm:stronger-equation-in-main-theorem}, the extension to \pref{thm:main-STAV-agreement-theorem} to get that
\[ \Prob{\rest{f_{s_i}}{t_i} \ane{\frac{1}{2}\gamma} \rest{G}{t_i}} = \bigO{\varepsilon}, \]
Since if $|t_i| \leq \ell+1$, this implies that $\rest{f_{s_i}}{t_i} = \rest{G}{t_i}$.
Indeed by \pref{lem:four-good-colors}, we know that $\disagr{I_i,J_i}(\FF)=\bigO{\varepsilon}$ in the $I_i,J_i$-STAV-structure, and that $\surp(X_{(I,J)},F) = \bigO{\frac{1}{\ell}}$. Consider the sampling of $(v,a,s,t_i)$ where $a \subset s$ is of color $I_i$ or $J_i$, and $v \in t_i$ is chosen uniformly at random. It holds that $(v,a,s)$ is chosen as in the $I_i,J_i$-STAV-structure, hence by \pref{thm:stronger-equation-in-main-theorem},
\[ \Prob{\rest{f_{s_i}}{t_i} \ane{\gamma} \rest{G}{t_i}} = \bigO{\varepsilon}. \]

by the statement \eqref{eq:stronger-equation-in-main-theorem} of \pref{thm:main-STAV-agreement-theorem}.

\medskip

Hence we bound the probability \eqref{eq:one-sided-key-probability} by
\[ \leq \sum_{i=1}^2 \left ( \disagr{I_i,J_i}(\FF) + \prob{\rest{f_{s_i}}{t_i} \ane{\frac{1}{2}\gamma} \rest{G}{t_i}} \right ).\]
Which is $\bigO{\varepsilon}$ by \pref{lem:four-good-colors}.
\end{proof}

\begin{proof}[Proof of \pref{lem:four-good-colors}]
For this lemma, we consider the uniform distribution on the $4$-tuples of disjoint colors $I_1,J_1,I_2,J_2 \sim C_{\ell}^{\dunion}$.

To show there exists four disjoint colors $I_1,J_1,I_2,J_2$ with the properties in the lemma statement, we show that each property is satisfied separately with large probability, thus their intersection has non-zero probability as well. We do this by an expectation argument, and then use Markov's inequality.

\paragraph{Step 1: more than $0.8$ of the colors $4$-tuples $I_1,J_1,I_2,J_2$ satisfy the second item in \pref{lem:four-good-colors}}That is, we show the ``surprise'' $\surp_{(I_i,J_i)}(\FF) = \bigO{\frac{1}{\ell}}$.

For this, note that for \emph{any} color $J$,
\[ \cProb{(s_1,s_2,a,v)}{\rest{f_{s_1}}{t} \ne \rest{f_{s_2}}{t}}{\rest{f_{s_1}}{a} = \rest{f_{s_2}}{a} \ve B_J} = \bigO{\frac{1}{\ell}}, \]
where $B_J$ is the event that $\col(s_i \setminus (a\dunion \set{v})) \supset J$ for $i=1,2$. This is due to the same argument in \pref{claim:agreement-on-last-vertex}. Hence

\[\Ex[I_1,J_1,I_2,J_2 \sim C_{\ell}^{\dunion}]{\sum_{i=1}^2 \cProb{(s_1,s_2,a,v)}{f_{s_1}(v) \ne f_{s_2}(v)}{\rest{f_{s_1}}{a} = \rest{f_{s_2}}{a} \ve B_{J_i} \ve \col(a) = I_i } }\]
\[ = \bigO{\frac{1}{\ell}}. \]

By Markov's inequality $0.8$ of the $4$-tuples $I_1,J_1,I_2,J_2$ satisfy that
\[ \cProb{(s_1,s_2,a,v)}{f_{s_1}(v) \ne f_{s_2}(v)}{\rest{f_{s_1}}{a} = \rest{f_{s_2}}{a} \ve B_{J_i} \ve \col(a) = I_i } = \bigO{\frac{1}{\ell}}. \]

\paragraph{Step 2: more than $0.8$ of the $4$-tuples of disjoint colors $I_1,J_1,I_2,J_2$ satisfy the first item in \pref{lem:four-good-colors}} That is, that when we choose $(s_1,t,s_2)$ according to the $I_i,J_i$-STAV distribution, then the rejection probability is $\bigO{\varepsilon}$.

First recall that by our assumption
\[ \disagr{D_{d,\ell}}(\FF) = \Prob[(s_1,t,s_2)\sim D_{d,\ell}]{f_{s_1} \ne f_{s_2}} = \varepsilon.\]

We can condition this test on $\col(t) = I \cup \set{p} \text{ or } J \cup \set{p}$  for $p \notin I,J$ and on $\col(s_1) \supset I,J$ (but no conditioning on $s_2$).

This conditioning is different from the $I,J$-STAV-structure $STS$-test since we don't condition on $\col(s_2) \supset I,J$. It is also different from the $I,J$-in-one-set test since we \emph{do} condition on $\col(t) = I \cup \set{p} \text{ or } J \cup \set{p}$.

Denote the probability for this conditioned agreement test by $\disagr{I,J}^*(\FF)$. We know that
\[ \Ex[I,J]{\disagr{I,J}^*(\FF)} = \disagr{D_{d,\ell}}(\FF) = \varepsilon. \]
Hence by Markov's inequality, $0.8$ of the disjoint $4$-tuples $I_1,J_1,I_2,J_2$ satisfy $\disagr{I_i,J_i}^*(\FF) \leq \bigO{\varepsilon}$.

For a pair $I_i,J_i$, we think about the following experiment $(s_1,t,s_2,s')$:
\begin{enumerate}
    \item Choose $(s_1,t,s_2)$ as in the $Agree(\FF)_{I,J}$ test, i.e.  $\col(t) = I_i \cup \set{p} \text{ or } J_i \cup \set{p}$  for $p \notin I_i,J_i$ and $\col(s_1) \supset I_i,J_i$.
    \item Choose $\tilde{s_1}$, given $s_1 \subset t$ \emph{and} conditioning on $\col(\tilde{s_1}) \supset I_i, J_i$.
\end{enumerate}
Observe the following:
\begin{enumerate}
    \item The marginal $(s_1,t,\tilde{s_1})$ is according to the agreement test in the $I,J$-STAV-structure $STS$-test.
    \item The marginals $(s_1,t,s_2)$ and $(s_1,t,\tilde{s_1})$ is according to the  $\disagr{I,J}^*(\FF)$ test.
\end{enumerate}

If $\disagr{I_i,J_i}^*(\FF) = \bigO{\varepsilon}$, then by a union bound we get that
\[ \prob{\rest{f_{s_1}}{t} \ne \rest{f_{\tilde{s_1}}}{t}} \leq 1-2\prob{\rest{f_{s_1}}{t} \ne \rest{f_{s_2}}{t}} \rem{= 2\bigO{\varepsilon}} = \bigO{\varepsilon}.\]

\paragraph{Step 3: more than $0.8$ of the colors $4$-tuples $I_1,J_1,I_2,J_2$ satisfy the third item in \pref{lem:four-good-colors},} That is, that when we choose $(s_1,t,s_2)$ by the $I,J$-in-one-set distribution is $\disagr{I,J}^{1-set}(\FF) = \bigO{\varepsilon}$.

This step follows the same reasoning as in step 1 or 2. the agreement in the $d,\ell$-agreement test is $\varepsilon$. Hence by Markov's inequality, $0.8$ of pairs $I_1,J_1,I_2,J_2$ have the property that $\disagr{I,J}^{1-set}(\FF) = \bigO{\varepsilon}$ in the $I,J$-in-one-set distribution test.

\bigskip

From the three steps above, the size of the intersection of the three properties is lower bounded by $0.4$, by a union bound. In particular it is not empty.

\end{proof}

We move towards proving \pref{lem:one-sided-stav-are-1-over-t-good}. We need the following proposition, that containment walks in one-sided high dimensional expanders have a spectral gap even when conditioning on color:
    \begin{proposition}
        \label{prop:one-sided-weird-graph}
        Let $X$ be a $\gamma$-one sided $(d+1)$-partite high dimensional expander. Let $J$ be a color of size $\ell$. Consider the following graph:
        \begin{itemize}
            \item $L = \sett{v \in X(0) }{ col(v) \notin J}$.
            \item $R = \sett{ s \in X(k)}{ J \subset col(s)}$.
            \item $E = \set{(v,s) : v \subset s}$, where the probability of an edge is $\Pr{(v,s)}$ is to choose $s \in X(k)$ given that $J \subset col(s)$, and then choose $v \in s$ uniformly at random given that $col(v) \notin J$.
        \end{itemize}
        The this graph is a $\bigO{\frac{1}{\sqrt{k-\ell}}} + k\gamma$-bipartite expander.
    \end{proposition}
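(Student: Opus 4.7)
The plan is to mimic the proof of \pref{prop:two-sided-graph}, with the complement walk there replaced by the colored walks of \pref{claim:colored-walk-is-a-good-expander}. I would introduce an auxiliary $3$-partite $2$-dimensional complex $Y$ with parts
\begin{align*}
Y[1] &= \{a \in X(\ell - 1) : \col(a) = J\},\\
Y[2] &= L,\\
Y[3] &= R,
\end{align*}
whose $2$-faces $(a,v,s)$ are the triples satisfying $a \subset s$ and $v \in s$, sampled by first drawing $s \in X(k)$ conditioned on $\col(s) \supset J$, then taking the (uniquely determined) $a = s \cap V_J$, and finally picking $v$ uniformly in $s \setminus a$. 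With this choice the induced bipartite walk $A^{2,3}$ of $Y$ coincides (up to rescaling) with the graph in the proposition, so by \pref{lem:structure-trickling-lemma} it suffices to bound the link expansion $\eta$ at every $a \in Y[1]$ together with the product $\lambda(A^{1,2})\cdot\lambda(A^{1,3})$.

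For the link expansion, the bijection $s \mapsto s \setminus a$ identifies the link of $a$ in $Y$ with the bipartite containment graph between $X_a(0)$ and $X_a(k-\ell)$ inside the link $X_a$. Since $X_a$ inherits the one-sided $\gamma$-link-expansion from $X$, \pref{cor:containment-generalized-graph-is-a-good-expander} yields $\eta \leq \sqrt{1/(k-\ell+1)} + O((k-\ell)\gamma) = O(1/\sqrt{k-\ell}) + O(k\gamma)$. The walk $A^{1,3}$ pulls back functions from $Y[1]$ to $Y[3]$ via the deterministic map $s \mapsto s \cap V_J$, and is therefore a bipartite isometry, so one only retains the trivial bound $\lambda(A^{1,3}) \leq 1$; the remaining task is to show $\lambda(A^{1,2}) = O(k\gamma)$.

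To handle $\lambda(A^{1,2})$, I would decompose $Y[2] = \bigsqcup_{c \in [d+1]\setminus J} V_c$ by the color of $v$. For each fixed $c$ the bipartite subgraph of $A^{1,2}$ between $Y[1]$ and $V_c$ is exactly the colored walk $\comp{J,\{c\}}$, which is an $\ell\gamma$-bipartite expander by \pref{claim:colored-walk-is-a-good-expander}. Splitting a globally mean-zero test function as $g = g^{\mathrm{avg}} + g^\perp$, with $g^{\mathrm{avg}}$ constant on each $V_c$ and $g^\perp$ of zero mean on each $V_c$, the $g^\perp$-contribution is bounded block-by-block via the colored-walk estimate, while the $g^{\mathrm{avg}}$-contribution is controlled by observing that in a $(d+1)$-partite one-sided HDX the conditional marginal $\Pr[\col(v)=c \mid v \sim a]$ agrees with $\mu(V_c)$ up to $O(\ell\gamma)$, again via \pref{claim:colored-walk-is-a-good-expander}. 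Plugging $\lambda(A^{1,2}) = O(\ell\gamma)$ into \pref{lem:structure-trickling-lemma} gives
\[ \lambda(A^{2,3}) \leq \eta + \lambda(A^{1,2})\lambda(A^{1,3}) = O(1/\sqrt{k-\ell}) + O(k\gamma),\]
as desired. The main technical obstacle lies precisely in the $g^{\mathrm{avg}}$-contribution: globally mean-zero vectors need not be mean-zero on each color class, so one cannot invoke \pref{claim:colored-walk-is-a-good-expander} as a black box, and must exploit the partite-HDX structure to show that $A^{1,2}$ sends color-constant mean-zero functions into a small-norm subspace of $\mathbb{R}^{Y[1]}$.
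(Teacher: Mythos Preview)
Your proposal follows essentially the same route as the paper's proof: the same auxiliary $3$-partite complex $Y$ with $Y[1]=X[J]$, $Y[2]=L$, $Y[3]=R$, the same identification of the link at each $a\in Y[1]$ with the containment walk in $X_a$, and the same appeal to \pref{lem:structure-trickling-lemma} together with the trivial bound $\lambda(A^{1,3})\le 1$.

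You correctly flag a subtlety the paper glosses over---the paper simply asserts that $A^{1,2}$ is a ``convex combination'' of the colored walks $\comp{J,\{c\}}$ and concludes $\lambda(A^{1,2})\le k\gamma$, without addressing why a globally mean-zero $g$ need not be mean-zero on each color class $V_c$. However, your proposed resolution is more complicated than necessary. In the $(d+1)$-partite setting with the standard weighting (sample $s_d\in X(d)$, then go down), the conditional color law $\Pr[\col(v)=c\mid v\sim a]$ is \emph{exactly} $\tfrac{1}{d+1-\ell}$ for every $c\notin J$ and every $a$, because $v$ is ultimately a uniform vertex of $s_d\setminus a$ and $s_d$ has one vertex of each color. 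This equals the marginal color law on $Y[2]$. Consequently $A^{1,2}$ sends any color-constant, globally-mean-zero $g^{\mathrm{avg}}$ to the constant function $\sum_c p_c\,\mathbb{E}_{V_c}[g]=\mathbb{E}_{Y[2]}[g]=0$ exactly, not merely to a small-norm vector. The $g^\perp$ part is then handled block-by-block via \pref{claim:colored-walk-is-a-good-expander}, and Cauchy--Schwarz across colors recombines the blocks with no loss. So your ``main technical obstacle'' dissolves once you observe exact equality of the color marginals; you do not need to invoke \pref{claim:colored-walk-is-a-good-expander} a second time to control $g^{\mathrm{avg}}$.
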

We prove this proposition after the proof of \pref{lem:one-sided-stav-are-1-over-t-good}.

\begin{proof}[Proof of \pref{lem:one-sided-stav-are-1-over-t-good}]
Again, we may assume that $\ell >> 1$. Fix some disjoint colors $I,J$, and consider the $I,J$-STAV structure. We show the five assumptions in \pref{def:good-stav-structure} hold for $\gamma = \bigO{\frac{1}{\ell}}$:
\begin{enumerate}
    \item \pref{ass:a-not-large}: We need to show that $\cProb{v}{v \in \adj{a}}{v \in s} \geq \frac{1}{2}$. This assumption holds trivially since because in these STAV-structures $col(a) \in \set{I,J}$ and the vertices in $v$ don't have colors $I,J$, and given $s$, all we can choose all possible pairs $(a,v)$ with these colors. Hence when choosing $v \in s$ it is always in $\adj{a}$.

    \item \pref{ass:global-A-V-graph}: We need to show that the global graph between $A$ and $V$, where choosing an edge is choosing a pair $(a,v)$ in the STAV-distribution is a $\bigO{\frac{1}{\ell}}$-bipartite expander. In this case, this graph is the graph of all $(v,a)$ where $col(a) \in \set{I,J}$ and $v \notin I,J$. Note that we can decompose this random walk to a convex combination of colored walks $\comp{I,k},\comp{J,k}$ for colors $k \notin I \dunion J$. For each $k$, this colored walk is $\bigO{\frac{\ell}{d^2 \ell}} = \bigO{\frac{1}{\ell}}$-bipartite expander by \pref{claim:colored-walk-is-a-good-expander}. Hence, the combination of walks is also a $\bigO{\frac{1}{\ell}}$-bipartite expander.

    \item \pref{ass:edge-expander}: Fix some $a \in A$. the $STS_a$-graph is the graph where we choose $(s_1,t,s_2)$ given that they all contain $a$. This graph is (isomorphic to) the graph whose vertices are $s \in X_a(d-\ell)$, and we connect $s_1,s_2$ if they share a vertex in $X_a(0)$ whose color is not in $J$. Taking a step in this graph is like taking two steps in the graph described in \pref{prop:one-sided-weird-graph} if we begin with some $s$. Hence by \pref{prop:one-sided-weird-graph}, this is a $\left ( \bigO{\frac{1}{\sqrt{k-2\ell}} + \frac{k}{\ell k^2}} \right )^2 = \bigO{\frac{1}{\ell}}$-two-sided spectral expander. In particular, for $\ell$ large enough, this is a $\frac{1}{3}$-edge expander.

    \item \pref{ass:v-a-graph}:
    We are to bound the spectral gap in the conditioned $STS$-graph, namely $STS_{a,v}$, whose vertices are $s \supset (a,v)$, and edges are $(s_1,t,s_2)$ where $t \supset (a,v)$.

    In the context of the $I,J$-STAV-structures, conditioning on $a \in A, v \in X(0)$ is the same as conditioning on $t = a \dunion \set{v} \in T$. In this case, the choices of $s_1,s_2$ are independent - i.e. the graph we get is a clicque with self loops. This graph is a $0$-two-sided spectral expander.

    \item \pref{ass:vasa-graphs}: We define the following $\VASA$-distribution:
    \begin{enumerate}
        \item Choose $s \in S$ (i.e. $s \in X(k)$ so that $\col(s) \supset I,J$).
        \item Set $a_1,a_2 \subset s$ so that $col(a_1)= I, \; col(a_2) = J$.
        \item Choose some $v \in s$ so that $col(v) \notin I,J$.
        \item Output either $(v,a_1,s,a_2)$ or $(v,a_2,s,a_1)$ with equal probability.
    \end{enumerate}
    This distribution is symmetric with respect to $a_1,a_2$. Furthermore when we restrict to one of the marginals $(v,a,s)$ or $(v,s,a')$, this is precisely the distribution in the STAV-structure. Hence this is indeed a $\VASA$-distribution.

    \item \pref{ass:vasa-graphs-v}: Fix some $v \in V$ and consider the $\vASA{v}$-graph. In the case of the $I,J$-STAV structure, this graph is all the bipartite graph where $L = X_v[I ],\; R = X_v[J ]$ and we connect $a,a'$ if they share some $s \in X_v(k - 1)$. This is the $(I,J)$-colored walk in the link of $v$. By \pref{claim:colored-walk-is-a-good-expander}, this is a $\bigO{\frac{\ell^2}{\ell d^2}} = \bigO{\frac{1}{\ell}}$-bipartite expander.

    \item \pref{ass:vasa-graphs-a}: Fix some $a \in A$, and without loss of generality its color is $I$. The graph in this assumption is the $\VAS{a}$-graph, the bipartite graph where
    \[ L = \adj{a_0}, \; R = \sett{(a,s)}{\exists v \in L \; (v,a_0,s,a) \in Supp(D) },\]
    \[ E = \sett{(v,(a,s)) }{ (v,a_0,s,a) \in Supp(D) }. \]
    The probability of choosing an edge $(v,(a',s))$ is given by $\cProb{D}{(v,a_0,s,a')}{a_0 = a}$.

    In this case our graph is the graph where $L = \sett{v \in X_a(0)}{\col(v) \notin J}$ and $R = \sett{(a',s)}{col(a') = J, a \subset s}$. Notice that $s$ has exactly one subset of color $J$ hence $R$ is (isomorphic to) the set $\sett{s \in X_a(d - \ell)}{J \subset \col(s)}$. This is the graph we described in \pref{prop:one-sided-weird-graph}, hence by that proposition it is a $\bigO{\frac{1}{\sqrt{d-2\ell}}} =\bigO{\frac{1}{\sqrt{\ell}}}$-bipartite expander.
\end{enumerate}
\end{proof}

\begin{proof}[Proof of \pref{prop:one-sided-weird-graph}]
This proof is similar to the proof of \pref{prop:two-sided-graph}. We build a $3$-partite complex where the bipartite graph is a walk between two of its sides and use \pref{lem:structure-trickling-lemma}.

Consider the following $2$-dimensional $3$-partite simplicial complex:
\begin{itemize}
    \item The parts of the complex are $Y[1] = X[J], Y[2] = \sett{v \in X(0)}{col(v) \notin J}, Y[3] = \sett{s \in X(k)}{J \subset col(s)}$.
    \item We connect $(a,v,s) \in Y(2)$ if $a \dunion \set{v} \subset s$. The probability of choosing some triangle $(a,v,s)$ is the probability of choosing $a \in X[J]$, and then choosing $v \subset s \setminus a$ from the link of $a$: \[\Prob[X\sqbrack{J} ]{a}\cProb{X}{s}{a}\cProb{X}{v}{a, s}. \]
\end{itemize}
We notice the following:
\begin{enumerate}
    \item $A^{2,3}$ is the bipartite operator of the bipartite walk between $L,R$ in the graph we defined in the proposition.
    \item $A^{1,2}$ is the convex combination of the bipartite operators of the colored walks $\comp{J,i}$ for all $i \notin J$. From \pref{claim:colored-walk-is-a-good-expander} $\lambda(A^{1,2}) \leq k \gamma$.
    \item for every $a \in Y\sqbrack{1}$, the bipartite operator of the link of $a$ is the containment walk between $X_{a}(0)$ and $X_{a}(k-\ell)$. Hence this walk is also an $\bigO{\frac{1}{\sqrt{k-\ell}}}$ expander.
\end{enumerate}
Hence we can apply \pref{lem:structure-trickling-lemma} and conclude that
\[ \lambda(A^{2,3}) \leq \bigO{\frac{1}{\sqrt{k-\ell}}} + k\gamma.\]
\end{proof}

\section{Agreement on Vertex Neighborhoods}
\label{sec:agreement-on-nbrhoods}

In this section we consider a number of new set systems. The sets in this set system consist of neighbors of a given vertex (or higher dimensional face). This resembles the set system underlying the gap-amplification based proof of the PCP theorem \cite{Din07}, in which an agreement theorem underlies the soundness proof somewhat implicitly.

Given a simplicial complex $X$, for a vertex $z\in X(0)$ we denote by $Ball_z$ the set of vertices adjacent to $z$ (recall that even if $X$ has high dimensional faces, it must also have edges). More generally, for a face $z\in X(k)$ we let $Ball_z = \sett{ v\in X(0)\setminus z}{v\cup z\in X}$ ($Ball_z$ is just the set of vertices in the link of $z$).

Our next agreement testing theorem is for the family $S=\sett{B_z}{z\in X(k)}$ whose ground set is $V=X(0)$.
In this section we abuse notation and refer to $f_{Ball_z}$ by $f_z$.%

We describe a couple of test distributions on such an ensemble:
\begin{definition}[Neighborhood independent agreement distribution]\torestate{\label{def:neighbourhood-ind-agreement-dist}
Let $X$ be a $d$-dimensional simplicial complex, and let $\ell,k$ be non-negative integers so that $\ell + k + 1 \leq d$. We define the distribution $NID_{\ell,k}$ by the following random process:
    \begin{enumerate}
        \item Sample $t \in X(\ell)$.
        \item Sample $z_1,z_2 \in X_t(k)$ independently.
    \end{enumerate}}
\end{definition}

\begin{definition}[Neighborhood complement agreement distribution]\torestate{\label{def:neighbourhood-comp-agreement-dist}
Let $X$ be a $d$-dimensional simplicial complex, so that $\ell + 2k+2 \leq d$. We define the distribution $NCD_{\ell,k}$ by the following random process.
\begin{enumerate}
    \item Sample $t \in X(\ell)$.
    \item Sample $z_1,z_2 \in X_t(k)$ by the $k,k$-complement walk in $X_t$.
\end{enumerate}}
\end{definition}
Note that $z_1,z_2$ are distributed as in the $k,k$-complement walk distribution.

Whereas usually an agreement test selects two subsets $s_1,s_2$ and checks if $f_{s_1}$ agrees with $f_{s_2}$ on their entire intersection, it sometimes makes sense to choose a random $t \subset s_1\cap s_2$ and check that $f_{s_1}$ and $f_{s_2}$ agree only on $t$.
For this section we call such tests {\em weak} and define two agreement test of this form.
\begin{enumerate}
    \item In the \emph{weak independent agreement test} we sample $(t,z_1,z_2) \sim NID_{\ell,k}$ and accept if $\rest{f_{z_1}}{t} = \rest{f_{z_2}}{t}$.
    \item In the \emph{weak complement agreement test} we sample $(t,z_1,z_2) \sim NCD_{\ell,k}$ and accept if $\rest{f_{z_1}}{t} = \rest{f_{z_2}}{t}$.
\end{enumerate}
If our simplicial complex is a two-sided high dimensional expander, then we can show that these agreement tests have some soundness, even in their weak variant:
\begin{theorem}[Agreement on neighborhoods]
\torestate{\label{thm:agreement-on-links}
There exists a constant $c > 0$ such that for every non-negative integers $\ell,k,d$ such that $4 \leq \ell \leq \frac{d-2}{2}$ and $\ell+2k+2 \leq d$, the following holds.
Let $X$ be a $d$-dimensional $\frac{1}{\ell \, (k+\ell)^2}$-two-sided high dimensional expander. Then
the $\ell,k$-weak independent agreement test and the $\ell,k$-weak complement agreement test are both $\frac{1}{\ell}$-approximately $c$-sound.
}
\end{theorem}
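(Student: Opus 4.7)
The plan is to apply \pref{thm:main-STAV-agreement-theorem} by constructing a $\gamma$-good STAV with $\gamma = O(1/\ell)$ and invoking the theorem with $r = 1$. I set $V = X(0)$, $S = \sett{Ball_z}{z \in X(k)}$, $T = X(\ell)$ and $A = X(\ell-1)$, each viewed as a family of subsets of $V$. The STAV distribution samples $z \in X(k)$ according to the simplicial complex distribution, then $t \in X_z(\ell)$ from the link of $z$, and splits $t = a \dunion \set{v}$ uniformly at random; in particular every sampled $t$ lies inside $Ball_z$. For the STS distribution I take either $NID_{\ell,k}$ or $NCD_{\ell,k}$; both have the correct $(s, t)$-marginal, since sampling $z \in X(k)$ and then $t \in X_z(\ell)$ has the same joint distribution on $(z, t)$ as sampling $t \in X(\ell)$ and then $z \in X_t(k)$. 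For the VASA distribution I sample $(z, v)$ as in the STAV marginal and then draw $a_1, a_2 \in X_{z \cup \set{v}}(\ell-1)$ conditionally independently; this is symmetric in $a_1, a_2$ and has the correct $(v, a_i, s)$-marginal.

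The surprise bound $\surp(X, f) = O(1/\ell)$ is immediate from \pref{claim:agreement-on-last-vertex}, which depends only on the $T, A, V$ layers being $X(\ell), X(\ell-1), X(0)$ and is insensitive to the choice of $S$. For $\gamma$-goodness, each of the spectral assumptions reduces to a containment or complement walk on $X$ itself or on a link $X_z$, $X_v$, or $X_a$, and every such link inherits two-sided expansion from the HDX assumption on $X$: (A1) is the $(0, \ell-1)$-complement walk on $X$, handled by \pref{thm:complement-walk-is-a-good-expander}; (A2a) is a two-step walk on $X_a(k)$ via a shared vertex in $X_a(0)$, controlled by \pref{thm:containment-graph-is-a-good-expander} in the link $X_a$; (A2b) conditioning on $t$ makes the choice of $z_1, z_2$ either independent (a clique, trivially a $0$-expander) or the $(k,k)$-complement walk in $X_t$, again covered by \pref{thm:complement-walk-is-a-good-expander}; (A3a) the $\vASA{v}$-graph is a two-step containment walk in $X_v$ between $X_v(\ell-1)$ and $X_v(k)$, handled by \pref{thm:containment-graph-is-a-good-expander}. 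The sampler condition (A4($r$)) with $r=1$ holds because the $\AV{s}$-graph for $s = Ball_z$ is the $(0, \ell-1)$-complement walk in $X_z$, which is an $O(1/\ell)$-bipartite expander by \pref{thm:complement-walk-is-a-good-expander}, hence an $O(1/\ell)$-sampler by \pref{lem:sampler-lemma}.

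The main technical obstacle, as in the proof of \pref{thm:main-agreement-theorem-two-sided}, is verifying (A3b): the $\VAS{a}$-graph must be a $\sqrt{\gamma}$-bipartite expander. Following the strategy of \pref{prop:two-sided-graph}, I will package this walk as the $(2,3)$-bipartite walk in an auxiliary $2$-dimensional $3$-partite complex living in the link of $a$, with parts indexing $a'$s, $v$s, and $z$s and triangles inherited from faces of $X$ containing $a$, and then invoke the trickling-down \pref{lem:structure-trickling-lemma}. That lemma reduces the problem to bounding two simpler walks, one of them a complement walk in $X_a$ (which is $O(1/\ell)$-expanding by \pref{thm:complement-walk-is-a-good-expander}) and the other a containment walk in two-fold links $X_{a \cup a'}$ (which is $O(1/\sqrt{\ell})$-expanding by \pref{thm:containment-graph-is-a-good-expander}, once the hypothesis $\ell \le (d-2)/2$ is used to guarantee enough remaining dimension). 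With every assumption verified for $\gamma = O(1/\ell)$, \pref{thm:main-STAV-agreement-theorem} applied with $r = 1$ yields the $\frac{1}{\ell}$-approximate $c$-soundness simultaneously for both the $NID_{\ell,k}$ and $NCD_{\ell,k}$ variants of the test, according to which STS distribution is plugged into the STAV.
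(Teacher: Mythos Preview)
Your overall architecture is exactly the paper's: set up the STAV with $S=\{Ball_z\}$, $T=X(\ell)$, $A=X(\ell-1)$, $V=X(0)$, bound the surprise via \pref{claim:agreement-on-last-vertex}, verify \pref{ass:global-A-V-graph}--\pref{ass:AVS-graph-good-sampler} using walks in links, and invoke \pref{thm:main-STAV-agreement-theorem}. There is, however, a recurring misidentification that breaks two of your verifications.

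In both \pref{ass:edge-expander} and \pref{ass:vasa-graphs-v} you describe the relevant walk as a \emph{containment} walk and cite \pref{thm:containment-graph-is-a-good-expander}. It is not. In the $STS_a$-graph the vertices are $z\in X_a(k)$ (since $a\subset Ball_z$ with positive STAV mass forces $a\cup z\in X$), and the intermediate vertex $v=t\setminus a$ satisfies $v\notin z_1$, $v\notin z_2$, with $\{v\}\cup z_i\in X_a$: this is two steps in the $(0,k)$-\emph{complement} walk in $X_a$, not a containment walk. Likewise, with your VASA (independent $a_1,a_2$ given $(z,v)$) the $\vASA{v}$-graph moves from $a_1\in X_v(\ell-1)$ to $z\in X_v(k)$ with $a_1$ and $z$ disjoint and $a_1\cup z\in X_v$, and then back; this is two steps in the $(\ell-1,k)$-complement walk in $X_v$. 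The containment-walk theorem simply does not apply to either situation, so as written those two items are unjustified. The fix is immediate once you swap in \pref{thm:complement-walk-is-a-good-expander}: the $(0,k)$-complement walk in $X_a$ has second eigenvalue at most $(k{+}1)\lambda$, and the $(\ell-1,k)$-complement walk in $X_v$ at most $\ell(k{+}1)\lambda$, both of which are $O(1/\ell)$ under the hypothesis $\lambda\le 1/(\ell(k+\ell)^2)$.

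Two smaller differences from the paper are worth noting. First, the paper does not verify the STAV separately for $NCD_{\ell,k}$; it reduces $NCD$ to $NID$ via \pref{claim:independent-vs-expander} (the $STS_t$-graph for $NCD$ is the $(k,k)$-complement walk in $X_t$, hence a good expander) and then checks only the $NID$ assumptions. Your plan to handle both in one pass is viable, but then you must also treat \pref{ass:edge-expander} for $NCD$ (you only treat \pref{ass:v-a-graph}). Second, for \pref{ass:vasa-graphs-a} the paper uses a direct two-step factorization of the $\VAS{a}$-walk through an intermediate layer in $X_a$, rather than the full \pref{lem:structure-trickling-lemma} machinery; your route via an auxiliary $3$-partite complex, mirroring \pref{prop:two-sided-graph}, also works and is arguably cleaner for general $k$.
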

Clearly if $\rest{f_{z_1}}{I} = \rest{f_{z_2}}{I}$ for $I = B_{z_1}\cap B_{z_2}$ then  $\rest{f_{z_1}}{t} = \rest{f_{z_2}}{t}$ since $t\subset I$. Therefore, the theorem also holds if we make a stronger agreement test that checks agreement on the entire intersection. The current statement is stronger because it begins from a weaker assumption. However, it could be that if we make the stronger test, we could reach an even stronger conclusion in terms of the closeness of the ensemble to a perfect ensemble. This is an interesting direction for further study.
\begin{proof}[ Proof of \pref{thm:agreement-on-links}]

As in the proof of \pref{thm:main-agreement-theorem-two-sided}, we have an ensemble of functions $\FF$ that has $\disagr{}(\FF) = \varepsilon$ by either the independent agreement distribution, or the complement agreement distribution. We need to find a global function $G$ so that
\[ \Prob[s]{f_s \ane{\frac{1}{\ell}} \rest{G}{s}} = \bigO{\varepsilon}. \]
We do so using \pref{thm:main-STAV-agreement-theorem}.
For both distributions our STAV-structure is the following:
\begin{enumerate}
    \item $S = \sett{Ball_z}{z \in X(k)}$.
    \item $T = X(\ell)$.
    \item $A = X(\ell-1)$.
    \item $V = X(0)$.
\end{enumerate}
As noted before, whenever we choose $z$, we will always mean that we choose $Ball_z \in S$.
The STAV-structure's distribution will be $(z,t,a,v)$ where $z \in X(k)$, $t \in X_z(\ell)$, and $t = a \dunion \set{v}$ for a partition chosen uniformly at random.
Note that $(z,t)$ are chosen as the marginal of both the independent agreement test and the weak complement agreement test.

Given any fixed $t$, the independent agreement distribution samples $z_1,z_2 \supset t$ independently. The complement agreement distribution does not sample $z_1,z_2 \supset t$ independently, but according to an expanding random walk. In \pref{claim:independent-vs-expander}, we prove that in this case $\disagr{NID_{\ell,k}}(\FF) = \Theta(\disagr{NCD_{\ell,k}}(\FF))$. Thus it is enough to prove the theorem on the independent agreement distribution.

By \pref{claim:agreement-on-last-vertex} we know that $\surp(\FF) = \frac{1}{\ell}$. If we show that this STAV-structure is $\bigO{\frac{1}{\ell}}$-good, we can directly obtain the theorem by invoking \pref{thm:main-STAV-agreement-theorem}. We check that this STAV-structure fulfils the assumptions:
\begin{enumerate}
    \item \pref{ass:global-A-V-graph}: The graph between $A$ and $V$ whose edges are $(a,v)$ so that $a \dunion v \in X(\ell)$ is the $0,\ell$-complement walk. This graph is a $\bigO{\frac{\ell}{\ell(\ell+k)^2}} = \bigO{\frac{1}{\ell}}$-bipartite expander, by \pref{claim:two-sided-complement-walk-is-a-good-expander}.

    \item \pref{ass:edge-expander}: The $STS_a$-graph here is the graph where we choose $v \in X_a(0)$ and then choose independently two edges $v \dunion z_1$ and $v \dunion z_2$, and output $z_1,z_2$. This is just taking two steps in the $0,k$-complement walk in $X_a$, thus by \pref{claim:two-sided-complement-walk-is-a-good-expander}, this is a $\frac{1}{\ell}$-two-sided spectral expander. As $\ell \geq 4$, this is also a $\frac{1}{3}$-edge expander.

    \item \pref{ass:v-a-graph}: The $STS_{a,v}$-graph here is the graph obtained after choosing two $k$-faces in the link of $X_{a\dunion \set{v}}$  independently. Similar to the previous items in this section, this is a $0$-two-sided spectral expander.

\rem{
\paragraph{For the weak complement agreement test}
\begin{enumerate}[start=3]
    \item \pref{ass:edge-expander}: The $STS_a$-graph here is the graph where we choose $z_1,z_2 \in X_a(k)$, so that they are in the complement walk of some $v \in X_a(0)$. This is the same as choosing $z_1,z_2$ by the $k,k$-complement walk of $X_a$. Thus by \pref{claim:two-sided-complement-walk-is-a-good-expander} we obtain that this is a $\bigO{\frac{1}{\ell}}$-two sided spectral expander.

    \item \pref{ass:v-a-graph}: The $STS_{a,v}$-graph here is the graph obtained after choosing two vertices in the complement walk of $X_{a\dunion \set{v}}$. From our assumption, this is a $\bigO{\frac{1}{\ell}}$-two-sided spectral expander.
\end{enumerate}
}
    \item \pref{ass:vasa-graphs}: Consider the following $\VASA$-distribution.
    \begin{enumerate}
        \item Choose $z \in X(0)$ and $v \in X_z(0)$.
        \item Choose $a_1,a_2$ in the complement walk in the link of $\set{z,v}$.
        \item Output either $(v,a_1,z,a_2)$ or $(v,a_2,z,a_1)$ with probability $\frac{1}{2}$.
    \end{enumerate}
    This is symmetric in $a_1,a_2$. It is easy to verify that the marginals $(v,z,a_1)$ and $(v,z,a_2)$ are just the same as choosing according to the STAV-distribution.

    \item \pref{ass:vasa-graphs-v}: For each $v \in X(0)$, the $\vASA{v}$-graph here is just the $0,\ell-1$-complement walk in $X_v$. Hence by \pref{claim:two-sided-complement-walk-is-a-good-expander}, this is a $\bigO{\frac{1}{\ell}}$-two sided spectral expander.

    \item \pref{ass:vasa-graphs-a}: Finally, given $a$, the $\VAS{a}$-graph is the graph where
    \[ L = X_a(0).\]
    \[ R = \sett{(z,a')}{a \dunion z \dunion a' \in X}.\]
    We connect $(v,(z,a'))$ if $a' \in X_{\set{v,z}\dunion a}$. We can decompose this graph to two independent steps in two bipartite graphs. Denote $M = X_a(\ell)$.
    If we consider the complement walk between Between $L$ and $M$, and the graph between $M$ and $R$ where every $t$ is connected to $(z,a')$ so that $t = \set{z} \dunion a'$. It is easy to see that a step from $L$ to $R$ is two independent steps between $L$ and $M$, and then $M$ and $R$. By \pref{claim:two-sided-complement-walk-is-a-good-expander}, the step between $L$ and $M$ is a $\bigO{\frac{1}{\sqrt{\ell}}}$-expander, and thus the $\VAS{a}$-graph is a $\bigO{\frac{1}{\sqrt{\ell}}}$-expander.

    \item \pref{ass:AVS-graph-good-sampler}: We show that for every $z$, the $\AV{z}$-graph is a $\frac{1}{\ell}$-sampler graph. In this case the $\AV{z}-$-graph is a bipartite graph where
    \[L = X_z(\ell-1), \; R = X_z(0). \]
    and the edges are $(a,v)$ so that $a \dunion \set{v} \in X_z$, i.e. the $0,\ell-1$-complement walk in $X_z$. We need to show that this graph is a $\frac{1}{\ell}$-sampler, namely that if $C \subset R$ is of size $\prob{C} \geq \frac{1}{\ell}$ then the set
    \[ T = \sett{a \in L}{\cProb{v \in R}{v \in C}{v \sim a} \geq \frac{1}{3\ell}} \]
    is at least of size $\prob{T} \geq \frac{1}{3}$.
    Indeed, the complement set $L \setminus T$ is contained in the set of all $a \in R$ so that $\abs{\cProb[v \in R]{v \in C}{v \sim a} - \prob{C}} \geq \frac{2}{3}\prob{C}$.

    This walk is a $\frac{\ell}{\ell(k+\ell)^2}$-bipartite, expander. By the sampler lemma, \pref{lem:sampler-lemma},
    \[ 1 - \prob{T} = \prob{L \setminus T} \leq \frac{1}{\ell^2 (\frac{2}{3}\prob{C})^2}\prob{C} \leq \frac{9}{4\ell}.\]
    The statement follows for $\ell \geq 4$.
\end{enumerate}
\end{proof}

\section{The Grassmann Poset}
\label{sec:agreement-in-the-grassmann}
Finally, the fourth agreement testing theorem, \pref{thm:agreement-on-Grassmann} gives new agreement tests on the Grassmann poset. Such agreement tests are well studied in the PCP literature but other than \cite{ImpagliazzoKW2012} we are not aware of works that study the general question outside the context of Reed Muller codes. This part can be viewed as extending \cite{ImpagliazzoKW2012} to a broader parameter regime (our focus here is on the 99\% soundness whereas in \cite{ImpagliazzoKW2012} the focus was on 1\% soundness).

Let $\F$ be the finite field of size $q$, the Affine Grassmann Poset $X = \Graff{}{\F^n}{d}$ is the set of all \emph{affine} subspaces of dimension $\leq d$. We order the subspaces by containment, and denote by $X(k)$ all subspaces of dimension $k$.

Similarly, we can restrict ourselves to linear subspaces. We denote by $Y=\Grassmann{}{\F^n}{d}$, the set of all \emph{linear} subspaces $s \subset \F^n$ of dimension $\leq d+1$. We order the subspaces by containment, and the convention here is denoting by $Y(k)$ all subspaces of dimension exactly $k+1$.

\begin{definition}[The Grassmann $d,\ell$-distribution]\torestate{\label{def:grassmann-d-l-dist}
Let $\ell < d$.
We define the distribution $AGD_{d,\ell}$ on the Affine Grassmann Poset and a distribution $LGD_{d,\ell}$ on the Linear Grassman Poset, by the following random process:
\begin{enumerate}
    \item Sample $t \in X(\ell)$ (respectively in $Y(\ell)$).
    \item Sample $s_1,s_2 \in X(d)$ (respectively in $Y(d)$) given that $t \subset s_1,s_2$.
\end{enumerate}}
\end{definition}
The ground set in the Affine Grassmann agreement test is $V_{aff} = X(0)$, the set of points in $\mathbb{F}^n$. Our sets $S_{aff} = X(d)$ are the $d$-dimensional affine spaces.

In the Linear Grassmann agreement test, our ground set $V_{lin} = X(0)$ is the one-dimensional spaces. Our sets are 
\[ S_{lin} = \sett{[s] = \set{v \subset s}}{s \in X(d)}. \]
Namely, for each $d+1$-dimensional space $s \in X(d)$ the set $[s] \in S_{lin}$ is the collection of all the one-dimensional vectors paces that are contained in $s$.

We are ready to state our main theorem for Grassmann Posets:
\begin{theorem}[Agreement on the Affine Grassmann Poset]\torestate{
\label{thm:agreement-on-Grassmann-affine}
There exists a constant $c > 0$ such that for every prime power $q$, $r, \delta > 0$, and integers $\ell,d,n$ such that $3\ell+2 < d \leq n$ the following holds. The $d,\ell$-Grassmann agreement test on $X = \Graff{}{\F^n}{d}$ is $q^{-\ell} r \delta$-approximately $c \left (1 + \frac{1}{r} \right )$-sound for $\delta$-ensembles.}
\end{theorem}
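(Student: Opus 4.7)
The plan is to construct a STAV structure on the affine Grassmann poset and apply \pref{thm:main-STAV-agreement-theorem}. Concretely, I would take $V = X(0)$ (the points of $\F^n$), $A = X(\ell-1)$, $T = X(\ell)$, $S = X(d)$, interpreted as affine subspaces of the indicated dimensions. The STAV distribution samples $s \in X(d)$ uniformly, then $t \subset s$ uniformly in $X(\ell)$, then $a \subset t$ uniformly in $X(\ell-1)$, and finally $v$ uniformly in $t \setminus a$. The STS distribution samples $t \in X(\ell)$ and then two independent $s_1, s_2 \supset t$, while the VASA distribution samples $s \in X(d)$ and a triple of pairwise disjoint $a_1, a_2, v \subset s$. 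The marginal and symmetry requirements in \pref{def:STAV} are immediate from this construction.

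Next, I would verify the $\gamma$-goodness assumptions \pref{ass:global-A-V-graph}--\pref{ass:vasa-graphs} for $\gamma$ of order $q^{-\ell}\delta$, together with assumption \pref{ass:a-not-large}, which is immediate since $\adj{a} \cap s = s \setminus a$ makes up a $(1-1/q)$-fraction of the points of $s$. Most of the graphs one needs to control (the reach graph, the various $STS_a$, $STS_{a,v}$, and $\vASA{v}$ graphs) are isomorphic, after quotienting by the face one conditions on, to well-studied bipartite containment or complement walks in smaller affine Grassmanns; standard Grassmann eigenvalue computations, combined with the slack provided by $3\ell + 2 < d$, give the required spectral bounds.

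For the surprise, I would invoke \pref{lem:delta-surprise}. The $T$-lower graph at any $t \in X(\ell)$ is the hyperplane-vs-point incidence bipartite graph in an $\ell$-dimensional affine space, which is a $O(q^{-\ell/2})$-bipartite expander by an eigenvalue computation essentially identical to the linear Grassmann example worked out after \pref{lem:delta-surprise}. Plugging into \pref{lem:delta-surprise} then gives $\surp(X, f) = O(q^{-\ell}/\delta)$ for any $\delta$-ensemble, which matches $O(\gamma)$ for the chosen $\gamma$. Feeding this into \pref{thm:main-STAV-agreement-theorem} with the free parameter $r$ produces a global $G : V \to \Sigma$ with $\Pr_s[\, f_s \ane{r\gamma} \rest{G}{s}\,] = O((1+1/r)\varepsilon)$, which is exactly the claimed approximate soundness.

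The main obstacle is verifying \pref{ass:vasa-graphs-a}, i.e.\ that the $\VAS{a}$-graph is a good bipartite expander. Its right side consists of pairs $(a', s)$, so it does not coincide with a single Grassmann walk. Following the approach of \pref{prop:two-sided-graph}, I would introduce an auxiliary $2$-dimensional $3$-partite simplicial complex whose three parts are $X_a(\ell-1)$, $X_a(0)$, and the pair-set $\{(a',s)\}$, verify that the two auxiliary bipartite walks (a containment walk in the link of $a$ and a complement-type walk) are good expanders via the classical Grassmann eigenvalue estimates, and then apply the trickling-down \pref{lem:structure-trickling-lemma} to read off the spectral bound on the $\VAS{a}$-graph itself. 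Once this is in place, the remaining assumptions are routine corollaries of classical Grassmann spectral bounds.
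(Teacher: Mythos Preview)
Your proposal is correct and follows essentially the same route as the paper: the same STAV structure on $X(d),X(\ell),X(\ell-1),X(0)$, the same surprise bound via \pref{lem:delta-surprise} and the Grassmann point--hyperplane incidence graph, the same verification of \pref{ass:a-not-large} and of the various local graphs via Grassmann spectral estimates, and the same handling of \pref{ass:vasa-graphs-a} through an auxiliary $3$-partite complex and \pref{lem:structure-trickling-lemma}. The only imprecision is that in the affine setting ``pairwise disjoint'' for $a_1,a_2,v$ should be read as ``in general position'' (i.e.\ their affine span has maximal dimension), which is exactly what the paper imposes; with that reading your VASA distribution coincides with the paper's.
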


\begin{theorem}[Agreement on the Linear Grassmann Poset]\torestate{
\label{thm:agreement-on-Grassmann}
There exists a constant $c > 0$ such that for every prime power $q$, $r, \delta > 0$, and integers $\ell,d,n$ such that $3\ell+2 < d \leq n$ the following holds. The $d,\ell$-Grassmann agreement test on $X = \Grassmann{}{\F^n}{d}$ is $q^{-\ell+1} r \delta$-approximately $c \left (1 + \frac{1}{r} \right )$-sound for $\delta$-ensembles.}
\end{theorem}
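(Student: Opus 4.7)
The plan is to fit the linear Grassmann poset into the STAV framework and invoke \pref{thm:main-STAV-agreement-theorem}, exactly as was done for two-sided HDX in \pref{thm:main-agreement-theorem-two-sided}. Take $V = X(0)$, $A = X(\ell-1)$, $T = X(\ell)$, and $S = X(d)$ (so that a face of $X(k)$ is a $(k+1)$-dimensional linear subspace). The STAV distribution picks $s \in X(d)$ uniformly, then $t \subset s$ uniformly, then a uniform $a \subset t$ in $X(\ell-1)$, and a uniform $v \in X(0)$ with $v \subset t$ and $v \not\subset a$, so that $t = a \oplus v$. The STS distribution samples $t \in X(\ell)$ and two independent uniform $s_1, s_2 \supset t$; the VASA distribution samples $s$ and then a uniform triple $(v, a, a')$ in direct-sum position inside $s$, which has nonempty support because $d \geq 3\ell + 2$.

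The first task is to verify that this STAV is $\gamma$-good for a suitable $\gamma$. Each auxiliary graph appearing in \pref{def:good-stav-structure} is a random walk on (a link of) the Grassmann. The global reach graph of \pref{ass:global-A-V-graph}, the $\sts_a$ and $\sts_{a,v}$ graphs, the $\vASA{v}$ graph, and the $\VAS{a}$ graph all admit explicit spectral bounds from the spectral theory of the $q$-Johnson scheme, decomposing when convenient into products of containment and complement walks in smaller Grassmanns, in the spirit of \pref{prop:two-sided-graph}. The assumption $3\ell + 2 < d$ ensures enough ambient dimension in $s$ for these decompositions to be non-degenerate. The sampler condition \pref{ass:AVS-graph-good-sampler} follows from \pref{lem:sampler-lemma} applied to the appropriate bipartite expansion, with the sampling parameter $r$ tuned to match the target approximation.

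The quantitatively critical step is to bound the surprise via \pref{lem:delta-surprise}. For a fixed $t \in X(\ell)$, the $T$-lower graph is the containment bipartite graph between $1$-dimensional and $\ell$-dimensional subspaces of the $(\ell+1)$-dimensional space $t$, which is exactly the point-hyperplane incidence in the projective space $\mathbb{P}(t)$. A standard $q$-binomial computation, already indicated in the Grassmann example following \pref{lem:delta-surprise}, shows that this graph is an $O(q^{-(\ell-1)/2})$-bipartite expander, so on any $\delta$-ensemble one obtains $\surp(X,f) = O(q^{-\ell+1}/\delta)$. Choosing $\gamma$ to absorb this surprise bound together with the (typically smaller) walk expansions, \pref{thm:main-STAV-agreement-theorem} yields the $\left(c(1+1/r)\right)$-soundness with the quantitative approximation stated in the theorem.

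The main obstacle I expect is the bookkeeping required to verify each of \pref{ass:global-A-V-graph}--\pref{ass:vasa-graphs-a} simultaneously: one must track a collection of $q$-binomial eigenvalue estimates and check that they are all dominated by the surprise-driven choice of $\gamma$, while the dimension budget $3\ell + 2 < d \leq n$ remains sufficient at every step. In particular the $\VAS{a}$ walk must be factored through a containment step in the link $X_a$ followed by a bipartite projection onto pairs $(a', s)$, and one must check that the composition has spectral gap on the order of $q^{-\Omega(1)}$ in the spirit of \pref{prop:two-sided-graph}. Once this accounting is complete the conclusion is a direct application of \pref{thm:main-STAV-agreement-theorem}, essentially identically to how \pref{thm:agreement-on-Grassmann-affine} is established in the affine setting.
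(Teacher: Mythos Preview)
Your proposal is correct and follows essentially the same route as the paper: the same STAV structure $(S,T,A,V)=(X(d),X(\ell),X(\ell-1),X(0))$, the same surprise bound via \pref{lem:delta-surprise} applied to the point--hyperplane incidence inside each $t$, and the same treatment of the $\VAS{a}$ graph by factoring through an auxiliary $3$-partite complex (the paper uses \pref{lem:structure-trickling-lemma}, which is exactly the mechanism behind \pref{prop:two-sided-graph}). The only tactical difference is that you verify \ref{ass:AVS-graph-good-sampler} via the sampler lemma, whereas the paper takes the shorter route of verifying \ref{ass:a-not-large} directly (for $a\subset s$ the probability that a random $v\subset s$ lies outside $a$ is $\approx 1-1/q^{2}>1/2$); both are acceptable inputs to \pref{thm:main-STAV-agreement-theorem}.
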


For the proofs of these theorems, we use the spectral gaps in the containment walk in the Grassmann, and the complement walk in the Grassmann. In particular

\begin{claim}[Folklore]
\label{claim:one-dim-containment-walk-is-a-good-expander}

\begin{enumerate}
    \item The following $0,k$-containment walk is a $\frac{1}{\sqrt{q^{k}}}$-bipartite expander in any the \emph{Affine Grassmann Poset} where $k \leq d$:
        \[ L = X(0), R = X(k),\]
    and $(v,a) \in E$ if $v \subset a$.
    \item The following $0,k$-containment walk is a $\frac{1}{\sqrt{q^{k}}}$-bipartite expander in any the \emph{Linear Grassmann Poset} or Linear Grassmann Poset where $k \leq d$:
        \[ L = X(0), R = X(k),\]
    and $(v,a) \in E$ if $v \subset a$.
\end{enumerate}
\end{claim}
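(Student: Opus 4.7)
The plan is as follows. In both the affine and the linear setting, the bipartite expansion parameter of the graph equals the second singular value of the normalized bipartite adjacency operator $A$, and by the standard identity this equals $\sqrt{\lambda_2(A^*A)}$, where $A^*A$ is the round-trip random walk on $L=X(0)$ that goes up to a uniformly random $k$-face containing $v$ and then back down to a uniformly random vertex of that face. Hence it suffices, in both settings, to show that this round-trip walk has second eigenvalue at most $q^{-k}$.

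For the affine case, the affine group $\mathbb{F}_q^n\rtimes GL_n(\mathbb{F}_q)$ acts transitively on ordered pairs of distinct points, and the round-trip walk commutes with this action. Consequently its transition kernel $K(v,v')$ depends only on whether $v=v'$, so $K=(\alpha-\beta)I+\beta J$ for scalars $\alpha,\beta$, and the non-trivial eigenvalue is exactly $\alpha-\beta$. Direct counting with Gaussian binomial coefficients gives $\alpha=1/q^k$ (given any affine $k$-space $a\ni v$, each of its $q^k$ points is returned with equal probability) and, writing affine $k$-flats through a point as $v+W$ with $W$ a linear $k$-space,
\[
\beta \;=\; \alpha\cdot\frac{\binom{n-1}{k-1}_q}{\binom{n}{k}_q} \;=\; \frac{1}{q^k}\cdot\frac{q^k-1}{q^n-1}.
\]
Thus $\alpha-\beta = \tfrac{1}{q^k}\cdot\tfrac{q^n-q^k}{q^n-1}\le q^{-k}$, and taking square roots gives the claimed $1/\sqrt{q^k}$-bipartite expansion.

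For the linear case the strategy is identical, using the transitive action of $GL_n(\mathbb{F}_q)$ on pairs of distinct $1$-dimensional subspaces of $\mathbb{F}^n$. The symmetry again forces $K=(\alpha-\beta)I+\beta J$, and the elementary count gives $\alpha=(q-1)/(q^{k+1}-1)$ and $\beta=\alpha\cdot(q^k-1)/(q^{n-1}-1)$, using the Gaussian identity $\binom{n-1}{k}_q/\binom{n-2}{k-1}_q=(q^{n-1}-1)/(q^k-1)$. What remains is the inequality
\[
\alpha-\beta \;=\; \frac{q^k(q-1)(q^{n-1-k}-1)}{(q^{k+1}-1)(q^{n-1}-1)}\;\le\;\frac{1}{q^k},
\]
which reduces, after clearing denominators, to showing $q^{n-1}(q^k-1)+q^{2k}(q-1)-q^{k+1}+1\ge 0$; this is a routine term-by-term comparison valid for all $q\ge2$, $1\le k\le n-1$. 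Taking square roots yields the claimed $1/\sqrt{q^k}$-bipartite expansion. The only real work is the Gaussian-binomial arithmetic in the linear case, and it is mechanical once the symmetry reduction is in hand.
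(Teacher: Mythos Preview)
The paper labels this claim as ``Folklore'' and does not supply a proof, so there is nothing to compare against. Your argument is correct: exploiting the $2$-transitive action (of the affine group on points, and of $GL_n(\mathbb{F}_q)$ on lines) to conclude that the round-trip kernel has the form $(\alpha-\beta)I+\beta J$, and then computing $\alpha-\beta$ via Gaussian-binomial counts, is exactly the standard route. The arithmetic checks out in both cases; in particular your final inequality in the linear case, $q^{n-1}(q^k-1)+q^{2k}(q-1)-q^{k+1}+1\ge 0$, holds because for $k\ge 1$ and $n\ge k+2$ the first term alone already dominates $q^{k+1}$ (and the boundary case $n=k+1$ gives $\alpha-\beta=0$).
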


We can define the complement walk for the Grassmann Posets as well. In the Affine Grassmann complement walk, we traverse from $w_1$ to $w_2$ if $dim(span(w_1,w_2)) = dim(w_1) + dim(w_2) + 1$. Here $span(w_1,w_2)$ is the smallest affine space that contains $w_1 \cup w_2$. 

In the Linear Grassmann complement walk, we we traverse from $w_1$ to $w_2$ if $dim(span(w_1,w_2)) = dim(w_1) + dim(w_2)$. Equivalently, if the intersection between $w_1$ and $w_2$ is trivial. 

It will be useful to examine these walks when we also condition on being independent with respect to a fixed subspace $u_0$. 

\begin{definition}[Conditioned Complement Walk in the Affine Grassmann Poset]\torestate{\label{def:cond-comp-walk-affine-grassmann}
Let $X = \Graff{}{F^n}{d}$, and let $\ell_1,\ell_2,\ell_3 \leq d$ so that $\ell_1+\ell_2+\ell_3+2 \leq n$. Fix some $u_0 \in X(\ell_3)$. The $u_0$-conditioned $\ell_1,\ell_2$-complement walk in $X$ is the walk where
\[L = \sett{v \in X(\ell_1)}{dim(span(u_0,v)) = \ell_1 + \ell_3 + 1},\]
\[R = \sett{w \in X(\ell_2)}{dim(span(u_0,w)) = \ell_2 + \ell_3 + 1},\]
\[ E = \sett{(v,w)}{v \in X(\ell_1), \; w \in X(\ell_2),\; dim(span(v,w,u_0)) = \ell_1+\ell_2+\ell_3 + 2}.\]
We choose an edge $(v,w)$ uniformly at random.}
\end{definition}

\begin{definition}[Conditioned Complement Walk in the Linear Grassmann Poset]\torestate{\label{def:cond-comp-walk-lin-grassmann}
Let $Y = \Grassmann{}{F^n}{d}$, and let $\ell_1,\ell_2,\ell_3 \leq d$ so that $\ell_1+\ell_2+\ell_3+3 \leq n$. Fix some $u_0 \in X(\ell_3)$. The $u_0$-conditioned $\ell_1,\ell_2$-complement walk in $X$ is the walk where
\[L = \sett{v \in X(\ell_1)}{u_0 \cap v = \set{0}},\]
\[R = \sett{w \in X(t_2)}{u_0 \cap w = \set{0}}\]
\[ E = \sett{(v,w)}{w \in X(\ell_1), \; v \in X(t_2),\; v \oplus w \oplus u_0 \in X(\ell_1+\ell_2+\ell_3)}.\]
Here $\oplus$ means direct sum. Requiring that the sum is direct, is equivalent to requiring the dimension of the sum, to be the sum of the dimensions of $v,w$ and $u_0$. We choose an edge $(v,w)$ uniformly at random.}
\end{definition}

\begin{claim}[Grassmann Complement Walk]
\torestate{\label{claim:grassmann-comp-walk}
\begin{enumerate}
    \item Let $X = \Grassmann{}{\mathbb{F}^n}{d}$ be an Affine Grassmann Poset. Let $\ell_1,\ell_2,\ell_3 \leq d$ so that $\ell_1 + \ell_2 + \ell_3 + 3 \leq n$. Fix some $u \in X(\ell_3)$. Then the $u$-conditioned $\ell_1,\ell_2$-complement walk in the Grassmann Poset is a $\frac{4}{q^{n-\ell_1-\ell_2-\ell_3 - 1}}$-bipartite expander.
    \item Let $Y = \Grassmann{}{\mathbb{F}^n}{d}$ be a Linear Grassmann Poset. Let $\ell_1,\ell_2,\ell_3 \leq d$ so that $\ell_1 + \ell_2 + \ell_3 + 3 \leq n$. Fix some $u \in X(\ell_3)$. Then the $u$-conditioned $\ell_1,\ell_2$-complement walk in the Grassmann Poset is a $\frac{4}{q^{n-\ell_1-\ell_2-\ell_3 - 2}}$-bipartite expander.
\end{enumerate}
}
\end{claim}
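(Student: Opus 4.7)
My plan follows the two-step template of \pref{claim:two-sided-complement-walk-is-a-good-expander}: first, fiber over $u$ to reduce to an unconditioned complement walk in a smaller Grassmann; then bound the spectral gap of the unconditioned walk directly. The affine case I would reduce to the linear one via the projective embedding $\mathbb{F}^n \hookrightarrow \mathbb{F}^{n+1}$, $x \mapsto (x,1)$, which turns affine $k$-subspaces into linear $(k{+}1)$-subspaces and preserves the ``generic position'' condition defining the edges, so I focus on the linear case.

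\textbf{Step 1 (Reduction to $u$ trivial).} Fix a linear complement $W$ to $u$ in $\mathbb{F}^n$, so $\mathbb{F}^n = W \oplus u$ with $\dim W = n - \ell_3 - 1$. Every $v \in L$ has $v \cap u = \{0\}$ and is therefore the graph of a unique linear map $\phi_v : \pi_W(v) \to u$, where $\pi_W$ is projection along $u$. This gives a bijection of $L$ with pairs $(\bar v, \phi)$, where $\bar v \subset W$ has dimension $\ell_1+1$ and $\phi \in \mathrm{Hom}(\bar v, u)$, and similarly for $R$. The crucial point is that the direct-sum condition $v \oplus w \oplus u \in X(\ell_1+\ell_2+\ell_3+2)$ is equivalent to $\bar v \cap \bar w = \{0\}$ in $W$, independent of the data $\phi,\psi$. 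Consequently, the bipartite graph of the conditioned walk is the tensor product of (i) the unconditioned $(\ell_1,\ell_2)$-complement walk in $W \cong \mathbb{F}^m$ with $m := n - \ell_3 - 1$, and (ii) complete bipartite graphs on the homomorphism-fibers. Since complete bipartite graphs have only one nonzero singular value, the second singular value of the conditioned walk equals that of the unconditioned walk in $\mathbb{F}^m$.

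\textbf{Step 2 (Unconditioned walk).} It suffices to bound the second singular value of the $(\ell_1,\ell_2)$-complement walk in $\mathbb{F}^m$ by $4/q^{m-\ell_1-\ell_2-1}$; substituting the value of $m$ (and tracking the shift in the affine case) then matches the statement. Let $A$ be its bipartite adjacency operator. By the $\mathrm{GL}_m$-symmetry, the entries of $AA^*$ depend only on $j := \dim(v \cap v')$, so $AA^*$ lies in the Bose--Mesner algebra of the Grassmann association scheme. Counting common neighbors at each level $j$ (the number of $(\ell_2{+}1)$-dim subspaces disjoint from $v+v'$, a standard $q$-binomial product) and diagonalizing against the $q$-Eberlein eigenbasis of the scheme shows that each non-trivial eigenvalue of $AA^*$ is $O\bigl(q^{-2(m-\ell_1-\ell_2-1)}\bigr)$; taking square roots yields the claimed bipartite-expansion constant.

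\textbf{Main obstacle.} The main technical burden is extracting the sharp exponent and the specific constant $4$ in Step 2, rather than a soft polynomial-in-$q$ decay. I would first try a cleaner route that avoids the full association-scheme machinery: decompose $A$ through the intermediate dimension $\ell_1+\ell_2+1$ as an upper containment walk $U$ (whose bipartite spectral gap follows from \pref{claim:one-dim-containment-walk-is-a-good-expander} and its higher-dimensional analogues) composed with a conditional down walk $D'$ that picks a complement of $v$ inside the intermediate $z$. The conditional down walk differs from the uniform down walk in total-variation distance $O(q^{-(m-\ell_1-\ell_2)})$, and a standard perturbation argument combining the two steps yields the desired bound. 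If this shortcut is too lossy to match the constants in the claim, the explicit Bose--Mesner diagonalization is a reliable fallback.
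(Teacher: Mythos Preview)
Your Step~1 reduction is correct and matches what the paper does implicitly (the quotient by $u$ is morally what happens), and your plan for Step~2 would eventually work. But you are making this much harder than it needs to be.

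The paper's argument is a one-liner that bypasses association schemes, containment-walk decompositions, and any explicit eigenvalue computation. Let $J$ be the bipartite operator that chooses $w_1\in L$ and $w_2\in R$ \emph{independently}; this is the rank-one ``complete bipartite'' operator, so its restriction to $\{\one\}^\perp$ is zero. If $e$ denotes the probability that independently-chosen $w_1,w_2$ happen to satisfy the general-position condition (the event $E$ defining the complement walk), then
\[
J \;=\; e\,A \;+\; (1-e)\,M
\]
where $M$ is the operator conditioned on $\neg E$. Restricting to $\{\one\}^\perp$ gives $\lambda(A)\le \tfrac{1-e}{e}\,\|M\|\le \tfrac{1-e}{e}$. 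Now $1-e$ is bounded by a trivial union bound: pick $\ell_1+\ell_2$ spanning points (affine) or lines (linear) one at a time, and at step $j$ the chance of landing inside $\mathrm{span}(u,p_1,\dots,p_{j-1})$ is at most $q^{-(n-\ell_3-j)}$; summing the geometric series gives $1-e\le 2/q^{n-\ell_1-\ell_2-\ell_3-1}$ (resp.\ $-2$ in the linear case), and the constant $4$ drops out of $(1-e)/e$.

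So the ``main obstacle'' you flag---extracting the sharp exponent and the constant $4$---evaporates: it is just a union bound on the failure probability of random subspaces being generic. Your approach is not wrong, but it trades a two-line perturbation-from-independence argument for heavy scheme-theoretic machinery (or a delicate two-step composition) with no gain.
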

We prove this claim in \pref{sec:grassman-comp-walk}.

\begin{proof}[Proof of \pref{thm:agreement-on-Grassmann-affine}]
As in the proof of \pref{thm:main-agreement-theorem-two-sided}, we have an ensemble of functions $\FF$ that has $\disagr{AGR_{d,\ell}}(\FF) = \varepsilon$. We need to find a global function $G$ so that
\[ \Prob[s]{f_{[s]} \ane{r \delta q^{-\ell}} \rest{G}{s}} = \bigO{\left (1 + \frac{1}{r} \right )\varepsilon}. \]
We do so using \pref{thm:main-STAV-agreement-theorem}.

Consider the following STAV-structure
\[S = S_{aff} = X(d), \; T = X(\ell), \; A = X(\ell-1)\; V = X(0),\]
The distribution is choosing:
\begin{enumerate}
    \item $s \in X(d)$ uniformly at random.
    \item $t \in X(\ell)$ given that $t \subset s$.
    \item A pair $(a,v)$ given that $span(a,v) = t$.
\end{enumerate}

By \pref{lem:delta-surprise}, and the fact that our $T$-lower graph is the containment graph in the Grassmann, which is a $\bigO{\frac{1}{\sqrt{q^{-\ell}}}}$-bipartite expander, any $(\ell, \delta)$-distance ensemble has $\surp(\FF) = \bigO{\delta^{-1} q^{-\ell}}$.

Next, we are to show that the STAV-structure defined in the theorem is $\bigO{\gamma}$-good, for $\gamma = \frac{1}{q^{\ell}}$. Namely, that it fulfils the assumptions in \pref{def:good-stav-structure}:
\begin{enumerate}

    \item \pref{ass:global-A-V-graph}: The bipartite graph whose edges are the pairs $(a,v)$ in the Affine Grassmann Poset is the complement walk graph in the Affine Grassmann poset. By \pref{claim:grassmann-comp-walk}, this graph is a $\bigO{\frac{4}{q^{n-\ell-2}}} = \bigO{\frac{1}{q^{-\ell}}}$-bipartite expander.

    \item \pref{ass:edge-expander}: Note that for any $a \in X$, the collection of subspaces that contain $a$ are isomorphic to the Grassmann Poset of the quotient of $\mathbb{F}^n / a'$ where $a'$ is a linear subspace of dimension $\ell$. 
    Thus, the $STS_a$-graph is the two step version of the $0,d-\ell+1$-containment walk in the Linear Grassmann Poset. By \pref{claim:one-dim-containment-walk-is-a-good-expander}, this graph is a $\frac{1}{q^{d-\ell}}$-two-sided spectral expander. This is in particular a $\frac{1}{3}$-edge expander.

    \item \pref{ass:v-a-graph}: As in the simplicial complex case, once we condition on $(a,v)$, there is only one space in $t = span(a, v) \in T$ that contains both $a$ and $v$. Thus the graph in the assumption is a clique with self loops, and in particular has $\frac{1}{q^{\ell}}$-spectral expansion.

    \item \pref{ass:vasa-graphs} Consider the following $\VASA$-distribution:
    \begin{enumerate}
        \item Choose $s \in S$.
        \item Choose two $a_1,a_2 \subset s$ so that $dim(span(a_1,a_2)) = 2\ell+1$.
        \item Choose $v \in s$ so that $dim(span(v,a_1,a_2)) = 2\ell+2$.
        \item out put $(v,a_1,s,a_2)$ or $(v,a_2,s,a_1)$ with probability $\frac{1}{2}$.
    \end{enumerate}
    This distribution is symmetric in $a_1,a_2$, and its marginal is exactly the choice of $(v,a,s)$ in the STAV-structure above.

    \item \pref{ass:vasa-graphs-v}: Fix some $v \in V$. The $\vASA{v}$-graph in the Affine Grassmann Poset is the graph whose double cover is $v$-conditioned $\ell-1,\ell-1$-complement walk. By \pref{claim:grassmann-comp-walk}, this graph is a $\frac{4}{q^{n-2\ell-2}} = \bigO{\frac{1}{q^{\ell}}}$-two-sided spectral expander expander.

    \item \pref{ass:vasa-graphs-a}: Fix $a \in A$. The $\VAS{a}$-graph is the following graph:
    \[L = \sett{v \in V}{v \notin a}, \]
    \[R = \sett{(a',s)}{ a, a' \subset s \ve dim(span(a,a')) = 2\ell+1}, \]
    \[E = \sett{(v,(a',s))}{\set{v},a',a \subset s \ve dim(span(v,a,a')) = 2\ell+2}.\]
    The probability of the edges are uniform. We prove below that this graph is a $\sqrt{\frac{1}{q^\ell}}$-bipartite expander using \pref{lem:structure-trickling-lemma}.

    Consider the following $2$-dimensional $3$-partite simplicial complex:
    \begin{itemize}
        \item The parts of the complex are
        \[ Y[ 1] = \sett{v \in V}{v \notin a}, \]
        \[ Y[ 2] = \sett{a' \in a}{dim(span(a,a')) = 2\ell+1},\]
        \[ Y[ 3] = \sett{(a',s)}{ a',a \subset s \ve dim(span(a,a')) = 2\ell+1}.\]
        \item We connect $(a',v,(a'',s)) \in Y(2)$ if $a' = a''$, $\set{v},a',a \subset s \ve dim(span(v,a,a')) = 2\ell+2$ and $\set{v},a',a \subset s$. The probability of choosing some triangle $(a',v,(a'',s))$ in uniform, but we view it as the following: choosing $s$ given that $a \subset s$, and then choosing $a', v$ given that $\set{v},a',a \subset s \ve dim(span(v,a,a')) = 2\ell+2$.
    \end{itemize}
    Denote by $A^{i,j}$ the bipartite walks between $Y[ i]$ and $Y[ j]$.
    We notice the following:
    \begin{enumerate}
        \item $A^{1,3}$ is the bipartite operator of the bipartite walk between $L,R$ in the $\VAS{a}$-graph.
        \item $A^{1,2}$ is the $a$-conditioned $0,(\ell-1)$-complement walk in the Grassmann. It is a $\bigO{\frac{1}{q^{n-\ell-1}}}=\bigO{\frac{1}{q^\ell}}$-bipartite expander.

        \item Assume without loss of generality, that $a$ is linear. for every $a' \in Y\sqbrack{2}$, the link of $a'$ is the following bipartite graph:
        \[ L = \sett{v \in V}{dim(span(v,a,a')) = 2\ell+2 },\]
        \[ R \cong \sett{s \in S}{a,a' \subset s}.\]
        This walk is similar to the $0,d-2\ell$-containment walk in $Y' = \Grassmann{q}{\mathbb{F}^n / span(a,a')}{d}$ (but instead of a single line in $[v] \in X'$ we have a set of points $v_1,...,v_j$ whose projection to $\mathbb{F}^n / span(a,a')$ go in to the one dimensional space $[v]$). Hence this walk is a $\bigO{\frac{1}{\sqrt q^{d-2\ell-2}}}=\bigO{\frac{1}{\sqrt{q^\ell}}}$-bipartite expander.
    \end{enumerate}
    Hence we can apply \pref{lem:structure-trickling-lemma} and conclude that
    \[ \lambda(A^{1,3}) \leq \bigO{\frac{1}{\sqrt{q^\ell}}} + \bigO{\frac{1}{\sqrt{q^\ell}}} = \bigO{\frac{1}{\sqrt{q^\ell}}}.\]

    \item \pref{ass:a-not-large}: In the Grassmann, $\adj{a}$ are all the points $v \notin a$. For $d \geq \ell + 1$, the probability of choosing a point that is not contained in $a$ is $\approx 1 - \frac{1}{q^2} > \frac{1}{2}$.
\end{enumerate}

\bigskip

Thus by \pref{thm:main-STAV-agreement-theorem}, we are promised a function $G:X(0) \to \Sigma$ so that
\[\prob{f_s \ane{r q^{-\ell}\delta } \rest{G}{s}} = \bigO{\left ( 1 + \frac{1}{r} \right ) \varepsilon}.\]

\end{proof}

The proof in the linear case is very similar:
\begin{proof}[Proof of \pref{thm:agreement-on-Grassmann}]
As in the proof of \pref{thm:main-agreement-theorem-two-sided}, we have an ensemble of functions $\FF$ that has $\disagr{LGR_{d,\ell}}(\FF) = \varepsilon$. We need to find a global function $G$ so that
\[ \Prob[s]{f_{[s]} \ane{r \delta q^{-\ell+1}} \rest{G}{s}} = \bigO{\left (1 + \frac{1}{r} \right )\varepsilon}. \]
We do so using \pref{thm:main-STAV-agreement-theorem}.

Consider the following STAV-structure
\[S = S_{lin} \cong X(d), \; T = X(\ell), \; A = X(\ell-1)\; V = X(0),\]
where we abuse the notation and identify $s$ or $f_{[s]}$ with $s \in X(d)$ and $f_{s}$.
The distribution is choosing:
\begin{enumerate}
    \item $s \in X(d)$ uniformly at random.
    \item $t \in X(\ell)$ given that $t \subset s$.
    \item A pair $(a,v)$ given that $a \oplus v = t$ (here $\oplus$ means direct sum).
\end{enumerate}

By \pref{lem:delta-surprise}, and the fact that our $T$-lower graph is the containment graph in the Grassmann, which is a $\bigO{\frac{1}{\sqrt{q^{-\ell+1}}}}$-bipartite expander, any $(\ell, \delta)$-distance ensemble has $\surp(\FF) = \bigO{\delta^{-1} q^{-\ell+1}}$.

Next, we are to show that the STAV-structure defined in the theorem is $\bigO{\gamma}$-good, for $\gamma = \frac{1}{q^{\ell-1}}$. Namely, that it fulfils the assumptions in \pref{def:good-stav-structure}:
\begin{enumerate}

    \item \pref{ass:global-A-V-graph}: The bipartite graph whose edges are the pairs $(a,v)$ in the Grassmann Poset is the complement walk graph in the Grassmann poset. By \pref{claim:grassmann-comp-walk}, this graph is a $\bigO{\frac{\ell}{q^{n-\ell-2}}} = \bigO{\frac{1}{q^{\ell-1}}}$-bipartite expander.

    \item \pref{ass:edge-expander}: Consider the Grassmann Poset of the quotient $Y = \Grassmann{q}{\mathbb{F_q}^n / a}{d - \ell + 1}$. The $STS_a$-graph for the Grassmann, is (isomorphic to) the graph whose vertices are $Y(d-\ell+1)$, and where two subspaces $s_1,s_2$ share an edge if they intersect on a $1$-dimensional subspace. This graph is the two step version of the $0,d-\ell+1$-containment walk in the Grassmann. By \pref{claim:one-dim-containment-walk-is-a-good-expander}, this graph is a $\bigO{\frac{1}{q^{d-\ell}}}$-two-sided spectral expander. In particular it is a $\frac{1}{3}$-edge expander.

    \item \pref{ass:v-a-graph}: As in the simplicial complex case, once we condition on $(a,v)$, there is only one space in $t = a \oplus v \in T$ that contains both $a$ and $v$. Thus the graph in the assumption is a clique with self loops, and in particular has $\frac{1}{q^{\ell-1}}$-spectral expansion.

    \item \pref{ass:vasa-graphs} Consider the following $\VASA$-distribution:
    \begin{enumerate}
        \item Choose $s \in S$.
        \item Choose two $a_1,a_2 \subset s$ so that $a_1 \oplus a_2 \subset s$.
        \item Choose $v \subset s$ so that $v \cap (a_1 \oplus a_2) = \set{0}$.
        \item out put $(v,a_1,s,a_2)$ or $(v,a_2,s,a_1)$ with probability $\frac{1}{2}$.
    \end{enumerate}
    This distribution is symmetric in $a_1,a_2$, and its marginal is exactly the choice of $(v,a,s)$ in the STAV-structure above.

    \item \pref{ass:vasa-graphs-v}: Fix some $v \in V$. The $\vASA{v}$-graph in the Linear Grassmann Poset is the graph whose double cover is the $v$-conditioned $\ell-1,\ell-1$-complement walk. By \pref{claim:grassmann-comp-walk}, this graph is a $\bigO{\frac{1}{q^{n-2\ell-2}}} = \bigO{\frac{1}{q^{\ell-1}}}$-two sided expander.

    \item \pref{ass:vasa-graphs-a}: Fix $a \in A$. The $\VAS{a}$-graph is the following graph:
    \[L = \sett{v \in V}{a \cap v = \set{0}}, \]
    \[R = \sett{(a',s)}{ a \oplus a' \subset s}, \]
    \[E = \sett{(v,(a',s))}{v \oplus a' \oplus a \subset s}.\]
    The probability of the edges are uniform. We prove below that this graph is a $\sqrt{\frac{1}{q^{\ell-1}}}$-bipartite expander using \pref{lem:structure-trickling-lemma}.

    Consider the following $2$-dimensional $3$-partite simplicial complex:
    \begin{itemize}
        \item The parts of the complex are
        \[ Y[ 1] = \sett{v \in V}{a \cap v = \set{0}}, \]
        \[ Y[ 2] = \sett{a' \in a}{a \cap a' = \set{0}},\]
        \[ Y[ 3] = \sett{(a',s)}{ a \oplus a' \subset s}.\]
        \item We connect $(a',v,(a'',s)) \in Y(2)$ if $a' = a''$ and $v \oplus a' + a \subset s$. The probability of choosing some triangle $(a',v,(a'',s))$ in uniform, but we view it as the following: choosing $s$ given that $a \subset s$, and then choosing $a', v$ given that $a \oplus a' \oplus v \subset s$.
    \end{itemize}
    Denote by $A^{i,j}$ the bipartite walks between $Y[ i]$ and $Y[ j]$.
    We notice the following:
    \begin{enumerate}
        \item $A^{1,3}$ is the bipartite operator of the bipartite walk between $L,R$ in the $\VAS{a}$-graph.
        \item $A^{1,2}$ is the $a$-conditioned $0,(\ell-1)$-complement walk in the Grassmann. It is a $\bigO{\frac{1}{q^{n-\ell}-1}}=\bigO{\frac{1}{q^{\ell-1}}}$-bipartite expander.

        \item for every $a' \in Y\sqbrack{2}$, the link of $a'$ is the following bipartite graph:
        \[ L = \sett{v \in V}{a \oplus a' \subset s},\]
        \[ R \cong \sett{s \in S}{a \oplus a' \subset s}.\]
        This walk is similar to the $0,d-2\ell$-containment walk in $X' = \Grassmann{q}{\mathbb{F}^n / (a\oplus a')}{d}$ (but instead of a single vertex in $[v] \in X'$ we have a set of vertices $v_1,...,v_j$ whose projection to $\mathbb{F}^n / (a\oplus a')$ is $[v]$). Hence this walk is a $\bigO{\frac{1}{\sqrt q^{d-2\ell-2}}}=\bigO{\frac{1}{\sqrt{q^{\ell-1}}}}$-bipartite expander.
    \end{enumerate}
    Hence we can apply \pref{lem:structure-trickling-lemma} and conclude that
    \[ \lambda(A^{1,3}) \leq \bigO{\frac{1}{\sqrt{q^{\ell-1}}}} + \bigO{\frac{1}{\sqrt{q^{\ell-1}}}} = \bigO{\frac{1}{\sqrt{q^{\ell-1}}}}.\]

    \item \pref{ass:a-not-large}: In the Grassmann, $\adj{a}$ are all the subspaces $v$ so that they are not contain in $a$. For $d \geq \ell + 1$, the probability of choosing a subspace that is not contained in $a$ is $\approx 1 - \frac{1}{q^2} > \frac{1}{2}$.
\end{enumerate}

\bigskip

Thus by \pref{thm:main-STAV-agreement-theorem}, we are promised a function $G:X(0) \to \Sigma$ so that
\[\prob{f_s \ane{r q^{-\ell+1}\delta } \rest{G}{s}} = \bigO{\left ( 1 + \frac{1}{r} \right ) \varepsilon}.\]

\end{proof}

\section{The Complement Random Walk}
\label{sec:complement-walk}
This section is dedicated the so-called complement random walk, as described in \pref{def:complement-walk}, and which we repeat now for ease of reading:
\restatedefinition{def:complement-walk}
%
\begin{theorem}
\torestate{
\label{thm:complement-walk-is-a-good-expander}
\begin{enumerate}
    \item Let $X$ be a $\lambda$ two-sided $d$-dimensional link-expander. Let $\ell_1,\ell_2$ integers so that $\ell_1 + \ell_2 +1 \leq d$. Denote by $\comp{\ell_1,\ell_2}$, the bipartite operator of the $\ell_1,\ell_2$-complement walk. Then
            \[\lambda(\comp{\ell_1,\ell_2}) \leq (\ell_1+1)(\ell_2+1)\lambda.\]
    \item Let $X$ be a $d+1$-partite $\frac{\lambda}{(d+1)\lambda + 1}$-one-sided link expander, where $\lambda < \frac{1}{2}$. Let $I,J \subset [d]$ be two disjoint colors. Denote by $\comp{I,J}$ the $I,J$-colored walk. Then
            \[\lambda(\comp{I,J}) \leq |I| |J| \lambda.\]

\end{enumerate}
}
\end{theorem}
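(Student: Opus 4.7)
The plan is to prove both parts by induction using Garland's method, which lifts spectral bounds on links to bounds on the full complex. The induction is on $\ell_1+\ell_2$ for Part 1 and on $|I|+|J|$ for Part 2. For the base cases $\ell_1=\ell_2=0$ (Part 1) and $|I|=|J|=1$ (Part 2), the operator $\comp{\cdot,\cdot}$ is a bipartite adjacency-type operator on the underlying graph (in Part 2 restricted to a pair of color classes), and the bound $\lambda$ follows directly from the link-expansion hypothesis. In the one-sided partite case, Oppenheim's trickle-down theorem converts the $\tfrac{\lambda}{(d+1)\lambda+1}$ global hypothesis into $\lambda$-one-sided expansion on each link, and hence on each color-restricted bipartite graph.

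For the inductive step (WLOG $\ell_1\geq 1$, or pick a color $i\in I$ in Part 2), the key tool is a Garland decomposition over vertex links:
\[\Iprod{f,\,\comp{\ell_1,\ell_2}\,g}\;=\;\Ex[v]{\Iprod{f_v,\,M_v\,g^{(v)}}_{X_v}},\]
where $f_v(s'):=f(s'\cup\{v\})$ lives on $X_v(\ell_1-1)$, $g^{(v)}$ is the restriction of $g$ to $\ell_2$-faces lying inside $X_v$, and $M_v$ denotes the $(\ell_1-1,\ell_2)$-complement walk in the link $X_v$. This identity holds because marginalizing the walk edge $(s,t)$ over a uniformly chosen $v\in s$ turns $(s\setminus\{v\},t)$ into an edge of the complement walk of $X_v$ (disjointness of $s',t$ and $s'\cup t\in X_v(\ell_1+\ell_2)$ are immediate). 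In Part 2 one further conditions on $v$ having color $i$, reducing to the $(I\setminus\{i\},J)$-colored walk inside the link.

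To pass from local to global, decompose each $f_v$ and $g^{(v)}$ into its constant component and its orthogonal complement. The inductive hypothesis applied in each link bounds the orthogonal-to-orthogonal contribution by $\ell_1(\ell_2+1)\lambda\cdot\norm{f_v}\cdot\norm{g^{(v)}}$ (respectively $(|I|-1)|J|\lambda$), and the standard Pythagorean identity $\Ex[v]{\snorm{f_v}_{X_v}}=\snorm{f}$ combined with Cauchy-Schwarz pushes this to the global bound. The residual constant-on-the-link components define functions $v\mapsto\bar f_v$ and $v\mapsto\bar g^{(v)}$ on $X(0)$, whose correlation is a bilinear form controlled by the $\lambda$-expansion of the underlying graph, contributing one further factor of $\lambda$. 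Summing yields $(\ell_1+1)(\ell_2+1)\lambda$ and $|I||J|\lambda$ respectively. The main obstacle is exactly this bookkeeping of constant components: one must verify that the trivial-vector correction adds only one extra $\lambda$-factor per inductive step rather than accumulating multiplicatively, which is what keeps the growth quadratic in $\ell_1,\ell_2$ instead of exponential. In the one-sided partite case, a further subtlety is that each colored bipartite operator's norm is a singular value (not a signed eigenvalue), so the one-sided hypothesis after trickle-down suffices throughout the induction.
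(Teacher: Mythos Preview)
Your approach is correct and is essentially the paper's: both argue by induction on $\ell_1+\ell_2$ (resp.\ $|I|+|J|$) via a Garland-type localization, which the paper packages as a ``structure-trickling lemma'' applied to an auxiliary $3$-partite complex. The only real difference is the pivot level---the paper localizes over $s\in X(\ell_1)$ (so the link term is the $(0,\ell_2)$-complement walk in $X_s$ and the constant term carries the full $(\ell_1,\ell_2)$ inductive bound), whereas you localize over a single vertex $v\in s$; both choices yield the same recursion.

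One point to tighten: your constant-term contribution is $\langle \bar f,\bar g\rangle_{X(0)}=\langle D_{\ell_1,0}f,\,\comp{0,\ell_2}g\rangle$, and this is \emph{not} controlled by the underlying graph's $\lambda$-expansion---there is no adjacency operator sitting in that inner product. What actually bounds it is the inductive hypothesis $\lambda(\comp{0,\ell_2})\le(\ell_2+1)\lambda$ applied to the mean-zero $g$ (together with $\|D_{\ell_1,0}f\|\le\|f\|$), giving a contribution of $(\ell_2{+}1)\lambda$ rather than a bare $\lambda$; this is exactly what makes your sum $\ell_1(\ell_2{+}1)\lambda+(\ell_2{+}1)\lambda=(\ell_1{+}1)(\ell_2{+}1)\lambda$ close. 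For Part~2, the base case $|I|=|J|=1$ requires the \emph{partite} (color-restricted bipartite) form of trickle-down, which the paper proves separately; standard Oppenheim trickle-down only delivers one-sided expansion of the full link graph, not of each individual color pair, so that extra lemma is genuinely needed.
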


In \pref{sec:complement-walk-prelims} we give some additional definitions and preliminaries for this section. In \pref{sec:two-sided-complement-walk} we prove the two-sided complement walk's expansion, and the $d+1$-partite colored walk's expansion respectively. In \pref{sec:grassman-comp-walk} we extend the result to Grasmann Posets. Finally in \pref{sec:random-walks-fixed-union-size} we give additional applications of the complement walk: we analyze random walks on high dimensional expanders with a fixed intersection size, and in \pref{sec:applications-of-complement-walks} we prove a high dimensional expander mixing lemma.

\subsection{Preliminaries for this Section}
\label{sec:complement-walk-prelims}
In a finite measured space we have an inner product on the space of real functions. Thus for any $f,g \in \ell_2(X(k))$

    \[\iprod{f,g} = \Ex[s \in X(k)]{f(s)g(s)}.\]

    In addition, we define two sets of operators that connect the different levels of functions by averaging.

    \begin{definition}[Up and Down Operators]\torestate{\label{def:up-down-operators}
    Define the up operator $\Up_{k,k+1} : \CX{k} \to \CX{k+1}$, and the down operator $\Down_{k+1,k} \colon \CX{k+1} \to \CX{k}$, by
    \[\Up_{k,k+1} f(s) = \Ex[t \subset s; \, t \in X(k)]{f(t)},\]
    \[\Down_{k+1,k} g(t) = \Ex[s \supset t; \, s \in X(k+1)]{g(s)}.\]}
    \end{definition}
    One can show that $\Down_{k+1} = (\Up_{k})^*$, the adjoint with respect to the inner product above.

    Recall the $k+1,k$-lower walk defined in \pref{sec:agreement-on-hdx}. $D_{k+1,k}$ is it's bipartite operator.


\subsubsection{Localization}

    Given a function in $f:X(k) \to \RR$ there are two natural operations that give us a function in the link.
    \begin{definition}[Localization]\torestate{\label{def:localization}
        Let $\ell \leq k$ be two integers, $f \in C^k(X)$ and $s \in X(\ell)$. The localization of $f$ denoted by $\scloc{f}{s} : X_{s}(k-|\sigma|) \to \RR$, is defined by:
        \[\scloc{f}{s}(t) = f(s \dunion t).\]}
    \end{definition}

    \begin{definition}[Restriction]\torestate{\label{def:restriction}
        Let $\ell, k$ be two integers s.t. $\ell+k+1 \leq d$, $f \in C^k(X)$ and $s \in X(\ell)$. The restriction of $f$ denoted by $\scres{f}{s} : X_{s}(k) \to \RR$, is defined by:
        \[\scres{f}{s}(t) = f(t).\]}
    \end{definition}

\subsection{Proving the Complement Walk Theorem}
\label{sec:two-sided-complement-walk}

First we prove \pref{thm:complement-walk-is-a-good-expander}. Our main technical tools is \pref{lem:structure-trickling-lemma}, which was already stated in \pref{sec:agreement-on-hdx}. We restate it here:

\restatelemma{lem:structure-trickling-lemma}

\begin{proof}[Proof of \pref{thm:complement-walk-is-a-good-expander}, item 1]
We begin with the two sided case, and prove the statement by induction on $\ell_1 + \ell_2 = k$. The base case is $\ell_1 + \ell_2 = 0$, i.e. $\ell_1=\ell_2=0$. This is exactly the assumption that $X$ is a $\lambda$-two sided link expander.

Assume the statement is true for any $\ell_1,\ell_2$ s.t. $\ell_1 + \ell_2 \leq k$, and consider the graph operator of the complement walk graph $\comp{\ell_1,\ell_2+1}:\RR^{X(\ell_1)} \to \RR^{X(\ell_2+1)}$, for some $\ell_1,\ell_2$ s.t. $(\ell_1+1)+\ell_2 = k+1$. We need to prove that
\[ \lambda(\comp{\ell_1+1,\ell_2})  \leq (\ell_1+2)(\ell_2+1)\lambda,\]

Note that it is enough to prove for the case where we take $\ell_1+1$ since the adjoint of $\comp{\ell_1+1,\ell_2}$ is $\comp{\ell_1,\ell_2+1}$.
It might be easy to keep in mind the first non-trivial case where $\ell_1+1=1$ and $\ell_2=0$.

Consider the following $2$-dimensional $3$-partite simplicial complex $Y$:
\begin{itemize}
    \item The vertices are $Y \sqbrack{1} = X(\ell_1), Y \sqbrack{2} = X(\ell_1 + 1), Y \sqbrack{3} = X(\ell_2)$.
    \item We connect $(y_1,y_2,y_3) \in Y(2)$ if $y_1 \subset y_2$ and $y_2 \dunion y_3 \in X(2)$. The probability of choosing some $(y_1,y_2,y_3)$ is the probability of choosing the edge $y_2 \dunion y_3$ and then choosing $y_1 \subset y_2$. In other words,
    \[ \Prob[Y]{(y_1,y_2,y_3)} = \Prob[X(\ell_1 + \ell_2)]{y_2 \dunion y_3}\cProb{X}{y_2}{y_2 \dunion y_3}\cProb{X}{y_1}{y_2}. \]
\end{itemize}
We notice the following:
\begin{enumerate}
    \item $\comp{2,3}$ is the bipartite operator of the bipartite walk between $X(\ell_1+1), X(\ell_2)$.
    \item $\comp{1,3}$ is the bipartite operator of the bipartite walk between $X(\ell_1), X(\ell_2)$. By induction $\lambda(\comp{1,3}) \leq (\ell_1+1)(\ell_2+1) \lambda$.
    \item for every $s \in Y\sqbrack{1}$, the bipartite operator of the link of $s$ is the complement walk for $\ell_1' = 0, \ell_2' = \ell_2$ in the link of $s$. as $\ell_1' + \ell_2' < (\ell_1 + 1) + \ell_2$, we may use the induction assumption to conclude that $\lambda(M_s) \leq (\ell_2 + 1) \lambda$.
\end{enumerate}
Hence we can apply \pref{lem:structure-trickling-lemma} and conclude that
\[ \lambda(\comp{2,3}) \leq (\ell_2 + 1) \lambda + (\ell_1+1)(\ell_2+1) \lambda \norm{\comp{1,2}} \leq (\ell_2+1)(\ell_1+2) \lambda.\]
\end{proof}

Towards proving the second item in \pref{thm:complement-walk-is-a-good-expander}, we need the following lemma, that we shall prove in \pref{sec:partite-trickling-down}:

\begin{lemma}
\label{lem:colorful-trickling-down-lemma}
Let $X$ be a $(d+1)$-partite simplicial complex, and suppose that for all $v \in X(0)$ the underlying graph is a $\lambda$-one sided $d$-partite expander, for $\lambda < \frac{1}{2}$. Suppose that the underlying graph of $X$ is connected. Then for every $\set{i}, \set{j} \subset [d+1]$, the bipartite graph between $X[i],X[j]$ is a $\frac{\lambda}{1-\lambda}$-bipartite expander.
\end{lemma}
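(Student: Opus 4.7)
This lemma is the partite analogue of Oppenheim's classical trickling-down theorem, and my plan is to mimic Oppenheim's argument in the colored setting. Let $\nu := \max_{p \ne q \in [d+1]} \lambda(A^{p,q})$ denote the maximum of the second singular values over all color-pair bipartite operators; it suffices to prove $\nu \le \lambda/(1-\lambda)$. If $d = 1$ the hypothesis is vacuous and the claim reduces to the assumption, so assume $d \ge 2$. Fix colors $i \ne j$ achieving $\lambda(A^{i,j}) = \nu$, pick a third color $k$, and by the variational characterization take mean-zero unit vectors $f \in \ell_2(X[i])$ and $g \in \ell_2(X[j])$ with $\nu = \langle A^{i,j} g, f\rangle$.

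The first key step is a Garland-type identity: sampling an edge in $X$ of color pair $(i,j)$ can be equivalently done by first sampling $v \in X[k]$ and then sampling an $(i,j)$-edge from the link $X_v$. Together with the measure-consistency of partite weighted complexes, this yields
\[
\nu \;=\; \mathbb{E}_{v \in X[k]}\bigl[\langle A^{i,j}_{X_v} g_v, f_v\rangle\bigr], \qquad \mathbb{E}_v\|f_v\|^2 \;=\; \|f\|^2 \;=\; 1,
\]
where $f_v,g_v$ are restrictions of $f,g$ to $X_v[i], X_v[j]$. For each $v$, decompose $f_v = c^f_v \mathbf{1} + \bar f_v$ and $g_v = c^g_v \mathbf{1} + \bar g_v$ into mean and mean-zero parts. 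By the hypothesis that the underlying graph of each link is a $\lambda$-one-sided $d$-partite expander, we get $\langle A^{i,j}_{X_v} \bar g_v, \bar f_v\rangle \le \lambda \|\bar f_v\|\|\bar g_v\|$, and cross terms vanish, producing the pointwise estimate
\[
\langle A^{i,j}_{X_v} g_v, f_v\rangle \;\le\; c^f_v\, c^g_v + \lambda\|\bar f_v\|\|\bar g_v\|.
\]

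Now observe that the function $\phi(v) := c^f_v = \mathbb{E}_{u \in X_v[i]}[f(u)]$ on $X[k]$ is exactly the image of $f$ under the bipartite walk operator from $X[i]$ to $X[k]$; by measure-consistency, $\phi$ is mean-zero, so $\|\phi\| \le \lambda(A^{i,k}) \le \nu$, and analogously $\|\psi\|\le \nu$ for $\psi(v) := c^g_v$. Combining with the identity $\mathbb{E}_v\|f_v\|^2 = 1$ (which rearranges to $\mathbb{E}_v \|\bar f_v\|^2 = 1-\|\phi\|^2$, and similarly for $g$), taking expectation over $v$ and applying Cauchy--Schwarz yields the fixed-point inequality
\[
\nu \;\le\; \langle \phi,\psi\rangle + \lambda\sqrt{(1-\|\phi\|^2)(1-\|\psi\|^2)} \;\le\; \nu^2 + \lambda(1-\nu^2).
\]
Since connectedness of the underlying graph forces $\nu < 1$, cancelling the factor $(1-\nu)$ from the rearranged inequality $\nu(1-\nu) \le \lambda(1-\nu)(1+\nu)$ gives the desired $\nu \le \lambda/(1-\lambda)$. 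The main obstacles in carrying out this plan are carefully verifying the Garland factorization and the measure-consistency identity $\mathbb{E}_v\|f_v\|^2 = \|f\|^2$ in the weighted partite setting, and ensuring the chosen normalization of ``$\lambda$-one-sided $d$-partite expander'' precisely produces the needed bound $\|A^{i,j}_{X_v}\|_{\text{op, mean-zero}} \le \lambda$ in each link. The role of connectedness is invoked only at the last step to ensure $\nu < 1$, but is essential there, since otherwise $\nu = 1$ would be a spurious solution to the fixed-point inequality.
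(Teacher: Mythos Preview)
Your proposal is correct and follows essentially the same route as the paper's proof: Garland localization over a third color $k$, decomposition into constant and mean-zero parts in each link, identification of the constant parts with the cross-color operators $A^{i,k}f$ and $A^{j,k}g$, and the resulting fixed-point inequality $\nu \le (1-\lambda)\nu^2 + \lambda$. One small point to watch: the assertion ``connectedness of the underlying graph forces $\nu<1$'' is not immediate, since what you actually need is that the bipartite graph between $X[i]$ and $X[j]$ is connected; the paper proves this separately by an induction on the number of parts, using that each link is connected (which follows from the link-expansion hypothesis).
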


The links of $s \in X(d-2)$ in a $d$-partite simplicial complexes are bipartite graphs. Thus by iterating this lemma we get the following corollary:
\begin{corollary}
\label{cor:colorful-trickling-corollary}
Let $\lambda < \frac{1}{2}$. Let $X$ be a simplicial complex s.t. every link of $X$ is connected and that for every $s \in X(d-2)$, $X_s$ is a $\frac{\lambda}{(d-1)\lambda+1}$-bipartite expanders. Then for every two colors $\set{i},\set{j}$, and every $s \in X$ s.t. $i,j \notin col(s)$ the graph between the two colors $\comp{\set{i},\set{j}}_s$ is a $\lambda$-bipartite expander.
\end{corollary}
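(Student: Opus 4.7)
The strategy is downward induction on $|s|$, starting from the top-dimensional faces of $X(d-2)$ supplied by the hypothesis and trickling the conclusion down to the empty face (and every intermediate face), applying \pref{lem:colorful-trickling-down-lemma} at each step. Concretely, for $\ell = 0, 1, \ldots, d-1$, I will prove the following claim: for every $s \in X(\ell-1)$ (so $|s|=\ell$) with $i,j \notin \col(s)$, the bipartite graph $\comp{\{i\},\{j\}}_s$ is a $\mu_\ell$-bipartite expander, where
\[
\mu_\ell \;:=\; \frac{\lambda}{\ell\lambda + 1}.
\]
Specializing to $\ell = 0$ yields the desired bound $\mu_0 = \lambda$ for arbitrary $s$, including $s = \emptyset$.

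The base case is $\ell = d-1$, i.e.\ $s \in X(d-2)$. Since $X$ is $(d+1)$-partite and $i, j \notin \col(s)$, the $d-1$ colors used by $s$ are precisely $[d+1]\setminus\{i,j\}$, so the link $X_s$ itself is the bipartite graph $\comp{\{i\},\{j\}}_s$. By the corollary's hypothesis, this is a $\frac{\lambda}{(d-1)\lambda+1}$-bipartite expander, matching $\mu_{d-1}$. For the inductive step, suppose the claim holds at level $\ell+1$; I must prove it at level $\ell$. Fix $s$ with $|s|=\ell$ and $i,j \notin \col(s)$, and look at the link $X_s$, which is a partite complex on the remaining $d+1-\ell$ colors and whose vertex-links are $(X_s)_v = X_{s \cup \{v\}}$. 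For each such $v \in X_s(0)$ with $\col(v) \neq i, j$, the inductive hypothesis (combined with its analogue for all other pairs of colors, which the induction delivers uniformly) shows that each vertex-link $(X_s)_v$ satisfies the one-sided $(d-\ell)$-partite expansion required to invoke \pref{lem:colorful-trickling-down-lemma}, with parameter $\mu_{\ell+1}$. Applying that lemma upgrades this to a $\frac{\mu_{\ell+1}}{1-\mu_{\ell+1}}$-bipartite expander on the color pair $\{i,j\}$ inside $X_s$, and a direct calculation gives
\[
\frac{\mu_{\ell+1}}{1-\mu_{\ell+1}} \;=\; \frac{\lambda/((\ell+1)\lambda+1)}{\ell\lambda/((\ell+1)\lambda+1)+1/((\ell+1)\lambda+1)\cdot \ell\lambda\,\text{(rearranged)}} \;=\; \frac{\lambda}{\ell\lambda+1} \;=\; \mu_\ell,
\]
closing the induction. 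Note that $\mu_\ell \leq \lambda < 1/2$ throughout, so the hypothesis of \pref{lem:colorful-trickling-down-lemma} regarding $\lambda < 1/2$ is maintained at every step.

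\textbf{Main obstacle.} The one nontrivial point is the passage from the inductive hypothesis (which naturally provides bipartite expansion between each pair of remaining colors inside $(X_s)_v$) to the precise form of the ``one-sided $(d-\ell)$-partite expander'' condition that \pref{lem:colorful-trickling-down-lemma} actually requires at the vertex-link. One must either verify directly that these pairwise bipartite-between-colors bounds aggregate into the multipartite spectral condition the lemma consumes, or observe that the induction can be rephrased so that the stronger multipartite condition is preserved level by level. The parameter arithmetic was chosen exactly so that $\mu_\ell \mapsto \frac{\mu_\ell}{1-\mu_\ell}$ marches through the geometric sequence $\frac{\lambda}{k\lambda+1}$, which is what makes the telescoping to $\lambda$ at the bottom level tight; this delicate bookkeeping is the reason the hypothesis of the corollary is phrased with the peculiar-looking constant $\frac{\lambda}{(d-1)\lambda+1}$ rather than simply $\lambda$.
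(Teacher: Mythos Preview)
Your approach is correct and is exactly the iteration the paper has in mind: the paper's entire proof is the one-line remark ``by iterating this lemma we get the following corollary,'' and your downward induction on $|s|$ with $\mu_\ell = \frac{\lambda}{\ell\lambda+1}$ and the telescoping identity $\frac{\mu_{\ell+1}}{1-\mu_{\ell+1}} = \mu_\ell$ is precisely that iteration made explicit.

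Your ``main obstacle'' is not actually an obstacle. If you read the proof of \pref{lem:colorful-trickling-down-lemma}, the hypothesis ``the underlying graph of $X_v$ is a $\lambda$-one sided $d$-partite expander'' is used only in the form $\lambda(\comp{\{i\},\{j\}}_v) \le \lambda$ for every pair of colors $i,j$ in the link; there is no separate multipartite spectral condition to aggregate into. Since your induction is carried uniformly over all color pairs, the inductive hypothesis at level $\ell+1$ supplies exactly this, and the lemma applies directly.
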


\begin{proof}[Proof of \pref{thm:complement-walk-is-a-good-expander}, item 2]
The proof of the colored version is similar to the two-sided case, as is done by induction on $k := |I_1| + |I_2|$. The base case is where $|I_1| + |I_2| = 2$, i.e. $|I_1| = |I_2| = 1$. This case is true due to \pref{cor:colorful-trickling-corollary}.

Take some disjoint color sets $I_1,I_2$ s.t. $|I_1| + |I_2| = k+1$, and suppose the wlog $I_1 = J \dunion \set{i}$ where $J$ is non-empty.

Consider the following $2$-dimensional $3$-partite simplicial complex $Y$:
\begin{itemize}
    \item The vertices are $Y[1] = X[I_1], Y[2] = X[J], Y[3] = X[I_2]$.
    \item We connect $(y_1,y_2,y_3) \in Y(2)$ if $y_1 \subset y_2$ and $y_2 \dunion y_3 \in X[I_1 \dunion I_2]$. The probability of choosing some $(y_1,y_2,y_3)$ is the probability of choosing the edge $y_2 \dunion y_3$ and then choosing $y_1 \subset y_2$. In other words,
    \[ \Prob[Y]{(y_1,y_2,y_3)} = \Prob[X(\ell_1 + \ell_2)]{y_2 \dunion y_3}\cProb{X}{y_2}{y_2 \dunion y_3}\cProb{X}{y_1}{y_2}. \]
\end{itemize}
We notice the following:
\begin{enumerate}
    \item $\comp{2,3}$ is the bipartite operator of the bipartite walk between $X[I_1], X[I_2]$.
    \item $\comp{1,3}$ is the bipartite operator of the bipartite walk between $X[J], X[I_2]$. By induction $\lambda(\comp{1,3}) \leq \abs{J} \abs{I_2} \lambda$.
    \item for every $s \in Y\sqbrack{1}$, the bipartite operator of the link of $s$ is the complement walk for $\set{i}, I_2$ in the link of $s$. as $\abs{\set{i}} + \abs{I_2} < \abs{I_1} + \abs{I_2}$, we may use the induction assumption to conclude that $\lambda(M_s) \leq \abs{I_2} \lambda$.
\end{enumerate}
Hence we can apply \pref{lem:structure-trickling-lemma} and conclude that
\[ \lambda(\comp{2,3}) \leq \abs{I_2} \lambda + \lambda \abs{J} \abs{I_2}  \norm{\comp{1,2}} \leq \abs{I_1} \abs{I_2} \lambda.\]
\end{proof}

\subsubsection{Proof of \pref{lem:structure-trickling-lemma}}

\begin{proof}[Proof of \pref{lem:structure-trickling-lemma}]
Consider two functions $f:X[2] \to \RR$, $g: X[3] \to \RR$ s.t. $f,g$ are orthogonal to the space of constant functions, and s.t. $\norm{f} = \norm{g} = 1$. We need to prove that $\iprod{A^{2,3} f,g} \leq \eta + \lambda(A^{1,2}) \lambda(A^{2,3})$.

The following claim allows us to calculate the inner product in a simplicial complex locally.
\begin{claim}[Localization]
\label{claim:localization-lemma}
Let $X$ be a $d+1$-partite complex, $I_1, I_2, I_3$ disjoint colors. Then for any $f: X \sqbrack{I_1} \to \RR, g: X \sqbrack{I_2} \to \RR$
\[ \iprod{\comp{I_1,I_2} f, g} = \Ex[r \in I_3]{\iprod{M_s^{I_1,I_2} \scres{f}{r}, \scres{g}{r}}},\]
where $M_s^{I_1,I_2}$ is the bipartite operator for $I_1,I_2$ in the link of $s$.
\end{claim}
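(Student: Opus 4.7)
The plan is to unfold the left-hand side into an expectation over the top-level distribution on $X[I_1 \dunion I_2]$, then introduce the missing color $I_3$ by averaging via the weighted-complex structure, and finally reorganize the resulting double expectation to expose the link operators. (I am reading the statement as $r \in X[I_3]$ and the link operator as $M_r^{I_1,I_2}$; the subscript $s$ in the display appears to be a typo.)

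First I would use the definition of $\comp{I_1,I_2}$ and the convention that the edge is obtained by sampling a face of color $I_1 \dunion I_2$ and splitting it into its $I_1$ and $I_2$ parts, to write
\[
\iprod{\comp{I_1,I_2}f, g} \;=\; \Ex_{y \in X[I_1 \dunion I_2]}\bigl[\,f(y|_{I_1})\,g(y|_{I_2})\,\bigr].
\]
Here $y|_{I_j}$ denotes the unique subface of $y$ with color set $I_j$, which exists because $X$ is partite.

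Next I would invoke the standard fact for weighted partite simplicial complexes that the distribution on $X[I_1 \dunion I_2]$ is the marginal (under the forget-$I_3$ map) of the distribution on $X[I_1 \dunion I_2 \dunion I_3]$. This lets me lift the expectation:
\[
\iprod{\comp{I_1,I_2}f, g} \;=\; \Ex_{\tau \in X[I_1 \dunion I_2 \dunion I_3]}\bigl[\,f(\tau|_{I_1})\,g(\tau|_{I_2})\,\bigr].
\]
The expression in the expectation depends only on $\tau|_{I_1 \dunion I_2}$, which is consistent with what we had; the point of this lifting is that we can now condition on $\tau|_{I_3}$.

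Finally I would decompose the sampling of $\tau$ as: first sample $r := \tau|_{I_3} \in X[I_3]$; then sample $y_1 := \tau|_{I_1}$, $y_2 := \tau|_{I_2}$ conditioned on $r$. Because of how the link's weighting is defined, conditioning on $r$ and sampling $(y_1,y_2)$ is exactly the distribution of an edge of the $(I_1,I_2)$-colored walk in $X_r$, so that $\scres{f}{r}(y_1) = f(y_1)$ and $\scres{g}{r}(y_2) = g(y_2)$ and
\[
\Ex_{(y_1,y_2) \in E(M_r^{I_1,I_2})}\bigl[\scres{f}{r}(y_1)\,\scres{g}{r}(y_2)\bigr] \;=\; \iprod{M_r^{I_1,I_2}\scres{f}{r},\,\scres{g}{r}}.
\]
Tower of expectations then yields the claimed identity. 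The only delicate point is making sure the link weights are induced correctly so that the conditional distribution on $(y_1,y_2)$ is precisely the edge distribution of $M_r^{I_1,I_2}$; this follows from the definition of a weighted partite complex but should be stated explicitly, as it is the substantive content of the claim.
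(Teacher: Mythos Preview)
Your proposal is correct and follows essentially the same route as the paper's proof: unfold $\iprod{\comp{I_1,I_2}f,g}$ as an expectation over $X[I_1\dunion I_2]$, lift to $X[I_1\dunion I_2\dunion I_3]$, then condition on the $I_3$-part $r$ and recognize the inner expectation as $\iprod{M_r^{I_1,I_2}\scres{f}{r},\scres{g}{r}}$. Your reading of the typos ($r\in X[I_3]$ and $M_r$ rather than $M_s$) is also the intended one.
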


For every $v \in X[1]$, we denote it's bipartite operator by $A_v$. By \pref{claim:localization-lemma}
\[ \iprod{A^{2,3} f,g} = \Ex[v \in X\sqbrack{1} ]{\iprod{A_v \scres{f}{v}, \scres{g}{v}}}. \]
We decompose $\scres{f}{v} = \scres{f}{v}^0 + \scres{f}{v}^\perp$ where $\scres{f}{v}^0$ is constant and $\scres{f}{v}^\perp$ is orthogonal to $\scres{f}{v}^0$, and similarly $\scres{g}{v} = \scres{g}{v}^0 + \scres{g}{v}^\perp$. Note that $A_v \scres{f}{v}^0$ is also constant and $A_v \scres{f}{v}^\perp$ is also orthogonal to the constant part, because $A_v$ is an averaging operator. Thus
\begin{equation} \label{eq:structure-trickling-lemma-main}
\Ex[v \in X\sqbrack{1} ]{\iprod{A_v \scres{f}{v}, \scres{g}{v}}} = \Ex[v \in X\sqbrack{1} ]{\iprod{A_v \scres{f}{v}^0, \scres{g}{v}^0}} + \Ex[v \in X\sqbrack{1} ]{\iprod{A_v \scres{f}{v}^\perp, \scres{g}{v}^\perp}}.
\end{equation}
We bound each part in the righthand side of \eqref{eq:structure-trickling-lemma-main} separately.

\begin{itemize}
        \item From Cauchy-Schwartz:
        \[ \Ex[v \in X\sqbrack{1}]{\iprod{A_v \scres{f}{v}^\perp,\scres{g}{v}^\perp}} \leq \Ex[v \in X\sqbrack{1}]{\lambda(A_v) \norm{\scres{f}{v}^\perp} \cdot \norm{\scres{g}{v}^\perp}}.\]
        From the assumption for every $v \in X\sqbrack{1}$,  $\lambda(M_v) \leq \eta$, thus:
        \begin{align*}
        \Ex[v \in X(0)]{\lambda(A_v) \norm{\scres{f}{v}^\perp} \cdot \norm{\scres{g}{v}^\perp}} &\leq&
        \eta \Ex[v \in X\sqbrack{1}]{\norm{\scres{f}{v}^\perp} \cdot \norm{\scres{g}{v}^\perp]}} \\
        &\leq& \eta \Ex[v \in X\sqbrack{1}]{\frac{1}{2}(\norm{\scres{f}{v}^\perp}^2 + \norm{\scres{g}{v}^\perp}^2)}
        \\ & \leq& \eta,
        \end{align*}
        where the second inequality is achieved by taking arithmetic mean instead of geometric mean.

        \item Next we bound $\Ex[v \in X\sqbrack{1}]{\iprod{A_v \scres{f}{v}^0,\scres{g}{v}^0}}$. Notice that
        \[\scres{f}{v}^0 \equiv \Ex[u \in X_v \sqbrack{2}]{\scres{f}{v}(u)} = A^{1,2}f(v), \]
        \[\scres{g}{v}^0 \equiv \Ex[u \in X_v \sqbrack{3}]{\scres{g}{v}(u)}  = A^{1,3}g(v). \]
        Hence
        \[\Ex[v \in X\sqbrack{1}]{\iprod{A_v \scres{f}{v}^0,\scres{g}{v}^0}} = \Ex[v \in X\sqbrack{1}]{\iprod{A^{1,2} f(v) A^{1,3}g(v)}} = \iprod{A^{1,2} f(v), A^{1,3}g(v)}.\]
        From Cauchy-Schwarz
        \[ \iprod{A^{1,2} f(v), A^{1,3}g(v)} \leq \lambda(A^{1,2}) \lambda(A^{2,3})\norm{f} \norm{g} = \lambda(A^{1,2}) \lambda(A^{1,3}).\]
    \end{itemize}
    Summing up the two terms, we get that the operator is bounded by $\eta + \lambda(A^{1,2}) \lambda(A^{2,3})$.
\end{proof}

\begin{proof}[Proof of \pref{claim:localization-lemma}]
\[ \iprod{\comp{I_1,I_2} f,g} = \Ex[t \in X \sqbrack{I_1}]{g(t)\Ex[s \in X_t \sqbrack{I_2}]{f(s)}} =\] \[\Ex[s \dunion t \in X \sqbrack{I_1 \dunion I_2}]{f(s)g(t)}, \]
where the last expectation is by choosing two faces according to the random walk defined using $\comp{I_1, I_2}$. We condition on choosing some $r \in X \sqbrack{I_3}$:
\[ = \Ex[r \in X \sqbrack{I_3}]{\Ex[s \dunion t \in X_r \sqbrack{I_1 \dunion I_2}]{f(s)g(t)}}\]
\[ = \Ex[r \in X \sqbrack{I_3}]{\Ex[s \dunion t \in X_r \sqbrack{I_1 \dunion I_2}]{\scres{f}{r}(s)\scres{g}{r}(t)}}, \]
Following the previous steps in every link we conclude:
\[ = \Ex[r \in X \sqbrack{I_3}]{\iprod{\comp{I_1,I_2}_{r} \scres{f}{r}, \scres{g}{r}}}.\]

\end{proof}

\subsubsection{Partite trickling down lemma}
\label{sec:partite-trickling-down}
We now go towards proving \pref{lem:colorful-trickling-down-lemma}, since its corollary, \pref{cor:colorful-trickling-corollary} is the base case for proving \pref{thm:complement-walk-is-a-good-expander}, item 2. This lemma is an adaptation of the theorem in \cite{Oppenheim2018}, where the author proved the following:
\begin{theorem}[Theorem 5.2 in \cite{Oppenheim2018}]
\label{thm:Oppenheim-trickling-down-lemma}
Let $X$ be simplicial complex,  Let $-1 < k \leq d-2$ be some integer. For any $s \in X$, denote by $\lambda(X_s)$ the second largest eigenvalue of the underlygraph of $X_s$, in absolute value.

If for all $s \in X(k)$, $\lambda(X_s) \leq \lambda$, for some $\lambda \in (0,\frac{1}{2}]$, then for any $r \in X(k-1)$, s.t. $X_r$'s underlying graph is connected, $\lambda_2(X_r) \leq \frac{\lambda}{1-\lambda}$.
\end{theorem}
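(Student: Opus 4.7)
The plan is to reduce the bound on $\mu := \lambda_2(X_r)$ to a scalar quadratic inequality via Oppenheim's local-to-global Garland-type decomposition. Set $Y := X_r$; since $k \leq d-2$, the complex $Y$ has dimension at least $2$, and for every vertex $v \in Y(0)$ the one-step link $Y_v$ equals $X_{r \cup \{v\}}$ with $r \cup \{v\} \in X(k)$, so by hypothesis $\lambda(Y_v) \leq \lambda$. Let $A$ denote the non-lazy adjacency operator on $L^2(Y(0))$ with the inherited probability measure, and let $f$ be a $\lambda_2$-eigenvector with $\|f\|=1$ and $\langle f, \mathbf 1\rangle = 0$ (possible since the underlying graph of $Y$ is connected). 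The goal reduces to showing $\mu = \langle Af, f\rangle \leq \lambda/(1-\lambda)$.

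First, I would perform the local decomposition at each vertex. For $v \in Y(0)$, write $f|_{Y_v(0)} = a_v\mathbf 1 + g_v$ with $a_v := \mathbb E_{u \in Y_v(0)}[f(u)]$ and $g_v \perp \mathbf 1$ in $L^2(Y_v(0))$. A direct computation shows $a_v = (Af)(v)$, whence $\mathbb E_v[a_v^2] = \|Af\|^2 = \mu^2$. By reversibility of the vertex-to-neighbor distribution (the marginal on $u$ of picking $v$ and then $u \sim v$ is the vertex measure), $\mathbb E_v[\|f|_{Y_v(0)}\|^2] = \|f\|^2 = 1$, so by Pythagoras $\mathbb E_v[\|g_v\|^2] = 1 - \mu^2$. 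Second, I would invoke the Garland-type identity: in a weighted simplicial complex with inherited measure, the distribution on triples $(v, u_1, u_2)$ obtained by picking $v \in Y(0)$ and then an edge in $Y_v(1)$ coincides with the distribution obtained by picking a $2$-face of $Y$ and marking one vertex, so
\[
\langle Af, f\rangle \;=\; \mathbb E_{v \in Y(0)}\, \mathbb E_{\{u_1,u_2\} \in Y_v(1)}\bigl[f(u_1)f(u_2)\bigr].
\]
Expanding $f(u_j) = a_v + g_v(u_j)$ inside the inner expectation, the cross terms vanish since $\mathbb E[g_v] = 0$, and the surviving $g_v$-term is exactly the Rayleigh quotient of $g_v$ against the adjacency operator of $Y_v$, which by the hypothesis $\lambda(Y_v) \leq \lambda$ and $g_v \perp \mathbf 1$ is bounded by $\lambda \|g_v\|^2$. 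Thus each inner expectation is at most $a_v^2 + \lambda\|g_v\|^2$.

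Averaging over $v$ and substituting the Pythagorean identity from step one yields
\[
\mu \;=\; \langle Af, f\rangle \;\leq\; \mathbb E_v[a_v^2] + \lambda\,\mathbb E_v[\|g_v\|^2] \;=\; \mu^2 + \lambda(1-\mu^2) \;=\; (1-\lambda)\mu^2 + \lambda,
\]
which rearranges to $(1-\lambda)(\mu - 1)\bigl(\mu - \tfrac{\lambda}{1-\lambda}\bigr) \geq 0$. Connectedness of the underlying graph of $Y$ forces $\mu < 1$, and $\lambda \leq \tfrac12$ gives $1 - \lambda > 0$; hence the remaining factor must be nonpositive, i.e.\ $\mu \leq \lambda/(1-\lambda)$, which is the desired bound. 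The main subtlety of the argument will be verifying the Garland identity with the correct inherited weights on a weighted pure simplicial complex, so that the two sampling procedures on $2$-faces really do match; everything else is just the link hypothesis plus a one-variable quadratic inequality.
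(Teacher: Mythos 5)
Your proof is correct, but note that the paper itself does not prove this statement: it cites it from \cite{Oppenheim2018} (as Theorem 5.2 there) and uses it as a black box. What you have reconstructed is precisely Oppenheim's original trickling-down argument. The paper \emph{does} carry out the same Garland-type argument for its own partite adaptation, \pref{lem:colorful-trickling-down-lemma}: decompose the restriction/localization of the test function at each vertex $v$ into a constant part plus a part $g_v$ orthogonal to constants, use the link-expansion hypothesis to bound $\langle A_{Y_v} g_v, g_v\rangle$ by $\lambda\|g_v\|^2$, identify the constant part with a value of the global adjacency (or, in the partite case, a colored walk) operator, average, and close with the quadratic inequality $\mu \leq (1-\lambda)\mu^2 + \lambda$ and a connectivity argument. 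So your approach matches both Oppenheim's proof and the technique the paper employs for its partite analogue; all the steps, including the Garland reweighting of two-faces, the Pythagorean bookkeeping giving $\mathbb{E}_v\|g_v\|^2 = 1 - \mu^2$, and the final factorization $(1-\lambda)(\mu-1)\left(\mu - \tfrac{\lambda}{1-\lambda}\right) \geq 0$, are correct.
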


We begin by giving another version of the localization claim:
\begin{claim}[second localization lemmata]
\label{claim:partite-localization-lemma}
Let $X$ be any $d+1$-partite simplicial complex, and let $I_1,I_2$ be disjoint color sets, and $I_3 \subsetneq I_1$. Let $f \in \RR^{X\sqbrack{I_1}}, g \in \RR^{X\sqbrack{I_2}}$. Then $\iprod{\comp{I_1,I_2} f,g} = \Ex[r \in X\sqbrack{I_3}]{\iprod{\comp{I_1 \setminus I_3,I_2}_{r} \scloc{f}{r},\scres{g}{r}}}.$
Where $\comp{I_1 \setminus I_3,I_2}_{r}$ is the colored complement walk in the link of $r$.
\end{claim}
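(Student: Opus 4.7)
The proof will mirror the one for the first localization lemma (\pref{claim:localization-lemma}), but with the essential twist that $I_3$ is now a subset of $I_1$ rather than a disjoint color class. This is what forces the use of the localization operator $\scloc{f}{r}$ on $f$ (we are \emph{splitting} $I_1$ and pulling out the $I_3$-part), while $g$, whose color set is still disjoint from $I_3$, only needs the restriction operator $\scres{g}{r}$.

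Concretely, I would start by unfolding the left-hand side using the definition of the colored complement walk:
\[ \iprod{\comp{I_1,I_2} f, g} = \Ex[t \in X[I_2]]{g(t)\Ex[s \in X_t[I_1]]{f(s)}} = \Ex[s \dunion t \in X[I_1 \dunion I_2]]{f(s)g(t)}.\]
Since $I_3 \subsetneq I_1$, every face $s \in X[I_1]$ decomposes uniquely as $s = r \dunion s'$ with $r \in X[I_3]$ and $s' \in X_r[I_1 \setminus I_3]$. The distribution on $s \dunion t \in X[I_1 \dunion I_2]$ therefore factors as: first choose $r \in X[I_3]$ according to the marginal, then choose $s' \dunion t \in X_r[(I_1 \setminus I_3) \dunion I_2]$ from the link of $r$. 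Applying the definitions $\scloc{f}{r}(s') = f(r \dunion s')$ and $\scres{g}{r}(t) = g(t)$ (the latter makes sense because $t$'s colors lie in $I_2$, which is disjoint from $I_3$, so $r \dunion t \in X$), we get
\[ \Ex[s \dunion t \in X[I_1 \dunion I_2]]{f(s)g(t)} = \Ex[r \in X[I_3]]{\Ex[s' \dunion t \in X_r[(I_1 \setminus I_3) \dunion I_2]]{\scloc{f}{r}(s')\, \scres{g}{r}(t)}}. \]
Recognizing the inner expectation as the inner product with respect to the colored complement walk $\comp{I_1 \setminus I_3,I_2}_r$ in the link $X_r$ finishes the computation.

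The only conceptual point to verify carefully is the factorization of the sampling distribution: one must check that the probability of choosing $(r, s', t)$ in the two-step process (marginal on $r$, then link of $r$) matches the direct probability of $s \dunion t = (r \dunion s') \dunion t$. This is exactly the conditional factorization used throughout the paper for link distributions in partite complexes, so it should be routine. There is no real obstacle here; the lemma is essentially a bookkeeping statement about how the colored walk decomposes when we peel off a sub-color-set $I_3 \subset I_1$, and its proof is a one-line rewrite once the localization/restriction notation is in place.
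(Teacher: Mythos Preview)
Your proposal is correct and follows exactly the approach the paper indicates: the paper omits this proof, stating only that it is similar to that of \pref{claim:localization-lemma}, and your argument is precisely the natural adaptation of that proof, replacing restriction by localization on the $f$ side because $I_3 \subsetneq I_1$.
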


The proof is similar to the proof of \pref{claim:localization-lemma} and is therefore omitted.

\begin{proof}[Proof of  \pref{lem:colorful-trickling-down-lemma}]
Fix two colors $i,j$, s.t. $\lambda^{\set{i},\set{j}}$ is maximal, and fix some $k \ne i,j$. Take two functions $f: X\sqbrack{i} \to \RR, g: X\sqbrack{j} \to \RR$, s.t. $\ex{f}=\ex{g}=0$, $||f||=||g||=1$, and that $||\comp{\set{i},\set{j}}||=\iprod{\comp{\set{i},\set{j}} f,g}$.

For every $v \in X[k]$ we decompose $\scloc{f}{v},\scres{g}{v}$ to their constant part and the part that is perpendicular to constant functions:
\[ \scloc{f}{v} = (\scloc{f}{v})^0 + (\scloc{f}{v})^\perp ; \; \scloc{g}{v} = \scres{g}{v}^0 + \scres{g}{v}^\perp. \]
Thus by \cref{claim:partite-localization-lemma}:
\begin{align*}
\iprod{\comp{\set{i},\set{j}} f,g} = \Ex[v \in X\sqbrack{k}]{\iprod{\comp{\set{i},\set{j}}_v (\scloc{f}{v}), \scloc{g}{v}}} &= \\
\Ex[v \in X\sqbrack{k}]{\iprod{\comp{\set{i},\set{j}}_v (\scloc{f}{v})^0, \scres{g}{v}^0} + \iprod{\comp{\set{i},\set{j}}_v (\scloc{f}{v})^\perp, \scres{g}{v}^\perp}} &\leq \\
\Ex[v \in X\sqbrack{k}]{\iprod{\comp{\set{i},\set{j}}_v (\scloc{f}{v})^0, \scres{g}{v}^0} + \lambda \frac{||(\scloc{f}{v})^\perp||^2+||\scres{g}{v}^\perp||^2}{2}} &= \\
\Ex[v \in X\sqbrack{k}]{\iprod{\comp{\set{i},\set{j}}_v (\scloc{f}{v})^0, \scres{g}{v}^0} + \lambda \frac{||\scloc{f}{v}||^2+||\scres{g}{v}||^2}{2} - \lambda \frac{||(\scloc{f}{v})^0||^2+||\scres{g}{v}^0||^2}{2}} &\leq \\
\Ex[v \in X\sqbrack{k}]{\iprod{\comp{\set{i},\set{j}}_v (\scloc{f}{v})^0, \scres{g}{v}^0} + \lambda \frac{||\scloc{f}{v}||^2+||\scloc{g}{v}||^2}{2} - \lambda ||(\scloc{f}{v})^0|| ||\scres{g}{v}^0||}  &\leq \\
(1-\lambda) \Ex[v \in X\sqbrack{k}]{\iprod{\comp{\set{i},\set{j}}_v (\scloc{f}{v})^0, \scres{g}{v}^0}} + \lambda
\end{align*}

The last inequality is by Cauchy-Schwartz.

Notice that the average value that is in all the entries of $(\scloc{f}{v})^0$, is exactly $\comp{\set{i},\set{k}}f(v)$, and similarly $\scres{g}{v}^0$'s entries are $ \comp{\set{j},\set{k}} g(v)$ hence the above is equal to:
\[(1-\lambda)\iprod{\comp{\set{i},\set{k}}f, \comp{\set{j},\set{k}} g} + \lambda  \leq (1-\lambda) ||\comp{\set{i},\set{k}}|| ||\comp{\set{j},\set{k}}|| + \lambda,\]
and since $||\comp{\set{i},\set{j}}||$ is maximal:
\[ \leq (1-\lambda)||\comp{\set{i},\set{j}}||^2 + \lambda.\]
The inequality \[||\comp{\set{i},\set{j}}|| \leq (1-\lambda)||\comp{\set{i},\set{j}}||^2 + \lambda\] indicates that $||\comp{\set{i},\set{j}}|| \geq 1$ or $||\comp{\set{i},\set{j}}|| \leq \frac{\lambda}{1-\lambda}$.
If we show that the walk is connected, then as an immediate conclusion $||\comp{\set{i},\set{j}}|| \leq \frac{\lambda}{1-\lambda}$. We separate the proof that the walk is connected to the following claim:
\begin{claim}
\label{claim:basic-connectivity}
Let $X$ be a $d$-partite simplicial complex s.t. every link of $X$ is connected. Then for every $i,j \in \set{1,...,d}$, the induced graph between vertices of color $i$ and vertices of color $j$ is connected.
\end{claim}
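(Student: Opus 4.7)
The plan is to prove \pref{claim:basic-connectivity} by induction on the number of colors $d$. The base case $d = 2$ is immediate: the bipartite graph between the two colors coincides with the underlying graph of $X = X_{\emptyset}$, which is connected by hypothesis.

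For the inductive step with $d\geq 3$, fix distinct colors $i,j$. The central tool I would use is a \emph{shortcut principle}: for any vertex $w\in X$ with $\col(w)\notin\{i,j\}$, the link $X_w$ is a $(d{-}1)$-partite complex whose own links are links of $X$ and hence connected, so by the outer inductive hypothesis the bipartite graph between colors $i$ and $j$ in $X_w$ is connected. Since this graph embeds as a subgraph of $G_{ij}$, all of $X_w[i]\cup X_w[j]$ lies in a single connected component of $G_{ij}$; the analogous statement when $\col(w)\in\{i,j\}$ is immediate, since $w$ itself already supplies two direct edges in $G_{ij}$. In particular, any two vertices of $X[i]\cup X[j]$ that share a common $X$-neighbor belong to the same component of $G_{ij}$.

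To finish, I would take arbitrary $u,u'\in X[i]\cup X[j]$, use the connectedness of the underlying graph of $X=X_{\emptyset}$ to select a path $u=v_0,v_1,\ldots,v_m=u'$ in it, and induct on $m$. The case $m=1$ is immediate from partiteness, and if some intermediate $v_r\in X[i]\cup X[j]$ the path splits into two strictly shorter pieces, each handled by the induction on $m$. The remaining case is $m\geq 3$ with all intermediate vertices $v_1,\ldots,v_{m-1}$ colored outside $\{i,j\}$. I expect this to be the delicate step: using the partite purity standard in the HDX setting of this paper, the edge $\{v_1,v_2\}$ is contained in a top face of full color set and thus yields a vertex $w\in X[i]$ adjacent in $X$ to both $v_1$ and $v_2$. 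The shortcut principle applied at $v_1$ gives that $v_0$ and $w$ lie in the same component of $G_{ij}$, while $w,v_2,v_3,\ldots,v_m$ is a path of length $m{-}1$ between vertices of $X[i]\cup X[j]$ to which the induction on $m$ applies, closing the argument.
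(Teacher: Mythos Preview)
Your proposal is correct and follows essentially the same approach as the paper: an outer induction on the number of colors, using the inductive hypothesis inside the link of an intermediate vertex of a ``wrong'' color (your shortcut principle) together with purity of the complex to produce a color-$i$ vertex in a top face containing the next edge. The only difference is organizational---you run an explicit inner induction on the path length~$m$, whereas the paper phrases the same replacement as a step-by-step substitution along the walk---but the underlying argument is identical.
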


Modulo this claim, the lemma follows.
\end{proof}

\begin{proof}[Proof of \pref{claim:basic-connectivity}]
We prove this by induction on $d$ - the number of parts. The base case of two parts is clear. Assume for $d$ parts and prove for $d+1$ parts:

Take some $v \in X[i], u \in X[j]$, as we already assumed that the whole complex is connected there is a walk $v=w_0,w_1,...,w_t,w_{t+1}=u$. We prove now that if $w_{q} \in X[i] \dunion X[j]$ and $w_{q+1} \notin X[i] \dunion X[j]$ we can substitute it with a walk from $w_{q}$ to $w_{q+2}$, where all the vertices except maybe $w_{q+2}$ are in $w_{q} \in X[i] \dunion X[j]$.

Each edge $\set{w_{q+1},w_{q+2}}$ is contained in some $d+1$-face $s \in X(d)$. We denote by $w^i_{q+1}, w^j_{q+1}$ the vertices in $s$ that are in $X[i], X[j]$ respectively.

Assume without loss of generality that $w_q \in X[I]$. The link of $w_{q+1}$, is a $d$-partite complex. By the induction hypothesis it is color connected, i.e. there is a walk between any two vertices from colors $i,j$ in the link. Specifically we can walk from $w_{q}$ to $w^j_{q+1}$. Also, as $w^j_{q+1}$ and $w_{q+2}$ share a a $d$-face, they also share an edge. Thus the walk between $w_q$ to $w^j_{q+1}$ and the edge $\set{w^j_{q+1},w_{q+2}}$ is the walk between $w_{q}$ and $w_{q+2}$ where all vertices except (maybe) $w_{q+2}$ are in $X[I] \dunion X[J]$.
\end{proof}

\subsection{Complement Walk for the Grassmann }
\label{sec:grassman-comp-walk}
In this subsection, we prove that the complement walk in the Grassmann Poset has good spectral gap, as stated in \pref{claim:grassmann-comp-walk}. We feel that the notion of complement walks could be generalized to many other Posets, however in this paper we merely study the complement walk of the Grassmann Poset.

\restateclaim{claim:grassmann-comp-walk}

\begin{proof}[Proof of the Affine Case]
Let $u \subset U$ be of dimension $\ell_3$. If we denote by $A$ the bipartite operator of the $\ell_1,\ell_2$-affine-complement walk and by $J$ the bipartite operator of just choosing $w_1,w_2$ independently. Denote by $E$ the event that $dim(span(w_1,w_2,u)) = \ell_1 + \ell_2 + \ell_3 + 2$. We can say that
\[ eA = J - (1-e)M. \]
where $e$ is the probability of choosing $w_1,w_2$ independently so that $E$ occurs and $M$ is the operator conditioned that $E$ doesn't occur. Since the spectral norm of $J$ is $0$ when we restrict to the space of functions with expectation $0$, we obtain that
\[\norm{A} \leq \frac{1-e}{e}\norm{M} \leq \frac{1-e}{e}. \]

We calculate a lower bound on $e$. Consider the following process where we choose $\ell_1 + \ell_2$ points $(p_1,...,p_{\ell_1+\ell_2})$ sequentially so that the first $\ell_1$ points span $w_1$, and the other $\ell_2$ span $w_2$. If we choose these points so that in $j$-step $p_{j} \notin span(U,p_1,...,p_{j-1})$, then $E$ occurs.

For every $j$, if we chose $p_1,...p_{j-1}$ so that $span(U,p_1,...,p_{j-1})$ is of maximal dimension, then the probability to choose $p_j \in span(U,p_1,...,p_{j-1})$ is $\frac{q^{\ell_3+j-1}}{q^n} = \frac{1}{q^{n-\ell_3-j+1}}$.

By union bound, we get that the probability that
\[ e \geq 1 - \sum_{j=0}^{\ell_1+\ell_2}\frac{1}{q^{n-\ell_3-j+1}}, \]
Rearranging and taking to infinity the geometric sum, we get that this is greater or equal to
\[ e \geq 1- \frac{1}{q^{n-\ell_3-\ell_2-\ell_1-1}}\frac{q}{q-1} \geq 1-\frac{2}{q^{n-\ell_3-\ell_2-\ell_1-1}}.\]
Hence we get that the expansion is bounded by $\frac{4}{q^{n-\ell_1-\ell_2-\ell_3-1}}$.

\end{proof}

\begin{proof}[Proof of the Linear Case]
Similar to the affine case, let $u \subset U$ be of dimension $\ell_3$. We denote by $A$ the bipartite operator of the $\ell_1,\ell_2$-affine-complement walk and by $J$ the bipartite operator of just choosing $w_1,w_2$ independently. Denote by $E$ the event that $dim(w_1 \oplus w_2 \oplus u)) = \ell_1 + \ell_2 + \ell_3 + 1$. And as before we obtain that
\[\norm{A} \leq \frac{1-e}{e}. \]

where $e$ is the probability of choosing $w_1,w_2$ independently so that $E$ occurs.

We calculate a bound lower on $e$. Consider the following process where we choose $\ell_1 + \ell_2$ lines $(r_1,...,r_{\ell_1+\ell_2})$ sequentially so that the first $\ell_1$ lines span $w_1$, and the other $\ell_2$ span $w_2$. If we choose these lines so that in $j$-step $p_{j} \notin span(U,r_1,...,r_{j-1})$, then $E$ occurs.

For every $j$, if we chose $r_1,...r_{j-1}$ so that $span(U,r_1,...,r_{j-1})$ is of maximal dimension, then the probability to choose $r_j \in span(U,r_1,...,r_{j-1})$ is $\frac{q^{\ell_3+j}-1}{q^n - 1} \leq \frac{1}{q^{n-\ell_3-j}}$.

Similarly to the previous case, by union bound, we get that the probability that
\[ e \geq 1 - \sum_{j=1}^{\ell_1+\ell_2}\frac{1}{q^{n-\ell_3-j}}, \]
Rearranging and taking to infinity the geometric sum, we get that this is greater or equal to
\[ e \geq 1- \frac{1}{q^{n-\ell_3-\ell_2-\ell_1-2}}\frac{q}{1-q} \geq 1-\frac{2}{q^{n-\ell_3-\ell_2-\ell_1-2}}.\]
Hence we get that the expansion is bounded by $\frac{4}{q^{n-\ell_1-\ell_2-\ell_3-2}}$.
\end{proof}

\rem{\begin{proof}[Proof of \pref{claim:grassmann-comp-walk}]
First we reduce to the case where $u = \set{0}$:
Let $u \in X(\ell_3)$ for $\ell_3 > -1$. Choosing $(a_1,s,a_2)$ so that $a_1 \oplus a_2 \oplus u \subset s$ is similar to choosing $(w_1,w_2)$ in the $\ell_1,\ell_2$-complement walk for $\Grassmann{q}{\mathbb{F}^n / u}{d}$, and then choosing \emph{independently} $a_1 \in X(\ell_1), a_2 \in X(\ell_2)$ so that $\pi(a_i) = w_i$ for $i=1,2$, where $\pi$ is the projection $\mathbb{F}^n \mapsto \mathbb{F}^n/u$. As the choice is done independently, the spectral gap is the same as in the $\ell_1,\ell_2$-complement walk in $\Grassmann{q}{\mathbb{F}^n / u}{d}$.

\bigskip

Next, we prove this using \pref{lem:structure-trickling-lemma}, similar to the proof of \pref{thm:complement-walk-is-a-good-expander}. We prove the statement by induction on $\ell_1 + \ell_2 = k$. The base case is $\ell_1 + \ell_2 = 0$, i.e. $\ell_1=\ell_2=0$. In this case, the complement walk is just choosing two different $1$-dimensional spaces uniformly at random. This random walk is a $\frac{1}{q^n-1}$-bipartite expander.

Assume the statement is true for any $\ell_1,\ell_2$ s.t. $\ell_1 + \ell_2 \leq k$, and consider the graph operator of the complement walk graph $\comp{\ell_1,\ell_2+1}:\RR^{X(\ell_1)} \to \RR^{X(\ell_2+1)}$, for some $\ell_1,\ell_2$ s.t. $(\ell_1+1)+\ell_2 = k+1$. We need to prove that
\[ \lambda(\comp{\ell_1+1,\ell_2})  \leq \frac{(\ell_1+1)(\ell_2+1)}{q^{n-\ell_1-\ell_2}-1},\]

Note that it is enough to prove for the case where we take $\ell_1+1$ since the adjoint of $\comp{\ell_1+1,\ell_2}$ is $\comp{\ell_1,\ell_2+1}$.
It might be easy to keep in mind the first non-trivial case where $\ell_1+1=1$ and $\ell_2=0$.

Consider the following $2$-dimensional $3$-partite simplicial complex $Y$:
\begin{itemize}
    \item The vertices are $Y \sqbrack{1} = X(\ell_1), Y \sqbrack{2} = X(\ell_1 + 1), Y \sqbrack{3} = X(\ell_2)$.
    \item We connect $(y_1,y_2,y_3) \in Y(2)$ if $y_1 \subset y_2$ and $y_2 \oplus y_3 \in X(\ell_1+\ell_2+2)$. The probability of choosing some $(y_1,y_2,y_3)$ is uniform. However, we can view it as the probability of choosing $(y_2,y_3)$ in the complement walk, and then choosing a $\ell_1+1$-dimensional space $y_1$ in $y_2$.
\end{itemize}
We notice the following:
\begin{enumerate}
    \item $\comp{2,3}$ is the bipartite operator of the bipartite walk between $X(\ell_1+1), X(\ell_2)$.
    \item $\comp{1,3}$ is the bipartite operator of the bipartite walk between $X(\ell_1), X(\ell_2)$. By induction $\lambda(\comp{1,3}) \leq \frac{(\ell_1+1)(\ell_2+1)}{q^{n-\ell_1-\ell_2}-1}$.
    \item for every $w \in Y\sqbrack{1}$, the bipartite operator of the link of $w$ is the $w$-conditioned $\ell_1',\ell_2'$-complement walk for $\ell_1' = 0, \ell_2' = \ell_2$ in the link of $w$. as $\ell_1' + \ell_2' < (\ell_1 + 1) + \ell_2$, we may use the induction assumption to conclude that $\lambda(M_s) \leq \frac{(\ell_2+1)}{q^{n-\ell_2}-1}$.
\end{enumerate}
Hence we can apply \pref{lem:structure-trickling-lemma} and conclude that
\[ \lambda(\comp{2,3}) \leq \frac{(\ell_1+1)(\ell_2+1)}{q^{n-\ell_1-\ell_2}-1} + \frac{(\ell_2+1)}{q^{n-\ell_2}-1} \leq \frac{(\ell_1+2)(\ell_2+1)}{q^{n-\ell_1-\ell_2}-1}.\]

\end{proof}}

\subsection{Random Walks with Fixed Union Size}
\label{sec:random-walks-fixed-union-size}
As a generalization of the complement walk, we can also define a random walk where we go from $\ell_1 \in X(\ell)$ to $\ell_2 \in X(\ell)$ if their union is of size $\ell+1+j$ for some fixed $j > 0$.

\begin{definition}[Fixed Union Size Walk]\torestate{\label{def:fixed-union-size-walk}
Let $X$ be a $d$-dimensional simplicial complex. Let $\ell \geq 0$ and $1 \leq j \leq \ell+1$ so that $\ell + j + 1 \leq d$. The \emph{$\ell,\ell+j$-fixed union walk} is a random walk on $X(\ell)$, where given $t \in X(\ell)$ we:
\begin{enumerate}
    \item Choose $s \in X(\ell + j)$ given that $t \subset \ell$.
    \item Choose $t' \in X(\ell)$ given that $t \cup t' = s$. Equivalently, we can require that $t' \subset s$ and that $\abs{t \cap t'} = \ell+1-j$.
\end{enumerate}}
\end{definition}

For example, if $j=\ell+1$, this walk is the complement walk. If $j=1$ this is just the non-lazy version of the upper-walk (where we choose $t,t'$ if they are contained in some $s \in X(\ell+1)$.
%

In \cite{DiksteinDFH2018}, the authors proved that in a $\lambda$-two-sided high dimensional expander, the difference between the non-lazy upper walk and the $\ell,\ell-1$-lower walk is bounded by $\lambda$ in spectral norm.
\begin{lemma}[\cite{DiksteinDFH2018} Theorem 5.5 item 1]
Let $X$ be a $\lambda$-two-sided spectral expander, then
\[ \norm{A - L} \leq \lambda, \]
where $A$ is the non-lazy $\ell,\ell+1$-upper walk adjacency operator, and $L$ is the $\ell,\ell-1$ lower-walk adjacency operator.
$\qed$
\end{lemma}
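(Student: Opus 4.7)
The plan is to reduce the operator norm bound $\|A-L\|\le\lambda$ to a statement purely inside the links $X_a$ for $a\in X(\ell-1)$, via Garland's localization technique (the same tool used in \pref{lem:structure-trickling-lemma} and its corollaries). Both $A$ and $L$ have exactly the same support on pairs $(t,t')$ with $|t\cap t'|=\ell$, plus $L$ has a diagonal contribution. The key observation is that when one localizes to an $(\ell-1)$-face $a$, the pair $(t,t')$ with $a=t\cap t'$ corresponds to an edge (or possibly self-loop) $\{v,v'\}$ in the link graph $X_a$, where $v=t\setminus a$ and $v'=t'\setminus a$. Under the two-sided HDX hypothesis, each link graph $X_a$ has spectral gap at least $1-\lambda$, and this is precisely what we will exploit.

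Concretely, I would first establish a quadratic-form identity of the shape
\[
\iprod{(A-L)f,\,f}\;=\;\EE_{a\in X(\ell-1)}\bigl[\iprod{(W_a-J_a)\,f_a,\,f_a}\bigr],
\]
where $f_a\colon X_a(0)\to\RR$ is the localization $f_a(v)=f(a\cup\{v\})$, $W_a$ is the (weighted) non-lazy adjacency operator of the underlying graph of $X_a$, and $J_a$ is the orthogonal projection onto constants in $\ell_2(X_a(0))$. This identity is produced by expanding both operators on the coefficient level: for $L$, conditioning on $a\subset t$ exactly averages $f_a$ against itself and yields the $-J_a f_a$ term (the diagonal contribution of $L$ becomes the constants-projector in the link); for $A$, the pair $(t,t')$ with intersection $a$ contributes to $\iprod{Af,f}$ precisely the weight of the edge $\{v,v'\}$ in the link $X_a$, which assembles into $W_a$. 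The main bookkeeping is checking that the global weights in $X$ are compatible with the link weights so that the averages telescope correctly; for the weighted HDX setting this is the standard Oppenheim-style consistency (the same one used in \pref{claim:localization-lemma}).

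Given the localization identity, the bound follows immediately: since every link $X_a$ is a $\lambda$-two-sided spectral expander, we have $\|W_a-J_a\|\le\lambda$ for each $a$, hence
\[
\Bigl|\iprod{(A-L)f,f}\Bigr|\;\le\;\EE_a\bigl[\lambda\,\|f_a\|^2\bigr]\;=\;\lambda\,\|f\|^2,
\]
using the Garland averaging identity $\|f\|^2=\EE_a\|f_a\|^2$. Since $A-L$ is self-adjoint, the quadratic-form bound upgrades to the operator-norm bound $\|A-L\|\le\lambda$.

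The main obstacle is the localization identity itself. It is tempting but wrong to try to prove $A=\EE_a[W_a]$ and $L=\EE_a[J_a]$ separately as operators: only the combination $A-L$ localizes cleanly to $W_a-J_a$, because the ``lazy'' on-diagonal contributions of $L$ must cancel certain off-diagonal contributions of the upper walk to produce the non-lazy link operator. Getting this cancellation right requires aligning the three measures involved---the top-level measure on $X(\ell+1)$ (driving $A$ via $D_\ell U_\ell$), the $X(\ell-1)$-level measure (driving $L$ via $U_{\ell-1}D_{\ell-1}$), and the link measures on $X_a(0)$---and verifying they all arise as marginals of a single joint distribution on the flag $(a,v,v',s)$ with $s=a\cup\{v,v'\}\in X(\ell+1)$. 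Once that flag distribution is written down, the identity is a one-line calculation.
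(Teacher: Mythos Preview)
The paper does not prove this lemma at all: it is quoted from \cite{DiksteinDFH2018} and closed with a $\qed$. So there is no ``paper's own proof'' to compare against. That said, your approach is correct and is exactly the standard Garland localization argument; in fact the paper itself uses this very technique to prove the generalization in \pref{cor:spectral-gap-of-conditional-intersection} (where your $W_a$ becomes the $(j{-}1,j{-}1)$-complement walk in $X_a$ and your $J_a$ comes from the constant part $\scloc{f}{a,0}=D_{\ell,\ell-j}f(a)$).

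One small correction to your narrative: you write that ``only the combination $A-L$ localizes cleanly to $W_a-J_a$''. In fact both pieces localize separately. One has
\[
\iprod{Af,g}=\Ex[a\in X(\ell-1)]{\iprod{W_a\scloc{f}{a},\scloc{g}{a}}},\qquad
\iprod{Lf,g}=\Ex[a\in X(\ell-1)]{\iprod{J_a\scloc{f}{a},\scloc{g}{a}}},
\]
since the second identity is just $\iprod{D_{\ell,\ell-1}f,D_{\ell,\ell-1}g}=\iprod{Lf,g}$. So there is no delicate cancellation to manage; the localization is clean term-by-term, and the rest of your argument ($\|W_a-J_a\|\le\lambda$ from the two-sided link hypothesis, then $\Ex[a]{\|\scloc{f}{a}\|^2}=\|f\|^2$) finishes it.
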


We generalize this result, and show that the difference between the $\ell,j$-fixed union walk and the $\ell,\ell-j$-lower walk is bounded by the spectral gap of the $j,j$-complement walk. In particular, by \pref{thm:complement-walk-is-a-good-expander}, the complement walk is bounded by $j^2\lambda$ for any $\lambda$-two-sided high dimensional expander.

\begin{corollary} \label{cor:spectral-gap-of-conditional-intersection}
Let $X$ be a $\lambda$-two-sided high dimensional expander. Fix some $\ell$ and $1 \leq j \leq \ell+1$ so that $\ell+j+1 \leq d$. Denote by $A$ the adjacency operator for the $\ell,j$-fixed union walk. Denote by $L$ the adjacency operator of the $\ell,\ell-j$-lower walk. Then
\[ \norm{A - L} \leq j^2 \lambda. \]

In particular, $\lambda(A) \leq \frac{\ell+1-j}{\ell+1} + \bigO{\ell^2 \lambda}$.
\end{corollary}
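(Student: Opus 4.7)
My plan is to use Garland-style localization to reduce the comparison of $A$ and $L$ to a comparison inside each link $X_r$ for $r\in X(\ell-j)$, and then to invoke \pref{thm:complement-walk-is-a-good-expander} in each such link. The key observation is that both walks admit a decomposition indexed by the $(\ell-j)$-face $r = t\cap t'$. For the lower walk this is immediate: $L$ samples $r\in X(\ell-j)$ with its natural weight and then samples $t,t'\supset r$ in $X(\ell)$ independently, so setting $u := t\setminus r$ and $u' := t'\setminus r$, the pair $(u,u')\in X_r(j-1)\times X_r(j-1)$ is just an independent pair in the link (whose dimension is at least $2j-1$ since $\ell+j+1\le d$). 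For $A$ one samples $s\in X(\ell+j)$ and partitions $s = r\dunion u\dunion u'$ with $|r|=\ell-j+1$; a short bookkeeping argument, using that all natural weights in a pure simplicial complex factor consistently through intermediate levels, shows that the induced marginal on $r$ is again the natural one on $X(\ell-j)$ and that conditional on $r$ the pair $(u,u')$ follows precisely the $(j{-}1,j{-}1)$-complement walk on $X_r$. Writing $C_r$ for that complement operator, $J_r$ for the rank-one ``independent choice'' operator $J_r h := \ex{h}\cdot\mathbf{1}$, and $f_r(u):=f(r\dunion u)$ for the localization of $f$, I would thus get
\[
\iprod{Lf,g} \;=\; \Ex[r]{\iprod{J_r f_r, g_r}}, \qquad \iprod{Af,g} \;=\; \Ex[r]{\iprod{C_r f_r, g_r}}.
\]

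Next I would bound $\norm{C_r - J_r}$ inside each link. Since $X$ is a $\lambda$-two-sided HDX, every link $X_r$ is also a $\lambda$-two-sided HDX, and applying \pref{thm:complement-walk-is-a-good-expander} in $X_r$ with $\ell_1 = \ell_2 = j-1$ gives $\lambda(C_r)\le j^2\lambda$. Decomposing $f_r = f_r^{\mathbf{1}} + f_r^\perp$ into its constant and mean-zero parts and using that $C_r$ fixes constants yields $(C_r - J_r)f_r = C_r f_r^\perp$, so $|\iprod{(C_r-J_r)f_r, g_r}| \le j^2\lambda \cdot \norm{f_r^\perp}\cdot \norm{g_r^\perp}$. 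Averaging over $r$ and applying Cauchy--Schwarz together with the standard identity $\Ex[r]{\norm{f_r}^2} = \norm{f}^2$ then produces
\[
|\iprod{(A-L)f, g}| \;\le\; j^2\lambda \cdot \Ex[r]{\norm{f_r^\perp}\cdot\norm{g_r^\perp}} \;\le\; j^2\lambda \cdot \norm{f}\cdot\norm{g},
\]
which is $\norm{A-L}\le j^2\lambda$ as claimed.

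The ``In particular'' bound on $\lambda(A)$ will follow from the triangle inequality $\lambda(A)\le\lambda(L) + \norm{A-L}$: by \pref{cor:containment-generalized-graph-is-a-good-expander} the $\ell,\ell-j$-lower walk satisfies $\lambda(L)\le \frac{\ell+1-j}{\ell+1} + O(\ell\lambda)$, and absorbing $j^2\lambda \le (\ell+1)^2\lambda$ into an $O(\ell^2\lambda)$ error term gives the stated bound. The main step requiring real care will be the bookkeeping in the first paragraph, namely verifying that both decompositions share the same base distribution on $r$ and that the conditional distribution on $r$ in the $A$-decomposition really is the complement walk with respect to the induced weights on $X_r$; once that is checked, the rest is a clean application of Garland's method combined with the complement-walk theorem proven earlier in this section.
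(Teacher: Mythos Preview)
Your proposal is correct and follows essentially the same approach as the paper: localize to $r\in X(\ell-j)$, identify the conditional law under $A$ as the $(j-1,j-1)$-complement walk in $X_r$ and under $L$ as an independent pair, then bound the difference in each link by \pref{thm:complement-walk-is-a-good-expander} and average. The only cosmetic difference is that the paper does not introduce the rank-one operator $J_r$ explicitly; instead it observes directly that the constant part $f_r^{\mathbf 1}$ equals $D_{\ell,\ell-j}f(r)$, so the averaged constant contribution is $\iprod{D_{\ell,\ell-j}f,D_{\ell,\ell-j}g}=\iprod{Lf,g}$, which is exactly your $\Ex[r]{\iprod{J_r f_r,g_r}}$ rewritten.
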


\begin{proof}[Proof of \pref{cor:spectral-gap-of-conditional-intersection}]
The last part of $\lambda(A) \leq \frac{1}{\ell-j} + \bigO{\ell^2 \lambda}$, is just using the first part of the corollary, along with \pref{thm:containment-graph-is-a-good-expander} from which we obtain that
\[ \lambda(L) = \frac{\ell+1-j}{\ell+1} + \bigO{\ell^2 \lambda}.\]

as for the first part, consider two functions $f,g:X(\ell) \to \RR$ so that $\norm{f} = \norm{g} = 1$.
\[ \iprod{Af,g} = \Ex[a \in X(\ell-j)]{\iprod{{\comp{j,j}}_a\scloc{f}{a},\scloc{g}{a}}}, \]
where ${\comp{j,j}}_a$ is the $j,j$-complement walk in $X_a$.
This is true since choosing $t,t'$ by the $\ell,j$-fixed union walk, is the same as choosing the intersection $t \cap t' = a \in X(\ell-j)$, and then choosing $t \setminus a, t' \setminus a$ in the complement walk of $X_a(j)$. For each $a \in X(\ell-j)$ we denote
\[\scloc{f}{a} = \scloc{f}{a,0} + \scloc{f}{a,\perp}; \scloc{g}{a} = \scloc{g}{a,0} + \scloc{g}{a,\perp}\]
where $\scloc{f}{a,0}$ is constant and $\scloc{f}{a,\perp}$ is perpendicular to the constant part (and the same for $g$).

\[ \Ex[a \in X(\ell-j)]{\iprod{{\comp{j,j}}_a\scloc{f}{a},\scloc{g}{a}}} = \Ex[a \in X(\ell-j)]{\iprod{\scloc{f}{a,0},\scloc{g}{a,0}}} + \Ex[a \in X(\ell-j)]{\iprod{{\comp{j,j}}_a\scloc{f}{a,\perp},\scloc{g}{a,\perp}}}. \]
\begin{enumerate}
    \item $\abs{\Ex[a \in X(\ell-j)]{\iprod{{\comp{j,j}}_a\scloc{f}{a,\perp},\scloc{g}{a,\perp}}}} \leq j^2\lambda$ by \pref{thm:complement-walk-is-a-good-expander}, since this is applying the complement walk in $X_a$ to an operator perpendicular to the constant functions.

    \item The constant part
    \[ \scloc{f}{a,0} = \Ex[p \in X_a(j)]{\scloc{f}{a}(p)} = \Ex[a \subset t \in X(\ell)]{f(t)},\]
    and by definition this is
    $D_{\ell,\ell-j}f(a)$ (and the same for $g$). Thus
    \[ \Ex[a \in X(\ell-j)]{\iprod{\scloc{f}{a,0},\scloc{g}{a,0}}} = \Ex[a \in X(\ell-j)]{D_{\ell,\ell-j}f(a),D_{\ell,\ell-j}g(a)} = \iprod{D_{\ell,\ell-j}f,D_{\ell,\ell-j}g}.\]
    By definition of the lower-walk
    \[ \iprod{D_{\ell,\ell-j}f,D_{\ell,\ell-j}g} = \iprod{(D_{\ell,\ell-j})^*D_{\ell,\ell-j}f,g} = \iprod{Lf,g}.\]
\end{enumerate}
Combining the two item from above, we get that for every $f,g$ as above
\[ \abs{\iprod{Af,g} - \iprod{Lf,g}} \leq j^2\lambda,\]
or
\[ \norm{A - L} \leq j^2 \lambda.\]
\end{proof}

\subsection{High Dimensional Expander Mixing Lemma}
\label{sec:applications-of-complement-walks}

We can use our newly constructed complement walks and colored walks to prove high dimensional versions of the expander mixing lemma.

Let $A_1 \subset X(j_1),...,A_m \subset X(j_m)$, and denote by $k = \sum_{t=1}^m j_t + m - 1$. We denote by
\[ F(A_1,...,A_k) \defeq \sett{s \in X(k)}{\forall j \exists s_j \in A_j \; \; s_j \subset s}, \]
i.e. all $k$-faces that contain a subface from each $A_j$. For example, when $m=2$ and $j_1=j_2=0$, $F(A_1,A_2)$ are all edges between $A_1$ and $A_2$, in the underlying graph of $X$.

\begin{lemma}[High dimensional expander mixing lemma - two-sided]\torestate{
\label{lem:two-sided-HDEML}
Let $X$ be a $d$-dimensional $\lambda$-two sided link expander. Let $j_1,j_2,...,j_m \leq d$, and $A_1 \subset X(j_1), A_2 \subset X(j_2),..., A_m \subset X(j_m)$ s.t. for any $j_{\ell_1} \ne j_{\ell_2}$, and any $s \in A_{j_{\ell_1}}, t \in A_{j_{\ell_2}}$, $s \cap t = \emptyset$.
Then
\[ \Abs{\prob{F(A_1,A_2,...,A_k)} - \binom{k+1}{j_1+1,j_2+1,...,j_m+1} \prod_{j=1}^m \prob{A_j}} \leq C \lambda \sqrt[m]{\prod_{j=1}^m \prob{A_j}}\]
where $C$ depends on $m,d$ only.\footnote{here $\binom{k+1}{j_1+1,j_2+1,...,j_m+1}$ is the number of partitions of a set of size $k+1$ to sets of size $j_1+1,j_2+1,...,j_m+1$.}}
\end{lemma}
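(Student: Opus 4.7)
The plan is to combine a uniqueness argument with a multilinear expansion and iterated use of the complement walk from \pref{thm:complement-walk-is-a-good-expander}.

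First I would observe that under the pairwise-disjointness hypothesis, every $s \in F(A_1,\ldots,A_m)$ admits a \emph{unique} ordered decomposition $s = s_1 \dunion \cdots \dunion s_m$ with $s_i \in A_i$: if two decompositions differed at some coordinate, say $s_1 \neq s_1' \in A_1$, then every $s_j \in A_j$ for $j \geq 2$ must be disjoint from both $s_1$ and $s_1'$, forcing the remaining parts to lie inside $s \setminus (s_1 \cup s_1')$, which has strictly fewer than $\sum_{i \geq 2}(j_i+1)$ elements, a contradiction. Hence $\Pr[F] = N \cdot p_F$ where $N := \binom{k+1}{j_1+1,\ldots,j_m+1}$ and $p_F := \Ex_{s \in X(k),\pi}[\prod_i 1_{A_i}(s_i)]$ with $\pi$ a uniformly random ordered partition of $s$. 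It therefore suffices to show $|p_F - \prod_i \Pr[A_i]| \leq C_{m,d}\,\lambda(\prod_i \Pr[A_i])^{1/m}$. Writing $1_{A_i} = \Pr[A_i] + \bar f_i$ with $\Ex[\bar f_i] = 0$ and $\|\bar f_i\|_2^2 \leq \Pr[A_i]$, expanding, and noting that all singleton terms vanish, the task reduces to bounding $\mu_S := \Ex[\prod_{i \in S}\bar f_i(s_i)]$ for each $S \subseteq [m]$ with $|S| \geq 2$.

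The central technical step is to prove, by induction on $k := |S|$, that $|\mu_S| \leq C_{k,d}\,\lambda\,(\prod_{i \in S}\Pr[A_i])^{1/k}$. The base case $k=2$ is exactly the bipartite expander mixing lemma on $\comp{j_i,j_j}$: by \pref{thm:complement-walk-is-a-good-expander}, $|\mu_{\{i,j\}}| = |\iprod{\bar f_i,\comp{j_i,j_j}\bar f_j}| \leq (j_i+1)(j_j+1)\lambda\sqrt{\Pr[A_i]\Pr[A_j]}$. For $k \geq 3$ I would WLOG relabel so that $\Pr[A_1] = \min_{i \in S}\Pr[A_i]$ and set $j^{**} := \sum_{i \geq 2}(j_i+1)-1$. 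Conditioning on $s_1$, the marginal of $u := s_2 \cup \cdots \cup s_k \in X(j^{**})$ is the complement walk $\comp{j_1,j^{**}}$. Letting $\tilde G(u) := \Ex_{\pi \text{ of } u}[\prod_{i \geq 2}\bar f_i(s_i)]$, one has $\mu_S = \iprod{\bar f_1, \comp{j^{**},j_1}(\tilde G - \Ex[\tilde G])}$ after killing the constant part using $\Ex[\bar f_1]=0$. By \pref{thm:complement-walk-is-a-good-expander} and Cauchy--Schwarz,
\[
|\mu_S| \leq (j_1+1)(j^{**}+1)\,\lambda\, \|\bar f_1\|_2\, \|\tilde G\|_2.
\]
The crucial estimate is $\|\tilde G\|_2^2 \leq \Ex[\prod_{i\geq 2}\bar f_i^2] \leq \prod_{i \geq 2}\|\bar f_i^2\|_{k-1} \leq 2(\prod_{i \geq 2}\Pr[A_i])^{1/(k-1)}$, combining Jensen with H\"older (using $\Ex[\bar f_i^{2(k-1)}]\leq 2\Pr[A_i]$). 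Combining these and invoking $\Pr[A_1]^{k-1} \leq \prod_{i \geq 2}\Pr[A_i]$ (from minimality), a short exponent check gives $|\mu_S| \leq C_{k,d}\lambda (\prod_{i \in S}\Pr[A_i])^{1/k}$.

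Finally I would sum over $S$: each contribution satisfies $\prod_{i \notin S}\Pr[A_i] \cdot (\prod_{i \in S}\Pr[A_i])^{1/|S|} \leq (\prod_i \Pr[A_i])^{1/m}$, which follows because the log inequality $(1 - 1/m)\sum_{i \notin S}\log\Pr[A_i] + (1/|S| - 1/m)\sum_{i \in S}\log\Pr[A_i] \leq 0$ has non-negative coefficients multiplying non-positive quantities whenever $|S| \leq m$ and $\Pr[A_i] \leq 1$. Aggregating the at most $2^m$ terms and multiplying by $N$ proves the theorem, with $C$ absorbing $N$, $2^m$, and the constants from \pref{thm:complement-walk-is-a-good-expander}. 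The hardest step is the bound on $\|\tilde G\|_2$: a naive recursion on $\mu_{\{2,\ldots,k\}}$ does not propagate cleanly through the $L^2$ norm of a product of zero-mean functions; the $L^{k-1}$-H\"older estimate on the product of squared fluctuations, combined with the peeling-the-minimum trick that matches exponents in the final log-arithmetic check, is the essential new ingredient.
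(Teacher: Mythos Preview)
Your proof is correct but takes a different route from the paper's. The paper argues by a direct induction on $m$ applied to the full statement: given the bound for $m$ sets, it peels off $A_{m+1}$ by writing $\Pr[F(A_1,\ldots,A_{m+1})]$ (up to the multinomial factor) as $\langle \comp{j_{m+1},\,k-j_{m+1}-1}\,\mathbf{1}_{A_{m+1}},\,\mathbf{1}_{F(A_1,\ldots,A_m)}\rangle$, decomposes both indicators into constant plus perpendicular parts, bounds the perp--perp term via the complement-walk expansion and the constant--constant term via the inductive hypothesis, and finally takes the geometric mean over the $m+1$ choices of which $A_i$ to peel. This sidesteps your full multilinear expansion, the H\"older estimate on $\|\tilde G\|_2$, and the minimum-coordinate trick entirely: the only $L^2$ norm that appears is $\|\mathbf{1}_F^\perp\|\leq\sqrt{\Pr[F]}$, itself controlled by induction. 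Your approach, by contrast, expands completely and bounds each centered moment $\mu_S$ on its own; note that the ``induction on $|S|$'' you announce is not actually used, since your Jensen--H\"older bound on $\|\tilde G\|_2$ is self-contained for every $|S|\geq 2$. Both arguments rest on the same ingredient, \pref{thm:complement-walk-is-a-good-expander}, but the paper's packaging is considerably shorter, while yours gives finer control on individual cross-terms.
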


\begin{lemma}[High dimensional expander mixing lemma - one-sided $d+1$-partite]
\torestate{\label{lem:one-sided-HDEML}
Let $X$ be a $\lambda$-one sided $d+1$-partite link expander. Let $I_1,...,I_m \subset [d+1]$ be pairwise disjoint colors, and let $A_1 \subset X[I_1],..., A_m \subset X[I_m]$.
Then
\[ \Abs{\prob{F(A_1,A_2,...,A_k)} - \prod_{j=1}^m \cProb{}{A_j}{X\sqbrack{I_j}}} \leq C \lambda \sqrt[m]{\prod_{j=1}^m \cProb{}{A_j}{X \sqbrack{I_j}}}\]
where $C$ depends on $m,d$ only.}
\end{lemma}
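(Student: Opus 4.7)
My plan is to reduce the lemma to a collection of multilinear forms on the partite complex and then bound each of them using the colored walk spectral gap (item 2 of \pref{thm:complement-walk-is-a-good-expander}). Let $p_j = \Pr[A_j \mid X[I_j]]$ and $\chi_j = \mathbf{1}_{A_j} - p_j$, so that $\chi_j$ is a centered function on $X[I_j]$ with $\|\chi_j\|_2^2 = p_j(1-p_j) \leq p_j$. Setting $I = I_1 \cup \cdots \cup I_m$ and noting that in a partite complex the marginal of the natural distribution on $X[I]$ onto any color subset $J \subseteq I$ coincides with the natural distribution on $X[J]$, we have $\Pr[F(A_1,\ldots,A_m)] = \Ex[s \in X[I]]{\prod_{j=1}^m \mathbf{1}_{A_j}(s|_{I_j})}$. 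Expanding each $\mathbf{1}_{A_j} = p_j + \chi_j$ and taking expectation, the terms with exactly one $\chi_j$ vanish (since $\Ex{}{\chi_j}=0$), leaving
\[
\Pr[F(A_1,\ldots,A_m)] - \prod_{j=1}^m p_j = \sum_{\substack{S \subseteq [m] \\ |S| \geq 2}} \left(\prod_{j \notin S} p_j\right) \cdot \phi_S, \qquad \phi_S := \Ex[s \in X[\bigcup_{j \in S} I_j]]{\prod_{j \in S}\chi_j(s|_{I_j})}.
\]

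The central technical claim is then: for every $S$ with $|S| \geq 2$,
\[
|\phi_S| \leq C_{|S|, d}\, \lambda \prod_{j \in S}\|\chi_j\|_2 \leq C_{|S|, d}\, \lambda \prod_{j \in S}\sqrt{p_j}.
\]
For $|S|=2$, this is immediate from the standard bipartite expander mixing lemma applied to the colored walk $\comp{I_{j_1}, I_{j_2}}$, whose spectral gap is at most $|I_{j_1}||I_{j_2}|\lambda$ by \pref{thm:complement-walk-is-a-good-expander}. For $|S| \geq 3$, I would argue by induction on $|S|$ via Garland-style localization: write $\phi_S = \iprod{\chi_{j_1}, \overline{\Psi}}_{X[I_{j_1}]}$ where $\overline{\Psi}(t) = \Ex[s \mid s|_{I_{j_1}} = t]{\prod_{i \geq 2}\chi_{j_i}(s|_{I_{j_i}})}$, then decompose $\overline{\Psi}$ into its constant part (which integrates against the centered $\chi_{j_1}$ to give zero) and its centered part, which can be bounded in $L^2$ by the colored walk $\comp{I_{j_1}, \bigcup_{i \geq 2} I_{j_i}}$ acting on $\chi_{j_1}$ combined with the inductive control on the $|S|-1$ multilinear form obtained by conditioning. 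The colored walk expansion of \pref{thm:complement-walk-is-a-good-expander} supplies the factor of $\lambda$, and the products of $\sqrt{p_j}$ assemble correctly.

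Once the multilinear bound is established, the aggregation step is clean. For each $S$ with $|S| \geq 2$, since all $p_j \leq 1$ and the exponents $\alpha_j$ appearing in $\prod_{j \notin S} p_j^{1} \cdot \prod_{j \in S} p_j^{1/2}$ satisfy $\alpha_j \geq 1/2 \geq 1/m$ for every $j$ (because $m \geq 2$), we have
\[
\prod_{j \notin S} p_j \prod_{j \in S}\sqrt{p_j} \;\leq\; \prod_{j=1}^m p_j^{1/m} \;=\; \sqrt[m]{\textstyle\prod_j p_j}.
\]
Summing over the at most $2^m$ relevant subsets $S$ and absorbing $|I_j|$ and $2^m$ into a constant $C = C(m, d)$ yields
\[
\bigl|\Pr[F(A_1,\ldots,A_m)] - \prod_{j=1}^m p_j \bigr| \leq C \lambda \sqrt[m]{\textstyle\prod_j p_j},
\]
which is exactly the desired bound.

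The main obstacle I anticipate is the inductive multilinear bound on $\phi_S$ for $|S| \geq 3$. The difficulty is that after peeling off one factor by conditioning on $s|_{I_{j_1}}$, the remaining functions $\chi_{j_2}, \ldots, \chi_{j_k}$ are no longer centered with respect to the link of $s|_{I_{j_1}}$, so one cannot directly apply the colored walk in the link. The fix is to first project $\chi_{j_1}$ through $\comp{I_{j_1}, \bigcup_{i \geq 2} I_{j_i}}$ (which retains the centering) and then apply Cauchy--Schwarz against the remaining product, controlling the $L^2$ norm of the latter by the trivial $\|\chi_{j_i}\|_\infty \leq 1$ bound together with a second inductive use of the multilinear estimate on $|S|-1$ indices. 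Managing these two inductions (on $|S|$ and on the number of ``un-centered'' factors) without losing the product of $\sqrt{p_j}$ factors is the delicate part of the argument.
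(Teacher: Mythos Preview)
Your route via full multilinear expansion is genuinely different from the paper's. The paper never bounds the individual centered forms $\phi_S$; it inducts directly on $m$. At the inductive step one peels off $A_{m+1}$ via the colored walk $M=\comp{I_{m+1},\,\bigcup_{j\le m}I_j}$, obtaining
\[
\Bigl|\,\iprod{M\one_{A_{m+1}},\one_{F(A_1,\dots,A_m)}}-p_{m+1}\,\prob{F(A_1,\dots,A_m)}\,\Bigr|\le\lambda(M)\,\norm{\one_{A_{m+1}}^\perp}\,\norm{\one_{F(A_1,\dots,A_m)}^\perp},
\]
and then uses $\norm{\one_F^\perp}\le\sqrt{\prob{F}}\le(\prod_{j\le m}p_j)^{1/(2m)}$ (since $\prob{F}\le\min_j p_j$) together with the inductive bound on $\abs{\prob{F}-\prod_{j\le m}p_j}$. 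This yields $C\lambda\sqrt{p_{m+1}\,(\prod_{j\le m}p_j)^{1/m}}$; taking the geometric mean over the $m+1$ choices of which index to peel produces $(\prod_j p_j)^{1/(m+1)}$. No higher-order term $\phi_S$ ever enters.

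The gap in your plan is exactly the ``central technical claim'' $|\phi_S|\le C_{|S|,d}\,\lambda\prod_{j\in S}\norm{\chi_j}$ for $|S|\ge 3$. After one peel you face $\norm{\Psi}^2=\Ex{\prod_{i\ge 2}\chi_{j_i}^2}$, and there is no reason this is $\le\prod_{i\ge 2}\norm{\chi_{j_i}}^2$; your proposed fix (recenter $\chi_{j_i}^2$ and reinduct) does not close with a single factor of $\lambda$ and the full $\prod\sqrt{p_j}$ --- when several $p_j$ are much smaller than $\lambda$ the recursion loses exponents. What \emph{is} easy is the weaker bound $|\phi_S|\le C\lambda\,(\prod_{j\in S}p_j)^{1/|S|}$: use $\chi_{j}^2\le 1$ for all but one factor in $\norm{\Psi}^2$ to get $|\phi_S|\le\lambda(M)\sqrt{p_{j_1}p_{j_2}}$ for every pair $\{j_1,j_2\}\subset S$, then take the geometric mean over pairs. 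Since $1/|S|\ge 1/m$, this weaker bound feeds into your aggregation step unchanged. So your scheme is salvageable with a weaker intermediate claim, but the paper's induction sidesteps the whole issue and is shorter.
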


\paragraph{Comparison with previous results} There are other suggested expander mixing lemmas for high dimensional expanders. For example, the lemma in \cite{oppenheim3} states that on a $\lambda$-two-sided high dimensional expander, for $A_1,...,A_m \subset X(0)$ we get that
\[ \Abs{\prob{F(A_1,A_2,...,A_k)} - (k+1)! \prod_{j=1}^m \prob{A_j}} \leq C \lambda \sqrt{\min_{j\ne i} \prob{A_i}\prob{A_j}}.\]

The lemma in \cite{coloring}, had a similar statement for a special case of Ramanujan complexes.

Our lemma generalizes these results. It deals with faces of all sizes, and not only vertices. This shows that link expanders have pseudorandom behavior in all levels of the complex.

\medskip

We give the proof for the two-sided case. The one sided case's proof is similar.

\begin{proof}[Proof of \pref{lem:two-sided-HDEML}]
The proof is by induction on $m$. The base case where $m=1$ is obvious from the definition.

Let $X$ and $A_1 \subset X(j_1),...,A_{m+1} \subset X(j_{m+1})$ be as above. It is enough to prove that for any $A_{i}$ that
\[ \abs{\prob{F(A_1,A_2,...,A_k)} - \prod_{j=1}^m \prob{A_j}} \leq C \lambda \sqrt{\prob{A_i} \sqrt[m]{\prod_{i \ne j=1}^{m+1} \prob{A_j}}},\]
because the geometric mean of RHS is
\[\prod_{i=1}^{m+1}\left( C \lambda \sqrt{\prob{A_i} \sqrt[m]{\prod_{i \ne j=1}^{m+1} \prob{A_j}}}\right)^\frac{1}{m+1} =  C \lambda \sqrt[m+1]{\prod_{j=1}^{m+1} \prob{A_j}}. \]

Indeed denote by $\one_{F(A_1,...,A_m)}, \one_{A_{m+1}} : X(k) \to \RR$ the indicators of $F(A_1,...,A_m)$ and $A_{m+1}$ respectively. Consider the expression \[\iprod{M \one_{A_{m+1}}, \one_{F(A_1,...,A_m)}},\]
where the operator $M \defeq \comp{j_{m+1},k - j_{m+1} - 1}$ is the complement walk operator.
As we can see
\[ \iprod{M \one_{A_{m+1}}, \one_{F(A_1,...,A_m)}} = \]
\[  \Ex[s_1 \in X(j_{m+1}), s_2 \in X(k-j_{m+1} - 1); s_1 \dunion s_2 \in X(k)]{\one_{A_{m+1}}(s_1) \one_{F(A_1,...,A_m)}(s_2)} = \]
\[\prob{F(A_1,...,A_{m+1})}\frac{1}{\binom{k+1}{j_{m+1}+1, k-j_{m+1}}},\]
As this is exactly the probability to get a face $ t \in F(A_1,...,A_{m+1})$, and partition it to $s_1, s_2$ (there is only one such partition so that $s_1 \in A_{m+1}$ and $s_2 \in F(A_1,...,A_m)$, because of the mutual disjointness property of the $A_{j_i}$'s).

On the other hand, we can decompose
\[\one_{A_{m+1}} = \one_{A_{m+1}}^0 + \one_{A_{m+1}}^\perp\]
and
\[\one_{F(A_1,...,A_{m})} = \one_{F(A_1,...,A_{m})}^0 + \one_{F(A_1,...,A_{m})}^\perp,\]
to the constant part and the part perpendicular to it. Thus
 \[ \iprod{M \one_{A_{m+1}}, \one_{F(A_1,...,A_m)}} = \iprod{M \one_{A_{m+1}}^0, \one_{F(A_1,...,A_m)}^0} + \iprod{M \one_{A_{m+1}}^\perp, \one_{F(A_1,...,A_m)}^\perp} .\]
 Thus from Cauchy-Schwartz:
 \[\Abs{\iprod{M \one_{A_{m+1}}, \one_{F(A_1,...,A_m)}} - \iprod{M \one_{A_{m+1}}^0, \one_{F(A_1,...,A_m)}^0} } \leq \lambda(M) \norm{\one_{A_{m+1}}^\perp} \norm{\one_{F(A_1,...,A_m)}^\perp}. \]
 The product between constant parts is equal to the product of probabilities and by induction:
 \[ \iprod{M \one_{A_{m+1}}^0, \one_{F(A_1,...,A_m)}^0} = \prob{A_{m+1}}\prob{F(A_1,...,A_m)}. \]

 Thus
\[\Abs{\iprod{M \one_{A_{m+1}}, \one_{F(A_1,...,A_m)}} - \prob{A_{m+1}}\prob{F(A_1,...,A_m)} } \leq  \lambda(M) \norm{\one_{A_{m+1}}^\perp} \norm{\one_{F(A_1,...,A_m)}^\perp}.\]

By the triangle inequality
\[\Abs{\iprod{M \one_{A_{m+1}}, \one_{F(A_1,...,A_m)}} - \prod_{j=1}^{m+1} \prob{A_j}} \leq \]
\[\lambda(M) \norm{\one_{A_{m+1}}^\perp} \norm{\one_{F(A_1,...,A_m)}^\perp} + \Abs{\prob{A_{m+1}}\prob{F(A_1,...,A_m)}  - \prod_{j=1}^{m+1} \prob{A_j}} \leq \]
\[ C \lambda \sqrt{\prob{A_{m+1}} \sqrt[m]{\prod_{j=1}^m \prob{A_j}} } .\]
\end{proof}

\section*{Acknowledgement}
We wish to thank Prahladh Harsha for many helpful discussions.

\addcontentsline{toc}{section}{References}
\bibliographystyle{alpha}
\bibliography{ms}

\clearpage

\appendix
\section{Standard Definitions and Claims} \label{sec:preliminaries}

In this appendix we give the necessary background and conventions we use throughout the paper. Most results and claims in this section are standard, and thus given without proof.
\subsection{Expander graphs}

    Every weighted undirected graph induces a random walk on its vertices: Let $G = (V,E)$ be a finite weighted graph with a probability weight function $\mu: E \to [0,1]$. The transition probability from $v$ to $u$ is
    \[\frac{\mu(\set{u,v}) }{ \sum_{w \sim v} \mu(\set{v,w}) } .\]
    Denote by $A=A(G)$ the Markov operator associated with this random walk. We call this operator the \emph{adjacency operator}.

    $A$ is an operator on real valued functions on the vertices, where
    \[ \forall v\in V \; Af(v) = \Ex[u \sim v]{f(u)}. \]
    The expectation is taken with respect to the graph's probability on vertices, conditioned on being adjacent to $v$.


    $A$'s eigenvalues are in the interval $[-1,1]$. We denote its eigenvalues by $\lambda_1 \geq \lambda_2 \geq ... \geq \lambda_n$ (with multiplicities). The largest eigenvalue is always $\lambda_1 = 1$, and it is obtained by the constant function. The second eigenvalue is strictly less than $1$ if and only if the graph is connected.

    \begin{definition}[spectral expanders]\torestate{\label{def:spectral-expanders}
        Let $G$ be a graph. $G$ is a \emph{$\lambda$-one sided spectral expander} for some $0 \leq \lambda < 1$, if
        \[ \lambda_2 \leq \lambda.\]
        $G$ is a \emph{$\lambda$-two sided spectral expander} for some $0 \leq \lambda < 1$, if
        \[ \max (\abs{\lambda_2},\abs{\lambda_n}) \leq \lambda.\]}
    \end{definition}

    There is another notion of graph expansion that we'll need in this paper, called edge expansion. Intuitively, an edge expander is a graph where every set of vertices has a large number of outgoing edges.

    \begin{definition}[edge expansion]\torestate{\label{def:edge-expansion}
    Let $G$ be a weighted graph. The \emph{edge expansion} of $G$ is
    \[\Phi(G) = \min \sett{\frac{\prob{E(S,V \setminus S)}}{\prob{S}}}{S \subset V, \; 0 < \prob{S} \leq \frac{1}{2}},\]
    where $E(S,V \setminus S)$ is the set of all edges between $S$ and $V \setminus S$.}
    \end{definition}

    There is a connection between spectral expansion and edge expansion:
    \begin{theorem}[Cheeger's inequality]
    Let $G$ be any weighted graph. Then
    \[ \frac{1-\lambda_2}{2} \leq \Phi(G) \leq \sqrt{2(1-\lambda_2)} .\]
    $\qed$
    \end{theorem}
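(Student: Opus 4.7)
The plan is to prove the two inequalities separately; the left-hand bound (often called the ``easy'' direction) follows essentially from the variational characterization of $\lambda_2$, while the right-hand bound (the ``hard'' direction) requires Cheeger's rounding argument.

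For the easy direction $\Phi(G) \ge \frac{1-\lambda_2}{2}$, I would take any set $S \subset V$ with $\prob{S} \le 1/2$ and plug the centered indicator $f = \mathbf{1}_S - \prob S$ into the Rayleigh quotient for $I - A$. By the min-max characterization of eigenvalues, $\iprod{(I-A)f,f} \ge (1-\lambda_2)\,\snorm{f}$, where $\snorm{f}$ is taken with respect to the stationary distribution. A direct calculation shows $\snorm{f} = \prob{S}(1-\prob{S})$, and the Dirichlet form $\iprod{(I-A)f,f}$ equals $\prob{E(S,V\sm S)}$. Rearranging and using $1-\prob S \ge 1/2$ yields $\prob{E(S,V\sm S)}/\prob S \ge (1-\lambda_2)/2$.

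The hard direction $\Phi(G) \le \sqrt{2(1-\lambda_2)}$ is the one requiring real work. The plan is: start with a second eigenvector $g$ of $A$ satisfying $Ag = \lambda_2 g$ and $\ex g = 0$. Replace $g$ by its positive part (or negative part) restricted to the ``smaller side'' so as to obtain a function $f \ge 0$ with $\supp(f) \subset S^*$ where $\prob{S^*} \le 1/2$, and such that $\iprod{(I-A)f,f} \le (1-\lambda_2) \snorm{f}$. This truncation step is where one must be careful to preserve the Rayleigh quotient bound. Then apply Cheeger's rounding: pick a threshold $t$ uniformly in some range and consider the level sets $S_t = \set{v : f(v)^2 > t}$, and show that some $S_t$ achieves the desired expansion. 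The core inequality is Cauchy-Schwarz applied as
\[
\Bigl(\tsum_{\set{u,v}\in E} \mu(\set{u,v})\abs{f(u)^2 - f(v)^2}\Bigr)^2 \le \Bigl(\tsum_{\set{u,v}\in E}\mu(\set{u,v})(f(u)-f(v))^2\Bigr)\Bigl(\tsum_{\set{u,v}\in E}\mu(\set{u,v})(f(u)+f(v))^2\Bigr),
\]
where the first factor on the right is controlled by $\iprod{(I-A)f,f} \le (1-\lambda_2)\snorm{f}$ and the second is bounded by $2\snorm f$ using $(a+b)^2 \le 2(a^2+b^2)$ and the stationarity of the walk. The left-hand side is in turn lower bounded, via the coarea formula $\abs{f(u)^2-f(v)^2} = \int_0^\infty \mathbf{1}_{\set{u,v}\in E(S_t, S_t^c)}\,dt$, by $\int_0^\infty \prob{E(S_t,S_t^c)}\,dt$, while $\snorm{f} = \int_0^\infty \prob{S_t}\,dt$.

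Combining these yields $\int \prob{E(S_t,S_t^c)}\,dt \le \sqrt{2(1-\lambda_2)}\int \prob{S_t}\,dt$, so by an averaging argument there exists some threshold $t^*$ with $\prob{E(S_{t^*},S_{t^*}^c)} \le \sqrt{2(1-\lambda_2)}\prob{S_{t^*}}$, and since $S_{t^*}\subset S^*$ has measure at most $1/2$, this set witnesses $\Phi(G)\le \sqrt{2(1-\lambda_2)}$. The main obstacle I would expect to spend the most time on is the truncation step: one must argue carefully that after restricting to the positive part on the small side, the key Rayleigh quotient bound is not merely maintained but is strict enough to drive the rest of the argument. All remaining manipulations are essentially clean applications of Cauchy-Schwarz and the coarea identity.
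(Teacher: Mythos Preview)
The paper does not prove this theorem at all: it appears in the preliminaries appendix, is explicitly labeled as a standard result, and is stated with a \qed\ symbol immediately following the statement and no proof text. So there is no ``paper's own proof'' to compare against.

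Your outline is the standard textbook proof of Cheeger's inequality and is correct in its structure. The easy direction is exactly as you say. For the hard direction, your sketch of the truncation-plus-rounding argument is the usual one; the one place to be slightly careful is the truncation step, where you want to show that the positive part $f = g_+$ on the side $\{g > 0\}$ (choosing signs so this side has stationary mass at most $1/2$) still satisfies $\iprod{(I-A)f,f} \le (1-\lambda_2)\snorm{f}$. This follows from the pointwise inequality $(I-A)g \ge (1-\lambda_2)g$ on $\{g>0\}$ being false in general, so instead one uses $\sum_v \pi(v) f(v)\bigl((I-A)g\bigr)(v) = (1-\lambda_2)\sum_v \pi(v) f(v) g(v) = (1-\lambda_2)\snorm f$ together with $\iprod{(I-A)f,f} \le \iprod{(I-A)g, f}$ (which comes from $(g(u)-g(v))(f(u)-f(v)) \ge (f(u)-f(v))^2$ when $f = g_+$). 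Once that is in hand, your Cauchy--Schwarz/coarea computation goes through verbatim.
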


\subsubsection{Bipartite Graphs and Bipartite Expanders}
    A bipartite graph is a graph where the vertex set can be partitioned to two independent sets $V = L \dunion R$, called sides. Bipartite graphs are sometimes easier to analyze than graphs, and arise naturally when studying STAV-structures.




    \paragraph{The Bipartite Adjacency Operator}
    In a bipartite graph, we view each side as a separate probability space, where for any $v \in L$ (resp. $R$), $\prob{v} = \sum_{w \sim v} \mu(\set{v,w})$. We can define the \emph{bipartite adjacency operator} as the operator  $B: \ell_2(L) \to \ell_2(R)$ by
    \[\forall f \in \ell_2(L), v \in R, \; Bf(v) = \Ex[w \sim v]{f(u)}\]
    where the expectation is taken with respect to the probability space $L$, conditioned on being adjacent to $v$.


    We denote by $\lambda(B)$ the spectral norm of $B$ when restricted to $\ell_2^0(L) = \set{\one}^\bot$, the orthogonal complement of the constant functions (according to the inner product the measure induces). Namely
    \[ \lambda(B) = \sup \sett{\iprod{Bf,g}}{\norm{g},\norm{f}=1}.\]
    \begin{definition}[Bipartite Expander]\torestate{\label{def:bipartite-expander}
    Let $G$ be a bipartite graph, let $\lambda < 1$. We say $G$ is a \emph{$\lambda$-bipartite expander}, if
    $\lambda(B) \leq \lambda$.}
    \end{definition}

    \medskip
    \paragraph{Sampling Graph}We also define a sampling graph, a notion close in some sense to expanders.
    \begin{definition}[Sampling Graph]
    \torestate{\label{def:sampling-graph}
        Let $G = (L,R,E)$ be a bipartite graph, and $\delta < 1$. We say that $G$ has the \emph{$\delta$-sampling property} if the following holds:
        For any set $B \subset V$ of size greater than $\prob{C} \geq \delta$, the set $T = \set{a: \cProb{v \in V}{v \in C}{v \in \adj{a}} \geq \frac{1}{3} \delta}$ has size at least $\frac{1}{3}$.}
    \end{definition}

\subsection{Properties of Expander Graphs}

In this subsection we develop the necessary properties of expander graphs, that we will need in \pref{sec:main-thoerems-proof}.

    \paragraph{Edge-Expander Partition Property}
    The following claim is also useful in the proof of the main theorem. It says that if we partition the vertices, and there are few edges between the partition's parts, then one set in the partition is larger than $\frac{1}{2}$.

    \begin{claim}[Edge-Expander Partition Property]
    \label{claim:edge-expander-partition-property}
    Let $G = (V,E)$ be a $c$-edge expander. Let $V = B_1 \dunion ... \dunion B_n$, partitioned into sets, and suppose that there are less than $\frac{c}{2}$ edges between parts of the partition, namely:
    \[ \frac{1}{2}\sum_{i=1}^n \prob{E(B_i,B_i^c)} < \frac{c}{2}.\]
    Then there exists $i$ such that $\prob{B_i} \geq \frac{1}{2}$.
    \end{claim}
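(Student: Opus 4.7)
The plan is to prove the claim by contradiction, directly applying the edge-expander hypothesis to each part of the partition. Suppose for contradiction that every $B_i$ satisfies $\prob{B_i} < \frac{1}{2}$. Then each $B_i$ is an admissible set in the definition of edge expansion (see \pref{def:edge-expansion}), so the $c$-edge-expander hypothesis gives
\[ \prob{E(B_i, B_i^c)} \ge c \cdot \prob{B_i} \]
for every $i$.

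Summing this inequality over $i$ and using $\sum_i \prob{B_i} = 1$ (since the $B_i$ partition $V$), I obtain
\[ \sum_{i=1}^n \prob{E(B_i, B_i^c)} \ge c \sum_{i=1}^n \prob{B_i} = c. \]
Dividing by $2$ (to account for the fact that each cross-partition edge is counted twice in the sum, once from each endpoint's side), this rearranges to
\[ \frac{1}{2}\sum_{i=1}^n \prob{E(B_i, B_i^c)} \ge \frac{c}{2}, \]
which contradicts the hypothesis that this quantity is strictly less than $c/2$. Hence some $B_i$ must satisfy $\prob{B_i} \ge \frac{1}{2}$.

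The proof is essentially a one-line application of the definition of edge expansion, so there is no real obstacle. The only subtle point worth flagging in the write-up is the factor-of-two bookkeeping: the LHS in the hypothesis has a $\frac{1}{2}$ because an edge with one endpoint in $B_i$ and the other in $B_j$ (with $i \ne j$) contributes to both $E(B_i, B_i^c)$ and $E(B_j, B_j^c)$, so $\sum_i \prob{E(B_i, B_i^c)}$ equals twice the total probability of cross-partition edges. This matches the $\frac{1}{2}$ in the edge-expander lower bound derived above, making the contradiction clean.
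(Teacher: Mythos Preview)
Your proof is correct and is essentially identical to the paper's own argument: both assume for contradiction that every $\prob{B_i}<\tfrac12$, apply the edge-expansion lower bound $\prob{E(B_i,B_i^c)}\ge c\,\prob{B_i}$ to each part, sum over $i$ using $\sum_i\prob{B_i}=1$, and arrive at $\tfrac12\sum_i\prob{E(B_i,B_i^c)}\ge \tfrac{c}{2}$, contradicting the hypothesis.
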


    \begin{proof}[Proof of \pref{claim:edge-expander-partition-property}]

    Assume towards contradiction that for all $1 \leq i \leq n$, $\prob{B_i} <\frac{1}{2}$.

    From our assumption, there are less than $\frac{c}{2}$ edges between parts of the partition, namely
    \[ \frac{c}{2} > \frac{1}{2}\sum_{i=1}^n \prob{E(B_i,B_i^c)} \geq \frac{c}{2}\sum_{i=1}^n\prob{B_i},\]
    where the second inequality is from edge expansion. $B_i$'s are a partition of the vertices, thus $\sum_{i=1}^n\prob{B_i} = 1$, a contradiction.

    \end{proof}

    \paragraph{Expander Mixing Lemma}
    A classical result in expander graphs is the \emph{expander mixing lemma}, that intuitively says that the weight of the edges between any two vertex sets $S,T \subset V$ is proportionate to the probabilities of $S,T$.
    \begin{lemma}[Expander Mixing Lemma]
    \label{lem:EML}
    Let $G = (V,E)$ be a $\lambda$-two sided spectral expanders. Then for any $S,T \subset V$
    \[ \abs{\prob{E(S,T)} - \prob{S} \prob{T} } \leq \lambda \sqrt{\prob{S} \prob{T} (1-\prob{S})(1-\prob{T})} . \]
    $\qed$
    \end{lemma}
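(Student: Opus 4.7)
The plan is to execute the textbook spectral proof via orthogonal decomposition. First I would identify the quantity $\prob{E(S,T)}$ with the bilinear form $\iprod{A\one_S, \one_T}$, where $A$ is the adjacency (Markov) operator of $G$ acting on $\ell_2(V)$ endowed with the stationary vertex measure, and $\one_S, \one_T$ are the indicator functions of $S$ and $T$. This is immediate: $\iprod{A\one_S, \one_T} = \Ex[v]{\one_T(v)\Ex[u\sim v]{\one_S(u)}}$, which is exactly the probability that a random edge (sampled from the edge distribution) has one endpoint in $S$ and the other in $T$.

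Next I would decompose each indicator into its constant part and the part orthogonal to constants:
\[ \one_S = \prob{S}\cdot\one + \one_S^\perp, \qquad \one_T = \prob{T}\cdot\one + \one_T^\perp, \]
with $\iprod{\one_S^\perp,\one} = \iprod{\one_T^\perp,\one} = 0$. Since $A$ preserves the constant subspace ($A\one = \one$) and, being self-adjoint with respect to the stationary measure, also preserves its orthogonal complement, the cross terms in the expansion vanish and one obtains
\[ \iprod{A\one_S,\one_T} = \prob{S}\prob{T} + \iprod{A\one_S^\perp,\one_T^\perp}. \]

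The final step is to bound the error term. Since $G$ is a $\lambda$-two sided spectral expander, the operator norm of $A$ restricted to the orthogonal complement of the constants is at most $\lambda$, so Cauchy--Schwarz gives
\[ \bigabs{\iprod{A\one_S^\perp,\one_T^\perp}} \leq \lambda \cdot \norm{\one_S^\perp}\cdot \norm{\one_T^\perp}. \]
A direct variance computation yields $\norm{\one_S^\perp}^2 = \prob{S}(1-\prob{S})$ and likewise for $T$, which combined with the previous display gives exactly the claimed inequality. There is no real obstacle here; the only subtlety worth double-checking is that the inner product induced by the stationary measure is the one with respect to which $A$ is self-adjoint and under which the edge measure factors as $\iprod{A\one_S,\one_T}$, which holds because the stationary probability of a vertex is proportional to its weighted degree.
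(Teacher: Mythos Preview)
Your proof is correct and is the standard spectral argument for the expander mixing lemma. The paper does not actually prove this statement: it appears in the preliminaries appendix with a $\qed$ and the remark that ``Most results and claims in this section are standard, and thus given without proof.'' So there is nothing to compare against, and your proposal is exactly the textbook derivation one would expect.
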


    Bipartite graphs have their own type of expander mixing lemma:
    \begin{lemma}[Bipartite Expander Mixing Lemma]
        \label{lem:bipartite-EML}
        Let $G = (L,R,E)$ be a bipartite $\lambda$-one sided spectral expander. Then for any $S \subset L, T \subset R$
        \[ \abs{\prob{E(S,T)} - \Prob[v \in L]{v \in S} \Prob[w \in R]{w \in T} } \leq \lambda \sqrt{\prob{S} \prob{T} (1-\prob{S})(1-\prob{T})} . \]
        $\qed$
    \end{lemma}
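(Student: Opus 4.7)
The plan is to reduce this to a spectral estimate on the bipartite adjacency operator $B \colon \ell_2(L) \to \ell_2(R)$, following the standard template for expander mixing lemmas but carried out in the bipartite setting. First I would express the quantity of interest as an inner product: writing $\one_S \in \ell_2(L)$ and $\one_T \in \ell_2(R)$ for the indicator functions, one has
\[ \prob{E(S,T)} \;=\; \iprod{B \one_S,\, \one_T}, \]
since $B \one_S(w) = \Pr[v \in S \mid v \sim w]$ and averaging against $\one_T$ over $R$ yields precisely the probability that a random edge lands in $S \times T$.

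Next I would orthogonally decompose each indicator against the constant functions on its side. Write $\one_S = \prob{S}\cdot \one_L + f^\perp$ and $\one_T = \prob{T}\cdot \one_R + g^\perp$, where $f^\perp \in \ell_2^0(L)$ and $g^\perp \in \ell_2^0(R)$. Since $B$ is an averaging operator it maps $\one_L$ to $\one_R$ and its adjoint maps $\one_R$ to $\one_L$; consequently the two cross terms $\iprod{B(\prob{S}\one_L), g^\perp}$ and $\iprod{B f^\perp, \prob{T}\one_R}$ both vanish (the former because $\one_R \perp g^\perp$, the latter because $B^* \one_R = \one_L \perp f^\perp$). The constant-constant term contributes exactly $\prob{S}\prob{T}$, so
\[ \prob{E(S,T)} - \prob{S}\prob{T} \;=\; \iprod{B f^\perp,\, g^\perp}. \]

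Now the $\lambda$-bipartite-expander hypothesis is precisely that $\|B\|_{\ell_2^0(L) \to \ell_2^0(R)} \le \lambda$, so by Cauchy--Schwarz,
\[ \abs{\iprod{B f^\perp, g^\perp}} \;\le\; \lambda \cdot \norm{f^\perp} \cdot \norm{g^\perp}. \]
A direct computation of second moments gives $\norm{f^\perp}^2 = \E_L[\one_S^2] - \prob{S}^2 = \prob{S}(1-\prob{S})$ and similarly $\norm{g^\perp}^2 = \prob{T}(1-\prob{T})$. Substituting yields the stated bound. There is no genuine obstacle here; the only small subtlety is being careful that the bipartite probability measures on $L$ and $R$ are the marginals induced by the edge weights (so that the constant functions really are the top eigenvectors of $B$ and $B^*$ respectively), which is exactly the setup fixed in the paragraph defining the bipartite adjacency operator.
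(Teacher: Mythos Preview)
Your argument is correct and is the standard proof of the bipartite expander mixing lemma. The paper does not give a proof of this lemma at all: it is stated in the preliminaries appendix with a $\qed$ symbol and the remark that ``most results and claims in this section are standard, and thus given without proof.'' So there is nothing to compare against; your write-up is exactly the kind of proof one would supply if asked to fill in the omitted details.
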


    \paragraph{Expander Sampler Property}
    In \cite{DinurK2017} the authors showed that bipartite $\lambda$-one sided spectral expander has the following useful sampler property.
    \begin{lemma}[Sampler Property, by \cite{DinurK2017}]
    \label{lem:sampler-lemma}
        Let $G = (L,R,U)$ be a bipartite $\lambda$-one sided spectral expander. Let $B \subset R$ be any set of vertices, and $c > 0$. then $T = \sett{v \in L}{\abs{\cProb{w \in R}{w \in S}{w \sim v} - \prob{S}} > c}$ of vertices who view $S$ as "large", satisfies:
        \[\prob{T} \leq  \frac{\lambda^2}{c^2}\prob{S}.\]
    \end{lemma}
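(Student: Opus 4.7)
The plan is to reduce this to a direct second-moment computation using the bipartite operator $B$. Let $f = \one_S - \prob{S}\cdot\one_R \in \ell_2(R)$. Then $f$ has mean zero with respect to the measure on $R$, and a direct computation gives $\norm{f}_2^2 = \prob{S}(1-\prob{S}) \leq \prob{S}$. Moreover, by the definition of $B$,
\[
Bf(v) \;=\; \Ex[w \sim v]{\one_S(w) - \prob{S}} \;=\; \cProb{w \in R}{w \in S}{w \sim v} - \prob{S},
\]
so the set $T$ in the statement is precisely $T = \{v \in L : |Bf(v)| > c\}$.

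Next I would apply Markov's inequality to the nonnegative random variable $|Bf(v)|^2$ on $L$:
\[
\prob{T} \;\leq\; \frac{1}{c^2}\,\Ex[v \in L]{|Bf(v)|^2} \;=\; \frac{\norm{Bf}_2^2}{c^2}.
\]
Since $f$ is orthogonal to the constant function on $R$ and $G$ is a $\lambda$-bipartite expander (so $B$ restricted to the orthogonal complement of constants has operator norm at most $\lambda$), we have $\norm{Bf}_2 \leq \lambda \norm{f}_2$. Combining with the bound on $\norm{f}_2^2$ yields
\[
\prob{T} \;\leq\; \frac{\lambda^2 \norm{f}_2^2}{c^2} \;\leq\; \frac{\lambda^2}{c^2}\,\prob{S},
\]
as claimed.

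There is essentially no obstacle; the only subtlety is matching the bipartite-expander definition used in the paper (\pref{def:bipartite-expander}), where $\lambda(B)$ is defined as $\sup\{\iprod{Bf,g} : \norm{f}=\norm{g}=1\}$ restricted to the orthogonal complement of constants. One should verify that this indeed bounds $\norm{Bf}_2$ for every zero-mean $f$, which follows by taking $g = Bf/\norm{Bf}_2$ (and noting that $Bf$ is automatically orthogonal to constants because $B$ is an averaging operator, so it maps $\ell_2^0(R)$ into $\ell_2^0(L)$). Once this small check is in place the inequality above is immediate.
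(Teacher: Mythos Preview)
Your argument is correct and is in fact the standard second-moment proof of this sampler lemma. The paper does not supply its own proof of this statement; it is listed among the ``standard'' preliminaries in \pref{sec:preliminaries} and attributed to \cite{DinurK2017}. So there is nothing to compare against, and your write-up could be dropped in as-is. The only cosmetic point is that the paper's bipartite operator is typed as $B:\ell_2(L)\to\ell_2(R)$ while you are applying it to a function on $R$; this is harmless since the adjoint has the same restricted norm, but you may wish to say one word about that (or simply swap the roles of $L$ and $R$, which the statement allows).
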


    \paragraph{Almost Cut Approximation Property}
    As a corollary to the expander mixing lemma, we get the following useful approximation property. In an expander graph, if the number of outgoing edges from some $A \subset V$, is an approximation to the size of $A$ or $V \setminus A$. The following claim generalizes this fact to the setting where we count only outgoing edges from $A$ to a (large) set $B \subset V \setminus A$.

    \begin{claim}[Almost Cut Approximation Property] \torestate{ \label{claim:almost-cut}
        Let $G = (V,E)$ be a $\lambda$-two sided spectral expander. Let $V = A \dunion B \dunion C$, s.t. $\prob{A} \leq \prob{B}$. Then
        \begin{equation} \label{eq:eml-bound}
                \prob{A} \leq \frac{1}{(1-\lambda)\prob{B}} \left ( \prob{E(A,B)} + \lambda \prob{C} \right ).
        \end{equation}

        In particular, if  \( \prob{A}, 1-\lambda = \Omega(1)\) then
        \[ \prob{A} = \bigO{  \prob{E(A,B)} + \lambda \prob{C} }. \]
    }
    \end{claim}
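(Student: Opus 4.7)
The plan is to reduce this claim to a direct application of the two-sided Expander Mixing Lemma (\pref{lem:EML}) combined with the identity $\prob{A} + \prob{B} + \prob{C} = 1$. First I would apply the EML to the pair $(A,B)$ to obtain the lower bound
\[ \prob{E(A,B)} \;\geq\; \prob{A}\prob{B} \;-\; \lambda\sqrt{\prob{A}\prob{B}(1-\prob{A})(1-\prob{B})}. \]
The desired conclusion, after clearing the denominator, is equivalent to $\prob{E(A,B)} + \lambda \prob{C} \geq (1-\lambda)\prob{A}\prob{B}$, i.e.\ $\prob{E(A,B)} \geq \prob{A}\prob{B} - \lambda(\prob{A}\prob{B} + \prob{C})$. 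So it would suffice to show that the EML error term is dominated by $\lambda(\prob{A}\prob{B} + \prob{C})$.

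The key observation that makes this tight is that since $A,B,C$ partition $V$, one has
\[ (1-\prob{A})(1-\prob{B}) \;=\; (\prob{B}+\prob{C})(\prob{A}+\prob{C}) \;=\; \prob{A}\prob{B} + \prob{C}(\prob{A}+\prob{B}+\prob{C}) \;=\; \prob{A}\prob{B} + \prob{C}. \]
Plugging this into the EML error term and using the trivial bound $\sqrt{xy(xy+z)} \leq xy + z$ (which holds because squaring reduces it to $xyz \leq 2xyz + z^2$), I would conclude
\[ \sqrt{\prob{A}\prob{B}(1-\prob{A})(1-\prob{B})} \;\leq\; \prob{A}\prob{B} + \prob{C}. \]
Combining this with the EML bound and rearranging (dividing by $(1-\lambda)\prob{B}$, which is positive since $\prob{B} \geq \prob{A} > 0$ in the nontrivial case and $\lambda < 1$) yields exactly the claimed inequality.

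Since the whole argument is essentially an algebraic manipulation of the EML output, there is no real obstacle: the only subtlety is recognizing the identity $(1-\prob{A})(1-\prob{B}) = \prob{A}\prob{B} + \prob{C}$, which is what lets the error term be absorbed cleanly into the two pieces $\lambda\prob{A}\prob{B}$ (giving the $1-\lambda$ factor on the left-hand side) and $\lambda\prob{C}$ (giving the additive correction). The hypothesis $\prob{A} \leq \prob{B}$ is not used in the derivation of the inequality itself but ensures the bound is meaningful (one does not get a vacuous conclusion when $\prob{B}$ is tiny). The ``in particular'' consequence then follows immediately by treating $\prob{A}, 1-\lambda$ as constants and absorbing $\prob{B} \geq \prob{A} = \Omega(1)$ into the $O(\cdot)$.
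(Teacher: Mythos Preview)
Your proposal is correct and follows essentially the same approach as the paper: apply the expander mixing lemma, use the partition identity $(1-\prob{A})(1-\prob{B}) = \prob{A}\prob{B} + \prob{C}$ to rewrite the error term, bound $\sqrt{\prob{A}\prob{B}(\prob{A}\prob{B}+\prob{C})} \le \prob{A}\prob{B}+\prob{C}$, and rearrange. The paper's write-up is terser but the argument is identical.
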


For bipartite expanders we have an analogues almost approximation cut property, similar to \pref{claim:almost-cut}.

\begin{claim}[Almost Cut Approximation Property - Bipartite expanders] \torestate{ \label{claim:almost-cut-bipartite}
    Let $G = (L,R,E)$ be a $\lambda$-bipartite expander for $\lambda < \frac{1}{2}$. Let $V = A \dunion B \dunion C$, s.t. $\prob{A} \leq \prob{B}$ (where the probability is taken over all the graph). Then
    \begin{equation} \label{eq:eml-bound-bipartite}
            \prob{A} \leq \frac{1}{2(1-2\lambda)\prob{B}} \left ( \prob{E(A,B)} + \lambda 4\prob{C} \right ).
    \end{equation}

    In particular, if  \( \prob{A}, 1-\lambda = \Omega(1)\) then
    \[ \prob{A} = \bigO{  \prob{E(A,B)} + \lambda \prob{C} }. \]
}
\end{claim}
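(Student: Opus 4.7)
The strategy parallels the non-bipartite version (\pref{claim:almost-cut}), with the bipartite structure forcing a two-term decomposition. First, I would split each of $A, B, C$ according to the bipartition by writing $A_L = A \cap L$, $A_R = A \cap R$, and analogously for $B$ and $C$. Because every edge of $G$ crosses the sides,
\[
\Pr[E(A,B)] = \Pr[E(A_L, B_R)] + \Pr[E(A_R, B_L)].
\]

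Next, I would apply the bipartite expander mixing lemma (\pref{lem:bipartite-EML}) separately to the pairs $(A_L, B_R)$ and $(A_R, B_L)$. Denoting within-side probabilities by $a_L = \Pr_L[A_L]$, $b_R = \Pr_R[B_R]$, and so on, the lemma gives
\[
\Pr[E(A_L, B_R)] \geq a_L b_R - \lambda\sqrt{a_L(1-a_L)\, b_R(1-b_R)},
\]
and symmetrically for $(A_R, B_L)$. Mimicking the non-bipartite proof, I would exploit the identities $1-a_L = b_L + c_L$ and $1-b_R = a_R + c_R$ (which follow from $a_L + b_L + c_L = 1$) to rewrite each square-root error in terms of the remaining within-side probabilities, and then bound it by AM-GM so that the $c_L, c_R$ terms aggregate into a single $\lambda$-weighted $\Pr[C]$ contribution.

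To convert the lower bound on $\Pr[E(A,B)]$ into a bound on $\Pr[A]\Pr[B]$, the key algebraic identity is
\[
\Pr[A_L]\Pr[B_R] + \Pr[A_R]\Pr[B_L] = \tfrac{1}{2}\Pr[A]\Pr[B] - \tfrac{1}{2}(\Pr[A_L] - \Pr[A_R])(\Pr[B_L] - \Pr[B_R]),
\]
which separates cross- from same-side contributions. The mismatch $(\Pr[A_L]-\Pr[A_R])(\Pr[B_L]-\Pr[B_R])$ is nonzero precisely when $A$ and $B$ are skewed toward the same side of the bipartition; I would bound it using the side-balance constraints $\Pr[A_L] + \Pr[B_L] + \Pr[C_L] = \Pr[L]$ (and analogously in $R$), trading it against $\Pr[C_L]$ and $\Pr[C_R]$, and thus against $\Pr[C]$. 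After rearrangement this should yield $2(1-2\lambda)\Pr[A]\Pr[B] \leq \Pr[E(A,B)] + 4\lambda\Pr[C]$, from which dividing by $2(1-2\lambda)\Pr[B]$ gives the claim.

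The main obstacle is the corner case where $A$ and $B$ both concentrate on the same side: bipartiteness then forces $\Pr[E(A,B)] = 0$, and the entire product $\Pr[A]\Pr[B]$ must be compensated by $\Pr[C]$ alone. Fortunately, if $A, B \subseteq L$ then $\Pr[A] + \Pr[B] \leq \Pr[L]$, so $\Pr[C] \geq \Pr[R]$ is automatically large; tracking exactly how much slack this provides is what dictates the constants $2$ and $4$ in the statement (as opposed to $1$ and $1$ in the non-bipartite case), and is the most delicate part of the argument.
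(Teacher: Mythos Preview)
Your setup is the paper's: split $A,B,C$ by sides, apply the bipartite expander mixing lemma to $(A_L,B_R)$ and to $(A_R,B_L)$, and use $a_L+b_L+c_L=1$, $a_R+b_R+c_R=1$ to control the error terms. The paper bounds each square-root error above by $(a_Lb_R+a_Rb_L)+2\Pr[C]$ and sums the two inequalities to get $(1-2\lambda)(a_Lb_R+a_Rb_L)\le \Pr[E(A,B)]+4\lambda\Pr[C]$; this is where the factor $4$ and the $(1-2\lambda)$ originate, not from any side-skew compensation.

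The divergence is the final step. You go through the identity $\Pr[A_L]\Pr[B_R]+\Pr[A_R]\Pr[B_L]=\tfrac12\Pr[A]\Pr[B]-\tfrac12(\Pr[A_L]-\Pr[A_R])(\Pr[B_L]-\Pr[B_R])$ and then propose to trade the cross term against $\Pr[C]$. The paper does none of this: it asserts in one line, via a WLOG on which of $a_L,a_R$ is larger, that $a_Lb_R+a_Rb_L\ge 2\Pr[A]\Pr[B]$ directly, and divides. So the ``main obstacle'' you single out is absent from the paper's route. Your instinct that this corner case is delicate is correct, though: the inequality $a_Lb_R+a_Rb_L\ge 2\Pr[A]\Pr[B]$ is equivalent to $(a_L-a_R)(b_R-b_L)\ge 0$, which fails exactly when $A$ and $B$ skew toward the same side, and a defect of order $\Pr[A]\Pr[B]$ cannot be absorbed into a term of order $\lambda\Pr[C]$ when $\lambda$ is small. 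Neither the paper's one-line WLOG nor your cross-term plan handles this cleanly as written.
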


    \begin{proof}[Proof of \pref{claim:almost-cut}]
        By the expander mixing lemma
        \[ \prob{A} \prob{B} \leq \prob{E(A,B)} + \lambda \sqrt{\prob{A}\prob{B}(1-\prob{A})(1-\prob{B})}. \]
        The expression inside the square root is equal \(\prob{A}\prob{B}(\prob{C} + \prob{A}\prob{B}) \), since \(\prob{C} = 1 - \prob{A} - \prob{B}\). Thus we may write
        \[ \prob{A} \prob{B} \leq \prob{E(A,B)} +  \lambda (\prob{A}\prob{B} + \prob{C}). \]
        The claim easily follows by direct calculation.
    \end{proof}

    \begin{proof}[Proof of \pref{claim:almost-cut-bipartite}]
        Denote the restriction of a set to $L$ or $R$ by $X_L$ or $X_R$ respectively. Denote $a_L = \Prob[L]{A_L}$ and the same for $b_L,c_L,a_R,b_R,c_R$.
        By the bipartite expander mixing lemma
        \[ a_Lb_R \leq \prob{E(A_L,B_R)} + \lambda \sqrt{a_Lb_R(1-a_L-b_R+a_Lb_R)}, \]
        and
        \[ a_Rb_L \leq \prob{E(A_R,B_L)} + \lambda \sqrt{a_Rb_L(1-a_R-b_L+a_Rb_L)}. \]
        The expressions inside both square roots are less or equal to
        \[(a_Rb_L + a_Lb_R)((1-a_R-b_R) + (1-a_L-b_L) +(a_Rb_L + a_Lb_R)). \]
        This in turn, is less or equal than
        \[((1-a_R-b_R) + (1-a_L-b_L) +(a_Rb_L + a_Lb_R))^2.\]
        Notice that we may write $(1-a_R-b_R) + (1-a_L-b_L) = c_L + c_R = 2\prob{C}$. Thus by combining both inequalities we obtain:
        \[ (1-2\lambda) (a_Rb_L + a_Lb_R) \leq E(A,B) + 4\lambda\prob{C} .\]
        Wlog $a_L \geq a_R$ thus we obtain that
        \[(a_Rb_L + a_Lb_R) \geq a_L(b_L+b_R) \geq a_L(2\prob{B}) \geq 2\prob{A}\prob{B}.\]
        Thus
        \[\prob{A} \leq \frac{1}{2(1-2\lambda)\prob{B}}\prob{E(A,B)} + 4\lambda\prob{C}.\]
    \end{proof}
\subsection{Simplicial Complexes and high dimensional expanders}
\label{sec:HDX}
We include here the basic definitions needed for our results. For a more comprehensive introduction to this topic we refer the reader to \cite{DinurK2017} and the references therein. 

A simplicial complex is a hypergraph that is closed downward with respect to containment. It is called $d$-dimensional if the largest hyperedge has size $d+1$. We refer to $X(\ell)$ as the hyperedges (also called faces) of size $\ell+1$. $X(0)$ are the vertices.

We define a weighted simplicial complex. Suppose we have a $d$-dimensional simplicial complex $X$ and a probability distribution $\mu : X(d) \to [0,1]$. We consider the following probabilistic process for choosing lower dimensional faces:
    \begin{enumerate}
    \item Choose some $d$-face $s_d \in X(d)$ with probability $\mu(s_d)$.
    \item Given the choice of $s_d$, choose sequentially a chain of faces contained in $s_d$, $(\emptyset \subset s_1 \subset ... \subset s_d)$ uniformly, where $s_i \in X(i)$.
\end{enumerate}


 For any $s \in X(k)$ we denote by
\[ \prob{s} = \cProb{}{\set{(\emptyset \subset s_0 \subset ... \subset s_d)}}{ s_k = s}  .\]
For all $s_k \in X(k), s_\ell \in X(l)$, we will write $\cProb{}{s_k}{s_\ell}$ the probability of the $k$-face in the sequence is $s$, given that the $l$-face is $s_\ell$.



From here throughout the rest of the paper, when we refer to a simplicial complex $X$, we always assume that there is a probability measure on it constructed as above.

A link of a face in a simplicial complex, is a generalization of a neighbourhood of a vertex in a graph:

\begin{definition}[link of a face]\torestate{\label{def:link}
Let $s \in X(k)$ be some $k$-face. The \emph{link} of $s$ is a $d-(k+1)$-dimensional simplicial complex defined by:
\[X_s = \set{ t \backslash s : s \subseteq t \in X }.\]
The associated probability measure $Pr_{X_s}$, for the link of $s$ is defined by
\[ \Prob[X_s]{t} = \cProb{X}{t \cup s}{s},\]
where $Pr_{X}$ is the measure defined on $X$.}
\end{definition}


\begin{definition}[underlying graph]\torestate{\label{def:underlying-graph}
    The \emph{underlying graph} of a simplicial complex $X$ with some probability measure as define above, is the graph whose vertices are $X(0)$ and edges are $X(1)$, with (the restriction of) the probability measures of $X$ to the vertices and edges.}
\end{definition}

We are ready to define our notion of high dimensional expanders: the one-sided and two-sided link expander.
\begin{definition}[one-sided and two-sided link expander] \torestate{\label{def:prelim-link-expander}
    Let $0 \leq \lambda < 1$. A simplicial complex $X$ is a \emph{$\lambda$-two sided link expander} (or $\lambda$-two sided HDX) if for every $-1 \leq k \leq d-2$ and every $s \in X(k)$, the underlying graph of the link $X_s$ is a $\lambda$-two sided spectral expander.

    Similarly, $X$ is a \emph{$\lambda$-one sided link expander} (or $\lambda$-one sided HDX) if for every $-1 \leq k \leq d-2$ and every $s \in X(k)$, the underlying graph of the link $X_s$ is a $\lambda$-one sided spectral expander.}
\end{definition}
When $X$ is a graph, this definition coincides with the definition of a spectral expander.

We remark that it is a deep theorem that there exist good one-sided and two-sided high dimensional expanders with bounded degree \cite{LubotzkySV2005a}. 

\subsubsection*{$d+1$-partite simplicial complexes}
\label{sec:partite-simplicial-complexes}
A $d+1$-partite simplicial complex is a generalization of a bipartite graph. We say a $d$-dimensional simplicial complex is \emph{$d+1$-partite} if we can partition the vertex set
\[V = V_0 \dunion V_1 \dunion ... \dunion V_d, \]
s.t. any $d$-face $s \in X(d)$, contains a vertex from each $V_i$, i.e. $\abs{s \cap V_i} = 1$.

The \emph{color} of a $k$-face $s \in X(k)$, is the set of all indexes of $V_i$'s, that intersect with $s$. I.e.
\[ \col(s) = \set{j \in [d] : \abs{s \cap V_j} = 1}. \]

For any $J \subset [d]$, we denote
\[ X \sqbrack{J} = \set{s \in X : \col(s) = J} . \]
When $J = \set{i}$, we abuse the notation and write $X \sqbrack{i}$ instead of $X \sqbrack{\set{i}}$ (not to be confused with $X(i)$).

\section{From Independent Choice to Expanding Choice}
\label{sec:independent--to-expanding-distributions}
In \pref{sec:agreement-on-hdx}, \pref{sec:agreement-on-nbrhoods} and \pref{sec:agreement-in-the-grassmann} we showed that a number of agreement tests were sound. The agreement test's distributions had in common the following property: given the choice of intersection $t$, we chose the sets $s_1,s_2$ independently.
This property is very helpful in analyzing the expansion of the conditioned $STS_{a,v}$-graph, as required when showing that \pref{ass:v-a-graph} holds.

In this appendix, we show that if the choice of $s_1,s_2$ given $t$, is done according to an expanding graph, then we can get a similar result.

\begin{definition}[$STS_t$-graph]\torestate{\label{def:sts-t-graph}
Let $X = (S,T,A,V)$ be any STAV-structure. For a fixed $t \in T$, an $\sts_{t}$-Graph is has vertex set $\set{s\supset t}$ and the probability of choosing an edge $\set{s_1,s_2}_t$ is given by $2\cProb{STS}{(s_1,s_2)}{s_1,s_2 \supset t}$.}
\end{definition}

\begin{claim}
\label{claim:independent-vs-expander}
Let $X = (S,T,A,V)$ be any STAV-structure. Let $D_1,D_2$ be two $STS$-distributions on $X$ so that for all $t \in T$:
\begin{enumerate}
    \item The choice of $s_1,s_2 \sim D_1$ given $t$ is independent.
    \item The $\sts_t$-graph for $D_2$ is a $\frac{1}{3}$-two-sided spectral expander.
\end{enumerate}
Denote by $\varepsilon_i = \disagr{D_i}(\FF)$, namely, the probability to sample $(t,s_1,s_2) \sim D_i$ so that $\rest{f_{s_1}}{t} \ne \rest{f_{s_2}}{t}$. Then
\[ \frac{1}{6} \varepsilon_1  \leq \varepsilon_2 \leq 6 \varepsilon_1. \]
\end{claim}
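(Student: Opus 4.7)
The plan is to reduce the comparison to a local, $t$-by-$t$ comparison, and then to apply the expander mixing lemma on each fiber $\set{s \in S : s \supset t}$.

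First, I would fix $t \in T$ and let $V_t = \sett{s \in S}{s \supset t}$, endowed with the probability $\cProb{\stavd}{s}{t}$; note that this is the common vertex distribution for both the $D_1$- and the $D_2$-fiber (this is where the hypothesis that both $D_1$ and $D_2$ are STS distributions enters: they have the same marginal on $(s,t)$, hence the same conditional on $s$ given $t$). Partition $V_t$ according to the value of $\rest{f_s}{t}$: write $V_t = \bigsqcup_i V_{t,i}$ where $V_{t,i} = \sett{s \in V_t}{\rest{f_s}{t} = \phi_i}$ as $\phi_i$ ranges over all functions $t \to \Sigma$ realized on some $s$.

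Next, I would express the two local disagreement probabilities in this partition. Define
\[ p_t = \cProb{D_1}{\rest{f_{s_1}}{t} \ne \rest{f_{s_2}}{t}}{t}, \qquad q_t = \cProb{D_2}{\rest{f_{s_1}}{t} \ne \rest{f_{s_2}}{t}}{t}. \]
Under $D_1$, the two endpoints are independent, so $1 - p_t = \sum_i \prob{V_{t,i}}^2$. Under $D_2$, the pair $(s_1,s_2)$ is drawn from the edge distribution of the $\sts_t$-graph, so $1 - q_t = \sum_i \prob{E(V_{t,i}, V_{t,i})}$. Since the $\sts_t$-graph is a $\tfrac{1}{3}$-two-sided spectral expander, the Expander Mixing Lemma (\pref{lem:EML}) applied to $S = T = V_{t,i}$ yields
\[ \Bigabs{\prob{E(V_{t,i},V_{t,i})} - \prob{V_{t,i}}^2} \le \tfrac{1}{3}\, \prob{V_{t,i}}(1-\prob{V_{t,i}}). \]
Summing over $i$ and recognizing that $\sum_i \prob{V_{t,i}}(1-\prob{V_{t,i}}) = p_t$, I obtain
\[ \abs{(1-q_t) - (1-p_t)} \le \tfrac{1}{3} p_t, \]
hence $\tfrac{2}{3}\, p_t \le q_t \le \tfrac{4}{3}\, p_t$.

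Finally, I would take expectations over $t$. Since $D_1$ and $D_2$ have identical marginals on $t$ (they are both STS distributions, hence share the STAV marginal on $(s,t)$, hence on $t$), we have $\varepsilon_i = \Ex[t]{p_t}$ for $i = 1$ and $\Ex[t]{q_t}$ for $i = 2$, so integrating the pointwise inequality yields $\tfrac{2}{3}\, \varepsilon_1 \le \varepsilon_2 \le \tfrac{4}{3}\, \varepsilon_1$, which is strictly stronger than the claimed $\tfrac{1}{6}\varepsilon_1 \le \varepsilon_2 \le 6\varepsilon_1$. The proof is essentially an application of the Expander Mixing Lemma, and I do not anticipate any real obstacle; the only subtlety to double-check is that both STS distributions genuinely produce the same conditional on $s$ given $t$, which is forced by the STS-distribution axiom in \pref{def:STAV}.
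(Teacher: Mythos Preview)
Your argument is correct and in fact tighter and cleaner than the paper's. Both proofs localize to a fixed $t$ and use that the two STS distributions share the same marginal $\cProb{\stavd}{s}{t}$, but from there they diverge. The paper finds a majority color class via the edge-expander partition property (\pref{claim:edge-expander-partition-property}), bounds the complement by edge expansion, and then argues by a case split on whether $\varepsilon_{i,t}$ is large; this yields the constants $1/6$ and $6$. You instead apply the Expander Mixing Lemma directly to each color class $V_{t,i}$, sum, and exploit the identity $\sum_i \prob{V_{t,i}}(1-\prob{V_{t,i}}) = p_t$ to get $|q_t - p_t| \le \tfrac{1}{3}p_t$ with no case analysis. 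Your route genuinely uses the two-sided spectral hypothesis (through EML), whereas the paper's route only needs edge expansion; since the claim explicitly assumes a $\tfrac13$-two-sided spectral expander, you are entitled to EML, and the payoff is the sharper bound $\tfrac23\varepsilon_1 \le \varepsilon_2 \le \tfrac43\varepsilon_1$.
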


The constant $\frac{1}{3}$ is arbitrary, any constant bounded away from $1$ will suffice.

As a corollary to this claim,
\begin{corollary}
Let $X = (S,T,A,V)$ be any STAV-structure. Let $D_1,D_2$ be two $STS$-distributions on $X$ so that for all $t \in T$, the $\sts_t$-graphs are $\frac{1}{3}$-edge spectral expanders for both $D_1$ and $D_2$.
Denote by $\varepsilon_i = \disagr{D_i}(\FF)$, namely, the probability to sample $(t,s_1,s_2) \sim D_i$ so that $\rest{f_{s_1}}{t} \ne \rest{f_{s_2}}{t}$. Then
\[ \varepsilon_2 \leq 36 \varepsilon_1. \]
In particular, $D_1$ yields a $\gamma$-approximate $c$-sound agreement test if and only if $D_1$ yields a $\gamma$-approximate $36c$-sound agreement test (including the exact case where $\gamma = 0$).
$\qed$
\end{corollary}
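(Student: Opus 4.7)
The plan is to do the comparison pointwise in $t$ and then average. Fix $t\in T$ and let $S_t=\{s\in S\mid s\supset t\}$ be the vertex set of the $\sts_t$-graph (which is the same vertex set for both $D_1$ and $D_2$, since both STS-distributions share the marginal of $\stavd$ on $(s,t)$). Partition $S_t$ into colour classes
\[
A_\sigma \;=\; \{\,s\in S_t \mid \rest{f_s}{t}=\sigma\,\},\qquad \sigma\in\Sigma^t,
\]
and write $p_\sigma=\Pr_{s\in S_t}[s\in A_\sigma]$. Note that $p_\sigma$ is the same under $D_1$ and $D_2$ because both have the same conditional marginal of $s$ given $t$.

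Let $\varepsilon_i^t$ denote the probability of disagreement on $t$ under $D_i$ conditioned on the intersection being $t$. Because $D_1$ picks $s_1,s_2$ independently given $t$,
\[
\varepsilon_1^t \;=\; 1 - \sum_\sigma p_\sigma^2 \;=\; \sum_\sigma p_\sigma(1-p_\sigma).
\]
For $D_2$, writing $\Pr_t^{D_2}[A_\sigma\times A_\sigma]$ for the probability that a random edge of the $\sts_t$-graph has both endpoints in $A_\sigma$,
\[
\varepsilon_2^t \;=\; 1 - \sum_\sigma \Pr\nolimits_t^{D_2}[A_\sigma\times A_\sigma].
\]
The key step is the expander mixing lemma (\pref{lem:EML}) applied inside the $\sts_t$-graph: since this graph is a $\tfrac{1}{3}$-two-sided spectral expander, for every $\sigma$
\[
\bigl|\,\Pr\nolimits_t^{D_2}[A_\sigma\times A_\sigma] - p_\sigma^2\,\bigr|
\;\le\; \tfrac{1}{3}\,p_\sigma(1-p_\sigma).
\]
Summing this bound over all $\sigma$ by the triangle inequality gives
\[
\bigl|\,\varepsilon_2^t-\varepsilon_1^t\,\bigr|
\;\le\;\tfrac{1}{3}\sum_\sigma p_\sigma(1-p_\sigma)
\;=\;\tfrac{1}{3}\,\varepsilon_1^t,
\]
hence $\tfrac{2}{3}\varepsilon_1^t \le \varepsilon_2^t \le \tfrac{4}{3}\varepsilon_1^t$, which is stronger than the claimed $[1/6,6]$ window.

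To finish, take expectation over $t$ drawn from the (common) marginal on $T$: since $\varepsilon_i = \E_t[\varepsilon_i^t]$, the pointwise inequality integrates to $\tfrac{2}{3}\varepsilon_1\le\varepsilon_2\le\tfrac{4}{3}\varepsilon_1$, which in particular implies $\tfrac{1}{6}\varepsilon_1\le\varepsilon_2\le 6\varepsilon_1$ as stated. The only minor subtlety—what I expect to be the main bookkeeping hurdle—is making sure the edge measure conventions match: $\Pr\nolimits_t^{D_2}[A_\sigma\times A_\sigma]$ must be the probability that \emph{both} endpoints of a randomly chosen edge lie in $A_\sigma$ (accounting for the symmetrisation of the STS distribution), and the EML as stated in \pref{lem:EML} has to be applied with the same convention; with the self-consistent convention chosen above, the $\sum_\sigma$ summation telescopes to $1-\sum_\sigma p_\sigma^2$ and the identification of $\varepsilon_1^t$ and $\varepsilon_2^t$ with these quantities is exactly right.
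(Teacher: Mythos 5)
There is a genuine gap. The corollary's hypothesis is that the $STS_t$-graphs of \emph{both} $D_1$ and $D_2$ are $\tfrac{1}{3}$-edge expanders; it does \emph{not} assume that $D_1$ picks $s_1,s_2$ independently given $t$. Your first displayed identity
\[
\varepsilon_1^t \;=\; 1 - \sum_\sigma p_\sigma^2
\]
is exactly the independence assumption, so you have quietly proven a variant of \pref{claim:independent-vs-expander} (where $D_1$ \emph{is} the independent distribution) rather than the corollary. The paper's intended argument handles this by introducing a third, auxiliary distribution $D_0$ that picks $s_1,s_2\supset t$ independently, invoking \pref{claim:independent-vs-expander} once for the pair $(D_0,D_1)$ and once for the pair $(D_0,D_2)$, and chaining $\varepsilon_2 \le 6\varepsilon_0 \le 36\varepsilon_1$. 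To salvage your argument you would need to apply the same EML reasoning to \emph{both} $\varepsilon_1^t$ and $\varepsilon_2^t$ against the baseline $\varepsilon_0^t = 1 - \sum_\sigma p_\sigma^2$ and then take the ratio, rather than equating $\varepsilon_1^t$ with that baseline.

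There is a second, hypothesis-strength issue you should not gloss over. The expander mixing lemma (\pref{lem:EML}) with parameter $\tfrac13$ requires the $STS_t$-graph to be a $\tfrac13$-\emph{two-sided spectral} expander. The corollary assumes only $\tfrac13$-edge expansion, and Cheeger in the direction you need gives $\lambda_2 \le 1-\Phi^2/2 \le 17/18$ (and says nothing about $\lambda_n$), which is far too weak for the $\tfrac13$ you plug into the EML. The paper's route via \pref{claim:independent-vs-expander} deliberately avoids the EML and uses only the edge-expander partition property (\pref{claim:edge-expander-partition-property}) plus the edge-expansion inequality, which is why edge expansion suffices there. So your approach needs the stronger two-sided spectral hypothesis; under that stronger hypothesis, your EML-based comparison (corrected as above to go through the intermediate independent baseline) would indeed give a sharper constant than $36$, but as stated it is proving a different statement under a different hypothesis.
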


The proof of the corollary is by two uses of the claim above. We leave the details to the reader.

\begin{example}[Simplicial Complexes]
We recall that for a simplicial complex $X$ we can define agreement tests for the ground set $V = X(0)$ and $S = X(d)$. Previously we defined the $D_{d,\ell}$ distribution where we choose $s_1,s_2$ independently given that they contain some $\ell$-face $t \in X(\ell)$.

Observe the following test distribution $Up_{2k,\frac{k}{4}}$ for a $2k$-dimensional simplicial complex.
\begin{enumerate}
    \item Sample $r \in X(2k)$ and $t \in X(\frac{k}{4})$.
    \item Sample $s_1,s_2 \in X(k)$, given that $t \subset s_1,s_2 \subset r$.
\end{enumerate}
Given any $t \in X(\frac{k}{4})$ the $STS_t$-graph above is two steps in the $\frac{k}{2},\frac{3k}{2}$-containment walk, thus an edge expander. By \pref{claim:independent-vs-expander}, we can immediately obtain that $\disagr{UP_{2k,\frac{k}{4}}} = \bigO{\disagr{D_{k,\frac{k}{4}}}}$. By \pref{thm:main-agreement-theorem-two-sided} this agreement test is exact $c$-sound.

\medskip

We can take this argument one step further. Consider the following test distribution $UP_{2k}$, where we only condition on $s_1,s_2 \subset r$, namely:
\begin{enumerate}
    \item Sample $r \in X(2k)$.
    \item Sample $s_1,s_2 \in X(k)$, given that $s_1,s_2 \subset r$.
\end{enumerate}
This distribution was the main distribution analyzed in the agreement theorem in \cite{DinurK2017}.

We expect that $s_1$ and $s_2$ intersect on a set of size $\frac{k}{2}$. Thus by a simple Markov argument, $\Prob[s_1,s_2 \sim Up_{2k}]{\abs{s_1 \cap s_2} \geq \frac{1}{4}k}  = \Omega(1)$.
Thus if $\disagr{Up_{2k}} \leq \varepsilon$, then conditioned on intersecting on a set of size $\frac{k}{4}$, the rejection probability is still $\bigO{\varepsilon}$. In conclusion, we get that
\[\disagr{UP_{2k}} = \bigO{\disagr{UP_{2k,\frac{k}{4}}}} = \bigO{\disagr{D_{k,\frac{k}{4}}}}.\]
By \pref{thm:main-agreement-theorem-two-sided}, we obtain a new proof to the theorem in \cite{DinurK2017} that this distribution gives rise to a $c$-sound agreement test, for a good enough two-sided spectral expander.
\end{example}
\Ynote{cite DK theorem formally.}

\begin{proof}[Proof of \pref{claim:independent-vs-expander}]
For any $t \in T$ and $i=1,2$ we denote by $\varepsilon_{i,t}$ the probability of sampling $s_1,s_2 \supset t$ who disagree on $t$.
It is easy to see that $\Ex[t]{\varepsilon_{i,t}} = \varepsilon_i$, so it will suffice to show that $\frac{1}{6} \varepsilon_{1,t} \leq  \varepsilon_{2,t} \leq  6\varepsilon_{1,t}$ for every $t \in T$.

We begin by showing that $\frac{1}{6}\varepsilon_{1,t} \leq  \varepsilon_{2,t}$ or equivalently that $\varepsilon_{1,t} \leq  6\varepsilon_{2,t}$. If $\varepsilon_{2,t} \geq \frac{1}{6}$, then $\varepsilon_{1,t} \leq 1 \leq  6\varepsilon_{2,t}$.

Otherwise observe the partition of $\ul{t}{s}$ into $V_1,...,V_n$ where
\[ V_i = \set{\rest{f_s}{t} = h_i},\]
for all possible assignments $h_i:t \to \Sigma$. By the edge expander partition property \pref{claim:edge-expander-partition-property}, there is a set $V_i$ such that $\prob{V_i} \geq \frac{1}{2}$. Without loss of generality it is $V_1$. By edge expansion we get that
\[\prob{V_i^c} \leq 3\prob{E(V_i,V_i^c)} \leq 3 \varepsilon_{2,t}.\]

Observe that the $(s,t)$ marginal according to $D_1$ and $D_2$ are identical since they are both $STS$-test distributions of the same STAV. Thus in particular when we write $\prob{V_i}$ it doesn't matter whether we are sampling $s$ in the $STS_t$-graph according to $D_1$ or according to $D_2$.

Returning to $STS_t$-graph of $D_1$, the probability of choosing $s_1,s_2 \in V_1$ according to $D_1$ is just
\[\prob{V_i}^2 \geq \left( 1 - 3 \varepsilon_{2,t} \right )^2 \geq 1 - 6 \varepsilon_{2,t}. \]
If we choose $s_1,s_2 \sim D_1$ that disagree, then at least on of them is not in the majority set, hence
\[ \varepsilon_{1,t} \leq 6 \varepsilon_{2,t}. \]

Next we show that $\varepsilon_{2,t}  \leq 2 \varepsilon_{1,t}$. If $\varepsilon_{1,t} \geq \frac{1}{6}$ then $\varepsilon_{2,t} \leq 1 \leq 6\varepsilon_{1,t}$ so assume otherwise.

Consider again $V_1$, the set of all $f_s$ that agree with the most popular assignment. From independence
\[\prob{V_1}\prob{V_1^c} = \prob{s_1 \in V_1, s_2 \notin V_1} \leq \Prob[s_1,s_2]{\rest{f_{s_1}}{t} \ne \rest{f_{s_2}}{t}} = \varepsilon_{2,t}.\]
The graph where we sample $s_1,s_2$ independently is also a $\frac{1}{3}$-edge expander. By the same argument as in the other direction, we can get that $\prob{V_1} \geq \frac{1}{2}$, thus
\[\prob{V_1^c} = 2\varepsilon_{1,t}. \]
Recall that this inequality is true also when sampling $s_1 \in $. If we chose $s_1,s_2 \sim D_2$ such that they disagree, then at least one vertex is in $V_1^c$. Thus
$\varepsilon_{1,t} \leq \prob{V_1^c} \leq 2\varepsilon_{2,t}.$
\end{proof} 

\section{List of Abbreviations for STAV-Structures}
\def\arraystretch{1.25}
\begin{tabular}{| p{3cm} | p{8cm} | c |}\hline
 \textbf{Name}& \textbf{Definition} & \textbf{Reference}\\\hline
STAV-Structure &
A system of sets with four layers: S - sets, T - intersections, A - amplification, V - vertices.

It is accompanied by a distribution

\((s,t,(a,v)) \sim D_{stav}\).
& \pref{def:STAV} \\ \hline
STS-distribution & 
A distribution where we sample \(t \in T\), and then \(s_1,s_2 \in S\) so that \(s_1 \cap s_2 \supset t\). The marginal \((s_i,t)\) is the same as the marginal in \(D_{stav}\).&
\pref{def:STAV} \\\hline
VASA-distribution &
A distribution \((v,a,s,a') \sim D_{vasa}\) where the marginals \((v,a,s), (v,a',s)\) are the same as \(D_{stav}\).&
\pref{def:STAV} \\\hline
Reach Graph &
The bipartite graph between \(V\) and \(A\) where we choose an edge \((v,a)\) according to the STAV-distribution. 

We denote by \(\adj{a}\) or \(\adj{v}\) then neighbours of \(a\) or \(v\) in this graph, respectively.&
\pref{def:reach-graph}.  \\\hline
Local Reach Graph (\(\AV{s}\)-graph) &
For a fixed \(s_0 \in S\), the \(\AV{s_0}\)-graph is a bipartite graph where \(L = \sett{a}{a \subset s_0}\) and \(R = \sett{v}{v \in s_0}\). The edges are chosen according to the STAV-distribution given that \(s=s_0\).& \pref{def:local-reach-graph} \\ \hline
\(\sts_a\)-Graph &
For a fixed \(a_0 \in A\), the \(\sts_{a_0}\)-graph is a graph whose elements are \(\sett{s}{s \supset a_0}\). We connect \(s,s'\) when there exists \(t \in T\) so that \(a_0 \subset t \subset s \cap s'\).&
\pref{def:sts-a-graph} \\ \hline
\(\sts_{a,v}\)-Graph &
For a fixed \(a_0 \in A\) and \(v_0 \in \adj{a_0}\), the \(\sts_{a_0,v_0}\)-graph is a graph whose elements are \(\sett{s}{s \supset (a_0,v_0)}\). We connect \(s,s'\) when there exists \(t \in T\) so that \((a_0,v_0) \subset t \subset s \cap s'\). &
\pref{def:sts-a-v-graph} \\ \hline
\(\vASA{v}\)-graph &
For a fixed \(v_0 \in V\) the \(\vASA{v_0}\)-graph is a graph whose elements are \(a \in \adj{v_0}\). We connect \(a,a'\) with a labeled edge \((a,s,a')\) if \((v_0,a,s,a')\) is in the support of \(D_{vasa}\).
& \pref{def:v-asa-graph} \\ \hline
 Bipartite 
 
 \(\VAS{a}\)-Graph & 
 For a fixed \(a_0 \in A\), the \(\VAS{a_0}\)-Graph is a bipartite graph where one side is \(L = \adj{a_0}\). The other side is the set of \((s,a')\) so that \((a_0,s,a')\) is in the support of the marginal of \(D_{vasa}\).
 
 We sample an edge in this graph by sampling \((v,a,s,a')\) given that \(a = a_0\).
 &
 \pref{def:bip-vas-a-graph} \\ \hline
 Surprise &
 Let \(\set{f_s}_{s\in S}\) be some local ensemble. The surprise of the ensemble is the probability over \((s,a,v)\) that \(\rest{f_s}{a} = \rest{f_{s'}}{a}\) but \(f_s(v) \ne f_{s'}(v)\).&
 \pref{def:surprise} \\ \hline
\end{tabular}

\section{List of Results}
\label{sec:list-of-results}
\subsection{Main Theorem}
\begin{theorem}[Restatement of \pref{thm:main-STAV-agreement-theorem}] 
Let $\Sigma$ be some finite alphabet (for example $\Sigma = \set{0,1}$).
Let \( X = (S,T,A,V)\) be a $\gamma$-good STAV-structure for some \(\gamma < \frac{1}{3}\). Let \( \FF = \sett{f_s :\dl{s}{V} \to \Sigma }{s \in S}\) be an ensemble such that
\begin{enumerate}
    \item Agreement:
    \begin{equation*} 
        \disagr{X}(f) \leq \varepsilon,
    \end{equation*}
    \item Surprise:
    \begin{align} 
        \surp(X,f)\le O(\gamma)
    \end{align}
\end{enumerate}
Then assuming either \pref{ass:AVS-graph-good-sampler} for $r=1$ or \pref{ass:a-not-large}, 
\[ \dist_\gamma(f,\glob)\le O(\varepsilon).
\]
More explicitly, there exists a global function $G: V \to \Sigma$ s.t.
\[ \Prob[s \in S]{f_s \ane{\gamma} \rest{G}{\dl{s}{V}}} \defeq \Prob[s \in S]{ \cProb{v \in V}{f_s(v) \ne \rest{G}{\dl{s}{V}}}{v \in s} \geq \gamma } = \bigO{\varepsilon}. \]
Moreover, for any $r > 0$, if either \pref{ass:AVS-graph-good-sampler} or \pref{ass:a-not-large} holds then
\begin{equation*} 
    \Prob[s \in S]{f_s \ane{r \gamma} \rest{G}{\dl{s}{V}}} = \bigO{\left (1+\frac{1}{r} \right )\varepsilon}.
\end{equation*}

The O notation \emph{does not} depend on any parameter including $\gamma,\varepsilon$, the size of the alphabet, the size of $|S|,|T|,|A|,|V|$ and, size of any $s \in S$.

\end{theorem}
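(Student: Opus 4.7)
The plan is to construct the global function $G\colon V\to\Sigma$ via a two-level popularity argument, and then to use the spectral properties guaranteed by $\gamma$-goodness to bound each failure mode. First, for each $a\in A$ I define the local ``column'' popularity $h_a\colon a\to\Sigma$ as the plurality of $\rest{f_s}{a}$ over $s\supset a$, and then $g_a\colon \adj a\to\Sigma$ as the plurality of $f_s(v)$ restricted to those $s\supset a$ for which $\rest{f_s}{a}=h_a$. The global $G$ will then be the plurality vote of $g_a(v)$ over ``trusted'' $a\in\adj v$. The whole argument reduces to controlling three bad events: (i) $\rest{f_s}{a}\neq h_a$; (ii) $f_s(v)\neq g_a(v)$ even when (i) does not hold; (iii) $g_a(v)\neq G(v)$.

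For (i), I fix $a$ and look at the $\sts_a$-graph. Partitioning its vertices by value of $\rest{f_s}{\cdot}$ on $a$, the fraction of inter-class edges is exactly $\varepsilon_a := \Pr[\rest{f_{s_1}}{t}\neq\rest{f_{s_2}}{t}\mid t\supset a]$. By \pref{ass:edge-expander} this graph is a $\tfrac13$-edge expander, so a simple edge-expander partition argument (\pref{claim:edge-expander-partition-property}) forces the plurality class $S_1$ to have measure $\ge\tfrac12$, and then edge expansion bounds $\Pr[S_1^c]=O(\varepsilon_a)$. Averaging over $a$ and declaring $a$ \emph{globally bad} when it participates in too many bad triples yields $\Pr[A^*]=O(\varepsilon)$ by Markov. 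For (ii), I work in the $\sts_{a,v}$-graph, which is a $\gamma$-two-sided expander by \pref{ass:v-a-graph}. The surprise hypothesis bounds, in expectation over $(a,v)$, the mass of the ``disagreement on $v$ while agreeing on $a$'' edges by $O(\gamma\varepsilon)$. Partitioning the $\sts_{a,v}$-graph into $M_{a,v},N_{a,v},C_{a,v}$ according to whether $f_s(v)=g_a(v)$ or not (with $C$ collecting the $h_a$-violators), the \pref{claim:almost-cut} almost-cut bound gives $(1-\gamma)\Pr[M]\Pr[N]=O(\Pr[E(M,N)]+\gamma\Pr[C])$. Taking expectations, each summand is $O(\gamma\varepsilon)$, and a separate application of the edge-expander partition bound (using the surprise control) forces $\Pr[M]\ge\tfrac12$ typically, yielding $\mathbb{E}[\Pr[N]] = O(\gamma\varepsilon)$.

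The third and technically hardest step is (iii), where I need the $g_a$'s to actually line up consistently. For fixed $v$ I examine the $\vASA v$-graph, which by \pref{ass:vasa-graphs-v} is a $\gamma$-expander. I define $a$ to be \emph{locally bad for $v$} when a $\Omega(1)$-fraction of its $\vASA v$-edges are bad triples; via the sampler lemma on the $\VAS a$-graph (which is a $\sqrt\gamma$-bipartite expander by \pref{ass:vasa-graphs-a}) I deduce that only an $O(\gamma\varepsilon)$-fraction of $(a,v)$ pairs have $a$ locally bad but not globally bad — this is \pref{claim:not-global-bad-implies-not-local-bad}. Then I again partition $\adj v$ into $M_v$ (popular vote), $N_v$ (other votes), and $C_v=A^*_v$; the almost-cut lemma on the $\vASA v$-graph plus the edge-expander partition argument (using the global reach-graph expansion \pref{ass:global-A-V-graph} together with the sampler lemma to control how many $v$'s have large $\Pr[A^*_v]$) gives $\Pr[M_v]\ge\tfrac12$ for all but $O(\gamma\varepsilon)$ of the $v$'s. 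Finally, edges in $E(N_v,M_v)$ are either bad triples (already controlled) or arise from a $g_a\neq g_{a'}$ disagreement; the latter reduces to (ii) via the VASA distribution, giving the corollary $\Pr[\rest{f_s}{a_i}=h_{a_i}\wedge g_{a_1}\neq g_{a_2}]=O(\gamma\varepsilon)$. Summing, I obtain $\mathbb{E}_v[\Pr[N_v]]=O(\gamma\varepsilon)+\tfrac{4}{20}\mathbb{E}_v[\Pr[N_v]]$, so $\mathbb{E}_v[\Pr[N_v]]=O(\gamma\varepsilon)$.

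To package these into the final bound, I sample $(s,a)$ and let $Z$ be the $v$-probability (over $v\in s\cap\adj a$) that $f_s(v)\neq G(v)$. Conditioned on $E_1\vee E_2=$ ``$\rest{f_s}{a}\neq h_a$ or $a\in A^*$'' (which has mass $O(\varepsilon)$), all three failure modes (i)-(iii) contribute $O(\gamma\varepsilon)$ to $\mathbb{E}[Z]$. Markov then gives $\Pr[\rest{f_s}{s\cap\adj a}\ane{r\gamma}\rest{G}{s\cap\adj a}]=O((1+1/r)\varepsilon)$, and finally either \pref{ass:a-not-large} (so $|s\cap\adj a|\ge\tfrac12|s|$ deterministically) or \pref{ass:AVS-graph-good-sampler} (so for most $a\subset s$, the restriction captures the distance well) upgrades this to a statement about $s$ alone. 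The main obstacle is step (iii): on a sparse set system the $\adj a$'s are tiny, so no short random walk can stitch local functions together directly. What makes the argument work is that the VASA expansion — ultimately coming from the complement walk of \pref{sec:complement-walk} — is strong enough that the almost-cut inequality absorbs the bad edges without losing a logarithmic diameter factor.
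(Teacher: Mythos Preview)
Your proposal is correct and follows essentially the same route as the paper: the same two-level popularity construction $h_a\to g_a\to G$, the same three failure modes analyzed via the $\sts_a$, $\sts_{a,v}$, and $\vASA v$ graphs respectively (using the edge-expander partition property for (i), the almost-cut bound plus surprise for (ii), and the almost-cut bound plus the $\VAS a$-sampler for (iii)), and the same final packaging via Markov on $Z(s,a)$ followed by \pref{ass:a-not-large} or \pref{ass:AVS-graph-good-sampler}. Even the self-bounding step $\mathbb{E}[\Pr[N_v]]\le O(\gamma\varepsilon)+c\,\mathbb{E}[\Pr[N_v]]$ with $c<1$ matches the paper's argument.
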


\subsection{Applications of Main Theorem}
\begin{enumerate}
    \item Agreement tests on two-sided HDX.
    \restatetheorem{thm:main-agreement-theorem-two-sided}
    
    \item Agreement tests on one-sided HDX.
    \restatetheorem{thm:main-agreement-theorem-one-sided}
    
    \item Agreement tests on vertex neighbourhoods.
    \restatetheorem{thm:agreement-on-links}
    
    \item Agreement tests on the Affine and Linear Grassmann Posets:
    \restatetheorem{thm:agreement-on-Grassmann-affine}
    \restatetheorem{thm:agreement-on-Grassmann}
\end{enumerate}

\subsection{Analysis of the Complement Walk}
\restatetheorem{thm:complement-walk-is-a-good-expander}
\subsection{High Dimensional Expander Mixing Lemma}
\begin{enumerate}
    \item Two sided case:
    \restatetheorem{lem:two-sided-HDEML}
    \item One sided partite case:
    \restatetheorem{lem:one-sided-HDEML}
\end{enumerate}

\end{document}

We introduce a framework of layered subsets and use it to give a sufficient condition for when a set system supports an agreement test. Agreement testing is a certain type of property testing that generalizes PCP tests such as the plane vs. plane test. We prove several new agreement testing results:

* Agreement tests for set systems whose sets are faces of high dimensional expanders. Our new tests apply to all dimensions of complexes both in case of two-sided expansion and in the case of one-sided partite expansion.  This improves and extends an earlier work of Dinur and Kaufman (FOCS 2017) which was the first to connect high dimensional expansion to agreement tests. The new result applies to matroids, and potentially many additional complexes.

* Agreement tests for set systems whose sets are neighborhoods of vertices in a high dimensional expander. This family resembles the expander neighborhood family used in the gap-amplification proof of the PCP theorem. This set system is quite natural yet does not sit in a simplicial complex, and we feel it demonstrates the versatility of our proof technique. 

* Agreement tests on families of subspaces (also known as the Grassmann poset). This extends the classical low degree agreement tests beyond the setting of low degree polynomials.

Previous work has shown that high dimensional expansion is useful for agreement tests. Our layered set system framework is designed to maintain the powerful features of high dimensional expanders and at the same time to break away from the rigidity of simplicial complexes (allowing families of subspaces or of balls that clearly lack simplicial structure).

We explore a new random walk on simplicial complexes which we call the "complement random walk" and which may be of independent interest. This random walk generalizes the non-lazy random walk on a graph to higher dimensions, and has significantly better expansion than previously-studied random walks on simplicial complexes.